\numberwithin{equation}{section}
\newtheorem{theorem}{Theorem}[section]
\newtheorem{lemma}[theorem]{Lemma}
\newtheorem{assumptions}[theorem]{List}
\newtheorem{convention}[theorem]{Convention}
\newtheorem{proposition}[theorem]{Proposition}
\newcommand{\R}{{\mathbb R}}
\newcommand{\p}{{p}}
\newcommand{\mfp}{{\mathfrak{p}}}
\newcommand{\mfq}{{\mathfrak{q}}}
\newcommand{\mcJ}{{\mathcal{J}}}
\newcommand{\tuP}{{\Psi}}
\newcommand{\calN}{{\mathcal{N}}}
\newcommand{\N}{{\mathbb N}}
\DeclareMathOperator{\supp}{supp}
\DeclareMathOperator*{\esssup}{ess\,sup}
\title{A Brownian particle in a microscopic periodic potential}
\author{\textbf{Jeremy Thane Clark}\footnote{jtclark@math.msu.edu} \\  Department of Mathematics, Michigan State University  \\ East Lansing, MI 48824, USA \and  \textbf{Lo\"ic Dubois}\footnote{ldubois@mappi.helsinki.fi} \\ Department of Mathematics, University of Helsinki \\ Helsinki 00014, Finland     }
\begin{document}
\maketitle

\begin{abstract}

We study a model for a massive test particle in a microscopic periodic potential and interacting with a reservoir of light particles.   In the regime considered, the fluctuations in the test particle's momentum resulting from  collisions typically outweigh the shifts in momentum generated by the periodic force, and so  the force is effectively a perturbative contribution.  The mathematical starting point is an idealized reduced dynamics for the test particle given by a linear Boltzmann equation.  In the limit that the mass ratio of a  single reservoir particle to the test particle tends to zero, we show that there is convergence to the Ornstein-Uhlenbeck process under the standard normalizations for the test particle variables.  Our analysis is primarily directed towards bounding the perturbative effect of the periodic potential on the particle's momentum.

\end{abstract}

\section{Introduction}

The Ornstein-Uhlenbeck process offers a homogenized picture for the motion of a massive particle interacting with a gas of lightweight particles at fixed temperature~\cite{Uhlen}.  In this description, the spatial degrees of freedom are driven ballistically by momentum variables which are themselves governed by a diffusion equation that includes a drift term corresponding to the drag felt by the massive particle as it accumulates speed and has more frequent collisions with the gas.  Under diffusive rescaling, the spatial variables converge in law to a Brownian motion.   This result follows by an elementary analysis of the closed formulas available for the Ornstein-Uhlenbeck process~\cite{Nelson}.  The Brownian motion description for the test particle transport is effectively ``more macroscopic" than the Ornstein-Uhlenbeck model  since the fluctuations  in the particle's momentum are integrated into infinitesimal spatial ``jumps" for the Brownian particle.      
  
In the other direction, we may consider derivations of the Ornstein-Uhlenbeck process from models that are ``more microscopic".   These relatively microscopic descriptions may merely be  more complicated stochastic models for the test particle such as a  linear Boltzmann equation, or more fundamentally, a reduced dynamics for the test particle beginning from a full microscopic model that includes the evolution of the degrees of freedom for the gas.   The stochastic model in the former case should be regarded as an intermediary picture between the Ornstein-Uhlenbeck and  the Hamiltonian dynamics arising in some limit; see~\cite{Spohn} for a discussion of the low density limit. In the Boltzmann models, the test particle undergoes a Markovian dynamics, whereas for the  Hamiltonian model including the gas, the randomness is only in the initial configuration, and the resulting dynamics for the test particle given by integrating out the gas is non-Markovian.  In the other direction, the contrast between the Ornstein-Uhlenbeck and the Boltzmann-type dynamics is that the momentum in the Boltzmann case makes discrete jumps, which are individually small in the Brownian limit, corresponding to collisions with gas particles rather than evolving with continuous trajectories according to a Langevin equation as in the Ornstein-Uhlenbeck case.  We refer to the book~\cite{Nelson} for a discussion of these various levels of description for a Brownian particle.  

Rigorous mathematical derivations of the Ornstein-Uhlenbeck process were achieved in~\cite{Hennion,Brunn} from  stochastic models giving an effective description of the test particle as it receives collisions from particles in a background gas.  For models that begin with a full  mechanical Hamiltonian model including the test particle and the gas, derivations of the Ornstein-Uhlenbeck process from the reduced dynamics of the test particle were obtained in~\cite{Holley,Durr,Szasz}. 

In this article we consider the Brownian regime for a stochastic model in which a one-dimensional test particle makes jumps in momentum, interpreted as collisions with a background gas, and is acted upon by a force from an external, spatially periodic potential field.  With the presence of the field, the momentum process is no longer Markovian since it drifts at a rate depending on the particle's position.  The momentum of the particle has two contributions: the total displacement in momentum generated by the field, which is given by a time integral of the force, and the sum of the momentum jumps from collisions.  As a result of  the specific scaling regime considered, which includes the period length of the potential, the force field typically makes a smaller-scale contribution to the test particle's momentum than the fluctuations in momentum due to the jumps identified with ``collisions".   The vanishing of the  force contribution is an averaged effect driven by  the frequent rate at which the test particle is typically passing through the period cells of the potential field.  The Brownian limit of the model to first-order thus yields the same Ornstein-Uhlenbeck process  as if the force were set to zero.  Our analysis is focused on obtaining a sharp upper bound for the influence of the external potential on the momentum of the particle, and our techniques improve those applied to a related model in~\cite{Old}.  Ultimately, the main contributions to the total drift in momentum due to the forcing are made during ``rare" time periods at which the test particle's momentum  returns to ``small" values.  The results of this article are extended in~\cite{Further} to prove that the integral of the force, or net displacement in momentum due to the potential, converges in law to a fractional diffusion whose rate depends on the amount time that the limiting Ornstein-Uhlenbeck process spends at zero momentum, i.e., the local time at zero.  

Our model is a linear Boltzmann dynamics for a one-dimensional particle making elastic collisions with the gas and including a spatially periodic potential.   The jump rate kernel is the one-dimensional case of the formula appearing in~\cite[Ch. 8.6]{Spohn}, which corresponds to a hard-rod interaction between the test particle and a single reservoir particle. However, since the model is one-dimensional, it cannot be derived from a mechanical microscopic dynamics in the Boltzmann-Grad limit.  We thus regard our model as phenomenological, and we argue that the resulting behavior that we find is qualitatively the same as what should be expected in an analogous three-dimensional model for a Brownian particle in a one-dimensional periodic potential.  

We think of our model as corresponding to an experimental situation for a large atom or molecule in a periodic standing-wave light field and interacting with a dilute background gas.  A periodic optical force  on an atom can be produced experimentally by counter-propagating lasers; see, for instance,~\cite{McClelland} or the reviews~\cite{Adams,Morsch}.  A classical treatment of the atom is reasonable in the regime where the potential is effectively weak because the test particle is typically not constrained by the potential and  the coherent quantum effects for the test particle will be suppressed by interactions with the gas.

\subsection{Model and results}

We will consider a one-dimensional particle of mass $M$ interacting with a gas of particles with mass $m$ for $\frac{m}{M}=\lambda\ll 1$ in the presence of a force $-\frac{dV}{dx}(\frac{x}{\lambda})$ for some smooth $V:\R\rightarrow \R^{+}$ with period $a>0$.  We take the phase space density $\tuP_{t,\lambda}(x,p)$ at time $t\in \R^{+}$ to obey a linear Boltzmann equation
\begin{align}\label{TheModel}
\frac{d}{dt}\tuP_{t,\lambda}(x,\,p)=&-\frac{\lambda }{m}p\frac{\partial}{\partial x}\tuP_{t,\lambda}(x,p)+\frac{dV}{dx}\big(\frac{x}{\lambda}\big)\frac{\partial}{\partial p}\tuP_{t,\lambda}(x,p) \nonumber    \\ &+\int_{\R}dp^{\prime}\big(\mcJ_{\lambda}(p^{\prime},p)\tuP_{t,\lambda}(x,p^{\prime})-\mcJ_{\lambda}(p,p^{\prime})\tuP_{t,\lambda}(x,p)     \big),   
\end{align}
where $\mcJ_{\lambda}(p^{\prime},p)$ is a kernel describing the rate of kicks in momentum $p^{\prime}\rightarrow p$ for the massive particle due to collisions with reservoir particles.  Since we are considering an ideal gas, the rates $\mcJ_{\lambda}(p^{\prime},p)$ will be determined by the interaction potential between the test particle and a reservoir particle, the temperature $\beta^{-1}$, the ratio $\lambda=\frac{m}{M}$, and the spatial density $\eta$.  We will take the rates $\mcJ_{\lambda}(p^{\prime},p)$ to correspond to a hard-rod interaction (or alternatively ``hard-point" since the length of the objects does not appear for the one-dimensional linear Boltzmann equation), which has the form 
\begin{align}\label{JumpRates}
  \mcJ_{\lambda}(p^{\prime},p):=  \frac{\eta(1+\lambda)}{2m}\big|p^{\prime}-p\big|\frac{e^{-\frac{\beta}{2m}\big(\frac{1-\lambda}{2}p^{\prime} -\frac{1+\lambda}{2}p    \big)^{2}      } }{(2\pi\frac{m}{\beta})^{\frac{1}{2}} }.   
  \end{align}
 The jump rates $\mcJ_{\lambda}$ are the explicit form of those in equation (8.118) from~\cite{Spohn} for the dimension one case, written in momentum variables rather than velocities.

 We will denote the stochastic process whose probability density evolves according to~(\ref{TheModel}) by $(X_{t},P_{t})$.   Let us also define the process
$$D_{t}:= \int_{0}^{t}dr\frac{dV}{dx}(\frac{X_{r}}{\lambda} ).     $$ 
The process $-D_{t}$ is the cumulative drift in the particle's momentum due to the periodic force field, and hence the momentum at time $t\in \R^{+}$ has the form \begin{align}\label{ContToMom}
P_{t}=P_{0}-D_{t}+J_{t},
\end{align}
 where $J_{t}$ is the sum of all the momentum jumps due to collisions with the gas over the time interval $[0,t]$.

Let $(\mfq_{t},\mfp_{t})\in \R^{2}$ be a process satisfying the Langevin equations
\begin{eqnarray}\label{TheLimit}
d\frak{q}_{t}&=&\frac{1}{m}\frak{p}_{t}dt, \nonumber \\
d\frak{p}_{t}&=&-\gamma \frak{p}_{t}dt+\big(\frac{2m\gamma }{\beta }\big)^{\frac{1}{2}}d\mathbf{B}_{t},
\end{eqnarray}
where $\gamma= 8 \eta \big(\frac{  2   }{\pi m \beta   })^{\frac{1}{2}} $ and $\mathbf{B}_{t}$ is a standard Brownian motion.

The technical assumptions for our main results are the following: 

\begin{assumptions}\label{Assumptions}\text{ }

\begin{enumerate}

\item The potential  $V(x)$ is non-negative, has period $a>0$, and is continuously differentiable.

\item The probability measure $\mu$ on $\R^{2}$  for the initial  location in phase space $(X_{0},P_{0})$ has finite moments. 

\end{enumerate}

\end{assumptions}

The following theorems are  the main results of this article.  Theorem~\ref{ThmMain}  states that as $\lambda\searrow 0$ the momentum process $P_{\frac{t}{\lambda} }$ rescaled by a factor $\lambda^{\frac{1}{2}}$ converges to an Ornstein-Uhlenbeck process.  Theorem~\ref{LemNullDrift} bounds the cumulative drift from the periodic force, although only the weaker limit result~(\ref{Sulfur}) is required for the proof of Thm.~\ref{ThmMain}.  The estimates developed to prove~(\ref{Nickel}) are extended in~\cite{Further} to prove that the process $(\lambda^{\frac{1}{4}}D_{\frac{t}{\lambda} },t\in[0,T])$ converges in law to a time-fractional diffusion as $\lambda\searrow 0$.  In particular, the exponent $\iota=\frac{1}{4}$ is the smallest possible such that the expectation of  $|\lambda^{\iota}D_{\frac{t}{\lambda} }|$ is uniformly bounded for small $\lambda>0$ and $\limsup_{\lambda\rightarrow 0}   \mathbb{E}^{(\lambda)}\big[\sup_{0\leq t\leq T}\big| \lambda^{\frac{1}{4}}D_{\frac{t}{\lambda} }\big| \big]>0$.

\begin{theorem}\label{LemNullDrift}
 There exists a $C>0$ such that for all $\lambda<1$, 
\begin{align}\label{Nickel}
 \mathbb{E}^{(\lambda)}\Big[\sup_{0\leq t\leq T}\big| \lambda^{\frac{1}{4}}D_{\frac{t}{\lambda} }\big| \Big]\leq C. 
 \end{align}
In particular, there is convergence in probability as $\lambda\searrow 0$ given by 
\begin{align}\label{Sulfur}
\sup_{0\leq t\leq T}\big| \lambda^{\frac{1}{2}}D_{\frac{t}{\lambda} }\big| \Longrightarrow 0.
\end{align}

\end{theorem}

 \begin{theorem}\label{ThmMain}  In the limit $\lambda\searrow 0$, there is convergence in law of the process $\lambda^{\frac{1}{2}}P_{\frac{t}{\lambda} }$ to the Ornstein-Uhlenbeck process $\frak{p}_{t}$ over the interval $t\in [0,\,T]$.
 The convergence is with respect to the uniform metric.  
\end{theorem}

Since the position process $X_{t}=X_{0}+\frac{\lambda}{m}\int_{0}^{t}drP_{r}  $ is driven by the momentum process, it follows from Thm.~\ref{ThmMain} that $\lambda^{\frac{1}{2}}X_{\frac{t}{\lambda}}$ converges in law as $\lambda\searrow 0$ to the process $\frak{q}_{t}$ defined in~(\ref{TheLimit}).

\subsection{ Further discussion}\label{SubSecDis}

This article concerns the dynamics of a Brownian particle that feels a force from a one-dimensional periodic  potential.  We focus on a regime in which the potential is ``microscopic".  By ``microscopic", we mean that the potential has an amplitude $\sup_{x,x'}|V(x)-V(x')|$ that is much smaller than the typical kinetic energy $\frac{M}{\beta}=\lambda^{-1}\frac{m}{\beta}$ of the test particle at equilibrium with the heat bath, and that the period $a$ is small enough so that the typical rate at which the particle passes through the period cells $(a^{2}M\beta)^{-\frac{1}{2}}$ is much faster than the rate of energy relaxation  $\approx\lambda \gamma$ for the test particle.  

For our mathematical analysis, the force $F(x)=-\frac{dV}{dx}\big(\frac{x}{\lambda}\big)$ is taken to have a period $a\lambda$ which scales proportionally to the mass ratio $\lambda=\frac{m}{M}$.  This is not essential to these results, and only the broad features described above are critical. The same can be said about the amplitude of the potential.

Theorem~\ref{ThmMain} states that to first approximation under Brownian rescaling,  the momentum is an Ornstein-Uhlenbeck process with no dependence on the potential.  This classical treatment of the particle allows for comparisons with quantum models.  A similar model for a one-dimensional quantum particle was studied in~\cite{Quantum} for which the potential is a periodic $\delta$-potential.  In that case, the singular potential makes a first-order change to the dynamics characterized by spatial subdiffusion caused by quantum  reflections even though the periodic potential is ``microscopic" in a similar sense as described above.   See~\cite{Birkl,Friedman,Kunze} for examples of experimental investigations of quantum reflections of atoms from potentials generated through laser light.   Analogous quantum models with smoother potentials will behave more like their classical counterparts.

A three-dimensional linear Boltzmann dynamics for a particle in a gas of hard spheres and under the influence of a one-dimensional periodic potential will have the same limit result up to the constants as in Thm.~\ref{ThmMain} for the degree of freedom in the direction of the potential. Although the momentum for a single spatial degree of freedom is not Markovian in the linear Boltzmann description, it becomes ``more Markovian" in the Brownian limit as is seen in the limiting three-dimensional Ornstein-Uhlenbeck process.  The rates~(\ref{JumpRates}) can then be replaced by the effective rates that emerge for a single degree of freedom in the three dimensional case, which have the same qualitative features for our purposes.

\subsubsection{Features of the model}

By rescaling the spatial coordinate for the particle by a factor of $\lambda^{-1}$, the master equation~(\ref{TheModel}) becomes
\begin{align}\label{ReTheModel}
\frac{d}{dt}\tuP_{t,\lambda}(x,\,p)=\mathcal{L}_{\lambda}^{*}(\tuP_{t,\lambda})(x,p)= &  -\frac{p }{m}\frac{\partial}{\partial x}\tuP_{t,\lambda}(x,p)+\frac{dV}{dx}\big(x\big)\frac{\partial}{\partial p}\tuP_{t,\lambda}(x,p)    \nonumber \\ &+\int_{\R}dp^{\prime}\big(\mcJ_{\lambda}(p^{\prime},p)\tuP_{t,\lambda}(x,p^{\prime})-\mcJ_{\lambda}(p,p^{\prime})\tuP_{t,\lambda}(x,p)     \big),     
\end{align}
where the generator $\mathcal{L}_{\lambda}^{*}$ is defined by the second equality.  Notice that $\lambda>0$ does not appear in the deterministic terms on the right side of~(\ref{ReTheModel}).    We thus effectively have a particle with Hamiltonian $H(x,p)=\frac{1}{2m}p^{2}+V(x)$ and a $\lambda$-dependent noise.  Note that under the new spatial metric, the velocity of the test  particle is $\frac{p}{m}$ rather than $\frac{p}{M}=\lambda\frac{p}{m}$.   For the purpose of Thm.~\ref{ThmMain} and this article generally, it is sufficient to consider the spatial degree of freedom to be a unit torus $\mathbb{T}=[0,1)$ so that the total state space is $\Sigma:=\mathbb{T}\times \R $.  The equilibrium state for the dynamics on $\Sigma$ is given by the Maxwell-Boltzmann distribution
\begin{align}\label{MaxBolt}
\Psi_{\infty,\lambda}(x,p):= \frac{ e^{-\beta\lambda H(x,p) } }{N(\lambda)   }
\end{align} 
for some normalization $N(\lambda)$. 

After  the spatial stretching,  the drift process in momentum $D_{t}$  has the form
\begin{align}\label{NormDrift}
D_{t}= \int_{0}^{t}dr g(X_{r},P_{r})   
 \end{align}
 for $g:\Sigma\rightarrow \R^{+}$ given by $g(x,p)= \frac{dV}{dx}(x)$.  It is thus an integral functional of an exponentially ergodic Markov process on $\Sigma$; see Appx.~\ref{AppendErgod} for a discussion of the exponential ergodicity of the dynamics.   Nonetheless, a central limit theorem for $\lambda^{\frac{1}{4}}D_{\frac{t}{\lambda}}$  does not follow from the limit  theory for integral functionals of ergodic Markov processes~\cite{Landim} because the relaxation to the state~(\ref{MaxBolt}) only occurs on the time scale $\lambda^{-1}\gg 1$.  Indeed, there must be many collisions with reservoir particles before  there is memory loss for the heavy particle.   As will be explained in Sects.~\ref{SecRoughPict}-\ref{SecProofStrat}, the analysis of $\lambda^{\frac{1}{4}}D_{\frac{t}{\lambda}}$ is more related  to the  limit theory for martingales whose bracket processes are additive functionals of a null-recurrent Markov process~\cite{Hopfner}.  This is due to the fact that the fluctuations in $D_{t}$ accumulate mainly during time intervals in which $|P_{t}|$ is much smaller than the typical momentum size $\big(\frac{m}{\lambda\beta}\big)^{\frac{1}{2}}\gg \big(\frac{m}{\beta}\big)^{\frac{1}{2}}$ for the equilibrium state $\Psi_{\infty,\lambda}$.

\subsubsection{Rough picture of the behavior in the Brownian regime $\lambda\ll 1$} \label{SecRoughPict}

Since the equilibrium state of the dynamics is given by the Maxwell-Boltzmann distribution~(\ref{MaxBolt}), the typical energy for the particle when $\lambda\ll 1$ will be on the order $\lambda^{-1}$.  Moreover,  the potential $V(x)$ is bounded, so most of the energy will be in the kinetic component $\frac{1}{2m}p^{2}$ corresponding to momenta $p$ with absolute value on the order of $ \big(\frac{m}{\lambda\beta}\big)^{\frac{1}{2}}\gg \big(\frac{m}{\beta}\big)^{\frac{1}{2}} $.  The jump rates $\mathcal{J}_{\lambda}(p,p')$ for $|p|=\mathit{O}( \lambda^{-\frac{1}{2}})$ are approximately 
\begin{align}\label{RateApprox}  
\mathcal{J}_{\lambda}(p,p')= j(p-p')+\lambda  \frac{\beta}{4m}\big((p)^2-(p')^{2}\big) j(p-p')+\mathit{O}(\lambda), 
\end{align}
where the idealized rates $j(p)$ have the form
\begin{align}\label{LambdaZero}
 j(p) :=   \frac{\eta}{2m}|p|\frac{e^{-\frac{\beta}{2m}p^{2}    } }{(2\pi\frac{m}{\beta})^{\frac{1}{2}} }.    
 \end{align}
 The second term on the right side of~(\ref{RateApprox}) is $\mathit{O}(\lambda^{\frac{1}{2}})$ by our assumption $|p|=\mathit{O}( \lambda^{-\frac{1}{2}})$.  The physical meaning behind the approximation~(\ref{RateApprox}) is that the gas reservoir particles are typically moving at speeds on the order $(m\beta)^{-\frac{1}{2}}$ which is greater than the typical speed of the test particle  $(M\beta)^{-\frac{1}{2}}= \lambda^{\frac{1}{2}}(m\beta)^{-\frac{1}{2}}\ll (m\beta)^{-\frac{1}{2}} $ \footnotemark 
.  The statistics for the momentum transfers from the gas thus do not depend strongly on the momentum of the test particle, and have approximately a convolution form as in the zeroth-order term in~(\ref{RateApprox}).   The zeroth-order approximation in~(\ref{RateApprox}) suggests that the collision component $J_{t}$ of the momentum~(\ref{ContToMom}) is typically behaving as an unbiased random walk with increments having density $j(v)$.  Based on this reasoning,   $\lambda^{\frac{1}{2}}J_{\frac{t}{\lambda}}$ should converge   to a Brownian motion with diffusion constant $\frac{2m\gamma}{\beta}$ as $\lambda\searrow 0$ by the central limit theorem.  However, the first-order term in~(\ref{RateApprox}) generates a drift for  $\lambda^{\frac{1}{2}}J_{\frac{t}{\lambda}}$ that is retained as $\lambda\searrow 0$ and converges to a limit by a law of large numbers.   This can be seen in the friction term appearing in the Langevin equation~(\ref{TheLimit}).  

\footnotetext{These velocities refer to the original length scale, before stretching by a factor of $\lambda^{-1}$.}

According to the heuristics above,  $J_{\frac{t}{\lambda}}$ should  typically be found on the scale $\lambda^{-\frac{1}{2}}$ when  $\lambda \ll 1$ and $t\in [0,T]$, and we will now argue that $D_{\frac{t}{\lambda}}$ should typically be $\mathit{O}(\lambda^{-\frac{1}{4}}) $.  We can parse the integral for $D_{t}$ according to the collision times $t_{n}$ as
$$ D_{t}=\int_{t_{\calN_{t}}}^{t}   dr\frac{dV}{dx}(X_{r})   + \sum_{n=1}^{\calN_{t}}\int_{t_{n-1}}^{t_{n}}dr\frac{dV}{dx}(X_{r}),  $$
where $t_{0}=0$ and  $\calN_{t}$ is the number of collisions up to time $t$.   Between collisions from the gas, the particle evolves deterministically according to the Hamiltonian $H(x,p)=\frac{1}{2}p^{2}+V(x)$, and Newton's equations give
\begin{align}\label{MomentumIncrement}
 P_{t_{n}^{-}}-P_{t_{n-1} }=-\int_{ t_{n-1} }^{t_{n }  }dr\frac{dV}{dx}(X_{r}).   
 \end{align}
If $H(X_{t_{n-1}},P_{t_{n-1}})>2\sup_{x}V(x)$, then the momentum will not change signs over the interval $[t_{n-1},t_{n})$, and
$$ \big|P_{t_{n}^{-}}-P_{t_{n-1} }\big|= \Big||P_{t_{n-1} }|-\sqrt{ P_{t_{n-1}}^{2}+2V(X_{t_{n-1} })-2V(X_{t_{n} })}\Big|\leq \frac{2\sup_{x}V(x)}{|P_{t_{n-1} }|},$$ 
which follows from the conservation of energy and the quadratic formula.  Thus, when $|P_{t_{n-1}}|$ is on the typical order $\propto \lambda^{-\frac{1}{2}}$,  the increment~(\ref{MomentumIncrement}) of the momentum drift is $\mathit{O}(\lambda^{\frac{1}{2}})\ll 1$.  

 In fact there is another critical feature of an ergodic nature that makes the contributions~(\ref{MomentumIncrement}) to $D_{t}$ even smaller when $|P_{t_{n-1} }|\gg 1$.  There is an ergodicity on the spatial torus relating to the fact that when the momentum is high, then the particle revolves quickly around the torus, and its location at the time of the next collision is close to uniform over $\mathbb{T}$.  This idea can be used to show that the mean for $\int_{ t_{n-1} }^{t_{n }  }dr\frac{dV}{dx}(X_{r})$ is $\mathit{O}(|P_{t_{n-1} }|^{-2})$ when given the information known up to time $t_{n-2}$.  In other words, besides the increments~(\ref{MomentumIncrement}) just being small when $|P_{t_{n-1}}|\gg 1$,  $D_{t}$ is also behaving like a martingale since the increments are close to  being uncorrelated with mean zero.   Thus, there is a central limit theorem-like cancellation among the terms.  
This  motivates that the contribution to $D_{\frac{t}{\lambda}}$ from time intervals where $|P_{r}|\propto \lambda^{-\frac{1}{2}}$ is $\mathit{o}(\lambda^{-\frac{1}{2}})$ since, for fixed small $\epsilon>0$,
\begin{align*}
\mathbb{E}\Big[\Big( &\sum_{n=1}^{\calN_{\frac{t}{\lambda} }}\chi\big( |P_{t_{n-1}}|    \geq \epsilon \lambda^{-\frac{1}{2}}\big) \int_{t_{n-1}}^{t_{n}}dr\frac{dV}{dx}(X_{r}) \Big)^{2}    \Big]\\ &=\mathit{O}\Big(\mathbb{E}\Big[\sum_{n=1}^{\calN_{\frac{t}{\lambda} }}\chi\big( |P_{t_{n-1}}|    \geq \epsilon \lambda^{-\frac{1}{2}}\big) \Big( \int_{t_{n-1}}^{t_{n}}dr\frac{dV}{dx}(X_{r}) \Big)^{2}    \Big]\Big)
\\ &\leq \epsilon^{-2}\mathbb{E}\big[\calN_{\frac{t}{\lambda}}\big]\mathit{O}(\lambda)=\mathit{O}(1),
\end{align*}
because the collisions occur with a frequency on the order of one per unit time $\mathbb{E}\big[\calN_{\frac{t}{\lambda}}\big]=\mathit{O}(\frac{t}{\lambda})$.  These contributions disappear for the normalized expression $\lambda^{\frac{1}{4}}D_{\frac{t}{\lambda}}$.  For technical reasons, our analysis of these facts is actually performed with a different set of artificially introduced stopping times rather than the collision times;  see Sect.~\ref{SecMomentumDrift}.   
 
 The above arguments motivate that $D_{\frac{t}{\lambda}}$ spends the greater portion of the time interval $t\in [0,T]$ behaving as a constant, or, said differently, its larger fluctuations are typically concentrated on a small fraction of the interval $[0,T]$.   Let us consider the order of the contributions to $D_{t}$ that are likely to occur for the periods of time when $P_{r}$ returns to the region around the origin, that is, $|P_{r}|=\mathit{O}(1)$.  If $P_{r}$ is behaving roughly as a random walk for $t\in[0,\frac{T}{\lambda}]$ with some very weak friction, then we expect that $P_{r}$ spends on the order of $\lambda^{-\frac{1}{2}}$ time in the vicinity of the origin.  If there are central limit theorem-like cancellations between the increments $\int_{ t_{n-1} }^{t_{n }  }dr\frac{dV}{dx}(X_{r})$ in those time periods, then  $D_{\frac{t}{\lambda}}$ should be expected to be on the scale $\lambda^{-\frac{1}{4}}$.

\subsubsection{Techniques and strategy of the proof }\label{SecProofStrat}

The main difficulty in showing that $\lambda^{\frac{1}{2}}P_{\frac{t}{\lambda}}$ converges in law to the Ornstein-Uhlenbeck process $\frak{p}_{t}$ is to show that the component $ D_{\frac{t}{\lambda}}$ of the momentum is typically $\mathit{o}(\lambda^{-\frac{1}{2}})$ for $t\in [0,T]$.  As indicated by the heuristics of Sect.~\ref{SecRoughPict}, we should expect, in fact, that typically $\sup_{0\leq t\leq T}| D_{\frac{t}{\lambda}}|$ is $\mathit{O}(\lambda^{-\frac{1}{4}})$.  

One of the main ingredients in our analysis is a splitting technique that consists in introducing an artificial ``atom" into the state space  by embedding the original process as a component of a process with an enlarged state space. In principle, the benefit for having  an extended state space with an atom is that the trajectories for the process $S_{t}$ can be decomposed into a series of i.i.d. parts, i.e., \textit{life cycles}, corresponding to time intervals  $[R_{n},R_{n+1})$ where $R_{n}$ are the return times to the atom.  This would allow the integral functional $D_{t}$ to be written as a pair of boundary terms plus a sum of i.i.d. random variables with a random number of terms.  For Markov chains such a technique for embedding an atom was developed independently in~\cite{Nummelin} and~\cite{Athreya} and is referred to as  \textit{Nummelin splitting} or merely \textit{splitting}.  When it comes to splitting a Markov process, there are different schemes available.  In~\cite{Hopfner} there is a sequence of split processes constructed which contain marginal processes that are arbitrarily close to the original process.  The construction in~\cite{Loch} involves a larger state space $\Sigma\times [0,1 ]\times \Sigma$ although an exact copy of the original process is embedded as a marginal. The idea that splitting constructions could be used as a tool to prove certain limit theorems for Markov  processes was suggested in an unpublished paper~\cite{TouatiUnpub}.

We use a truncated version of the split process introduced in~\cite{Loch}.  The split process is not Markovian itself, but contains an embedded chain (the split resolvent chain) which is Markovian.  The life cycles for the process are not completely independent in this construction because  there are correlations between successive life cycles.  The details of the construction are explained in Sect.~\ref{SecNum}.  The original process $S_{t}$, which lives in $\Sigma=\mathbb{T}\times \R$,   is embedded as a component of $\tilde{S}_{t}=(S_{t},Z_{t}) \in \tilde{\Sigma}=\Sigma\times \{0,1\}$.  The process $D_{t}$ can be written as four boundary terms plus a martingale
\begin{align}\label{FirstLC}
D_{t}&= \big(\text{Sum of boundary terms})+\tilde{M}_{t} \\ \nonumber
\tilde{M}_{t}&=  \sum_{n=1}^{ \tilde{N}_{t} }\Big(\int_{R_{n}}^{R_{n+1}}dr\frac{dV}{dx}(X_{r})-\big(\frak{R}^{(\lambda)}\,\frac{dV}{dx}\big)( S_{R_{n}} ) +\big(\frak{R}^{(\lambda)}\,\frac{dV}{dx}\big)( S_{R_{n+1}} ) \Big) ,
\end{align}
where $\tilde{N}_{t}$ is the number of returns to the atom $\Sigma\times 1$ to have occurred before time $t$, and  $\frak{R}^{(\lambda)}:L^{\infty}(\Sigma)\rightarrow  L^{\infty}(\Sigma)$ is the reduced resolvent of the backwards generator $\mathcal{L}_{\lambda}$.    The boundary terms are  
$$  \int_{0}^{R_{1}}dr\frac{dV}{dx}(X_{r})-\int_{t}^{R_{\tilde{N}_{t}+1} }dr\frac{dV}{dx}(X_{r})+\big( \frak{R}^{(\lambda)}\,\frac{dV}{dx}\big)( S_{R_{1}} ) -  \big(\frak{R}^{(\lambda)}\,\frac{dV}{dx}\big)( S_{R_{\tilde{N}_{t}+1}} )  . $$
The interjection of the telescoping terms $ \big(\frak{R}^{(\lambda)}\frac{dV}{dx}\big)( S_{R_{n}} )$ removes the correlations  between successive life cycles.  The fact that the increments  of $\tilde{M}_{t}$ have mean zero with respect to the information known up to time $R_{n}$ is a consequence of the splitting construction and the fact that the observable $\frac{dV}{dx}$ has mean zero in the equilibrium state $\Psi_{\infty,\lambda}$.  The process  $\tilde{M}_{t}$ is a martingale with respect to its own filtration, and this opens the possibility of applying Doob's maximal inequality to bound the fluctuations of $\tilde{M}_{t}$.

The martingale $\tilde{M}_{t}$ is a variant of the martingale $\tilde{M}'$ below that is usually employed when studying limit theorems for integral functionals of  Markov processes:
\begin{align}\label{AltMart}
\tilde{M}_{t}'=  \big(\frak{R}^{(\lambda)}\,\frac{dV}{dx}\big)(S_{t})- \big(\frak{R}^{(\lambda)}\,\frac{dV}{dx}\big)(S_{0}) +D_{t}.     
\end{align}
The martingale $\tilde{M}'$ has predictable quadratic variation
$$\langle \tilde{M}'\rangle_{t}=\int_{0}^{t}dr\int_{\R}dp'\Big(  \big(\frak{R}^{(\lambda)}\,\frac{dV}{dx}\big)(X_{r},p')- \big(\frak{R}^{(\lambda)}\,\frac{dV}{dx}\big)(X_{r},P_{r})      \Big)^{2}\mathcal{J}_{\lambda }(P_{r},p'). $$
Using the martingale~(\ref{AltMart}) would require showing some decay that is uniform in $\lambda<1$ for increments $| \big(\frak{R}^{(\lambda)}\,\frac{dV}{dx}\big)(x,p')- \big(\frak{R}^{(\lambda)}\,\frac{dV}{dx}\big)(x,p)    |$ when $|p|,|p'| $ are large and $|p-p'|=\mathit{O}(1)$.  However, it is not clear to us how to obtain the necessary bounds on the resolvent, and the methods here are designed to  exploit the time-averaging of the oscillatory process $\frac{dV}{dx}(X_{r})$  as suggested by the heuristics in Sect.~\ref{SecRoughPict}.  Our technique is based on having bounds for a generalized resolvent $U^{(\lambda)}_{h}:L^{\infty}(\Sigma)\rightarrow L^{\infty}(\Sigma)$ of the form
$$ \big(U^{(\lambda)}_{h}g\big)(s):=\mathbb{E}^{(\lambda)}_{s}\Big[ \int_{0}^{\infty}dt e^{-\int_{0}^{t}dr h(S_{r})}g(S_{t} )    \Big],    $$
where $h$ is a non-negative function with compact support, and the function $g \equiv g_{\lambda}$ essentially has the form
$$g_{\lambda}(s)=\Big|\mathbb{E}_{s}^{(\lambda)}\Big[\int_{0}^{\infty}dt\,t\,e^{-t}\frac{dV}{dx}(X_{t})    \Big]\Big|.    $$
The operator $U^{(\lambda)}_{h}$ arises in the study of recurrence for Markov processes and has been referred to as the \textit{state-modulated resolvent}~\cite{Meyn}.   Analysis of $U^{(\lambda)}_{h}$ for our dynamics is contained in~\cite{Resolvent}.

\subsubsection{The unit conventions and organization of the article} 

Throughout the remainder of the article, we will remove units by setting $\beta=a=m=1$,  and picking $\eta$ such that $ \gamma=\frac{1}{2}$; recall that $\gamma$ is defined below~(\ref{TheLimit}).  We assume List~\ref{Assumptions} in all theorems, lemmas, etc. unless otherwise stated.\vspace{.5cm} 

Most of the analysis, Sects. 2- 5, is concerned with the proof of Thm.~\ref{LemNullDrift}.  The proof of Thm.~\ref{ThmMain} given Thm.~\ref{LemNullDrift} is relatively nontechnical.  The contents of later sections are roughly characterized by the following:

\begin{itemize}
\item Section~\ref{SecNum} presents the splitting structure that allows us to decompose the dynamics into a series of life cycles as sketched in Sect.~\ref{SecProofStrat}.     

\item  Section~\ref{SecReturns} is directed towards gaining  control over the frequency and duration of life cycles in the limit $\lambda\searrow 0$.   

\item Section~\ref{SecSumFun} demonstrates how to bound the fluctuations of the  integral functional $\int_{0}^{t}dr\frac{dV}{dx}(X_{r})$ over the time period of a single life cycle.

\item  Sections~\ref{SecDrift} and~\ref{BrownProof} contain the proofs respectively for Thms.~\ref{LemNullDrift} and~\ref{ThmMain}.

\item Various proofs are placed in Sect.~\ref{SecMiscProof} to avoid diverting the reader from the main points in earlier sections.

\end{itemize}

\section{Nummelin splitting}\label{SecNum}
   
The split process that we define here is  a truncated version of that in~\cite{Loch}. In the context of a larger probability space, the drift in momentum $D_{t}=\int_{0}^{t}dr\frac{dV}{dx}(X_{r})$ may be viewed as a  martingale plus a few small ``boundary" terms.  This allows us to apply martingale techniques.   For those familiar with the terminology related to Nummelin splitting, we outline the extension of the process as follows:  We introduce a resolvent chain embedded in the original process, we split the chain using Nummelin's technique, and  we extend the resolvent chain to a non-Markovian process which contains an embedded version of the original process.      

We will begin with a generic discussion of the splitting structure by assuming that we have a function $h:\Sigma\rightarrow [0,1]$ and a probability measure $\nu$ on $\Sigma$ satisfying the inequality~(\ref{NummelinCrit}).  The specific $h$ and $\nu$ that we use in this article are defined below in Conv.~\ref{NumConv}. 

Let $(e_m)$ be a sequence of mean one exponential random variables that are independent of each other and of the process $(X_t,\,P_t)$, and let $\tau_n := \sum_{m=1}^n e_m$ with the convention $\tau_0 = 0$. The $\tau_{n}$ will be referred to as the \textit{partition times}.  Define $\mathbf{N}_{t}$ to be the number of non-zero $\tau_{n}$ less than $t$, and the Markov chain $\sigma_{n}:=(X_{\tau_{n}},P_{\tau_{n}})\in \Sigma$,  
 which is referred to as the \textit{resolvent chain}.  The resolvent chain has the same invariant probability density as the original process.  Let $\mathcal{T}$ be the  transition kernel for the chain, acting on functions from the left and measures from the right.  Recall that for a Markov chain, an {\em atom} is a nonempty set $\alpha$ such that the probability transitions starting from a point  $s\in\alpha$ are independent of $s$. An atom is said to be {\em recurrent} if, when starting from a point in the atom, the probability of returning to the atom in the future is one.  In general, an Harris recurrent Markov chain with invariant measure $\mu$ does not necessarily have a recurrent atom $\alpha $ with positive weight $\mu(\alpha)>0$. The splitting technique that we outline presently, originally due to Nummelin, allows us to create a recurrent atom for an Harris recurrent Markov chain though a minorization condition~(\ref{NummelinCrit}). The idea is to extend the state space  $\Sigma$ to   $\tilde{\Sigma}:=\Sigma\times \{0,1\}$  in order to construct a chain $(\tilde{\sigma}_{n})\in \tilde{\Sigma}$ with a recurrent atom and having the statistics for $(\sigma_{n})$ embedded in the first component of $(\tilde{\sigma}_{n})$.  Let $\nu$ be a probability measure on $\Sigma$ and $h:\Sigma\rightarrow [0,1) $ be such that       
\begin{align}\label{NummelinCrit}
 \mathcal{T}(s_{1},ds_{2})\geq h(s_{1})\nu(ds_{2}).   
 \end{align}
  We have the following transition rates from the state $(s_{1},z_{1})\in \tilde{\Sigma}$ to the infinitesimal region $(ds_{2}, z_{2})$:
\begin{align*}
\tilde{\mathcal{T}}( s_{1},z_{1}; ds_{2},z_{2})=\left\{  \begin{array}{ccc} \frac{1-h(s_{2}) }{1-h(s_{1})}  \big( \mathcal{T}-  h\otimes \nu \big) (s_{1},ds_{2})            & \hspace{.5cm} & z_{1}=z_{2}=0 , \\  \frac{h(s_{2}) }{1-h(s_{1})}  \big( \mathcal{T}-  h \otimes \nu\big) (s_{1},ds_{2})         &  \hspace{.5cm}& z_{1}=1-z_{2}=0 ,   \\ \big(1- h(s_{2})\big)  \nu (ds_{2})             & \hspace{.5cm} & z_{1}=1- z_{2}=1 , \\ h(s_{2}) \nu(ds_{2})    & \hspace{.5cm}& z_{1}=z_{2}=1. 
    \end{array} \right.  
\end{align*}
Given a measure $\mu$ on $\Sigma$, we refer to its \textit{splitting} $\tilde{\mu}$ as the measure on $\tilde{\Sigma}$ given by 
\begin{align}\label{SplitMeasure}
 \tilde{\mu}(ds,z)=  \chi(z=0)\big(1-h(s)\big)\mu(ds)+\chi(z=1)h(s)\mu(ds). 
 \end{align}
In particular, the split chain is taken to have initial distribution given by the splitting of the initial distribution for the original (pre-split) chain. The set $\Sigma\times 1$ is an atom since the transition measure from $(s_1,1)$ is independent of $s_1$. Moreover, it is a recurrent atom because our original process is exponentially ergodic to $\Psi_{\infty,\lambda}$ (see Appx.~\ref{AppendErgod}), and,  as a consequence, the split chain is exponentially ergodic with respect to the invariant state $\tilde{\Psi}_{\infty,\lambda}$ (see Part (2) of Prop.~\ref{BasicsOfNum}) which has $\tilde{\Psi}_{\infty,\lambda}(\Sigma\times 1)=\Psi_{\infty,\lambda}(h)>0$.  Notice that the conditional probability that $z_2=1$ given $s_1, z_1, s_2$  is determined by a coin with heads-probability $h(s_2)$.

  Using the law for the split chain $(\tilde{\sigma}_{n})$, we may construct a split process $(\tilde{S}_{t})\in \tilde{\Sigma}$ and a sequence of times $\tilde{\tau}_{n}$ with the recipe below.  We refer to pages 1302 and 1306 of~\cite{Hopfner} for more discussion on the construction.  The $\tilde{\tau}_{n}$ should be thought of as the partition times $\tau_{n}$ embedded in the split statistics, although we temporarily denote them differently to emphasize their axiomatic role in the construction of the split process.  Let $\tilde{\tau}_{n}$ and $\tilde{S}_{t}=(S_{t},Z_{t})$ be such that 
\begin{enumerate}
\item  $0=\tilde{\tau}_{0}$, $\tilde{\tau}_{n}\leq \tilde{\tau}_{n+1}$, and $\tilde{\tau}_{n}\rightarrow \infty$ almost surely.  

\item The chain $(\tilde{S}_{\tilde{\tau}_{n}})$ has the same law as $(\tilde{\sigma}_{n})$.  

\item  For $t\in [\tilde{\tau}_{n},\tilde{\tau}_{n+1})$, then $Z_{t}=Z_{\tilde{\tau}_{n}}$.

\item  Conditioned on the information known up to time $\tilde{\tau}_{n}$ for $\tilde{S}_{t}$, $t\in [0,\tilde{\tau}_{n}]$ and $\tilde{\tau}_{m}$, $m\leq n$, and also the value $\tilde{S}_{\tilde{\tau}_{n+1} }$, the law for the trajectories $S_{t}$, $t\in[\tilde{\tau}_{n},\tilde{\tau}_{n+1}]$ (which refers also to the length $\tilde{\tau}_{n+1}-\tilde{\tau}_{n}$) agrees with the law for the original process conditioned on knowing the values $S_{\tilde{\tau}_{n}}$ and $S_{\tilde{\tau}_{n+1}}$.   
\end{enumerate}
The marginal distribution for the first component $S_{t}$ agrees with the original process and the times $\tilde{\tau}_{n}$ are independent mean one exponential random variables that are independent of $S_{t}$. Of course, the times $\tilde{\tau}_{n}$ are not independent of the process $\tilde{S}_{t}$, and we note that the increment $ \tilde{\tau}_{n+1}-\tilde{\tau}_{n}$ is not necessarily exponential when conditioned on the state $\tilde{S}_{\tilde{\tau}_{n} }$.  The process $\tilde{S}_{t}$ is not Markovian although, as emphasized in~\cite{Loch}, the process $(S_{t},Z_{t},S_{\tau(t)})\in \Sigma\times \{0,1\}\times \Sigma$ is Markovian, where $\tau(t)$ is the first partition time $\tilde{\tau}_{n}$ following time $t$.  Importantly, the strong Markov property for  $\tilde{S}_{t}$ does hold for the times   $\tilde{\tau}_{n}$; see~\cite[Remark 2.5]{Hopfner}.   We now drop the tilde from $\tilde{\tau}_{n}$, and use $(\tilde{\sigma}_{n})$ to denote the sequence $(\tilde{S}_{\tau_{n}})$.  We refer to the statistics of the split process by $\tilde{\mathbb{E}}^{(\lambda)}$ and $\tilde{\mathbb{P}}^{(\lambda)}$ for expectations and probabilities, respectively.      

Now that we have defined the split process $\tilde{S}_{t}$, we can proceed to define the ``life cycles".  Let $R_{m}'$ be the value $\tau_{\tilde{n}_{m}}$ for $\tilde{n}_{m}=\textup{min} \{ n\in \mathbb{N}\,\big|\,\sum_{k=0}^{n}\chi(Z_{\tau_k}= 1)  =m  \}   $.  In other words, $R_{m}'$ is the $m$th partition time to visit the atom set $\Sigma\times 1$, and we use the convention that $R_{0}'=0$.  Define $R_{m}$, $m\geq 1$ to be the partition time following $R_{m}'$.  The $m$th life cycle is the time interval $[R_{m},R_{m+1})$.  Intuitively, it may at first seem more natural to define $S_{R_{m}'}$ as the beginning of the life cycle.  However, the distribution for $R_{1}'$ will depend on the initial distribution $\tilde{S}_{0}$.   It is better to consider the beginning of the life cycle to be the partition time $R_{m}$ following $R_{m}'$, which has distribution  $\tilde{\nu}$ with respect to information known up to time  $R_{m}'$.   Although the conditional distribution for $\tilde{S}_{R_{m}}$ is independent  of the value $\tilde{S}_{R_{m}'}\in \Sigma\times 1$, successive live cycles $[R_{n-1},R_{n})$, $[R_{n},R_{n+1})$ are obviously not independent since, for instance, there is almost sure convergence $\lim_{t\nearrow R_{n} } S_{t }=S_{R_{n} }$.  Let $d\mathbf{N}_{t}$ be the counting measure on $\R^{+}$ such that  $\int_{(t_{1},t_{2}]} d\mathbf{N}_{r}=\mathbf{N}_{t_{2}}-\mathbf{N}_{t_{1}}$ for $0\leq t_{1}<t_{2}$, i.e., the number of partition times over the interval $(t_{1},t_{2}]$.  The following proposition lists some independence properties that follow closely from the construction of the split process.   The measure $\nu$  in the statement of  Prop.~\ref{IndependenceProp} can be regarded as a generic normalized measure satisfying~(\ref{NummelinCrit}) for  some $h:\Sigma\rightarrow [0,1]$  although we will choose it to be of the specific form in Conv.~\ref{NumConv} later in the text.

\begin{proposition}\label{IndependenceProp}\text{  }

\begin{enumerate}

\item The distribution for $\tilde{S}_{R_{n}}$ is $\tilde{\nu}$ when conditioned on all information known up to time $R_{n}'$: $\tilde{\mathcal{F}}_{R_{n}'}$.

\item  The sequence of trajectories $\big( S_{t},\, d\mathbf{N}_{t} : \, t\in [R_{n},R_{n+1}'] \big) $ are i.i.d. for $n\geq 1$, and $\big( S_{t},\, d\mathbf{N}_{t} : \, t\in [R_{n},R_{n+1}'] \big) $ is independent of $\big( \tilde{S}_{t},\,d\mathbf{N}_{t}: \, t\notin (R_{n}',R_{n+1}) \big) $.

\item The trajectory $\big( \tilde{S}_{t},\,d\mathbf{N}_{t}: \, t\in [R_{n},R_{n+1}] \big) $ is independent of  $\big( \tilde{S}_{t},\, d\mathbf{N}_{t}: \, t\notin (R_{n}',R_{n+2}) \big) $. In particular, $\big( \tilde{S}_{t},\, d\mathbf{N}_{t}: \, t\in [R_{n},R_{n+1}] \big) $ is independent of $\big( \tilde{S}_{t},\, d\mathbf{N}_{t}: \, t\in [R_{m},R_{m+1}] \big) $ for $|n-m|\geq 2$.

\end{enumerate}

\end{proposition}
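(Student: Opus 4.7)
The plan is to derive all three statements from two fundamental features of the split construction: the strong Markov property of the embedded resolvent chain $(\tilde{\sigma}_k)$ at atom visits, and the bridge-type conditional independence of the continuous trajectory between consecutive partition times that is codified in property (4) of the axiomatic definition of $\tilde{S}_t$.

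For Part (1), I would start by observing that $R_n' = \tau_{\tilde{n}_n}$ where $\tilde{n}_n$ is a stopping time for the filtration of $(\tilde{\sigma}_k)$ and $\tilde{\sigma}_{\tilde{n}_n} \in \Sigma \times \{1\}$ by definition. Reading off the transition kernel from any state $(s,1)$ in the atom gives
\begin{equation*}
\tilde{\mathcal{T}}_{\lambda}\bigl((s,1); ds_2, dz_2\bigr) = \bigl(1-h(s_2)\bigr)\chi(z_2=0)\nu(ds_2) + h(s_2)\chi(z_2=1)\nu(ds_2) = \tilde{\nu}(ds_2,dz_2),
\end{equation*}
independently of $s$. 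The strong Markov property of $(\tilde{\sigma}_k)$ then delivers $\tilde{S}_{R_n} = \tilde{\sigma}_{\tilde{n}_n + 1} \sim \tilde{\nu}$ conditionally on $\tilde{\mathcal{F}}_{R_n'}$.

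For Parts (2) and (3), the same argument, applied to the full post-$R_n'$ chain, shows that $(\tilde{\sigma}_{\tilde{n}_n + k})_{k \geq 1}$ is an independent fresh copy of the split chain started from $\tilde{\nu}$. By property (4), each continuous bridge $\{\tilde{S}_t : t \in [\tau_k, \tau_{k+1}]\}$ together with its length $\tau_{k+1} - \tau_k$ is, conditionally on the pair $(\tilde{\sigma}_k, \tilde{\sigma}_{k+1})$, an independent draw from the original-process bridge law. Concatenating the bridges whose chain endpoints lie inside $\tilde{\sigma}_{\tilde{n}_n + 1}, \dots, \tilde{\sigma}_{\tilde{n}_{n+1}}$ reconstructs $\{\tilde{S}_t, d\mathbf{N}_t : t \in [R_n, R_{n+1}']\}$ as a measurable functional of the fresh post-$R_n'$ chain alone; this yields both its independence from $\tilde{\mathcal{F}}_{R_n'}$ and, by iterating Part (1) at successive $R_m'$, the i.i.d. property across excursions. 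For the independence from the future in Part (2), a second application of Part (1) at $R_{n+1}'$ makes $\tilde{S}_{R_{n+1}} \sim \tilde{\nu}$ independent of $\tilde{\mathcal{F}}_{R_{n+1}'}$, so the trajectory on $(R_{n+1}, \infty)$ is a fresh restart independent of the excursion. Part (3) follows the same outline, but the extended interval $[R_n, R_{n+1}]$ now contains the bridge $[R_{n+1}', R_{n+1}]$, whose right endpoint $\tilde{S}_{R_{n+1}}$ is shared with the next bridge $[R_{n+1}, \tau_{\tilde{n}_{n+1} + 2}]$; consequently one must enlarge the future-side buffer to $[R_{n+1}', R_{n+2}]$ in order to exclude every bridge that shares a chain endpoint with the excursion.

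The main obstacle is purely bookkeeping: $\tilde{S}_t$ itself is not Markovian, so the conditional independence statements must be routed through the embedded Markov chain $(\tilde{\sigma}_k)$ and the bridge factorization rather than a direct strong Markov property for $\tilde{S}_t$. Once one commits to tracking which bridges share which chain endpoints, the specific buffer widths appearing in (2) and (3) turn out to be exactly what is needed to insulate the target excursion from every bridge whose law depends on a chain value inside it.
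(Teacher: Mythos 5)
Your argument is correct, but note that the paper itself does not prove this proposition at all: it is stated as "a consequence of~\cite{Loch}" (L\"ocherbach--Loukianova), so the paper outsources the regeneration structure to that reference while you reconstruct it directly from the construction axioms of Section~\ref{SecNum}. Your two pillars are the right ones: the kernel computation $\tilde{\mathcal{T}}_{\lambda}((s,1);ds_2,z_2)=\tilde{\nu}(ds_2,z_2)$, which is immediate from the displayed transition rates and the definition~(\ref{SplitMeasure}) of a split measure, and the conditional independence of the bridges given their chain endpoints from axiom (4). The bookkeeping for the buffer intervals is also right: the bridge $[R_{n}',R_{n}]$ is tied to the pre-excursion chain value $\tilde{\sigma}_{\tilde{n}_{n}}$ and the bridge $[R_{n+1}',R_{n+1}]$ to the post-excursion value $\tilde{\sigma}_{\tilde{n}_{n+1}+1}$, which is precisely why Part (2) excises $[R_{n}',R_{n+1}]$ and why including the latter bridge in Part (3) forces the future-side buffer out to $R_{n+2}$, the first regeneration after $\tilde{\sigma}_{\tilde{n}_{n+1}+1}$. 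What your write-up buys is a self-contained verification tailored to this particular (truncated) splitting; what the citation buys the authors is not having to handle the measure-theoretic fine print.

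Two points deserve a sentence more care if this were written out in full. First, Part (1) conditions on the continuous-time $\sigma$-algebra $\tilde{\mathcal{F}}_{R_n'}$, which is strictly richer than the chain filtration your strong Markov argument uses; the upgrade requires observing that the bridges already revealed by time $R_n'$ are, by axiom (4), conditionally independent of the future chain given their endpoints, so they carry no extra information about $\tilde{\sigma}_{\tilde{n}_n+1}$. Second, the excursion $(\tilde{S}_t,d\mathbf{N}_t:t\in[R_n,R_{n+1}'])$ is not a functional of the post-$R_n'$ chain \emph{alone} but of that chain segment together with auxiliary bridge randomness that is conditionally independent across bridges given the chain; the independence and i.i.d. conclusions survive because this auxiliary randomness is itself independent of everything outside the segment, but the phrasing should reflect it. Neither point affects the validity of the argument.
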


\begin{proof}
Statement (1),  which is given in~\cite[Prop. 2.13]{Loch}, follows immediately from the construction.  Statements (2) and (3) follow from Part (1), the strong Markov property at the times $R_{n}$, and the independence of the partition times from the past~\cite[Prop. 2.6]{Loch}.  For instance, $\tilde{S}_{R_{n}}$ has distribution $\tilde{\nu}$ independently of $\big( S_{t},\, d\mathbf{N}_{t} : \, t\in [0,R_{n}'] \big) $   by Part (1).   By the strong Markov property for $\tilde{S}_{t}$  at the time $R_{n}$, the trajectory  $\big( S_{t} : \, t\in [R_{n},R_{n+1}'] \big) $ is independent of $\big( \tilde{S}_{t},\, d\mathbf{N}_{t} : \, t\in [0,R_{n}'] \big) $ when given the state $\tilde{S}_{R_{n}}$ and has the  same law as  $\big( S_{t}: \, t\in [0 ,R_{1}'] \big) $ when $\tilde{S}_{0}$ has distribution $\tilde{\nu}$.   The partition times $\tau_{m}$ over the interval $[R_{n},R_{n}']$, encoded by $\int_{R_{n}}^{t} d\mathbf{N}_{r}$ for $t\in  [R_{n},R_{n+1}'] $, are independent of $\big( S_{t},\, d\mathbf{N}_{t} : \, t\in [0,R_{n}'] \big) $ by~\cite[Prop. 2.6]{Loch} (and also independent of the process $S_{t}$ for all $t\in \R^{+}$).

\end{proof}

Unfortunately, notation multiplies when the splitting structure is invoked. 
 For easier reference, we list the following frequently used symbols: 
\begin{eqnarray*}
&\tilde{S}_{t}=(S_{t},Z_{t})       & \text{State of the split process at time $t$      }\\
&  \tau_{m}\in \R^{+}     &   \text{$m$th partition time} \\
&\tilde{\sigma}_{m}= \tilde{S}_{\tau_{m}}     &    \text{$m$th state of the split chain  }    \\
&(\sigma_{m} , \zeta_{m}) =\tilde{\sigma}_{m}   &    \text{$\sigma_{m}$ and $\zeta_{m}$ are the state and binary components, respectively, of $\tilde{\sigma}_{m}$  }\\
 &\mathbf{N}_{t}\in \mathbb{N} \text{\,}    & \text{Number of partition times $\tau_{m}$, $m\geq 1$ to occur up to time $t$ }\\
& R_{m}' \in \R^{+}         &  \text{$m$th partition time visiting the set $\Sigma\times 1$  }\\
& R_{m} \in \R^{+}         &  \text{Partition time succeeding $R_{m}'$ and the beginning of the $m$th life cycle}\\
&  \tilde{N}_{t} \in \mathbb{N}         &    \text{Number of returns to the atom up to time $t$} \\
  & \tilde{n}_{m} \in \mathbb{N}          &  \text{Number of partition times in the interval $(0,R_{m }]$  }\\
    & \mu\rightarrow \tilde{\mu}         &  \text{The splitting of a measure $\mu$ on $\Sigma$ as defined in~(\ref{SplitMeasure})}   
  \\
&\mathcal{F}_{t}    &    \text{Information up to time $t$ for the original process $S_{r}$ and the $\tau_{m}$ } \\ 
&\tilde{\mathcal{F}}_{t}    &    \text{Information up to time $t$ for the split process  $\tilde{S}_{r}$ and the $\tau_{m}$ }\\
&\tilde{\mathcal{F}}_{t}'    &    \text{Information for  $\tilde{S}_{t}$ and the $\tau_{m}$  before time $R_{n+1}$, where $R_{n}'\leq t<R_{n+1}'$}, \\ &   &   \text{plus  knowledge of the time  $ R_{n+1}$ itself}
\end{eqnarray*}
 If $\mathbf{t}$ is a partition time, e.g.,  $\mathbf{t}=\tau_{n}$ or  $\mathbf{t}=R_{n}$,  the $\sigma$-algebra  $\tilde{\mathcal{F}}_{\mathbf{t}^{-}}$ will refer to all information before time $\mathbf{t}$ plus the information that  $\mathbf{t}$ is a partition time.  

We will henceforth attach the subscript $\lambda$ to the transition map $ \mathcal{T}$ to emphasize the dependence of the dynamics on this parameter.   There is some flexibility in the choice of $\nu$ and $h$ in the criterion~(\ref{NummelinCrit}), although choosing them to be independent of $\lambda>0$ adds a little extra constraint.    By Part (1) of Prop.~\ref{BasicsOfNum}, we can select a pair $\nu$, $h$  that is independent of $\lambda$, and where both are functions of the energy.  We will use the symbol $\nu$ for both the measure and the corresponding density.  
\begin{convention}\label{NumConv}
We take $\nu$ and $h$ of the form
$$ h(s)= \mathbf{u}  \frac{ \chi\big(H(s)\leq l   \big)   }{ U } \hspace{1cm} \text{and} \hspace{1cm}  \nu(ds)=  ds \frac{ \chi\big( H(s)\leq l   \big)   }{U},   $$
where $l:=1+2\sup_{x}V(x)$, $U>0$ is the normalization constant of $\nu$, and $\mathbf{u}\in(0,U)$ is from Part (1) of Prop.~\ref{BasicsOfNum}.  
\end{convention}
The compact support of $h:\Sigma\rightarrow [0,1]$ implies that the  extended state space for the split dynamics is effectively $ \Sigma \times 0 \cup \supp(h) \times 1 \subset \tilde{\Sigma} $ since other states in $ \tilde{\Sigma}=\Sigma\times \{0,1\} $ will not be visited.  Any supremum, minimum, etc. over $\tilde{\Sigma}$ refers to this contracted set.  Parts (2) and (3) of the proposition below are elementary consequences of the splitting structure defined above and the proof is contained in Sect.~\ref{SecNumProofs}.

\begin{proposition}\label{BasicsOfNum} \text{  }

\begin{enumerate}
\item  There is a constant $\mathbf{u} >0$ such that the $h$ and $\nu$ in Conv.~\ref{NumConv} satisfy $\mathcal{T}_{\lambda}(s,ds'  )\geq h(s)\nu(ds')\,$ for all $s,s'\in \Sigma$ and    $\lambda<1$.  Also, the transition measures $\mathcal{T}_{\lambda}(s,ds')$ have densities over the domains $\{s'\in \Sigma\,\big|\, H(s')\neq H(s)\}$, which have the following bound
$$ 
\sup_{\lambda\leq 1 }\esssup_{ \substack{ H(s)>l\\ H(s)\neq H(s')  }     }\frac{\mathcal{T}_{\lambda}(s,ds^{\prime})}{ds^{\prime} }<\infty. $$

\item The invariant state of both the split chain $(\tilde{\sigma}_{n})$  and the split process $(\tilde{S}_{t})$ is the splitting of the invariant state of the original process, i.e.,
$$   \tilde{\Psi}_{\infty,\lambda}(s,0)= \big(1-h(s)\big)\Psi_{\infty,\lambda}(s)\quad \text{and}\quad \tilde{\Psi}_{\infty,\lambda}(s,1)= h(s)\Psi_{\infty,\lambda}(s).  $$
Thus, the ``atom" has measure $  \int_{\Sigma}ds h(s)\Psi_{\infty,\lambda}(s) >0 $.

\item  If $\mathbf{t}$ is a partition time, the distribution for  $\tilde{S}_{\mathbf{t}}$ conditioned on  $\tilde{\mathcal{F}}_{\mathbf{t}^{-}}$ is the splitting of the $\delta$-distribution at $S_{\mathbf{t}}$:  
$$\tilde{\delta}_{ S_{\mathbf{t}}  }(s,z)=\delta(s-S_{\mathbf{t}})\big(  \chi(z=0)\big(1-h(S_{\mathbf{t}})\big)+\chi(z=1)h(S_{\mathbf{t}})\big).   $$
In particular,  $\tilde{\mathbb{P}}^{(\lambda)}\big[ Z_{\mathbf{t} }= 1 \,\big| \,\tilde{\mathcal{F}}_{\mathbf{t}^{-}}\big]=h(S_{\mathbf{t}})$.   The strong Markov property at the time $\mathbf{t}$ and stationarity give us that 
$$  \mathcal{L}\big( (\tilde{S}_{\mathbf{t}+r  })\,\big|\,  \tilde{\mathcal{F}}_{\mathbf{t}^{-}}\big)= \mathcal{L}_{\tilde{\delta}_{ S_{\mathbf{t}}  }}\big(  (\tilde{S}_{r  })  \big), \hspace{1cm} r\in \R^{+},  $$
where $\mathcal{L}_{\mu}$ refers to the law starting from the distribution $\mu$.

\end{enumerate}

\end{proposition}

Besides the nearly independent behavior of the process $\tilde{S}_{t}$ over the intervals $[R_{m},R_{m+1})$, the payoff for introducing the splitting structure includes the closed formulas in Prop.~\ref{BasicsOfNumII}.   Part (2) of Prop.~\ref{BasicsOfNumII} is a special case of ~\cite[Prop. 2.20]{Loch}, which applies also to null-recurrent processes.   For Part (3) and (4) of the proposition below, $\frak{R}^{(\lambda)}$ is the reduced resolvent of the backward generator $\mathcal{L}^{*}_{\lambda}$
$$\frak{R}^{(\lambda)}g=\int_{0}^{\infty}dr e^{r\mathcal{L}^{*}_{\lambda}}(g),$$ 
which operates on $g\in L^{\infty}(\Sigma)$ with $\Psi_{\infty,\lambda}(g)=0 $.  
The reduced resolvent is well-defined since the process $S_{t}$ is exponentially ergodic for any fixed $\lambda>0$.  As $\lambda\searrow 0$ the expression in Part (4) is related to the diffusion constant $\kappa$ appearing in \cite[Thm. 1.1]{Further}.

\begin{proposition}\label{BasicsOfNumII} \text{ }

\begin{enumerate}
\item For $g\in L^{\infty}(\tilde{\Sigma})$, 
$$\tilde{\mathbb{E}}_{\tilde{\nu}}^{ (\lambda)}   \Big[ \sum_{m=0}^{\tilde{n}_{1}}  g(\tilde{\sigma}_{m}) \Big]=\tilde{\mathbb{E}}_{\tilde{\nu}}^{ (\lambda)}   \Big[ \sum_{m=1}^{\tilde{n}_{1}+1}  g(\tilde{\sigma}_{m}) \Big]=\frac{ \int_{\tilde{\Sigma}}d\tilde{s}\tilde{\Psi}^{(\lambda)}_{\infty}(\tilde{s}) g(\tilde{s})   }{\int_{\Sigma}ds\Psi^{(\lambda)}_{\infty}(s)   h(s) }.  $$
In particular, if $g\in L^{\infty}(\Sigma)$ does not depend on the binary variable, then the numerator on the right side  above is equal to $\int_{\tilde{\Sigma}}d\tilde{s}\tilde{\Psi}^{(\lambda)}_{\infty}(\tilde{s}) g(\tilde{s})=\int_{\Sigma}ds \Psi_{\infty,\lambda}(s) g(s)$.

\item For $g\in L^{\infty}(\Sigma)$,
$$\tilde{\mathbb{E}}_{\tilde{\nu}}^{ (\lambda)}   \Big[ \int_{0}^{R_{1}}dr g(S_{r}) \Big]=\frac{ \int_{\Sigma}ds\Psi_{\infty,\lambda}(s) g(s)   }{\int_{\Sigma}ds\Psi_{\infty,\lambda}(s)     h(s) } . $$

\item  For $g\in L^{\infty}(\Sigma)$ with $\Psi_{\infty,\lambda}(g)=0$ and $s_{1},s_{2}\in \Sigma$, 
$$  \tilde{\mathbb{E}}_{\tilde{\delta}_{s_{1}}}^{ (\lambda)}   \Big[ \int_{0}^{R_{1}}dr g(S_{r}) \Big]-  \tilde{\mathbb{E}}_{\tilde{\delta}_{s_{2}}}^{ (\lambda)}   \Big[ \int_{0}^{R_{1}}dr g(S_{r}) \Big]  =\big(\frak{R}^{(\lambda)}g\big)(s_{1})-\big(\frak{R}^{(\lambda)}g\big)(s_{2}),     $$
where $\tilde{\delta}_{s}$ is the splitting of the $\delta$-measure at $s\in \Sigma$.

\item For $g\in L^{\infty}(\Sigma)$ with $\Psi_{\infty,\lambda}(g)=0$, 
 $$\tilde{\mathbb{E}}_{\tilde{\nu}}^{ (\lambda)}   \Big[  \int_{0}^{R_{1}}dr g(S_{r})  \int_{r}^{R_{2}}dr'g(S_{r'})
    \Big]= \frac{  \int_{\Sigma}ds\Psi_{\infty,\lambda}(s) g(s)\big(\frak{R}^{(\lambda)}g\big)(s)  }{ \int_{\Sigma}ds\Psi_{\infty,\lambda}(s)h(s) }.   $$

\end{enumerate}

\end{proposition}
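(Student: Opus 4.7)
The plan is to handle the four parts in order, with Parts 1--2 requiring Kac-type occupation formulas and Parts 3--4 relying on the Poisson equation for the reduced resolvent $\frak{R}^{(\lambda)}$.

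For Part 1, I would apply Kac's occupation formula to the split Markov chain. Since $A = \Sigma \times \{1\}$ is a recurrent atom, the cycle-occupation measure $\mu(B) := \tilde{\mathbb{E}}_{\tilde{\nu}}^{(\lambda)}\big[\sum_{m=0}^{\tilde{n}_1} \chi(\tilde\sigma_m \in B)\big]$ is invariant for the split chain, and therefore proportional to $\tilde\Psi_{\infty,\lambda}$ by uniqueness; the normalization $\mu(\tilde\Sigma) = \tilde{\mathbb{E}}_{\tilde{\nu}}^{(\lambda)}[\tilde{n}_1+1] = 1/\tilde\Psi_{\infty,\lambda}(A) = 1/\Psi_{\infty,\lambda}(h)$ yields the first equality. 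The second equality follows by an index shift: under $\tilde{\mathbb{E}}_{\tilde{\nu}}^{(\lambda)}$, the law of $\tilde\sigma_{\tilde{n}_1+1}$ is $\tilde\nu$ by the atom transition rule, matching the law of $\tilde\sigma_0$, so the two sums have the same expectation.

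For Part 2, I would decompose the integral along partition times: $\int_0^{R_1} g(S_r)\,dr = \sum_{m=0}^{\tilde{n}_1} \int_{\tau_m}^{\tau_{m+1}} g(S_r)\,dr$. Setting $H(s) := \mathbb{E}_s^{(\lambda)}\big[\int_0^{\tau_1} g(S_r)\,dr\big]$, and using that $\tau_1$ is an independent $\mathrm{Exp}(1)$ variable, one computes $H(s) = \int_0^\infty e^{-r}(e^{r\mathcal{L}^*_\lambda}g)(s)\,dr = (\mathcal{T}_\lambda g)(s)$. The differences $d_m := \int_{\tau_m}^{\tau_{m+1}} g(S_r)\,dr - H(\sigma_m)$ form a martingale difference sequence with respect to the partition-time filtration, and optional stopping at $\tilde{n}_1+1$ gives $\tilde{\mathbb{E}}_{\tilde{\nu}}^{(\lambda)}\big[\int_0^{R_1} g(S_r)\,dr\big] = \tilde{\mathbb{E}}_{\tilde{\nu}}^{(\lambda)}\big[\sum_{m=0}^{\tilde{n}_1} H(\sigma_m)\big]$. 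Now apply Part 1 to the function $H$ (extended trivially to $\tilde\Sigma$) and use invariance $\Psi_{\infty,\lambda}\mathcal{T}_\lambda = \Psi_{\infty,\lambda}$ (so $\Psi_{\infty,\lambda}(H) = \Psi_{\infty,\lambda}(g)$) to conclude.

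For Part 3, I would invoke Dynkin's formula with the Poisson solution $u := \frak{R}^{(\lambda)} g$, which is well-defined and satisfies $\mathcal{L}^*_\lambda u = -g$ because $\Psi_{\infty,\lambda}(g) = 0$ and the process is exponentially ergodic. This yields
$$ \tilde{\mathbb{E}}^{(\lambda)}_{\tilde\delta_s}\Big[\int_0^{R_1}g(S_r)\,dr\Big] = u(s) - \tilde{\mathbb{E}}^{(\lambda)}_{\tilde\delta_s}[u(S_{R_1})]. $$
The key observation is that $S_{R_1}$ has law $\nu$ independently of the starting point: indeed $R_1$ is the partition time immediately following the atom visit $R_1'$, and the atom transition kernel is $\tilde\nu$, with marginal $\nu$ on $\Sigma$. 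Hence $\tilde{\mathbb{E}}^{(\lambda)}_{\tilde\delta_s}[u(S_{R_1})] = \nu(u)$ is a constant, and subtracting the expressions for $s_1$ and $s_2$ cancels this constant.

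For Part 4, I would combine the Dynkin martingale $N_t := u(S_t) - u(S_0) + \int_0^t g(S_r)\,dr$ (with $u = \frak{R}^{(\lambda)}g$) with the strong Markov property at the partition time $R_1$. Writing
$$ \int_r^{R_2} g(S_{r'})\,dr' = u(S_r) - u(S_{R_2}) + N_{R_2} - N_r $$
and letting $I := \int_0^{R_1}g(S_r)\,dr$, the left-hand side of Part 4 becomes
$$ \tilde{\mathbb{E}}^{(\lambda)}_{\tilde\nu}\Big[\!\int_0^{R_1}\!g(S_r)u(S_r)\,dr\Big] - \tilde{\mathbb{E}}^{(\lambda)}_{\tilde\nu}[I\,u(S_{R_2})] + \tilde{\mathbb{E}}^{(\lambda)}_{\tilde\nu}[I N_{R_2}] - \tilde{\mathbb{E}}^{(\lambda)}_{\tilde\nu}\Big[\!\int_0^{R_1}\!g(S_r) N_r\,dr\Big]. $$
By Part 2 applied to $g \cdot u$, the first term equals the right-hand side of Part 4, so it remains to show that the last three terms cancel. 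By strong Markov at $R_1$ and the argument of Part 3 applied to the shifted cycle $[R_1,R_2]$, one obtains $\tilde{\mathbb{E}}^{(\lambda)}_{\tilde\nu}[u(S_{R_2})\mid\tilde{\mathcal{F}}_{R_1}] = \nu(u)$; combined with $\tilde{\mathbb{E}}^{(\lambda)}_{\tilde\nu}[I] = 0$ from Part 2, this makes the second term vanish. Optional stopping for $N$ at $R_2\geq R_1$ gives $\tilde{\mathbb{E}}^{(\lambda)}_{\tilde\nu}[I N_{R_2}] = \tilde{\mathbb{E}}^{(\lambda)}_{\tilde\nu}[IN_{R_1}]$, while a stochastic integration by parts for the absolutely-continuous process $r\mapsto\int_0^r g(S_s)ds$ against the martingale $N$ (with vanishing covariation) yields $\tilde{\mathbb{E}}^{(\lambda)}_{\tilde\nu}\big[\int_0^{R_1} g(S_r)N_r\,dr\big] = \tilde{\mathbb{E}}^{(\lambda)}_{\tilde\nu}[IN_{R_1}]$, producing the cancellation. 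The main obstacle will be justifying the applications of optional stopping and Dynkin at the random time $R_1$; this should be handled by localization using $\tilde{\mathbb{E}}_{\tilde\nu}^{(\lambda)}[R_1]=1/\Psi_{\infty,\lambda}(h)<\infty$ (Part 2 with $g\equiv 1$), the boundedness of $g$, and $L^\infty$-bounds on $u=\frak{R}^{(\lambda)}g$ available from the resolvent analysis in~\cite{Resolvent}.
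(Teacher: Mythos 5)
Your Parts 1 and 2 are sound and essentially reproduce the paper's argument: Part 1 is the Kac occupation formula for the split chain (the paper cites this as a general fact from Nummelin), and Part 2 reduces the time integral to a sum over partition intervals of $(\mathcal{T}_\lambda g)(\sigma_m)$ and then invokes Part 1 together with the stationarity of $\Psi_{\infty,\lambda}$, exactly as in the paper.

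Parts 3 and 4, however, rest on a step that fails. You apply Dynkin's formula and optional stopping for $N_t=u(S_t)-u(S_0)+\int_0^t g(S_r)\,dr$ at the times $R_1$, $R_2$, which are stopping times only for the split filtration $\tilde{\mathcal{F}}_t$. But $N_t$ is \emph{not} a martingale with respect to $\tilde{\mathcal{F}}_t$: under the split statistics the binary variable is correlated with the future of $S$ — conditioned on $Z_{\tau_n}=1$ the law of $S_{\tau_{n+1}}$ is $\nu$, and conditioned on $Z_{\tau_n}=0$ it is $\tfrac{1}{1-h(S_{\tau_n})}(\mathcal{T}_\lambda-h\otimes\nu)(S_{\tau_n},\cdot)$, neither of which is $\mathcal{T}_\lambda(S_{\tau_n},\cdot)$. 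Hence $\tilde{\mathbb{E}}[N_{\tau_{n+1}}-N_{\tau_n}\,|\,\tilde{\mathcal{F}}_{\tau_n}]\neq 0$ in general, and the identities $\tilde{\mathbb{E}}_{\tilde\delta_s}[N_{R_1}]=0$, $\tilde{\mathbb{E}}[IN_{R_2}]=\tilde{\mathbb{E}}[IN_{R_1}]$, and the integration-by-parts step in Part 4 are all unjustified. (The conclusion of your Part 3 happens to be true — one can check a posteriori that the constant in $\tilde{\mathbb{E}}_{\tilde\delta_s}[\int_0^{R_1}g]=u(s)+c$ equals $-\nu(u)$ by integrating against $\nu$ and using Part 2 — but your derivation does not establish it.) This is precisely the subtlety the paper's proof is built to avoid: for Part 3 it writes $\frak{R}^{(\lambda)}g$ as the limit $\lim_{\gamma\searrow 0}\tilde{\mathbb{E}}_{\tilde\delta_s}[\int_0^\infty e^{-r\gamma}g(S_r)\,dr]$ and uses that the trajectory after $R_1$ has law independent of $s$ (since $\tilde{S}_{R_1}\sim\tilde\nu$), so the difference of two starting points only sees $[0,R_1]$; for Part 4 it conditions at the partition times using Part (3) of Proposition~\ref{BasicsOfNum} to reduce to split $\delta$-initial data and then applies Parts 2 and 3. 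To repair your argument you would either have to adopt that route or restrict the optional-stopping machinery to a filtration (such as $\tilde{\mathcal{F}}_t'$, as in Lemma~\ref{LemKeyMart}) with respect to which a suitably corrected process actually is a martingale.
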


\begin{proof}\text{ }\\
\noindent Part (1):\hspace{.1cm}  This follows as a general fact for split chains when the original chain $\sigma_{n}$ is positive recurrent with normalizable invariant measure $\Psi_{\infty,\lambda}$.   As mentioned in the proof of~\cite[Thm. 3]{Nummelin},  the measure $\beta$ on $\tilde{\Sigma}$ given by 
\begin{align}\label{User}
\beta(g)=   \tilde{\mathbb{E}}_{\tilde{\nu}}^{ (\lambda)}   \Big[ \sum_{m=0}^{\tilde{n}_{1}}  g(\tilde{\sigma}_{m}) \Big], \hspace{1cm}
g\in L^{\infty}(\tilde{\Sigma}),  \end{align}
 satisfies $ \beta (h)=1 $ and is invariant for the split chain dynamics, i.e.,   $\beta \tilde{\mathcal{T}}_{\lambda}=\beta$.  Since the dynamics is positive recurrent with invariant state $\tilde{\Psi}^{(\lambda)}_{\infty}$, these features uniquely determine the above measure by the explicit form
$$\beta(g)=\frac{ \int_{\tilde{\Sigma}}d\tilde{s}\tilde{\Psi}^{(\lambda)}_{\infty}(\tilde{s}) g(\tilde{s})   }{\int_{\Sigma}ds\Psi^{(\lambda)}_{\infty}(s)   h(s) }.  $$

The distribution for $\tilde{\sigma}_{m}$ when $m=0$ and $m=\tilde{n}_{1}+1$ is $\tilde{\nu}$, so the summation of $ g(\tilde{\sigma}_{m})$ over  $[1,\tilde{n}_{1}+1]$ rather than $[0,\tilde{n}_{1}]$  in~(\ref{User}) yields the same result.    

\vspace{.4cm}

\noindent Part (2): \hspace{.1cm} Let $g_{n}^{(\lambda)}:\Sigma^{2}\rightarrow \R$ and $\mathbf{g}^{(\lambda)}:\Sigma\rightarrow \R$ be defined as
\begin{eqnarray*}
 g_{n}^{(\lambda)}(s,s')&:=&\mathbb{E}_{s}^{(\lambda)}\Big[  \Big(\int_{0}^{\tau_{1} }dr g(S_{r})\Big)^{n}\,\Big|\,s'=S_{\tau_{1} } \Big]\\ \mathbf{g}^{(\lambda)}(s)&:=& \mathbb{E}_{s}^{(\lambda)}\Big[\int_{0}^{\tau_{1}}dr g(S_{r})  \Big]  .
 \end{eqnarray*}  
 Also define $\mathbf{\tilde{g}}^{(\lambda)}:\tilde{\Sigma}\rightarrow \R$ analogously to $\mathbf{g}^{(\lambda)}(s)$ with $\mathbb{E}_{s}^{(\lambda)}$ replaced by $\tilde{\mathbb{E}}_{\tilde{s}}^{(\lambda)}$.

We have the following equalities:
\begin{align}
 \tilde{\mathbb{E}}_{\tilde{\nu}}^{ (\lambda)}   \Big[ \int_{0}^{R_{1}}dr g(S_{r}) \Big]&= \tilde{\mathbb{E}}_{\tilde{\nu}}^{ (\lambda)}   \Big[\sum_{n=0}^{\tilde{n}_{1}}\tilde{\mathbb{E}}^{(\lambda)}\Big[  \int_{\tau_{n}}^{\tau_{n+1} }dr g(S_{r})\,\Big|\tilde{\sigma}_{n},\, \tilde{\sigma}_{n+1} \Big] \Big] =\tilde{\mathbb{E}}_{\tilde{\nu}}^{ (\lambda)}   \Big[\sum_{n=0}^{\tilde{n}_{1}} g_{1}^{(\lambda)}(\sigma_{n},\sigma_{n+1})\Big]\nonumber\\ &= \tilde{\mathbb{E}}_{\tilde{\nu}}^{ (\lambda)}   \Big[\sum_{n=0}^{\tilde{n}_{1}}  \tilde{\mathbf{g}}^{(\lambda)}(\tilde{\sigma}_{n})\Big]=\frac{ \int_{\tilde{\Sigma}}d\tilde{s}\tilde{\Psi}_{\infty,\lambda}(\tilde{s}) \tilde{\mathbf{g}}^{(\lambda)}(\tilde{s})  }{ \int_{\Sigma}ds\Psi_{\infty,\lambda}(s) h(s)  },\label{GeeWhiz}
\end{align}
where the second equality holds because the statistics for $S_{r}$ over an interval $(\tau_{n},\tau_{n+1})$ given the values $\tilde{\sigma}_{n}=(\sigma_{n},\zeta_{n}),\,\tilde{\sigma}_{n+1}=(\sigma_{n+1},\zeta_{n+1})$  is independent of  $\zeta_{n}$, $\zeta_{n+1}$ and is the same for the split and the original dynamics.  The fourth  equality is from Part (1).

The numerator of the expression on the right side of~(\ref{GeeWhiz}) can be rewritten as follows:
\begin{align}  \int_{\tilde{\Sigma}}d\tilde{s}\tilde{\Psi}_{\infty,\lambda}(\tilde{s}) \tilde{\mathbf{g}}^{(\lambda)}(\tilde{s})&=  \int_{\Sigma}ds\Psi_{\infty,\lambda}(s) \mathbf{g}^{(\lambda)}(s) = \int_{\Sigma}ds\Psi_{\infty,\lambda}(s) \mathbb{E}_{s}^{(\lambda)}\Big[\int_{0}^{\tau_{1}}dr g(S_{r})  \Big]  \nonumber\\ &= \int_{\Sigma}ds\Psi_{\infty,\lambda}(s) \mathbb{E}_{s}^{(\lambda)}\Big[\int_{0}^{\infty}dr e^{-r}g(S_{r})  \Big]   \nonumber\\ &= \int_{0}^{\infty} dr e^{-r} \big(\int_{\Sigma}ds \Psi_{\infty,\lambda}(s) \mathbb{E}_{s}^{(\lambda)}\big[g(S_{r})    \big] \big)  \nonumber\\ &= \int_{0}^{\infty} dr e^{-r} \big(\int_{\Sigma}ds\Psi_{\infty,\lambda}(s) g(s)    \big)  =\int_{\Sigma}ds\Psi_{\infty,\lambda}(s) g(s)   .     \label{CompGee}
\end{align}   
 The first equality uses that $\tilde{\Psi}_{\infty,\lambda}$ has the split form in Part (2) of Prop.~\ref{BasicsOfNum}, and the third equality holds since $\tau_{1}$ is a mean one exponential independent of $S_{t}$ in the original statistics.  The fourth equality is Fubini, and the fifth is due to the stationarity of $\Psi_{\infty,\lambda}$.  

\vspace{.4cm}

\noindent Part (3):\hspace{.1cm}  The reduced resolvent $\frak{R}^{(\lambda)}$ is the pointwise limit given by
\begin{align*}
\big(\frak{R}^{(\lambda)}g\big)(s)& = \lim_{\gamma\searrow 0    }\mathbb{E}^{ (\lambda)}_{s }\Big[ \int_{0}^{\infty}dr e^{-r\gamma} g(S_{r})  \Big]
=\lim_{\gamma\searrow 0    }\tilde{\mathbb{E}}^{ (\lambda)}_{\tilde{\delta}_{s} }\Big[ \int_{0}^{\infty}dre^{-r\gamma} g(S_{r})  \Big],
\end{align*}
where the second equality embeds the expectation in the split statistics.  
However, for $s_{1},s_{2}\in \Sigma$, 
$$\Big(\tilde{\mathbb{E}}^{ (\lambda)}_{\tilde{\delta}_{s_{1}} }-\tilde{\mathbb{E}}^{ (\lambda)}_{\tilde{\delta}_{s_{2}} }\Big)\Big[ \int_{0}^{\infty}dr e^{-r\gamma} g(S_{r})  \Big]=\Big(\tilde{\mathbb{E}}^{(\lambda)}_{\tilde{\delta}_{s_{1}} }-\tilde{\mathbb{E}}^{ (\lambda)}_{\tilde{\delta}_{s_{2}} }\Big)\Big[ \int_{0}^{R_{1}}dr e^{-r\gamma} g(S_{r})  \Big]$$
since the distribution for the state $\tilde{S}_{R_{1}} $ is $\tilde{\nu}$ regardless of the initial measure.  Using the above equalities, we have that 
\begin{align*}
\big(\frak{R}^{(\lambda)}g\big)(s_{1})-\big(\frak{R}^{(\lambda)}g\big)(s_{2})   &= \lim_{\gamma\searrow 0    }\Big(\tilde{\mathbb{E}}^{(\lambda)}_{\tilde{\delta}_{s_{1}} }-\tilde{\mathbb{E}}^{ (\lambda)}_{\tilde{\delta}_{s_{2}} }\Big)\Big[ \int_{0}^{R_{1}}dr e^{-r\gamma} g(S_{r})  \Big]\\ & =\tilde{\mathbb{E}}^{(\lambda)}_{\tilde{\delta}_{s_{1}}}\Big[ \int_{0}^{R_{1}}dr g(S_{r})  \Big]-\tilde{\mathbb{E}}^{(\lambda)}_{\tilde{\delta}_{s_{1}}}\Big[ \int_{0}^{R_{1}}dr  g(S_{r})  \Big],
\end{align*}
where the limits are well-defined since the process $\tilde{S}_{t}$ is positive-recurrent and hence $\tilde{\mathbb{E}}^{(\lambda)}_{\tilde{s}}[R_{1}]$ is finite for all $\tilde{s}\in \tilde{\Sigma}$.

\vspace{.4cm}

\noindent Part (4): \hspace{.1cm} Notice that  
\begin{align}\label{Torque}
 \tilde{\mathbb{E}}_{\tilde{\nu}}^{ (\lambda)}   \Big[  \int_{0}^{R_{1}}dr g(S_{r})  \int_{r}^{R_{2}}dr'g(S_{r'})
    \Big] &=  \tilde{\mathbb{E}}_{\tilde{\nu}}^{ (\lambda)}   \Big[  \sum_{n=0}^{\tilde{n}_{1}}\tilde{\mathbb{E}}^{(\lambda)}\Big[\int_{\tau_{n}}^{\tau_{n+1} }dr g(S_{r})  \int_{r}^{R_{2}}dr'g(S_{r'})\,\Big|\,\tilde{\mathcal{F}}_{\tau_{n}^{-} }\Big]
    \Big]  \nonumber \\ &=
 \tilde{\mathbb{E}}_{\tilde{\nu}}^{ (\lambda)}   \Big[\sum_{n=0}^{\tilde{n}_{1}}\mathbf{f}^{(\lambda)}(\sigma_{n})   \Big],
\end{align} 
 where $\mathbf{f}^{(\lambda)}:\Sigma\rightarrow \R$ is defined as
$$
\mathbf{f}^{(\lambda)}(s):= \tilde{\mathbb{E}}_{\tilde{\delta}_{s}}^{ (\lambda)}   \Big[  \int_{0}^{\tau_{1}}dr g(S_{r})  \int_{r}^{R_{2}}dr'g(S_{r'})
    \Big].   $$
 The equality~(\ref{Torque})  is a consequence of Part (3) of Prop.~\ref{BasicsOfNum} and uses the strong Markov property at the times $\tau_{n}$ for $n\in[0,\tilde{n}_{1}]$.  The function  $\mathbf{f}^{(\lambda)}(s) $ can be rewritten as 
\begin{align}\label{Haughty}
 \mathbf{f}^{(\lambda)}(s)  & = 
 \tilde{\mathbb{E}}_{\tilde{\delta}_{s}}^{(\lambda)}\Big[\int_{0}^{\tau_{1}}drg(S_{r})\int_{r}^{\tau_{1}}dv g(S_{v})+
\Big(\int_{0}^{\tau_{1}}dr g(S_{r})\Big)\tilde{\mathbb{E}}^{ (\lambda)}\Big[ \int_{\tau_{1} }^{R_{2}}dr g(S_{r})\,\Big|\,\tilde{\mathcal{F}}_{\tau_{1}^{-}} \Big]  \Big] \nonumber \\
&=\tilde{\mathbb{E}}_{\tilde{\delta}_{s}}^{(\lambda)}\Big[\int_{0}^{\tau_{1}}dr g(S_{r})\int_{r}^{\tau_{1}}dv g(S_{v})+
\Big(\int_{0}^{\tau_{1}}dr g(S_{r})\Big)\big(\frak{R}^{(\lambda)}g\big)\big(S_{\tau_{1}}\big)+c \int_{0}^{\tau_{1}}dr g(S_{r}) \Big] \nonumber  \\
&=\mathbb{E}_{s}^{(\lambda)}\Big[\int_{0}^{\tau_{1}}dr g(S_{r})
\big(\frak{R}^{(\lambda)}g\big)\big(S_{r}\big) +c\int_{0}^{\tau_{1}}dr g(S_{r}) \Big],
\end{align}
where $c\in \R$ is the constant such that $ (\frak{R}^{(\lambda)}g)(s)+c=\tilde{\mathbb{E}}^{ (\lambda)}_{\tilde{\delta}_{s}}\big[ \int_{0 }^{R_{1}}dr g(S_{r})\big]$ for all $s\in \Sigma$, which exists by Part (3).  The value for $c$ depends on $g$ and the choice of $\nu$, $h$ defining the Nummelin splitting.  For the second equality, we have used that
\begin{align*}
\tilde{\mathbb{E}}^{ (\lambda)}\Big[ \int_{\tau_{1}}^{R_{2}}dr g(S_{r})\,\Big|\,\tilde{\mathcal{F}}_{\tau_{1}^{-}}  \Big]
=\tilde{\mathbb{E}}^{ (\lambda)}_{\tilde{\delta}_{S_{\tau_{1}}} }\Big[ \int_{0}^{R_{1}}dr g(S_{r})  \Big]
 =  \big(\frak{R}^{(\lambda)}g\big)\big(S_{\tau_{1}}\big)+c,    
\end{align*}
where the first equality is by Part (3) of Prop.~\ref{BasicsOfNum}. The third equality of~(\ref{Haughty}) follows by replacing  $\tilde{\mathbb{E}}_{\tilde{\delta}_{s}}^{(\lambda)}$ with $ \mathbb{E}_{s}^{(\lambda)}$ and  using a nested conditional expectation with respect to $\mathcal{F}_{r}$ for $r\leq \tau_{1}$:
$$  \mathbb{E}^{(\lambda)}\Big[\int_{r}^{\tau_{1}}dv g(S_{v})
+\big(\frak{R}^{(\lambda)}g\big)\big(S_{\tau_{1}}\big)\,\Big|\,\mathcal{F}_{r}  \Big] =\mathbb{E}_{S_{r}}^{(\lambda)}\Big[\int_{0}^{\tau_{1}}dv g(S_{v})
+\big(\frak{R}^{(\lambda)}g\big)\big(S_{\tau_{1}}\big)  \Big]= \big(\frak{R}^{(\lambda)}g\big)\big(S_{r}\big) .   $$

We can then plug our expression~(\ref{Haughty}) for $ \mathbf{f}^{(\lambda)}(s)$ into~(\ref{Torque}) and invert the first two steps of the proof to obtain   the first equality below:
\begin{align}\label{Shock}
 \tilde{\mathbb{E}}_{\tilde{\nu}}^{ (\lambda)}   \Big[  \int_{0}^{R_{1}}dr g(S_{r})  \int_{r}^{R_{2}}dr'g(S_{r'})
    \Big]= &\tilde{\mathbb{E}}_{\tilde{\nu}}^{ (\lambda)}   \Big[  \int_{0}^{R_{1}}dr g(S_{r})  \big(\frak{R}^{(\lambda)}g\big)\big(S_{r}\big)
+c  \int_{0}^{R_{1}}dr g(S_{r})   \Big]\nonumber \\ =& \frac{ \int_{\Sigma}ds\Psi_{\infty,\lambda}(s) \big[\big(\frak{R}^{(\lambda)}g\big)(s)\mathbf{g}^{(\lambda)}(s) +cg(s) \big]  }{ \int_{\Sigma}ds\Psi_{\infty,\lambda}(s) h(s)  } 
  .    
\end{align}
The second equality follows by  Part (2), and  the constant $c$ disappears from the expression since $\Psi_{\infty,\lambda}(g)=0$.

\end{proof}

The following  proposition lists a few martingales related to the number $\tilde{N}_{t}$ of returns to the atom up to time $t\in \R^{+}$. 

\begin{proposition}\label{TrivialMart}
For the split statistics,  $ \tilde{N}_{t} - \sum_{n=1}^{\mathbf{N}_{t}}h(S_{\tau_{n}})$ is a martingale with respect to the filtration $\tilde{\mathcal{F}}_{t}$.   For the original statistics, 
$\sum_{n=1}^{\mathbf{N}_{t}}h(S_{\tau_{n}}) - \int_{0}^{t}dr h(S_{r})$ is a martingale with respect to  $\mathcal{F}_{t}$.  
In particular,
$$\tilde{\mathbb{E}}^{(\lambda)}\big[ \tilde{N}_{t}    \big]=\mathbb{E}^{(\lambda)}\Big[ \int_{0}^{t}dr h(S_{r})    \Big].  $$

\end{proposition}
\begin{proof}

The difference 
$$ \tilde{N}_{t} - \sum_{n=1}^{\mathbf{N}_{t}}h(S_{\tau_{n}})=  \sum_{n=1}^{\mathbf{N}_{t}}\big(\chi(Z_{\tau_{n}}=1)-h(S_{\tau_{n}})\big)$$ 
is a martingale since for $t< \tau_{n}$ the increments satisfy 
 \begin{align}\label{Not}
\mathbb{E}\big[\chi(Z_{\tau_{n}}=1)-h(S_{\tau_{n}})\,\big|\,\tilde{\mathcal{F}}_{t} \big]&=\mathbb{E}\big[  \mathbb{P}^{(\lambda)}[ Z_{\tau_{n}}=1\,|\, \tilde{\mathcal{F}}_{ \tau_{n}^{-}}   ]-h(S_{\tau_{n}}) \,\big|\,\tilde{\mathcal{F}}_{t}   \big]=0,
\end{align}
where the second equality holds because $\mathbb{P}^{(\lambda)}[ Z_{\tau_{n}}=1\,|\, \tilde{\mathcal{F}}_{ \tau_{n}^{-}}   ]=h(S_{\tau_{n}})$ by Part (3) of Prop.~\ref{BasicsOfNum}.
  The difference $\sum_{n=1}^{\mathbf{N}_{t}}h(S_{\tau_{n}}) - \int_{0}^{t}dr h(S_{r})$ is a martingale according to the original  law since the contributions $h(S_{\tau_{n}})$ occur with Poisson rate $1$.  

\end{proof}

\section{The frequency of returns to the atom}\label{SecReturns}
Sections~\ref{SecAtoms} and~\ref{SecFracMom} effectively bound the frequency of returns to the atom from above and below, respectively.  

\subsection{Bounding the number of returns to the atom}\label{SecAtoms}

Recall that $\tilde{N}_{t}$ is defined for the split process as the number of returns to the atom set up to time $t\in \R^{+}$.     We will now focus on bounding the expectation of $\tilde{N}_{t}$ for $t=\frac{T}{\lambda}$ in the limit of small $\lambda$.  By Prop.~\ref{TrivialMart} the expectation of $ \tilde{N}_{t}$ with respect to the split statistics is equal to the expectation of $\int_{0}^{t}dr h(S_{r})$ with respect to the original statistics.  The time integral of the process $ h(S_{t})$ keeps track of the amount of time that    $S_{t}$ loiters in the low momentum region where $h:\Sigma\rightarrow \R^+$ has support and the life cycles regenerate.  However, it is useful to work with a process that serves the same purpose as $\int_{0}^{t}dr h(S_{r})$ but that is easier to handle.  A convenient option is the increasing part of the drift $\mathbf{A}_{t}^{+}$  in the semi-martingale decomposition for $ \mathbf{Q}_{t}:=(2H_{t})^{\frac{1}{2}} $, which increases at a decaying rate away from the low momentum region; see the discussion below and Part (2) of Prop.~\ref{AMinus}. Functions of the energy $H(x,p)=\frac{1}{2}p^{2}+V(x)$ have the advantage of being invariant under the Hamiltonian evolution, which   makes energy related quantities a desirable starting point for gaining some control over the typical behavior of the dynamics.

Define the functions  $\mathcal{E}_{\lambda}:\R\rightarrow \R^{+}$ and  $\mathcal{A}_{\lambda},\mathcal{V}_{\lambda,n},\mathcal{V}_{\lambda,n}^{+},\mathcal{K}_{\lambda,n}:\Sigma \rightarrow \R$  as 
\begin{eqnarray*}
\mathcal{E}_{\lambda}(p)&  = &  \int_{\R}dp^{\prime}\mathcal{J}_{\lambda}(p,p^{\prime}),\\
 \mathcal{A}_{\lambda}(x,p)&=&\int_{\R}dp^{\prime} \Big( 2^{\frac{1}{2}}H^{\frac{1}{2}}(x,p^{\prime})- 2^{\frac{1}{2}}H^{\frac{1}{2}}(x,p)     \Big)\mathcal{J}_{\lambda}(p,p^{\prime}), \\  
\mathcal{V}_{\lambda,n}(x,p)&=& \int_{\R}dp^{\prime} \Big( 2^{\frac{1}{2}}H^{\frac{1}{2}}(x,p^{\prime})- 2^{\frac{1}{2}}H^{\frac{1}{2}}(x,p)  \Big)^{2n}      \mathcal{J}_{\lambda}(p,p^{\prime}) ,   \\
 \mathcal{V}_{\lambda,n}^{+}(x,p)&=&
\int_{\R}dp^{\prime} \Big| H^{\frac{1}{2}}(x,p^{\prime})- H^{\frac{1}{2}}(x,p)\Big|^{n}\chi\big(|p'|>|p|\big)      \mathcal{J}_{\lambda}(p,p^{\prime}) ,  \\ \mathcal{K}_{\lambda,n}(x,p)&=&
\int_{\R}dp^{\prime} \Big| H^{\frac{1}{2}}(x,p^{\prime})- H^{\frac{1}{2}}(x,p) -\frac{\mathcal{A}_{\lambda}(x,p)}{ \mathcal{E}_\lambda(p)  }\Big|^{n}      \mathcal{J}_{\lambda}(p,p^{\prime}).
\end{eqnarray*}
Also define $\mathcal{A}_{\lambda}^{\pm}(s)= \max(\pm \mathcal{A}_{\lambda}(s),0)$ to be the positive and negative parts of  $\mathcal{A}_{\lambda}$.  We will often denote $\mathcal{V}_{\lambda,1} $ as $\mathcal{V}_{\lambda} $.  Let $\mathbf{M}_{t}$ and $\mathbf{A}_{t}$ be the martingale and predictable parts in the semi-martingale decomposition of $\mathbf{Q}_{t}$ in which both are initially zero:
$$ \mathbf{Q}_{t}= \big( 2 H_{t}     \big)^{\frac{1}{2}}=\mathbf{Q}_{0}+\mathbf{M}_{t}+\mathbf{A}_{t}.  $$
   The predictable component has the form  $\mathbf{A}_{t}=\int_{0}^{t}dr\mathcal{A}_{\lambda}(X_r,P_r)   $. By  defining 
     $\mathbf{A}_{t}^{\pm}:=\int_{0}^{t}dr\mathcal{A}_{\lambda}^{\pm}(X_r,P_r)  $ for $\mathcal{A}_{\lambda}^{\pm}(s):=\textup{max}(\pm \mathcal{A}_{\lambda}(s),0)$, the predictable component can be written as the difference $\mathbf{A}_{t}=\mathbf{A}_{t}^{+}-\mathbf{A}_{t}^{-}$.  The martingale $\mathbf{M}_{t}$ has predictable quadratic variation $\langle \mathbf{M}\rangle_{t}=\int_{0}^{t}dr\mathcal{V}_{\lambda}(X_{r},P_{r})  $.  

The following proposition states some basic facts for the functions  $\mathcal{A}_{\lambda}^{\pm} $, $ \mathcal{V}_{\lambda,n} $,  $\mathcal{V}_{\lambda,n}^{+}$, and $\mathcal{K}_{\lambda,n}$.  The proofs of Parts 1-4 of Prop.~\ref{AMinus} are placed in Sect.~\ref{SecEnergyLemProof}, and we do not include the proofs of Parts 5-7 which require similar calculus-based arguments.  The function $\mathcal{D}_\lambda :\R\rightarrow \R$ in Part (1) of Prop.~\ref{AMinus} is the drift rate in momentum due to collisions: $\mathcal{D}_{\lambda}(p)=\int_{\R}dp^{\prime}(p^{\prime}-p) \mcJ_{\lambda}(p,p^{\prime})$.

\begin{proposition}\label{AMinus}\text{There exist $c,C,C_{n}>0$ such that for $\lambda$ small enough the statements below hold.}

\begin{enumerate}

\item  For all $(x,p)\in \Sigma$,  $\mathcal{A}_{\lambda}^{-}(x,p)\leq |\mathcal{D}_\lambda(p)|$.  In particular,   $\mathcal{A}_{\lambda}^{-}(x,p)\leq C(\lambda|p|+\lambda^{2}p^{2})$.

 \item  For all $(x,p)\in \Sigma$, $\mathcal{A}_{\lambda}^{+}(x,p)\leq \frac{C}{1+p^{2}}$.

\item As $\lambda\to 0$, we have
$\int_{\Sigma}ds\mathcal{A}_{\lambda}^{+}(s)= 1+\mathit{O}(\lambda^{\frac{1}{2}})$.

\item For all $(x,p)\in \Sigma$, $\mathcal{K}_{\lambda,n}(x,p)\leq C_n(1+\lambda|p| )$.

\item For all $(x,p)\in \Sigma$, $\mathcal{V}_{\lambda,n}(x,p)\leq C(1+\lambda|p| )^{n+1}$.

\item For all $(x,p)\in \Sigma$,  $\mathcal{V}_{\lambda,n}^{+}(x,p)\leq C_n$.

\item For all $(x,p)\in \Sigma$, $\mathcal{V}_{\lambda}(x,p)\geq c$.

\end{enumerate}

\end{proposition}

Lemma~\ref{FirstEnergyLem} states that the energy process $H_{t}:=H(X_{t},P_{t})$ typically does not go above the scale $\lambda^{-1}$ over the time interval $[0,\frac{T}{\lambda}]$. The proof is based on martingale analysis and the bounds in Prop.~\ref{AMinus} and does not involve the Nummelin splitting structure.  

\begin{lemma}\label{FirstEnergyLem}  For any $n\in \mathbb{N}$, there exists a $C>0$ such that
$$  \mathbb{E}^{(\lambda)}\Big[\sup_{0\leq r\leq \frac{T}{\lambda} } (H_{r})^{\frac{n}{2}} \Big] \leq  C\Big(\frac{ T}{\lambda}\Big)^{\frac{n}{2}}    $$
for all $T>0$ and $\lambda<1$.  

\end{lemma}

\begin{proof}
 We will work with the process $\mathbf{Q}_{t}:=(2 H_{t})^{\frac{1}{2}}$. The reader should think of $\mathbf{Q}_{t}$ as being roughly the absolute value of the momentum $|P_{t}|$.  If $P_{t}$ were a symmetric random walk making steps every unit of time, then the result would follow  by Doob's maximal inequality with $\mathbf{Q}_{t}$ replaced by $|P_{t}|$ (supposing that the tail distribution of the jumps decays sufficiently fast).  The situation for our jump rates should, in principle, be even more accommodating since the jump rates~(\ref{JumpRates}) tend to drag a momentum with large absolute value down to a momentum with smaller absolute value.  However, for the purposes of this lemma, it is useful to discard the term associated with these large downward jumps in the decomposition~(\ref{Ach}) of $\mathbf{Q}_{t}$ because it is less analytically wieldy and it is not helpful on the time scales $\frac{T}{\lambda}$ for $T$ fixed and $\lambda\ll 1$.

  For technical reasons, we partition the time interval $[0,\frac{T}{\lambda}]$ through a sequence of incursion times $\varsigma_m'$ into a region of ``low"  energy.   Let $\varsigma_{0}=\varsigma^{\prime}_{1}=0$, and define the stopping times $\varsigma_{m},\varsigma_{m}^{\prime}$ such that 
\begin{align*}
\varsigma^{\prime}_{m}&= \min \{r\in (\varsigma_{m-1},\infty)\,\big| \,\mathbf{Q}_{r}\leq  \lambda^{-\frac{1}{2} }         \},&
\varsigma_{m}&= \min \{ r\in (\varsigma^{\prime}_{m},\infty)\,\big| \, \mathbf{Q}_{r}\geq 2  \lambda^{-\frac{1}{2} }            \}.
\end{align*}
The intervals $[\varsigma_m',\varsigma_m)$ and $[\varsigma_m,\varsigma_{m+1}')$ are incursions and excursions, respectively.  The above definitions assume that $\mathbf{Q}_{0}\leq \lambda^{-\frac{1}{2}}$, which is reasonable for $\lambda\ll 1$ by the locality assumption on the initial distribution (2) of List~\ref{Assumptions}, but we should take $\varsigma_{1}=\varsigma_{1}'=0$ when $\mathbf{Q}_{0}> \lambda^{-\frac{1}{2}}$.

Trivially, we have the inequality
\begin{align}\label{Incur}
\sup_{0\leq r\leq t}\mathbf{Q}_{r}\leq  2\lambda^{-\frac{1}{2}}+\sup_{0\leq r\leq t}\big(\mathbf{Q}_{r}-\mathbf{Q}_{r^{-}}\big)^+ +\sup_{ \varsigma_{m}\leq t}\sup_{r\in[\varsigma_{m},\,\varsigma_{m+1 }'\wedge t] }\big(\mathbf{Q}_{r}-\mathbf{Q}_{\varsigma_{m}}\big)^+,
\end{align}
where $(y)^{+}:=\max(y,0)$ for $y\in \R$.   The two rightmost terms in~(\ref{Incur}) bound the largest fluctuation of the process $\mathbf{Q}_{t}$ above the line $2\lambda^{-1}$.  In particular,  the middle term on the right side of~(\ref{Incur}) bounds the largest over-jump past the line $2\lambda^{-\frac{1}{2}}$ at the start of the excursions from low energy.

Let $t_{j}$, $j>0$ be the collision times with $t_{0}=0$ and recall that $\calN_{t}$ is the number of collisions up to time $t$.  We can write $\mathbf{Q}_{t}$ as
\begin{align}\label{Ach}
\mathbf{Q}_{t}=\mathbf{Q}_{0}+ \mathbf{m}_{t}+ \mathbf{m}_{t}'+\int_{0}^{t}dr\mathcal{A}^{+}(S_{r})-\sum_{j=1}^{\calN_{t}}\frac{\mathcal{A}_{\lambda}^{-}(S_{t_{j}^{-}})}{\mathcal{E}_{\lambda}(  P_{t_{j}^{-}}  )}, 
\end{align}
where the process $\mathbf{m}_{t}$ is defined as
\begin{align*}
\mathbf{m}_{t}:=\sum_{j=1}^{\calN_{t}}\Delta_{j} \quad \quad \text{for} \quad\quad   \Delta_{j}:= \mathbf{Q}_{t_{j}} -\mathbf{Q}_{t_{j}^{-}} - \frac{\mathcal{A}_{\lambda}(S_{t_{j}^{-}})}{\mathcal{E}_{\lambda}(  P_{t_{j}^{-}}  )},
\end{align*}
 and  $\mathbf{m}_{t}'$ is the difference
$$  \mathbf{m}_{t}':=\sum_{j=1}^{\calN_{t}}\frac{\mathcal{A}_{\lambda}^{+}(S_{t_{j}^{-}}) }{\mathcal{E} _{\lambda}(P_{t_{j}^{-}}) }-\int_{0}^{t}dr\mathcal{A}_{\lambda}^{+}(S_{r}).  $$
The processes  $\mathbf{m}_{t}$ and $\mathbf{m}_{t}'$ are martingales with respect to the filtration $\mathcal{F}_{t}$.   To see that $\mathbf{m}_{t}$ is a martingale, notice that the increments $\Delta_{j}$ have mean zero given the information $\mathcal{F}_{\tau_{j}^{-}}$ and  $\calN_{\tau_{j}}=\calN_{\tau_{j}^{-}}+1$.   The process $\mathbf{m}_{t}'$ is a martingale since  the terms $ \frac{\mathcal{A}_{\lambda}^{+}(S_{r}) }{\mathcal{E} _{\lambda}(P_{r}) }$ in the sum occur with Poisson rate $\mathcal{E} _{\lambda}(P_{r})$.  Moreover, the predictable quadratic variations corresponding to the martingales have the forms
$$ \langle \mathbf{m}\rangle_{t}= \int_{0}^{t}dr\mathcal{K}_{\lambda,2}(S_{r})\hspace{1cm}\text{and} \hspace{1cm}  \langle \mathbf{m}'\rangle_{t}= \int_{0}^{t}dr\frac{\big(\mathcal{A}_{\lambda}^{+}(S_{r})\big)^{2}}{\mathcal{E}_{\lambda}(P_{r})  }.  $$

When $t=\frac{T}{\lambda}$ the third term on the right side of~(\ref{Incur}) is smaller than
\begin{align}
\sup_{ \varsigma_{n}\leq \frac{T}{\lambda} }\sup_{t\in[\varsigma_{n},\,\varsigma_{n+1 }'\wedge \frac{T}{\lambda} ) }\big(\mathbf{Q}_{t}-\mathbf{Q}_{\varsigma_{n}}\big)^+ &\leq  \sup_{ \varsigma_{n}\leq \frac{T}{\lambda} }\sup_{t\in[\varsigma_{n},\,\varsigma_{n+1 }'\wedge \frac{T}{\lambda} ] }\left(\mathbf{m}_{t}+\mathbf{m}_{t}'-\mathbf{m}_{\varsigma_{n}}-\mathbf{m}_{\varsigma_{n}}'+\int_{\varsigma_{n}}^{t}dr\mathcal{A}^{+}(S_{r})\right)^+  \nonumber\\ &\leq 2\sup_{0\leq t\leq \frac{T}{\lambda} }\big|\mathbf{m}_{t}\big| +2\sup_{0\leq t\leq \frac{T}{\lambda} }\big|\mathbf{m}_{t}'\big|+ \int_{0}^{\frac{T}{\lambda}}dr\mathcal{A}^{+}(S_{r})\chi(\mathbf{Q}_{r}\geq \lambda^{-\frac{1}{2}}) \nonumber\\ &\leq 2\sup_{0\leq t\leq \frac{T}{\lambda} }\big|\mathbf{m}_{t}\big| +2\sup_{0\leq t\leq \frac{T}{\lambda} }\big|\mathbf{m}_{t}'\big|+ 2C \label{ReAch}
T.
\end{align}
For the first inequality, we have thrown away the term $-\sum_{j=1}^{\calN_{t}}\frac{\mathcal{A}_{\lambda}^{-}(S_{t_{j}^{-}})}{\mathcal{E}_{\lambda}(  P_{t_{j}^{-}}  )}$ since it is strictly negative.  The second inequality is  the triangle inequality with $\sup_{0\leq s, r\leq t}\big|f_{r}-f_{s} \big|\leq 2\sup_{0\leq r\leq t}\big|f_{r}\big|$ for $f=\mathbf{m},\mathbf{m}'$, and uses the fact that $\mathbf{Q}_{r}\geq \lambda^{-\frac{1}{2}}$ during the excursion intervals $ [\varsigma_{n},\,\varsigma_{n+1 }'\wedge \frac{T}{\lambda} )$.  The third inequality is a consequence of Part (2) of Prop.~\ref{AMinus}, which gives a $C>0$ such that   
$$\mathcal{A}^{+}(X_{r},P_{r})\chi\big(\mathbf{Q}_{r}\geq \lambda^{-\frac{1}{2}}\big)\leq \frac{C}{1+P_{r}^{2}} \chi\big(\mathbf{Q}_{r}\geq \lambda^{-\frac{1}{2}}\big)\leq \frac{C}{1+\frac{1}{2}\lambda^{-1} }\leq 2C\lambda ,  $$
where the second inequality is for $\lambda$ small enough so that $4\sup_{x}V(x)\leq \lambda^{-1}$.

Combining~(\ref{Ach}) for $t=\frac{T}{\lambda}$ with~(\ref{ReAch}) and using the triangle inequality, then 
\begin{align}\label{AchIII}
\mathbb{E}^{(\lambda)}\Big[\sup_{0\leq t\leq \frac{T}{\lambda} } \mathbf{Q}_{t}^{n} \Big]^{\frac{1}{n}} \leq &  2CT+  \mathbb{E}^{(\lambda)}\big[\mathbf{Q}_{0}^{n} \big]^{\frac{1}{n}}+\mathbb{E}^{(\lambda)}\Big[\sup_{0\leq t\leq \frac{T}{\lambda} }\big((\mathbf{Q}_{t}-\mathbf{Q}_{t^{-}})^+\big)^{n} \Big]^{\frac{1}{n}}\nonumber \\ &+2 \mathbb{E}^{(\lambda)}\Big[\sup_{0\leq t\leq \frac{T}{\lambda} }\big|\mathbf{m}_{t}\big|^{n}  \Big]^{\frac{1}{n}} + 2 \mathbb{E}^{(\lambda)}\Big[\sup_{0\leq t\leq \frac{T}{\lambda} }\big|\mathbf{m}_{t}'\big|^{n}  \Big]^{\frac{1}{n}}.
\end{align}
We will now give bounds for each of the terms on the right side above.  The goal is to show that
\begin{align}\label{TheGame}
\mathbb{E}^{(\lambda)}\Big[\sup_{0\leq t\leq \frac{T}{\lambda} } \mathbf{Q}_{t}^{n} \Big]^{\frac{1}{n}}=\mathit{O}\Big(\lambda^{-\frac{1}{2}} +\sum_{m}\Big(\mathbb{E}^{(\lambda)}\Big[\sup_{0\leq t\leq \frac{T}{\lambda} } \mathbf{Q}_{t}^{n} \Big]^{\frac{1}{n}}  \Big)^{\alpha_{m}}   \Big),
\end{align}
where  $\alpha_{m}\geq 2$ and the sum over $m$ includes a finite number of terms not depending on the parameter $\lambda>0$.  The above would imply that $\mathbb{E}^{(\lambda)}\big[\sup_{0\leq t\leq \frac{T}{\lambda} } \mathbf{Q}_{t}^{n} \big]$ is $\mathit{O}(\lambda^{-\frac{n}{2}})$, which is the statement we wish to prove.

 For the second term on the right side of~(\ref{AchIII}), 
$$\mathbb{E}^{(\lambda)}\big[\mathbf{Q}_{0}^{n} \big]=\int_{S}d\mu(x,p)\big(p^{2}+2V(x)\big)^{\frac{n}{2}} <\infty,$$
and the right side is finite by our assumption on the initial measure in List~\ref{Assumptions}.  For the third term on the right side of~(\ref{AchIII}).  Using that $(\sup_{m}a_{m})^{2}\leq \sum_{m} a_{m}^{2}$ and Jensen's inequality, we have the first inequality below
\begin{align*}
\mathbb{E}^{(\lambda)}\Big[\sup_{0\leq t\leq \frac{T}{\lambda} }\big((\mathbf{Q}_{t}-\mathbf{Q}_{t^{-}})^{+}\big)^{n} \Big]^{\frac{1}{n}} &\leq \mathbb{E}^{(\lambda)}\Big[\sum_{j=1}^{\calN_{\frac{T}{\lambda}}} \big( (\mathbf{Q}_{t_{j}}-\mathbf{Q}_{t_{j}^{-}})^+\big)^{2n} \Big]^{\frac{1}{2n}}=\mathbb{E}^{(\lambda)}\Big[\sum_{j=1}^{\calN_{\frac{T}{\lambda}}}\frac{\mathcal{V}_{\lambda,2n}^{+}(X_{t_{j}^{-}},P_{t_{j}^{-} })    }{ \mathcal{E}_{\lambda}(P_{t_{j}^{-} })  } \Big]^{\frac{1}{2n}} \\
&=
\mathbb{E}^{(\lambda)}\Big[\int_{0}^{\frac{T}{\lambda}}dt\mathcal{V}_{\lambda,2n}^{+}(X_{t},P_{t})   \Big]^{\frac{1}{2n}}\leq c^{\frac{1}{2n}}\mathbb{E}^{(\lambda)}\Big[\int_{0}^{\frac{T}{\lambda}}dt\big(1+\lambda \mathbf{Q}_{t} \big)  \Big]^{\frac{1}{2n}}
\\&\leq  c^{\frac{1}{2n}}T^{\frac{1}{2}}\lambda^{-\frac{1}{2n}}+c^{\frac{1}{2n}}T^{\frac{1}{2n}}\mathbb{E}^{(\lambda)}\Big[\sup_{0\leq t\leq \frac{T}{\lambda}} \mathbf{Q}_{t}^{n}  \Big]^{\frac{1}{2n^{2}}}.
\end{align*}
The first equality uses that 
$$\mathbb{E}\Big[\Big((\mathbf{Q}_{t}-\mathbf{Q}_{t^{-}})^+\Big)^{2n}\,\Big|\,\mathcal{F}_{t^{-}},\,\calN_{t}=\calN_{t^{-}}+1\Big]=\frac{\mathcal{V}_{\lambda,2n}^{+}(X_{t},P_{t^{-}})    }{ \mathcal{E}_{\lambda}(P_{t^{-}})  },   $$
and the second equality uses that the times $t_{n}$ occur with Poisson rate $\mathcal{E}_{\lambda}(P_{t})$.  The second inequality is for some $c>0$ by Part (6) of Prop.~\ref{AMinus}.  The right-most term above has the form of~(\ref{TheGame}).

For the fourth term on the right side of~(\ref{AchIII}), we can apply Doob's maximal inequality to get the first inequality below:
\begin{align*}
  \mathbb{E}^{(\lambda)}\Big[\sup_{0\leq t\leq \frac{T}{\lambda} }&\big|\mathbf{m}_{t}\big|^{n}  \Big]^{\frac{1}{n}} \leq \frac{n}{n-1}\mathbb{E}^{(\lambda)}\big[\big|\mathbf{m}_{\frac{T}{\lambda} }\big|^{n} \big]^{\frac{1}{n}}\leq  C' \mathbb{E}^{(\lambda)}\big[\big(\langle \mathbf{m}\rangle_{\frac{T}{\lambda} }\big)^{\frac{n}{2}} \big]^{\frac{1}{n}}+C'\mathbb{E}^{(\lambda)}\Big[ \sum_{j=1}^{\calN_{\frac{T}{\lambda} }}\big|\Delta_{j}   \big|^{n} \Big]^{\frac{1}{n}} \\ &= C' \mathbb{E}^{(\lambda)}\Big[\Big(\int_{0}^{\frac{T}{\lambda} }dt\mathcal{K}_{\lambda,2}(S_{t})\Big)^{\frac{n}{2}} \Big]^{\frac{1}{n}}+C' \mathbb{E}^{(\lambda)}\Big[ \int_{0}^{\frac{T}{\lambda} }dt\mathcal{K}_{\lambda,n}(S_{t}) \Big]^{\frac{1}{n}} \\ &\leq C'' \mathbb{E}^{(\lambda)}\Big[\Big(\int_{0}^{\frac{T}{\lambda} }dt\big(1+\lambda \mathbf{Q}_{t}\big)\Big)^{\frac{n}{2}} \Big]^{\frac{1}{n}}+C''\mathbb{E}^{(\lambda)}\Big[ \int_{0}^{\frac{T}{\lambda} }dt\big(1+\lambda \mathbf{Q}_{t}\big) \Big]^{\frac{1}{n}} \\ &\leq C''(T^{\frac{1}{2}}  \lambda^{-\frac{1}{2}}+T^{\frac{1}{n}}\lambda^{-\frac{1}{n}} )+C'' T^{\frac{1}{2}}\mathbb{E}^{(\lambda)}\Big[\sup_{0\leq t\leq \frac{T}{\lambda}} \mathbf{Q}_{t}^{n}  \Big]^{\frac{1}{2n}}+ C'' T^{\frac{1}{n}}\mathbb{E}^{(\lambda)}\Big[\sup_{0\leq t\leq \frac{T}{\lambda}} \mathbf{Q}_{t}^{n}  \Big]^{\frac{1}{n^{2}}}.
\end{align*}
The second inequality is for some $C'>0$ by Rosenthal's inequality (see e.g. \cite[Lem.~2.1]{Rosenthal}).  The third inequality is Part (4) of Prop.~\ref{AMinus}.  We have combined the constants at each step.  Thus we have  dressed our bound in the form~(\ref{TheGame}).

The last term in~(\ref{AchIII}) is bounded similarly to  $\mathbb{E}^{(\lambda)}\big[\sup_{0\leq t\leq \frac{T}{\lambda} }\big|\mathbf{m}_{t}\big|^{n}  \big]^{\frac{1}{n}}$.

\end{proof}

The following lemma bounds the expected number of returns to the atom up to time $\frac{T}{\lambda}$ for $\lambda\ll 1$.

\begin{lemma}\label{LocalTimeBnd}  There is a $C>0$ such that for all $\lambda<1$,
$$  \tilde{\mathbb{E}}^{(\lambda)}\big[ \lambda^{\frac{1}{2}}\tilde{N}_{\frac{T}{\lambda}}\big]\leq C.         $$

\end{lemma}

\begin{proof}  The main step in this proof is the inequality in~(\ref{Harkonnen}) where we bound the function $h(s)$, which  determines the   probability of life cycle expiration,  by a constant multiple of the function $\mathcal{A}_{\lambda}^{+}(s)$ arising as the increasing  part of the semi-martingale decomposition for $(2H_{t})^{\frac{1}{2}}$.   Once the quantity we must bound is formulated in terms of processes related to the square root of the energy process, we can apply Lem.~\ref{FirstEnergyLem} and results from Prop.~\ref{AMinus} to finish the proof.    By Prop.~\ref{TrivialMart} we have the first equality below:
 \begin{align}\label{Harkonnen}
\tilde{\mathbb{E}}^{(\lambda)}\big[ \lambda^{\frac{1}{2}}\tilde{N}_{\frac{T}{\lambda}}\big] &=\mathbb{E}^{(\lambda)}\Big[ \lambda^{\frac{1}{2}}\int_{0}^{\frac{T}{\lambda}}drh(S_{r})\Big]\nonumber 
 \\ &\leq c\mathbb{E}^{(\lambda)}\Big[ \lambda^{\frac{1}{2}}\int_{0}^{\frac{T}{\lambda}}dr\mathcal{A}_{\lambda}^{+}(S_{r})\Big] 
\nonumber \\ &= c \mathbb{E}^{(\lambda)}\big[ \lambda^{\frac{1}{2}}\mathbf{A}_{\frac{T}{\lambda}}^{+}  \big].
\end{align}
The second equality in~(\ref{Harkonnen}) follows by the definition of the process $\mathbf{A}_{t}^{+}$, and the  inequality  holds since there is $c>0$ such that for small enough $\lambda>0$ and all $s\in \Sigma$
\begin{align}\label{Ginger}
h(s)\leq c\mathcal{A}_{\lambda}^{+}(s).
\end{align}
To show~(\ref{Ginger}), we first observe that for $\lambda=0$   we have $\mathcal{A}_{0}^{+}(x,p)=\mathcal{A}_{0}(x,p)$ since for all $(x,p)\in \Sigma$
\begin{align}\label{Snap}
 \mathcal{A}_{0}(x,p)=&\int_{\R}dvj(v)\Big( 2^{\frac{1}{2}}H^{\frac{1}{2}}(x,p+v)-2^{\frac{1}{2}}H^{\frac{1}{2}}(x,p)      \Big)\nonumber \\=&V(x) \int_{0}^{\infty}dvj(v)\int_{-|v|}^{|v|}dw\frac{ |v|-|w|  }{ 2^{\frac{1}{2}}H^{\frac{3}{2}}(x,p+w)  }\nonumber \\ >& 0 ,
\end{align}
where the jump rate density $j(p-p')=\mathcal{J}_{0}(p,p')$ is defined as in~(\ref{LambdaZero}).  The second equality in~(\ref{Snap}) holds  by a  Taylor formula for $H^{\frac{1}{2}}(x,p+v)$ around $v=0$ with second-order error:
\begin{align*}
2^{\frac{1}{2}}H^{\frac{1}{2}}(x,p+v)-2^{\frac{1}{2}}H^{\frac{1}{2}}(x,p)=& v\frac{  p  }{2^{\frac{1}{2}}H^{\frac{1}{2}}(x,p)   }    + \int_{0}^{v}dw(v-w) \frac{  V(x) }{2^{\frac{1}{2}} H^{\frac{3}{2}}(x,p+w)  } ,
\end{align*}
where the first-order term vanishes from~(\ref{Snap})  because  $j(v)=j(-v)$, and the error term is symmetrized for the same reason.   The formula on the second line of~(\ref{Snap}) assumes $V(x)>0$ and can be replaced by a simpler expression when $V(x)=0$ (in which case $2^{\frac{1}{2}}H^{\frac{1}{2}}(x,p)=|p|$), although the resulting  value is still strictly positive for all $(x,p)$.  For fixed $\lambda>0$, the function $\mathcal{A}_{\lambda}^{+}(x,p)$  is supported in a compact region around the origin of phase space.  As $\lambda\searrow 0$ the bias that tends to drag the test  particle to lower energy due to head-on collisions becomes less pronounced, and  $\supp(\mathcal{A}_{\lambda}^{+})$ grows to all of phase space (as in the $\lambda=0$ case).   For any compact set $\mathcal{K}\subset \Sigma$ we can pick $\lambda>0$ small enough such that $\mathcal{K}\subset \supp(\mathcal{A}_{\lambda}^{+})$, and  there is uniform convergence for all $(x,p)\in \mathcal{K}$ as $\lambda\searrow 0$
\begin{align}\label{Snafu}
 \mathcal{A}_{\lambda}^{+}(x,p)=\mathcal{A}_{\lambda}(x,p)=\int_{\R}dv\mathcal{J}_{\lambda}(p,p+v)\Big( 2^{\frac{1}{2}}H^{\frac{1}{2}}(x,p+v)-2^{\frac{1}{2}}H^{\frac{1}{2}}(x,p)      \Big)\longrightarrow  \mathcal{A}_{0}(x,p)>0.       
\end{align}
This uniform convergence follows easily from the smooth dependence of the rates $\mathcal{J}_{\lambda}(p,p+v)$ on $\lambda$ and the uniform exponential decay of the rates in $v$ for all $p$ in a compact set.    In particular, we can pick a $\delta>0$ and  $\lambda>0$ small enough so that  $\mathcal{A}_{\lambda}^{+}(x,p) >\delta$ for all $(x,p)\in \mathcal{K}$.  
Since $h:\Sigma\rightarrow \R^{+}$ is bounded by one and has compact support, the above remarks imply that~(\ref{Ginger}) holds for some $c>0$, and  we thus have the inequality in~(\ref{Harkonnen}).

By~(\ref{Harkonnen}) it is sufficient to show that  $\mathbb{E}^{(\lambda)}\big[ \lambda^{\frac{1}{2}}\mathbf{A}_{\frac{T}{\lambda}}^{+}  \big]$ is uniformly bounded for $\lambda\ll 1$. Since $\mathbf{A}_{t}^{+}=\mathbf{Q}_{t}-\mathbf{Q}_{0}-\mathbf{M}_{t}+\mathbf{A}_{t}^{-}$,  the triangle inequality gives that
\begin{align}
 \mathbb{E}^{(\lambda)}\big[ \lambda^{\frac{1}{2} }\mathbf{A}_{\frac{T}{\lambda}}^{+}\big] \leq  &  2 \mathbb{E}^{(\lambda)}\big[\sup_{0\leq t\leq \frac{T}{\lambda}}\lambda^{\frac{1}{2} }\mathbf{Q}_{t} \big]+  \mathbb{E}^{(\lambda)}\big[\big|\lambda^{\frac{1}{2}}\mathbf{M}_{\frac{T}{\lambda} }\big| \big]+\mathbb{E}^{(\lambda)}\big[ \lambda^{\frac{1}{2} }\mathbf{A}_{\frac{T}{\lambda}}^{-}\big]   
\nonumber \\ \leq &  2 \mathbb{E}^{(\lambda)}\big[\sup_{0\leq t\leq \frac{T}{\lambda}}\lambda^{\frac{1}{2} }\mathbf{Q}_{t} \big]+   \mathbb{E}^{(\lambda)}\Big[\lambda\int_{0}^{\frac{T}{\lambda} }dr\mathcal{V}_{\lambda}(S_{r})  \Big]^{\frac{1}{2}}+\mathbb{E}^{(\lambda)}\Big[ \lambda^{\frac{1}{2}}\int_{0}^{\frac{T}{\lambda}}dr\mathcal{A}^{-}_{\lambda}(S_{r})\Big] \nonumber \\ \leq &   2 \mathbb{E}^{(\lambda)}\big[\sup_{0\leq t\leq \frac{T}{\lambda}}\lambda^{\frac{1}{2} }\mathbf{Q}_{t} \big]+   C_{1}^{\frac{1}{2}}\mathbb{E}^{(\lambda)}\Big[\lambda\int_{0}^{\frac{T}{\lambda} }dr(1+\lambda \mathbf{Q}_{r})^{2} \Big]^{\frac{1}{2}}\nonumber \\ &+C_{2}\mathbb{E}^{(\lambda)}\Big[ \lambda^{\frac{3}{2}}\int_{0}^{\frac{T}{\lambda}}dr(\mathbf{Q}_{r}+\lambda\mathbf{Q}_{r}^{2}) \Big]. \label{Kaput}
 \end{align}
For the second inequality, the second term employs Jensen's inequality with the square function along with the fact that the martingale $\mathbf{M}_{t}$ has bracket $\langle \mathbf{M}\rangle_{t}=\int_{0}^{t }dr\mathcal{V}_{\lambda}(S_{r})$.   The bound for the second term in the third equality is by Part (5) of Prop.~\ref{AMinus}. The third term in the third inequality is bounded by Part (1) of Prop.~\ref{AMinus} and $|P_{r}|\leq \mathbf{Q}_{r}$.  The right side of~(\ref{Kaput}) is uniformly bounded in $\lambda<1$ since $ \mathbf{Q}_{r} = (2H_{r})^{\frac{1}{2}} $  and $\mathbb{E}^{(\lambda)}\big[\sup_{0\leq r\leq \frac{T}{\lambda}} H_{r}^{\frac{n}{2}}  \big]\leq C_{n}'\lambda^{-\frac{n}{2}}$ for some constants $C'_{n}>0$ by Lem.~\ref{FirstEnergyLem}.

\end{proof}

\subsection{Fractional moments for the duration of life cycles}\label{SecFracMom}

By Appx.~\ref{AppendErgod}   the original dynamics is exponentially ergodic to the equilibrium state $\Psi_{\infty,\lambda}$ for   any fixed $\lambda$.  Thus the split dynamics converges exponentially to the $\tilde{\Psi}_{\infty,\lambda}$, and the time span $R_{1}$ and the number of partition times $\tilde{n}_{1}$ during a single life cycle will have finite expectation.  However, the process $S_{t}=(X_{t},P_{t})$ behaves more and more like a random walk in the $P_{t}$ variable as $\lambda\searrow 0$, so we should expect that
$$\tilde{\mathbb{E}}_{\tilde{\nu}}^{(\lambda)}\big[  R_{1}\big]\longrightarrow \infty \hspace{1cm} \text{and}\hspace{1cm} \tilde{\mathbb{E}}_{\tilde{\nu}}^{(\lambda)}\big[  \tilde{n}_{1}\big]\longrightarrow \infty$$
as $\lambda\searrow 0$ since the time elapsed during a random walk's excursion from a region around the origin has infinite expectation. However, the excursive durations for random walks  do have finite fractional moments for exponents $<\frac{1}{2}$, and Part (2) of  Prop.~\ref{FracMoment} states the analogous property for our process. 

\begin{proposition}\label{FracMoment}
Let $\tilde{n}_{1}$ and $R_{1}$ be defined as above. 
\begin{enumerate}
\item There is a $C>0$ such that for $\lambda<1$,
$$ \tilde{\mathbb{E}}^{(\lambda)}_{\tilde{\nu}}\big[  \tilde{n}_{1}\big]\leq C \lambda^{-\frac{1}{2}}\quad \text{and} \quad \tilde{\mathbb{E}}^{(\lambda)}_{\tilde{\nu}}\big[  R_{1}\big]\leq C \lambda^{-\frac{1}{2}}.   $$

\item Each fractional moment $0<\alpha<\frac{1}{2}$  is uniformly bounded for $\lambda<1$,
$$\sup_{\lambda<1}\tilde{\mathbb{E}}^{(\lambda)}_{\tilde{\nu}}\big[  \tilde{n}_{1}^{\alpha}\big]<\infty  \hspace{.5cm} \text{and} \hspace{.5cm} \sup_{\lambda<1}\tilde{\mathbb{E}}^{(\lambda)}_{\tilde{\nu}}\big[  R_{1}^{\alpha}\big]<\infty . $$
\end{enumerate}

\end{proposition}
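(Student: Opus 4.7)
The plan is to treat the two parts separately, using the exact closed-form identities from Proposition~\ref{BasicsOfNumII} for Part~(1) and a random-walk excursion tail estimate for Part~(2).

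For Part~(1), I would apply Parts (1) and (2) of Proposition~\ref{BasicsOfNumII} with $g\equiv 1$ to obtain
$$\tilde{\mathbb{E}}_{\tilde{\nu}}^{(\lambda)}[\tilde n_{1}+1]=\tilde{\mathbb{E}}_{\tilde{\nu}}^{(\lambda)}[R_{1}]=\frac{1}{\Psi_{\infty,\lambda}(h)},$$
so both moments are controlled by the single quantity $\Psi_{\infty,\lambda}(h)^{-1}$. With $h=(\mathbf{u}/U)\chi(H\leq l)$ from Convention~\ref{NumConv} and $\Psi_{\infty,\lambda}(x,p)=e^{-\beta\lambda H(x,p)}/N(\lambda)$, the numerator $\int h\,e^{-\beta\lambda H}\,dx\,dp$ is bounded below by a constant independent of $\lambda<1$, since $\beta\lambda H\leq\beta\lambda l$ is $\mathit{O}(1)$ on the support of $h$. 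The normalization $N(\lambda)$ is a Gaussian-in-momentum integral of width $(\beta\lambda)^{-1/2}$ coupled to a bounded integral in $x$, so $N(\lambda)=\Theta(\lambda^{-1/2})$. Hence $\Psi_{\infty,\lambda}(h)=\Theta(\lambda^{1/2})$ and Part~(1) follows.

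For Part~(2), the target is the uniform tail bound $\tilde{\mathbb{P}}_{\tilde{\nu}}^{(\lambda)}[\tilde n_{1}>n]\leq Cn^{-1/2}$, which gives $\tilde{\mathbb{E}}[\tilde n_{1}^{\alpha}]=\alpha\int n^{\alpha-1}\tilde{\mathbb{P}}[\tilde n_{1}>n]\,dn<\infty$ precisely for $\alpha<\tfrac12$, and analogously for $R_1$. By Part~(3) of Proposition~\ref{BasicsOfNum}, at each partition time $\tau_k$ with $S_{\tau_k}\in\{H\leq l\}$ the split process lands on the atom with independent probability $\mathbf{u}/U>0$. Hence $\tilde n_1$ is dominated stochastically by a geometrically-stopped sum (of parameter $\mathbf{u}/U$) of inter-visit indices to $\{H\leq l\}$, and it suffices to prove the same tail bound for the first discrete return of the chain $(\sigma_k)$ to $\{H\leq l\}$. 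Since the partition times are an independent rate-$1$ Poisson process in the unsplit dynamics, this reduces in turn to the continuous-time tail $\mathbb{P}[\text{excursion length}>t]\leq Ct^{-1/2}$. That last bound is a random-walk excursion estimate: the process $\mathbf{Q}_t=\sqrt{2H_t}$ has martingale part $\mathbf{M}_t$ with quadratic variation $\Theta(t)$ by Parts~(3) and~(5) of Proposition~\ref{AMinus}, and on time scales shorter than $\lambda^{-1}$ the predictable part is controlled via Lemmas~\ref{FirstEnergyLem} and~\ref{LowEnergyLemma}; optional stopping applied to $\mathbf{Q}_t$ at the hitting times of $\{H\leq l\}$ then yields the $t^{-1/2}$ excursion tail, uniformly in $\lambda<1$.

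The bootstrap from $\tilde n_{1}$ to $R_{1}$ uses $R_{1}=\tau_{\tilde n_{1}+1}$: in the original filtration the $e_{k}=\tau_{k}-\tau_{k-1}$ are i.i.d.\ mean-$1$ exponentials with $e_{k+1}$ independent of the past up to $\tau_k$, and $\tilde n_{1}+1$ is a stopping time for the corresponding filtration enriched by the splitting coins. Conditioning on $\tilde n_{1}+1=N$, $\tau_N\sim\Gamma(N,1)$ (up to boundary corrections in the split dynamics that are controlled by Lemma~\ref{LemPartTime}), so $\mathbb{E}[\tau_{N}^{\alpha}\mid N]\leq C(1+N)^{\alpha}$ and the fractional moment bound on $\tilde n_{1}$ transfers to $R_{1}$. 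The principal obstacle I anticipate is producing the random-walk excursion tail uniformly in $\lambda<1$: the upward drift $\mathcal{A}_{\lambda}^{+}$ is only bounded by a constant at low momentum (Proposition~\ref{AMinus}~(4)), so one must carefully split the analysis into a short-time regime $t\ll\lambda^{-1}$ where $\mathbf{Q}_t$ genuinely behaves as a driftless random walk, and a long-time regime where $t^{-1/2}$ is already summable for computing $\alpha$-moments, patching the two with the uniform moment estimates of Section~\ref{SubSecJump}.
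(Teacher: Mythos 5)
Your Part (1) is exactly the paper's argument: apply Proposition~\ref{BasicsOfNumII} with $g\equiv 1$ and observe $\Psi_{\infty,\lambda}(h)=\Theta(\lambda^{\frac{1}{2}})$ because $N(\lambda)=\Theta(\lambda^{-\frac{1}{2}})$. That part is fine.

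Part (2) has a genuine gap at its pivotal step. You reduce everything (correctly, modulo routine care with the geometric thinning and the Poisson partition times) to the uniform-in-$\lambda$ excursion tail $\mathbb{P}\big[\varsigma>t\big]\leq Ct^{-\frac{1}{2}}$ for the return time $\varsigma$ of $S_t$ to $\{H\leq l\}$, and you assert that this follows from ``optional stopping applied to $\mathbf{Q}_t$ at the hitting times of $\{H\leq l\}$.'' Optional stopping only gives you identities of the form $\mathbb{E}[\mathbf{Q}_{t\wedge\varsigma}]\leq \mathbf{Q}_{0}+\mathbb{E}[\mathbf{A}^{+}_{t\wedge\varsigma}]$; to extract $\mathbb{P}[\varsigma>t]\leq Ct^{-\frac{1}{2}}$ from $\mathbb{E}[\mathbf{Q}_{t};\,\varsigma>t]=\mathit{O}(1)$ you would additionally need the \emph{conditional lower bound} $\mathbb{E}[\mathbf{Q}_{t}\,|\,\varsigma>t]\geq c\,t^{\frac{1}{2}}$, i.e.\ that the process conditioned to avoid the low-energy set for time $t$ sits at height $\gtrsim t^{\frac{1}{2}}$. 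That persistence-type estimate is the entire difficulty (it is true for nice random walks, but its proof uses much more than optional stopping, and here one must also handle the jump overshoots and the $\lambda$-dependent drifts $\mathcal{A}_{\lambda}^{\pm}$ uniformly). Nothing in your sketch supplies it, and your closing remark that one should ``patch'' a short-time driftless regime with a long-time regime does not address it either.

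The paper avoids this by working with the renewal dual of the tail rather than the tail itself. It bounds the Laplace transform via
$$\big|\tilde{\mathbb{E}}^{(\lambda)}_{\tilde{\nu}}\big[e^{-\gamma R_{1}'}\big]-1\big|\leq \Big(\tilde{\mathbb{E}}^{(\lambda)}_{\tilde{\nu}}\Big[\sum_{m\geq 1}e^{-\gamma R_{m}'}\Big]\Big)^{-1},$$
so that what must be proved is a \emph{lower} bound on the expected (discounted) number of returns, equivalently on the occupation $\mathbb{E}^{(\lambda)}_{\nu}[\int_{0}^{\gamma^{-1}}dr\,h(S_{r})]\gtrsim \mathbb{E}^{(\lambda)}_{\nu}[\mathbf{A}^{+}_{\gamma^{-1}}]\geq c\gamma^{-\frac{1}{2}}-C$. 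This last bound is Lemma~\ref{LTLB}, whose proof uses $\mathbf{A}^{+}_{t}\geq -\mathbf{Q}_{0}+\mathbf{M}^{-}_{t}$ together with $2\mathbb{E}[\mathbf{M}^{-}_{t}]=\mathbb{E}[|\mathbf{M}_{t}|]\geq \mathbb{E}[|\mathbf{M}_{t}|^{2}]^{2}/\mathbb{E}[|\mathbf{M}_{t}|^{3}]\geq ct^{\frac{1}{2}}$; all of these are genuinely one-sided moment estimates and require no control of the conditioned process. If you want to salvage your tail-bound route, the clean way is to combine this same occupation lower bound with the renewal inequality $1\geq V(t)\,(1-F(t))$ for the renewal function $V$ of returns, which again replaces the conditional estimate you are missing. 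As written, your argument for Part (2) does not close.
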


  Before beginning with the proof of Prop.~\ref{FracMoment}, we must establish Lemmas~\ref{LemPartTime} and~\ref{LTLB} below.  The following trivial lemma bounds the length of time up to the first partition time $\tau_{1}$ independently of  the initial state $\tilde{s}\in \tilde{S}$.  Although the time intervals between partition times are not exponentially distributed, there is still an exponential bound on their densities.

\begin{lemma}\label{LemPartTime}
For $c=\max(\frac{U}{\mathbf{u}}, \frac{ U}{U-\mathbf{u}})$, the following inequality holds:
$$\sup_{\lambda<1}\sup_{\tilde{s}\in \tilde{\Sigma}}\tilde{\mathbb{E}}^{(\lambda)}_{ \tilde{s} }\big[ \delta_{t}( \tau_{1} )  \big]\leq ce^{-t},$$
where $\tilde{\mathbb{E}}^{(\lambda)}_{ \tilde{s} }\big[ \delta_{t}( \tau_{1} )  \big]$ refers to the density of the random variable $\tau_{1}$ at the value $t\geq 0$.  As a consequence, for any $n\in \mathbb{N}$,  
$$\sup_{\lambda\leq 1}\sup_{\tilde{s}\in\tilde{\Sigma}}\tilde{\mathbb{E}}_{\tilde{s}}^{(\lambda)}\big[ (R_{1}-R_{1}')^{n}   \big]<\infty . $$  
  
\end{lemma}

\begin{proof}
 In the original dynamics, $\tau_{1}$ has a mean one exponential distribution regardless of the initial state. Splitting the distribution starting from $s\in \Sigma $ yields the equality 
$$ e^{-t}=\mathbb{E}^{(\lambda)}_{ s }\big[ \delta_{t}( \tau_{1} )  \big] = \big(1-h(s)\big)\tilde{\mathbb{E}}^{(\lambda)}_{(s,0)  }\big[ \delta_{t}( \tau_{1} )  \big] + h(s)\tilde{\mathbb{E}}^{(\lambda)}_{(s,1)  }\big[ \delta_{t}( \tau_{1} )  \big]  .  $$
For $s\in \Sigma $ with $H(s)>l$, we have $h(s)=0$ and no splitting occurs, and thus $c=1$ is sufficient for the inequality.   For $s\in \Sigma$ with $H(s)\leq l$, then $c= \inf_{H(s)\leq l}\max( \frac{1}{h(s)}, \frac{1}{1-h(s)})=\max(\frac{U}{\mathbf{u}}, \frac{ U}{U-\mathbf{u}} ) $, where $l$, $\mathbf{u}$, and $U$ are defined as in Conv.~\ref{NumConv}.  The bound for the moments of $R_{1}-R_{1}'$ follows because $R_{1}$ is defined as the first partition time after $R_{1}'$ and by the strong Markov property for the chain $\tilde{\sigma}_{n}=\tilde{S}_{\tau_{n}}$ since $R_{1}'=\tau_{\tilde{n}_{1}}$ for the hitting time $\tilde{n}_{1}$

\end{proof}

\begin{lemma}\label{LTLB}
There exist  $c,C>0$ such that for all  $t\in \R^{+}$, $\lambda<1$, and $s$ in a given compact subset of $\Sigma$,
$$ \mathbb{E}^{(\lambda)}_{s}\big[ \mathbf{A}_{t}^{+} \big]\geq -C+c t^{\frac{1}{2}}.  $$

\end{lemma}

\begin{proof}
The positive-valued, increasing process  $ \mathbf{A}_{t}^{+}$ is difficult to analyze directly, so our strategy will be to write it using the other terms in the semi-martingale decomposition of $\mathbf{Q}_{t}$ as we did before at the end of the proof of Lem.~\ref{LocalTimeBnd}:   $ \mathbf{A}_{t}^{+}=  \mathbf{Q}_{t}-\mathbf{Q}_{0}-\mathbf{M}_{t}+\mathbf{A}_{t}^{-} $.  
In fact we can immediately throw away the positive terms $ \mathbf{Q}_{t}$, $\mathbf{A}_{t}^{-}$  in this expression for  $\mathbf{A}_{t}^{+}$ since we are looking for a lower bound; see (\ref{Phat}).   Our analysis will rely on  applications of Prop.~\ref{AMinus} and Lem.~\ref{FirstEnergyLem} to bound the remaining martingale term.

Since $\mathbf{Q}_{t}$ and $\mathbf{A}_{t}^{-}$ are positive and $\mathbf{A}_t^+\geq 0$ is increasing, we have the first inequality below:
\begin{align} \label{Phat}  \mathbf{A}_{t}^{+}= \mathbf{Q}_{t}-\mathbf{Q}_{0}-\mathbf{M}_{t}+\mathbf{A}_{t}^{-} & \geq -\mathbf{Q}_{0} +\sup_{0\leq r\leq t}-\mathbf{M}_{r}\nonumber \\  &\geq -\mathbf{Q}_{0} +\mathbf{M}_{t}^-, 
\end{align}
where $\mathbf{M}_{t}^-:=-\mathbf{M}_{t}\chi(\mathbf{M}_{t}\leq 0)$. Taking the expectation of both side gives 
$$  \mathbb{E}^{(\lambda)}_{s}\big[ t^{-\frac{1}{2}} \mathbf{A}_{t}^{+} \big]\geq  -2^{\frac{1}{2}}t^{-\frac{1}{2}} H^{\frac{1}{2}}(s)+ t^{-\frac{1}{2}}\mathbb{E}^{(\lambda)}_{s}\big[ \mathbf{M}_{t}^-\big] .  $$
Since  $\mathbf{M}_{t}$ has mean zero,  we have the equality below
$$2\mathbb{E}^{(\lambda)}_{s}\big[ \mathbf{M}_{t}^-\big] =\mathbb{E}^{(\lambda)}_{s}\big[ |\mathbf{M}_{t}|\big]\geq \frac{ \mathbb{E}^{(\lambda)}_{s}\big[ |\mathbf{M}_{t}|^{2}\big]^{2}   }{ \mathbb{E}^{(\lambda)}_{s}\big[ |\mathbf{M}_{t}|^{3}\big]    },$$
and the inequality is by Cauchy-Schwarz.  However, 
$$\mathbb{E}^{(\lambda)}_{s}\big[ |\mathbf{M}_{t}|^{2}\big]=  \mathbb{E}^{(\lambda)}_{s}\Big[\int_{0}^{t}dr\mathcal{V}_{\lambda}(S_{r}) \Big]\geq c t ,  $$
where $c>0$ is from Part (7) of Prop.~\ref{AMinus}.  For the first inequality below, we use Rosenthal's inequality  to produce a $C>0$ such that  
\begin{align*}
\mathbb{E}^{(\lambda)}_{s}\big[ |\mathbf{M}_{t}|^{3}\big]& \leq  C\mathbb{E}^{(\lambda)}_{s}\big[ \langle\mathbf{M}_{t}\rangle^{\frac{3}{2}}\big]+C\mathbb{E}^{(\lambda)}_{s}\Big[ \sum_{n=1}^{\calN_{t}}\big|\mathbf{M}_{t_{n}}-\mathbf{M}_{t_{n}^{-}}\big|^{3}\Big] \\ & = C\mathbb{E}^{(\lambda)}_{s}\big[ \langle\mathbf{M}_{t}\rangle^{\frac{3}{2}}\big]+C\mathbb{E}^{(\lambda)}_{s}\Big[ \sum_{n=1}^{\calN_{t}}\big|\mathbf{Q}_{t_{n}}-\mathbf{Q}_{t_{n}^{-}}\big|^{3}\Big] \\ & = C\mathbb{E}^{(\lambda)}_{s}\Big[\Big( \int_{0}^{t}dr\mathcal{V}_{\lambda}(S_{r})   \Big)^{\frac{3}{2}}+ \int_{0}^{t}dr\mathcal{V}_{\lambda,3}(S_{r})\Big]\\ &  \leq C'\mathbb{E}^{(\lambda)}_{s}\Big[\Big( \int_{0}^{t}dr(1+\lambda \mathbf{Q}_{r})^{2} \Big)^{\frac{3}{2}}+ \int_{0}^{t}dr(1+\lambda \mathbf{Q}_{r})^{4}\Big]\leq C''t^{\frac{3}{2}},
\end{align*}
where $t_{n}$ are the collision times and $\calN_{t}$ is the number of collisions up to time $t$.  The  first equality uses that $\mathbf{Q}_{t}$ and $\mathbf{M}_{t}$ differ by a continuous process and thus have the same jumps.  The second inequality is for some $C'>0$ by Part (5)  of Prop.~\ref{AMinus} along with the relation $|p|\leq 2^{\frac{1}{2}}H^{\frac{1}{2}}(x,p)$.  The last inequality is by Lem.~\ref{FirstEnergyLem}, and $C''$ is independent of $\lambda<1$.  

Putting our results together 
 $$ \mathbb{E}^{(\lambda)}_{s}\big[\mathbf{A}_{t}^{+} \big]\geq  -2^{\frac{1}{2}} H^{\frac{1}{2}}(s)+\frac{ \mathbb{E}^{(\lambda)}_{s}\big[ |\mathbf{M}_{t}|^{2}\big]^{2}   }{ \mathbb{E}^{(\lambda)}_{s}\big[ |\mathbf{M}_{t}|^{3}\big]    }\geq -2^{\frac{1}{2}} H^{\frac{1}{2}}(s)+\frac{ c^{2}   }{ C''  }t^{\frac{1}{2}},$$
which proves the lemma.

\end{proof}

\begin{proof}[Proof of Prop.~\ref{FracMoment}]\text{  }\\
\noindent Part (1):\hspace{.1cm}  By Part (1) of Prop.~\ref{BasicsOfNumII} applied to  the constant function $g(s)=1$, we have that
$$ \tilde{\mathbb{E}}^{(\lambda)}_{\tilde{\nu}}\big[  \tilde{n}_{1}+1\big]= \frac{1}{ \int_{\Sigma}ds\Psi_{\infty,\lambda}(s)h(s)    }= \lambda^{-\frac{1}{2}} \frac{(2\pi)^{\frac{1}{2}}}{ \int_{\Sigma}ds h(s)   }+\mathit{O}(1),    $$
where the order equality is for small $\lambda$.  The same equality holds with   $\tilde{\mathbb{E}}^{(\lambda)}_{\tilde{\nu}}\big[  \tilde{n}_{1}+1\big]$ replaced with $\tilde{\mathbb{E}}^{(\lambda)}_{\tilde{\nu}}\big[  R_{1}\big]$ by Part (2) of Prop.~\ref{BasicsOfNumII}.

\vspace{.5cm}

\noindent Part (2): \hspace{.1cm} We can prove the result through the Laplace transform by showing that there is a $C>0$ such that for all $\gamma\in \R^{+}$
$$ \sup_{\lambda<1}\big| \tilde{\mathbb{E}}^{(\lambda)}_{\tilde{\nu}}\big[  e^{-\gamma\tilde{n}_{1} }\big] -1    \big|\leq C\gamma^{\frac{1}{2}}\quad \text{and}\quad  \sup_{\lambda < 1} \big|\tilde{\mathbb{E}}^{(\lambda)}_{\tilde{\nu}}\big[  e^{-\gamma R_{1} }\big] -1    \big| \leq C\gamma^{\frac{1}{2}}.      $$
The proof for $R_{1}$ and $\tilde{n}_{1}$ are similar, and we focus on $R_{1}$.   
Also,  it is sufficient to prove the result with $R_{1}'$ rather than $R_{1}$ since, by Lem.~\ref{LemPartTime}, the random variable $R_{1}-R_{1}'$ has finite expectation.  We will study the following regimes for $\gamma$:
\begin{enumerate}[(i).]
\item $\gamma<\lambda $,

\item $\lambda \leq \gamma $ and $\gamma$ sufficiently small.   
    
\end{enumerate}

 The case (i) can be shown with a simple linearization around $\gamma=0$. 
As a result of Part (1), there exists a $C'>0$ such that 
  $$  \big|\tilde{\mathbb{E}}^{(\lambda)}_{\tilde{\nu}}\big[  e^{-\gamma R_{1}' }\big] -1    \big|\leq \gamma \tilde{\mathbb{E}}^{(\lambda)}_{\tilde{\nu}}\big[   R_{1}' \big] \leq C'\gamma\lambda^{-\frac{1}{2}}.   $$
When $\gamma<\lambda $ the bound on the right side is smaller than $C'\gamma^{\frac{1}{2}}$.
  
 For the regime (ii), we can no longer rely on the first derivative of the Laplace transform because the upper bound is growing as $\mathit{O}(\lambda^{-\frac{1}{2}})$ .   In the analysis below, we will  show that there is a $c>0$ such that for all $ \gamma $ and $\lambda\leq 1$,
\begin{align}\label{Porvo}
\big| \tilde{\mathbb{E}}^{(\lambda)}_{\tilde{\nu}}\big[  e^{-\gamma R_{1}' }\big] -1    \big|\leq \frac{1}{  c\mathbb{E}^{(\lambda)}_{\nu}\big[ \int_{0}^{\gamma^{-1}}dr\mathcal{A}_{\lambda}^{+}(S_{r})      \big]-c^{-1}\gamma^{-\frac{1}{4}}} .
\end{align}
 However, by Lem.~\ref{LTLB} there is $c'>0$ such that all $\gamma^{-1}\leq \lambda^{-1} $ and $s\in \text{Supp}(\nu)$,
\begin{align}\label{Cyclone}
 \mathbb{E}^{(\lambda)}_{s}\big[ \int_{0}^{\gamma^{-1}}dr\mathcal{A}_{\lambda}^{+}(S_{r})      \big]\geq c'\gamma^{-\frac{1}{2}}-c'^{-1}.
\end{align}
Combining~(\ref{Porvo}) and~(\ref{Cyclone}) yields statement (ii).  

In order to show~(\ref{Porvo}), we will show some preliminary bounds.  The difference between $\tilde{\mathbb{E}}^{(\lambda)}_{\tilde{\nu}}\big[  e^{-\gamma R_{1}' }\big]$ and  $1$ is smaller than 
\begin{align}\label{Gizmo}
\big| \tilde{\mathbb{E}}^{(\lambda)}_{\tilde{\nu}}\big[  e^{-\gamma R_{1}' }\big] -1    \big|\leq  \frac{\big| \tilde{\mathbb{E}}^{(\lambda)}_{\tilde{\nu}}\big[  e^{-\gamma R_{1}' }\big] -1    \big|}{\tilde{\mathbb{E}}^{(\lambda)}_{\tilde{\nu}}\big[  e^{-\gamma R_{1}' }\big]    }\leq \Big(\sum_{m=1}^{\infty} \tilde{\mathbb{E}}^{(\lambda)}_{\tilde{\nu}}\big[e^{-\gamma R_{1}' }  \big]^{m}     \Big)^{-1}  \leq \tilde{\mathbb{E}}^{(\lambda)}_{\tilde{\nu}}\Big[ \sum_{m=1}^{\infty}  e^{-\gamma R_{m}' }\Big]^{-1}.   
\end{align}
The third inequality follows since
$$\Big(\tilde{\mathbb{E}}^{(\lambda)}_{\tilde{\nu}}\big[e^{-\gamma R_{1}' }  \big]\Big)^{m}=\tilde{\mathbb{E}}^{(\lambda)}_{\tilde{\nu}}\big[e^{-\gamma \sum_{n=1}^{m} R_{n}'-R_{n-1} }  \big]   \geq \tilde{\mathbb{E}}^{(\lambda)}_{\tilde{\nu}}\big[   e^{-\gamma R_{m}' }\big],$$
where the $R_{n}'-R_{n-1}$ are independent by Part (2) of Prop.~\ref{IndependenceProp} and distributed as $R_{1}'$ when the initial distribution of the split process is $\tilde{\nu}$, and the inequality is from $\sum_{n=1}^{m} R_{n}'-R_{n-1}\leq R_{m}'$.  By~(\ref{Gizmo}) it is sufficient to give a lower bound for $\tilde{\mathbb{E}}^{(\lambda)}_{\tilde{\nu}}\big[ \sum_{m=1}^{\infty}  e^{-\gamma R_{m}' }\big]$.  This term can be rewritten as
\begin{align}\label{Oozie}
\tilde{\mathbb{E}}^{(\lambda)}_{\tilde{\nu}}\Big[ \sum_{m=1}^{\infty}  e^{-\gamma R_{m}' }\Big]= \tilde{\mathbb{E}}^{(\lambda)}_{\tilde{\nu}}\Big[ \sum_{m=0}^{\infty}  e^{-\gamma \tau_{m} }\chi( \zeta_{m}=1  )      \Big]=   \tilde{\mathbb{E}}^{(\lambda)}_{\tilde{\nu}}\Big[ \sum_{m=0}^{\infty}  e^{-\gamma \tau_{m} }h(\sigma_{m} )      \Big],
\end{align}
where $\sigma_{m}=S_{\tau_{m}}$ is the resolvent chain and $\zeta_{m}=Z_{\tau_{m}}$ is the binary component of the split chain.  The first equality in~(\ref{Oozie}) is from the definition of the times $R_{m}'=\tau_{\tilde{n}_{m}}$, and the second equality is by Part (3) of Prop.~\ref{BasicsOfNum}.  The right side above is equal to
\begin{align}
\tilde{\mathbb{E}}^{(\lambda)}_{\tilde{\nu}}\Big[ \sum_{m=0}^{\infty}  e^{-\gamma \tau_{m} }h(\sigma_{m} )      \Big] &=  \mathbb{E}^{(\lambda)}_{\nu}\Big[ \sum_{m=0}^{\infty}  e^{-\gamma \tau_{m} }h(\sigma_{m} )      \Big] =\nu(h)+
  \mathbb{E}^{(\lambda)}_{\nu}\Big[  \int_{0}^{\infty}dr   e^{-\gamma r}h(S_{r} )       \Big]\nonumber \\ &\geq    e^{-1  } \mathbb{E}^{(\lambda)}_{\nu}\Big[  \int_{0}^{\gamma^{-1} }dr  h(S_{r} )       \Big].\label{TheBirds}
 \end{align}
The first equality uses that argument of the expectation is a function of only the times $\tau_{m}$ and the resolvent chain $\sigma_{m}$   in order to revert to the original statistics. 
   The second equality in~(\ref{TheBirds}) holds since the $m=0$ term in the sum is $\mathbb{E}^{(\lambda)}_{\nu} [h(\sigma_{0})]=\nu(h)$ and  
 $$ \sum_{m=1}^{\mathbf{N}_{t}}  e^{-\gamma \tau_{m} }h(S_{\tau_{m}} )  - \int_{0}^{t}dr   e^{-\gamma r}h(S_{r} )   $$
  is a mean zero martingale which converges  to a limiting value as $t\rightarrow \infty$.   The above process is a martingale because the terms $e^{-\gamma \tau_{m} }h(\sigma_{m} )=e^{-\gamma \tau_{m} }h(S_{\tau_{m}} )$ in the sum occur with Poisson rate one.

Thus far we have shown that
\begin{align}\label{Bygones}
\big| \tilde{\mathbb{E}}^{(\lambda)}_{\tilde{\nu}}\big[  e^{-\gamma R_{1} }\big] -1    \big|\leq \frac{1}{  e^{-1}  \mathbb{E}^{(\lambda)}_{\nu}\Big[  \int_{0}^{\gamma^{-1} }dr h(S_{r} )       \Big] }.
\end{align}
Now we find a lower bound for $\mathbb{E}^{(\lambda)}_{\nu}\Big[  \int_{0}^{\gamma^{-1} }dr  h(S_{r} )       \Big]$ in terms of the same expression except with $h$ replaced by $\mathcal{A}_{\lambda}^{+}$.   Define the constant
 $$u_{\lambda}:= \frac{\int_{\Sigma}ds\Psi_{\infty,\lambda}(s)h(s)       }{ \int_{\Sigma}ds\Psi_{\infty,\lambda}(s)\mathcal{A}_{\lambda}^{+}(s)    } .     $$
 By the triangle inequality and going to the split statistics, 
\begin{align}\label{Orgasmatic}
\Big|&\mathbb{E}^{(\lambda)}_{\nu}\Big[  \int_{0}^{\gamma^{-1} }dr  h(S_{r} )       \Big]-u_{\lambda}\mathbb{E}^{(\lambda)}_{\nu}\Big[  \int_{0}^{\gamma^{-1} }dr  \mathcal{A}_{\lambda}^{+}(S_{r} ) \Big]\Big|\nonumber   \\ &\leq  \Big| \tilde{\mathbb{E}}^{(\lambda)}_{\tilde{\nu}}\Big[\sum_{n=0}^{\tilde{N}_{\gamma^{-1}}}\int_{R_{n}}^{R_{n+1}}dr\Big(h(S_{r})-u_{\lambda}\mathcal{A}_{\lambda}^{+}(S_{r})    \Big)\Big] \Big|  + \tilde{\mathbb{E}}^{(\lambda)}_{\tilde{\nu}}\Big[\sup_{0\leq n\leq  \tilde{N}_{\gamma^{-1}} }\int_{R_{n}}^{R_{n+1}}dr\Big(h(S_{r})+u_{\lambda}\mathcal{A}_{\lambda}^{+}(S_{r})    \Big)\Big]\nonumber  \\ &=  \tilde{\mathbb{E}}^{(\lambda)}_{\tilde{\nu}}\big[\tilde{N}_{\gamma^{-1}}+1\big] \left|\frac{ \int_{\Sigma}ds\Psi_{\infty,\lambda}(s)\Big(h(s)-u_{\lambda}\mathcal{A}_{\lambda}^{+}(s)    \Big)    }{  \int_{\Sigma}ds\Psi_{\infty,\lambda}(s)h(s)   } \right| +\mathit{O}\Big(\tilde{\mathbb{E}}^{(\lambda)}_{\tilde{\nu}}\big[\tilde{N}_{\gamma^{-1}}\big]^{\frac{1}{2}}\Big)=\mathit{O}(\gamma^{-\frac{1}{4}}) ,
\end{align}
where $\tilde{N}_{t}$ is the number of life cycles  completed by time $t$.  The inequality covers the leftover interval $[\gamma^{-1},R_{\tilde{N}_{\gamma^{-1}}+1}]$ of the integration. The first term on the third line of~(\ref{Orgasmatic}) is zero by the definition of $u_{\lambda}$.  Also, the first term on the second line  is equal to the first term  on the third line since $\tilde{S}_{R_{n}}$ has distribution $\tilde{\nu}$ when conditioned on $\tilde{\mathcal{F}}_{R_{n}'}$ by Part (1) of Prop.~\ref{IndependenceProp}  and by Part (2) of Prop.~\ref{BasicsOfNumII}.  The second term on the second line of~(\ref{Orgasmatic}) is bounded by a constant multiple of $\tilde{\mathbb{E}}^{(\lambda)}_{\tilde{\nu}}[\tilde{N}_{\gamma^{-1}}]$  by the same reasoning as in~(\ref{Harkonnen}).  Finally, $\tilde{\mathbb{E}}^{(\lambda)}_{\tilde{\nu}}\big[\tilde{N}_{\gamma^{-1}}\big] $ is  $\mathit{O}(\gamma^{-\frac{1}{2}})$ by the same argument as in the proof of Lem.~\ref{LocalTimeBnd}.

The constant $u_{\lambda}$ satisfies 
$$    u_{\lambda}= \frac{\int_{\Sigma}ds e^{-\lambda H(s)   }h(s)       }{ \int_{\Sigma}ds e^{-\lambda H(s)}\mathcal{A}_{\lambda}^{+}(s)    }=\frac{\int_{\Sigma}ds h(s)       }{ \int_{\Sigma}ds \mathcal{A}_{\lambda}^{+}(s)    } +\mathit{O}(\lambda^{\frac{1}{2}})= \mathbf{u}+\mathit{O}(\lambda^{\frac{1}{2}})\geq \frac{\mathbf{u}}{2},   $$
where $\mathbf{u}=\int_{\Sigma}ds h(s)$, and the inequality holds for $\lambda$ small enough. The third equality is by Part (3) of Prop.~\ref{AMinus}.   These observations imply that for small enough $\lambda>0$
$$\mathbb{E}^{(\lambda)}_{\nu}\Big[  \int_{0}^{\gamma^{-1} }dr  h(S_{r} )       \Big]\geq     \frac{\mathbf{u}}{2}\mathbb{E}^{(\lambda)}_{\nu}\Big[  \int_{0}^{\gamma^{-1} }dr  \mathcal{A}_{\lambda}^{+}(S_{r} ) \Big]-\mathit{O}(\gamma^{-\frac{1}{4}}). $$
Plugging this inequality into~(\ref{Bygones}) gives~(\ref{Porvo}).  

\end{proof}

\section{Bounding integral functionals over a life cycle} \label{SecSumFun}

In this section we prove Prop.~\ref{FurtherNum}, which effectively bounds the expected fluctuations for the momentum drift $D_{t}=\int_{0}^{t}dr\frac{dV}{dx}(X_{r})$ over the period of a single life cycle.

\begin{proposition}\label{FurtherNum}    \text{   }

\begin{enumerate}

\item For any $m\in \mathbb{N}$,  there is a $C>0$ such that 
$$    \sup_{\lambda\leq 1}  \tilde{\mathbb{E}}_{\tilde{\nu}}^{ (\lambda)}   \Big[ \sup_{0\leq t\leq R_{1}}\Big(\int_{0}^{t}dr \frac{dV}{dx}(X_{r}) \Big)^{2m}   \Big]< C.  $$

\item  There is a $C>0$ such that for all $(x,p,z)\in \tilde{\Sigma}$,
$$    \sup_{\lambda\leq 1}  \tilde{\mathbb{E}}_{(x,p,z) }^{ (\lambda)}   \Big[ \Big|\int_{0}^{R_{1}}dr \frac{dV}{dx}(X_{r}) \Big| \Big]< C\big(1+\log(1+|p|)\big).  $$

\end{enumerate}

\end{proposition}

 For the task of proving Prop.~\ref{FurtherNum}, we rely on the bounds stated in Thm.~\ref{LifeOperatorTwo} for the generalized resolvent $U^{(\lambda)}:L^{\infty}(\Sigma)\rightarrow L^{\infty}(\Sigma)$  given by 
\begin{align}\label{Rockies}
\hspace{2cm} \big(U^{(\lambda)}g\big)(s):=\mathbb{E}_{s}^{(\lambda)}\Big[\int_{0}^{\infty}dt g(S_{t})e^{-\int_{0}^{t}dr h(S_{r} )}    \Big],\hspace{1cm} s\in \Sigma,
\end{align}
where $h:\Sigma\rightarrow \R^{+}$ is defined as in Conv.~\ref{NumConv}.  The expression~(\ref{Rockies}) would have the form of a standard resolvent if the function $h$ were replaced by a constant. In coarse terms, the generalized resolvent $U^{(\lambda)}$ characterizes the expected duration  that the process $S_{t}$ resides in different regions of phase space before returning to the set $\textup{supp}(h)\subset \Sigma $ when beginning from a point $s\in \Sigma$.  The end time for the random walk to $\textup{supp}(h)$ is better described as a Poisson time with variable rate depending stochastically on $S_{t} $ through $h(S_{t}) $.   Operators of the form~(\ref{Rockies}) were introduced in~\cite{Neveu}, and the following theorem is from~\cite{Resolvent}.
\begin{theorem}\label{LifeOperatorTwo}
There is a $c>0$ such that for any $g\in L^{\infty}(\Sigma)$ with $g \geq 0$ and $|p|\leq \lambda^{-1}$,
\begin{eqnarray*}
\big(U^{(\lambda)}g\big)(x,p) &\leq & c\|g\|_{\infty}+c|p|\sup_{H'>\frac{1}{2}\lambda^{-2}}g(x',p')+c\int_{H'\leq \frac{1}{2} \lambda^{-2} }dp'dx'\big(1+\min(|p'|, |p|)\big)g(x',p'), \\
\| U^{(\lambda)}g\|_{\infty} & \leq & c\lambda^{-1}\sup_{H'>\frac{1}{2}\lambda^{-2}}g(x',p') +c\sup_{H'\leq \frac{1}{2}\lambda^{-2}}\big(U^{(\lambda)}g\big)(x',p'),    
\end{eqnarray*}  
where $H':=H(x',p')$.

\end{theorem}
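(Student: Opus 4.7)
The plan is to analyze $U^{(\lambda)}$ through its Green's function representation and an excursion decomposition around the compact low-energy set $A=\{s:H(s)\leq l\}$ on which $h$ is bounded below by $c_0=\mathbf{u}/U$. Since $h\equiv0$ off $A$, the operator $U^{(\lambda)}$ is precisely the Green's function of $S_t$ killed at rate $c_0\mathbf{1}_A$, so that each sojourn in $A$ contributes a geometric factor to the effective lifetime. Writing $\rho$ for the first hitting time of $A$ and applying the strong Markov property at $\rho$ and at the first exit time from $A$, one obtains after summing a geometric series a resolvent-type bound of the form
$$(U^{(\lambda)}g)(s)\leq \mathbb{E}_s\Bigl[\int_0^\rho g(S_t)\,dt\Bigr] + C\|g\|_\infty + C\sup_{s'\in A}\mathbb{E}_{s'}\Bigl[\int_0^{\rho'} g(S_t)\,dt\Bigr],$$
where $\rho'$ denotes the hitting time of $A$ from just outside. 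This reduces everything to bounding the first-excursion integral $\mathbb{E}_s[\int_0^\rho g(S_t)\,dt]$ for $|p|\leq\lambda^{-1}$.

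Next I would split this excursion integral by the energy of the integration variable into the middle region $B=\{l<H\leq\lambda^{-2}/2\}$ and the high-energy region $C=\{H>\lambda^{-2}/2\}$. For the $C$-part, I would bound by $\sup_C g$ times the expected occupation time of $C$ before $\rho$; a Lyapunov function argument using $f(p)=|p|$, combined with the drift estimate $\mathcal{D}_\lambda(p)\approx-\lambda p/(2(\lambda+1)^2)$ from Proposition~\ref{Stuff}(2)--(3), shows this occupation is $O(|p|)$ for $|p|\leq\lambda^{-1}$, producing the $c|p|\sup_C g$ term. For the $B$-part, write it as
$$\int_B g(s')\,G_\rho(s,s')\,ds',$$
where $G_\rho(s,s')=\mathbb{E}_s[\int_0^\rho \delta_{s'}(S_t)\,dt]$ is the excursion Green's function. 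The momentum process behaves on $B$ as a one-dimensional random walk with order-one step variance (by Proposition~\ref{Stuff}(4)--(6)) and very small drift, killed on $A$; the standard random-walk Green function bound then gives $G_\rho(s,s')\leq c(1+|p'|\wedge|p|)$, which integrates against $g$ to yield the third term in the claimed inequality. The $c\|g\|_\infty$ term absorbs the initial transient before the process first reaches $A$ and the bounded contributions from sojourns inside $A$.

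For the $L^\infty$ bound (the second inequality), I would extend the pointwise estimate to initial $|p|>\lambda^{-1}$ by a single further excursion step: from such a starting point, the process must first descend to $\{|p|\leq\lambda^{-1}\}$, and the expected time spent in $C$ during this descent is $O(\lambda^{-1})$ because the drift $\mathcal{D}_\lambda(p)$ is of order $1$ uniformly on this regime (Proposition~\ref{Stuff}(3) with $\mathcal{E}_\lambda\asymp(1+\lambda)^{-1}$). Once the process has entered $\{|p|\leq\lambda^{-1}\}$, the first bound applies with $|p|$ replaced by the hitting momentum, giving the $c\sup_{H'\leq\lambda^{-2}/2}(U^{(\lambda)}g)$ contribution.

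The main obstacle is the sharp random-walk-type estimate $G_\rho(s,s')\leq c(1+|p'|\wedge|p|)$ on the middle region $B$. Although the momentum process has step variance $\mathcal{Q}_\lambda(p)\approx 1$ there, the drift $-\lambda p/2$, while small for each step, accumulates over the large spatial scale $|p|\sim\lambda^{-1}$ of $B$, and must be shown not to distort the Green function beyond a constant factor. I would address this either by solving an appropriate Poisson-type equation for a comparison process with vanishing drift and controlling the perturbation, or by coupling the true chain to a symmetric random walk on $B$ and showing the coupling time has acceptable tails. The spatial degree of freedom $x\in\mathbb{T}$ is handled trivially because the dynamics mixes it uniformly whenever $|p|$ is away from zero.
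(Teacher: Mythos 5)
The paper does not actually prove Theorem~\ref{LifeOperatorTwo}: it is imported verbatim from the companion paper \cite{Resolvent} (``Bounds for the state-modulated resolvent of a linear Boltzmann generator''), so there is no in-paper argument to compare yours against. Judged on its own terms, your architecture is the natural one and, I believe, the intended one: interpret $U^{(\lambda)}$ as the Green operator of $S_t$ killed at rate $h$, which is supported on the compact set $A=\{H\leq l\}$ and bounded below there; decompose into excursions away from $A$ via the strong Markov property, sum a geometric series over sojourns in $A$; and then control a single excursion integral by treating $|P_t|$ (or $\mathbf{Q}_t$) as a one-dimensional walk with order-one step variance (Proposition~\ref{Stuff}, parts (4)--(6)), weak friction on $\{|p|\leq\lambda^{-1}\}$, and strong friction above, which produces exactly the three terms $c\|g\|_\infty$, $c|p|\sup_{C}g$, and $c\int(1+|p'|\wedge|p|)g$. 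Your treatment of the second inequality via the descent time from high energy is also right: the drift $d|p|/dt\asymp-\lambda^2p^2$ (Proposition~\ref{Stuff}, parts (1) and (3)) makes the expected hitting time of $\{H\leq\tfrac{1}{2}\lambda^{-2}\}$ uniformly $\mathit{O}(\lambda^{-1})$ even from $|p|=\infty$, which is where the $c\lambda^{-1}\sup_{C}g$ term comes from.

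The genuine gap is that the two quantitative claims on which everything rests --- the excursion Green function bound $G_\rho(s,s')\leq c(1+|p'|\wedge|p|)$ on the middle region and the $\mathit{O}(|p|)$ occupation bound for the high-energy region --- are asserted with a plan rather than derived. On the first, you overestimate the difficulty: the drift $-\tfrac{1}{2}\lambda p$ is a \emph{friction}, i.e.\ it points toward the killing set $A$, so the occupation measure of the true chain is dominated by that of the driftless comparison walk and a one-sided supermartingale comparison (e.g.\ with the harmonic-like function $|p'|\wedge|p|$ of the driftless walk) suffices; no delicate perturbation or coupling is needed. What does require explicit care, and is absent from your sketch, is that the jumps are unbounded with Gaussian tails: the process can overshoot $A$ (jumping from large positive to large negative momentum without the energy ever dropping below $l$) and can overshoot the level $\lambda^{-1}$, so the excursion decomposition and the occupation estimates must be supplemented by the overjump moment bounds of Part~(7) of Proposition~\ref{Stuff} and Part~(2) of Proposition~\ref{AMinus}. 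With those ingredients supplied, your outline closes; without them it remains a strategy rather than a proof.
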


The analysis in the proof of  Prop.~\ref{FurtherNum} also applies to Prop.~\ref{FurtherNumII}, which is easier because the ``velocity function" $g(x,p)=\frac{dV}{dx}(x)$ of  Prop.~\ref{FurtherNum} does not have explicit decay for $|p|\gg 1$.  The decay for $\frac{dV}{dx}(x)$ at high momentum only occurs as a time-averaged effect, which is exposed in Lem.~\ref{TheCs}.

\begin{proposition}\label{FurtherNumII}    \text{   }
Let $g:\Sigma\rightarrow \R$ satisfy that $ |g(x,p)|\leq \frac{ C}{1+|p|^{2}}  $ for some $C>0$ and all $(x,p)\in \Sigma$.

\begin{enumerate}

\item For any $m\in \mathbb{N}$,  there is a $C>0$ such that 
$$    \sup_{\lambda\leq 1}  \tilde{\mathbb{E}}_{\tilde{\nu}}^{ (\lambda)}   \Big[ \sup_{0\leq t\leq R_{1}}\Big(\int_{0}^{t}dr g(S_{r})  \Big)^{2m}   \Big]< C.  $$

\item  There is a $C>0$ such that for all $(x,p,z)\in \tilde{\Sigma}$,
$$    \sup_{\lambda\leq 1}  \tilde{\mathbb{E}}_{(x,p,z) }^{ (\lambda)}   \Big[ \Big|\int_{0}^{R_{1}}dr g(S_{r})  \Big| \Big]< C\big(1+\log(1+|p|)\big).  $$

\end{enumerate}

\end{proposition}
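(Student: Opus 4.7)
The plan rests on a thinning identity: conditionally on the trajectory $(S_r)$, the atom-visit partition times form an inhomogeneous Poisson process of intensity $h(S_r)$, since by Part~(3) of Proposition~\ref{BasicsOfNum} each partition time $\tau_k$ is independently marked with probability $h(S_{\tau_k})$ and the $\tau_k$ form a rate-$1$ Poisson process independent of $S$. Consequently, the first atom-visit time satisfies
$$\tilde{\mathbb{P}}_{(s,0)}^{(\lambda)}\bigl(R_1'>t \,\big|\, (S_r)_{r\geq 0}\bigr)=e^{-\int_0^t h(S_r)\,dr},$$
and Fubini yields the key identity
$$\tilde{\mathbb{E}}_{(s,0)}^{(\lambda)}\Bigl[\int_0^{R_1'}|g|(S_r)\,dr\Bigr]=\bigl(U^{(\lambda)}|g|\bigr)(s),$$
while the residual interval $[R_1',R_1]$ is a single partition increment with uniformly controlled moments by Lemma~\ref{LemPartTime}.

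For Part~(2), the crude bound $|\int_0^{R_1}g\,dr|\leq \int_0^{R_1}|g|\,dr$ and splitting at $R_1'$ reduce the $z=0$ case to $(U^{(\lambda)}|g|)(x,p)+\mathit{O}(1)$. Since $|g|\leq C/(1+p^2)$, Theorem~\ref{LifeOperatorTwo} applied to $|g|$ gives, for $|p|\leq \lambda^{-1}$,
$$\bigl(U^{(\lambda)}|g|\bigr)(x,p)\leq C+\mathit{O}(\lambda)+c\int_{|p'|\leq \lambda^{-1}}\frac{1+|p'|\wedge|p|}{1+(p')^2}\,dp'\leq C'\bigl(1+\log(1+|p|)\bigr),$$
the logarithm arising from splitting the last integral at $|p'|=|p|$; for $|p|>\lambda^{-1}$ the $L^\infty$ half of Theorem~\ref{LifeOperatorTwo} yields a bound of order $\log\lambda^{-1}\leq \log(1+|p|)$. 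The case $z=1$ is immediate: $R_1=\tau_1$ with $\tilde{\mathbb{E}}_{(x,p,1)}[\tau_1]\leq C$ by Lemma~\ref{LemPartTime}, so $\tilde{\mathbb{E}}_{(x,p,1)}[\int_0^{\tau_1}|g|\,dr]\leq C\|g\|_\infty$.

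For Part~(1), monotonicity gives $\sup_{0\leq t\leq R_1}(\int_0^t g\,dr)^{2m}\leq (\int_0^{R_1}|g|\,dr)^{2m}$, which I split via $(a+b)^{2m}\leq 2^{2m}(a^{2m}+b^{2m})$ at $R_1'$. The $[R_1',R_1]$ piece is controlled by $\|g\|_\infty^{2m}\tilde{\mathbb{E}}_{\tilde{\nu}}[(R_1-R_1')^{2m}]\leq C$ via Lemma~\ref{LemPartTime}. For the main piece, expand
$$\Bigl(\int_0^{R_1'}|g|\,dr\Bigr)^k=k!\int_{r_1<\cdots<r_k}\prod_i |g|(S_{r_i})\,\chi(r_k<R_1')\,dr_1\cdots dr_k,$$
take expectations (turning $\chi(r_k<R_1')$ into $e^{-\int_0^{r_k}h(S_u)\,du}$ via the thinning identity), and iteratively integrate the innermost variable using the Markov property and the defining formula for $U^{(\lambda)}$ to obtain
$$\tilde{\mathbb{E}}_{(s,0)}^{(\lambda)}\Bigl[\Bigl(\int_0^{R_1'}|g|\,dr\Bigr)^k\Bigr]=k!\,\psi_k(s),\qquad \psi_0\equiv 1,\quad \psi_k:=U^{(\lambda)}\bigl(|g|\,\psi_{k-1}\bigr).$$

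The main technical step will be the inductive bound $\psi_k(x,p)\leq C_k(1+\log(1+|p|))^k$ uniformly in $\lambda<1$: given the bound at step $k-1$, the function $|g|\,\psi_{k-1}$ decays pointwise like $(1+\log(1+|p'|))^{k-1}/(1+(p')^2)$, and applying Theorem~\ref{LifeOperatorTwo} produces exactly one additional log factor from the integral of this envelope against $(1+|p'|\wedge|p|)$. The delicate point is gluing the regimes $|p|\leq \lambda^{-1}$ and $|p|>\lambda^{-1}$ via the pointwise and $L^\infty$ parts of Theorem~\ref{LifeOperatorTwo} respectively, and checking that the $\mathit{O}(\lambda)$ contributions from the high-energy supremum terms do not accumulate across iterations. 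Since $\tilde{\nu}$ is supported on $\{H\leq l\}$ where $|p|$ is bounded, evaluating the induction there gives $\psi_{2m}(\tilde{\nu})=\mathit{O}(1)$ uniformly in $\lambda$, completing Part~(1).
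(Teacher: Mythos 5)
Your route is sound and, for this integrable $g$, legitimately simpler than the paper's: the paper proves this proposition by the same scheme as Proposition~\ref{FurtherNum}, i.e.\ decomposing $\int_0^{t}g(S_r)\,dr$ at the partition times into conditional means $\mathbf{C}^{(\lambda)}_0(\sigma_n)$ plus two martingales, controlling the martingales with Rosenthal's inequality, and then reducing everything to discrete sums $\sum_{n\leq\tilde n_1}g_\lambda(\sigma_n)$ handled by Lemma~\ref{Marsupial}. Because your $g$ satisfies $|g|\leq C/(1+p^2)$ you can put the absolute value inside the time integral and skip the martingale cancellation entirely; the engine you then use (iterated application of $U^{(\lambda)}$, one logarithm per iteration via Theorem~\ref{LifeOperatorTwo}, evaluation on the compact support of $\tilde\nu$) is exactly the engine of Lemma~\ref{Marsupial}, transplanted from the discrete sum over partition times to the continuous integral. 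The log-growth computation in Part~(2) and the reduction of Part~(1) to $\psi_{2m}(\tilde\nu)=\mathit{O}(1)$ are both correct.

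The one point you must repair is the thinning identity. Conditionally on the \emph{entire} trajectory $(S_r)$, the marks $\zeta_n$ are \emph{not} independent Bernoulli$(h(S_{\tau_n}))$: after a visit to the atom the next state is redistributed according to $\nu$, so $\tilde{\mathbb{P}}(\zeta_n=1\mid\sigma_n,\sigma_{n+1})=h(\sigma_n)\nu(d\sigma_{n+1})/\mathcal{T}_\lambda(\sigma_n,d\sigma_{n+1})\neq h(\sigma_n)$ in general, and consequently your claimed exact identity $\tilde{\mathbb{E}}_{(s,0)}[\int_0^{R_1'}|g|\,dr]=(U^{(\lambda)}|g|)(s)$ is false as an equality. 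What does hold — and is all you need, since $|g|\geq 0$ — is the inequality obtained by conditioning sequentially at the partition times rather than on the whole path: from
$$\tilde{\mathbb{P}}_{(s,0)}\big(\tilde n_1>n,\ \sigma_1\in ds_1,\dots,\sigma_n\in ds_n\big)=\frac{1-h(s_n)}{1-h(s)}\prod_{k=1}^{n}\big(\mathcal{T}_\lambda-h\otimes\nu\big)(s_{k-1},ds_k)\leq \frac{1}{1-h(s)}\prod_{k=1}^{n}\big(1-h(s_{k-1})\big)^{\!*}\cdots$$
one gets, exactly as in~(\ref{HissyFit}) and~(\ref{Hogs}) of the paper, $\tilde{\mathbb{E}}_{\tilde\delta_s}[\sum_{n=1}^{\tilde n_1}f(\sigma_n)]\leq (U^{(\lambda)}f)(s)$ and hence $\tilde{\mathbb{E}}_{(s,0)}[\,\cdot\,]\leq b\,\tilde{\mathbb{E}}_{\tilde\delta_s}[\,\cdot\,]$ with $b=\sup(1-h)^{-1}$ for nonnegative functionals; passing from the continuous integral to these discrete sums costs only the conditional expectation $g_1^{(\lambda)}(\sigma_n,\sigma_{n+1})$ of each bridge piece. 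With the identities replaced by these one-sided comparisons (and the recursion $\psi_k\leq C\,U^{(\lambda)}(|g|\,\psi_{k-1})$ rather than an equality) your induction closes and the proof is complete.
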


\subsection{An inequality for summation functionals over a life cycle }

Recall that $\sigma_{n}=S_{\tau_{n}}$  denotes the resolvent chain  and that $\tilde{\delta}_{s}=\chi(z=0) (1-h(s))\delta_{s}+\chi(z=1)h(s)\delta_{s}   $ is the splitting of the $\delta$-distribution at $s\in \Sigma$.   The following lemma states that the generalized resolvent $(U^{(\lambda)}g)(s)$ can be used 
to bound the expression $\tilde{\mathbb{E}}_{\tilde{\delta}_{s}}^{(\lambda)}\big[\sum_{n=1}^{\tilde{n}_{1}}  g(\sigma_{n})  \big]$. 

\begin{lemma}\label{ResolventToResolvent} The following inequality holds for all $g\in L^{\infty}(\Sigma, \R^{+})$, $\lambda>0$, and $s\in \Sigma$: 
$$\tilde{\mathbb{E}}_{\tilde{\delta}_{s}}^{(\lambda)}\Big[\sum_{n=1}^{\tilde{n}_{1}}  g(\sigma_{n})  \Big]\leq \big(U^{(\lambda)}g\big)(s)+\sup_{s\in \textup{Supp}(h) } \big(U^{(\lambda)}g\big)(s) . $$
\end{lemma}

\begin{proof}
 Let the function $h:\Sigma\rightarrow [0,1]$ and the measure $\nu$ on $\Sigma$ be defined as in Conv.~\ref{NumConv}.  Recall that $\sigma_{n}=S_{\tau_{n}}$ denotes the resolvent chain and has transition kernel $\mathcal{T}_{\lambda}$.  
Define the following operators on $L^{\infty}(\R)$:
$$
\mathcal{W}^{(\lambda)}=\sum_{n=0}^{\infty}\big((1-h)\mathcal{T}_{\lambda}\big)^{n}\quad\text{and}\quad \widetilde{\mathcal{W}}^{(\lambda)}=\sum_{n=0}^{\infty}\big(\mathcal{T}_{\lambda}-h\otimes \nu\big)^{n}.
$$
 In terms of the operators $\mathcal{T}_{\lambda}$ and $h\otimes \nu$ on $L^{\infty}(\R)$, we can write the expressions in the statement of Lem.~\ref{ResolventToResolvent} as
\begin{align}\label{Zip}
\tilde{\mathbb{E}}_{\tilde{\delta}_{s}}^{(\lambda)}\Big[\sum_{n=1}^{\tilde{n}_{1}}  g(\sigma_{n})  \Big]=&\big((\mathcal{T}_{\lambda}-h\otimes \nu )\widetilde{\mathcal{W}}^{(\lambda)}g\big)(s),
\\ U^{(\lambda)}=&\mathcal{T}_{\lambda}\mathcal{W}^{(\lambda)} \label{Zap} . 
\end{align}
The above representation for $\tilde{\mathbb{E}}_{\tilde{\delta}_{s}}^{(\lambda)}\big[\sum_{n=1}^{\tilde{n}_{1}}  g(\sigma_{n})  \big]$ can be found in~\cite{Nummelin}.  The representation for $U^{(\lambda)}$  can be understood through the alternative generalized resolvent form $U^{(\lambda)}=\frac{1}{h-\mathcal{L}} $; see~\cite{Meyn} for other representations.
By  the  identity~(\ref{Zap}) and $\mathcal{T}_{\lambda}-h\otimes \nu\leq \mathcal{T}_{\lambda}$, we have that
\begin{align}\label{Apricot}
\tilde{\mathbb{E}}_{\tilde{\delta}_{s}}^{(\lambda)}\Big[\sum_{n=1}^{\tilde{n}_{1}}  g(\sigma_{n})  \Big] &= \big((\mathcal{T}_{\lambda}-h\otimes \nu) \widetilde{\mathcal{W}}^{(\lambda)}g\big)(s)\nonumber \\ & \leq   \big(\mathcal{T}_{\lambda}\mathcal{W}^{(\lambda)}g\big)(s)+\sup_{s\in \Sigma} \Big(  \big((\mathcal{T}_{\lambda}-h\otimes \nu)\widetilde{\mathcal{W}}^{(\lambda)}g\big)(s)-\big((\mathcal{T}_{\lambda}-h\otimes \nu)\mathcal{W}^{(\lambda)}g\big)(s) \Big) .
\end{align}
With the identity $\widetilde{\mathcal{W}}^{(\lambda)}-\mathcal{W}^{(\lambda)}= \widetilde{\mathcal{W}}^{(\lambda)}  (h\mathcal{T}_{\lambda}-h\otimes \nu    )\mathcal{W}^{(\lambda)}$, we obtain the  equality below:
\begin{align}\label{Nectarine}
 (\mathcal{T}_{\lambda}-h\otimes \nu)\widetilde{\mathcal{W}}^{(\lambda)}g-(\mathcal{T}_{\lambda}-h\otimes \nu)\mathcal{W}^{(\lambda)}g &= (\mathcal{T}_{\lambda}-h\otimes \nu)\widetilde{\mathcal{W}}^{(\lambda)}  (h\mathcal{T}_{\lambda}-h\otimes \nu    )\mathcal{W}^{(\lambda)}g\nonumber \\ &\leq (\mathcal{T}_{\lambda}-h\otimes \nu)\widetilde{\mathcal{W}}^{(\lambda)}h\big(1_{\textup{Supp}(h)} \mathcal{T}_{\lambda}\mathcal{W}^{(\lambda)}\big)g\nonumber \\ & \leq \big\|1_{\textup{Supp}(h)} \mathcal{T}_{\lambda}\mathcal{W}^{(\lambda)}g \big\|_{\infty}.
\end{align}
For the first inequality above, we have thrown  away the negative term and used  that  $h=h1_{\textup{Supp}(h)}$.  The second inequality in~(\ref{Nectarine}) uses  that 
for all $s\in \Sigma$
$$ \big((\mathcal{T}_{\lambda}-h\otimes \nu)\widetilde{\mathcal{W}}^{(\lambda)}h\big)(s)\leq \big(\widetilde{\mathcal{W}}^{(\lambda)}h\big)(s)=1  . $$
The equality $  \big(\widetilde{\mathcal{W}}^{(\lambda)}h\big)(s)=1 $ holds due to the recurrence of the dynamics and otherwise we would have $ \big(\widetilde{\mathcal{W}}^{(\lambda)}h\big)(s)\leq 1 $. 
  To prove this  fact, note that the series form of $ \widetilde{W}^{(\lambda)}$ implies that the 
 function $F(s):=  (\widetilde{W}^{(\lambda)}h)(s)$ satisfies
\begin{align*}
F(s)=&  h(s)+\big( (\tau-h\otimes \nu ) F\big)(s)  =  h(s)+ (\tau F)(s) -h(s)  \nu( F) = \big(\tau F\big)(s),
\end{align*}  
where the third equality uses that $\nu (F)=1$, which is a consequence of recurrence and Thm. 3 of [28].   Thus the function $F(s)$ is invariant of the dynamics and must be constant and hence equal to one by the normalization $\nu (F)=1$.

Applying~(\ref{Nectarine}) in~(\ref{Apricot}) along with the identity~(\ref{Zap}), we have that
\begin{align*}
\tilde{\mathbb{E}}_{\tilde{\delta}_{s}}^{(\lambda)}\Big[\sum_{n=1}^{\tilde{n}_{1}}  g(\sigma_{n})  \Big]  \leq   \big(U^{(\lambda)}g\big)(s)+\sup_{s\in \supp(h)}\big(U^{(\lambda)}g\big)(s)   .
\end{align*}

\end{proof}

  The following lemma states that an additive functional of the resolvent chain $\sum_{n}g_{\lambda}(\sigma_{n})$ has arbitrary finite moments when the summation is over a single life cycle and  $g_{\lambda}\geq 0$ has sufficient decay at large momentum.  In other words, not much typically happens over a single life cycle. 

\begin{lemma}\label{Marsupial}
Let $g_{\lambda}:\Sigma\rightarrow \R^{+}$, and suppose that there is a $C>0$ such that 
$$\hspace{1cm}g_{\lambda}(x,p)\leq C\max\big(\frac{1}{1+|p|^{2}},\lambda\big).        $$
for all $\lambda<1$ and $(x,p)\in \Sigma$. Then,  
$$\hspace{2cm} \sup_{\lambda<1}\tilde{\mathbb{E}}_{\tilde{\nu}}^{(\lambda)}\Big[\Big(\sum_{n=0}^{\tilde{n}_{1}}  g_{\lambda}(\sigma_{n}) \Big)^{m}\Big]<\infty,  \hspace{1cm} m\in \mathbb{N}.$$

\end{lemma}

\begin{proof}
  For the case $m=1$ and  $f=g_{\lambda}$, we have the closed expression
\begin{eqnarray} 
\tilde{\mathbb{E}}_{\tilde{\nu}}^{(\lambda)}\Big[\sum_{n=0}^{\tilde{n}_{1}}f(\sigma_{n}) \Big]&=&\frac{ \int_{\Sigma}ds\Psi_{\infty,\lambda}(s)f(s)     }{\int_{\Sigma}ds\Psi_{\infty,\lambda}(s)h(s)     } \\ \label{Smugger} & \leq & \frac{ \int_{|p|\leq \lambda^{-1}}dxdp f(x,p) +\frac{2}{\lambda^{\frac{1}{2}}}\textup{erfc}(\lambda^{-\frac{1}{2} }  ) \sup_{|p|> \lambda^{-1}}f(x, p)  }{\int_{\Sigma}dxdp e^{-\lambda H(x,p)}h(x,p)     },
\end{eqnarray}
where $\textup{erfc}(q)=\int_{q }^{\infty}dp e^{-\frac{p^{2}}{2}}$ is the complementary error function, and  the equality holds by  Part (1) of Prop.~\ref{BasicsOfNumII}.   The right side of~(\ref{Smugger}) is finite by our conditions on $g_{\lambda}$ and since the denominator is approximately $\mathbf{u}=\int_{\Sigma}ds h(s)$ for small $\lambda$, and thus the right side of~(\ref{Smugger}) is bounded away from zero for $\lambda<1$.  

For $m=2$, we write 
\begin{align}
\tilde{\mathbb{E}}_{\tilde{\nu}}^{(\lambda)}\Big[\Big(\sum_{n=0}^{\tilde{n}_{1}}  g_{\lambda}(\sigma_{n})\Big)^{2} \Big] &= \tilde{\mathbb{E}}_{\tilde{\nu}}^{(\lambda)}\Big[\sum_{n=0}^{\tilde{n}_{1}}  g_{\lambda}^{2}(\sigma_{n})+ 2\sum_{ n<m}^{\tilde{n}_{1}}  g_{\lambda}(\sigma_{n}) g_{\lambda}(\sigma_{m})  \Big] \nonumber\\ &=\tilde{\mathbb{E}}_{\tilde{\nu}}^{(\lambda)}\Big[\sum_{n=0}^{\tilde{n}_{1}}  g_{\lambda}^{2}(\sigma_{n})+2g_{\lambda}(\sigma_{n})\tilde{\mathbb{E}}^{(\lambda)}\Big[    \sum_{ m=n+1}^{\tilde{n}_{1}} g_{\lambda}(\sigma_{m})   \,\Big|\,\tilde{\mathcal{F}}_{\tau_{n}^{-}}\Big] \Big] \nonumber\\   &\leq  \tilde{\mathbb{E}}_{\tilde{\nu}}^{(\lambda)}\Big[\sum_{n=0}^{\tilde{n}_{1}}  g_{\lambda}^{2}(\sigma_{n})+2 g_{\lambda}(\sigma_{n})\big(U^{(\lambda)}g_{\lambda}\big)(\sigma_{n}) +2 g_{\lambda}(\sigma_{n})B(g_{\lambda}) \Big] , \label{Laden}
\end{align}
where      $B:L^{\infty}(\Sigma,\R^{+})\rightarrow \R^{+}  $ is defined by
$$  B(g)=\sup_{ s\in \textup{Supp}(h) } \big(U^{(\lambda)}g\big)(s). $$
To see~(\ref{Laden}) recall that $\sigma_{n}:=S_{\tau_{n}}$ and that the $\sigma$-algebra  $\tilde{\mathcal{F}}_{\tau_{n}^{-}}$ contains knowledge of the state  $\sigma_{n}$.   The value $\sup_{\lambda<1} B(g_{\lambda})$ is finite by the bound we assumed for $g_{\lambda}$ and the bound on $U^{(\lambda)}g_{\lambda}$ from Thm.~\ref{LifeOperatorTwo}; see~(\ref{Gus}) below.
  The inequality in~(\ref{Laden}) applies  Part (3) of Prop.~\ref{BasicsOfNum}, to get the  equality below
\begin{align}\label{Hogs}
\tilde{\mathbb{E}}^{(\lambda)}\Big[    \sum_{ m=n+1}^{\tilde{n}_{1}} g_{\lambda}(\sigma_{m})   \,\Big|\,\tilde{\mathcal{F}}_{\tau_{n}^{-}}\Big] = \tilde{\mathbb{E}}_{\tilde{\delta}_{\sigma_{n}} }^{(\lambda)}\Big[    \sum_{ m=1}^{\tilde{n}_{1}} g_{\lambda}(\sigma_{m})   \Big]   \leq
 \big(U^{(\lambda)} g_{\lambda}\big)(\sigma_{n})+B(g_{\lambda}).
\end{align}
  The inequality in~(\ref{Hogs}) follows by an application of Lem.~\ref{ResolventToResolvent}.

We can apply~(\ref{Smugger}) with $f= g_{\lambda}^{2}+2g_{\lambda} U^{(\lambda)}g_{\lambda}+2g_{\lambda}B(g_{\lambda})$  to bound the right side of~(\ref{Laden}).  Clearly the contribution from  $g_{\lambda}^{2}$ is not a problem since  $g_{\lambda}^{2}(x,p)\leq Cg_{\lambda}(x,p)$. For  $g_{\lambda} U^{(\lambda)}(g_{\lambda})$ there are constants such that  
\begin{align*}
 g_{\lambda}(p)\big( U^{(\lambda)}g_{\lambda}\big)(p)& \leq (\textup{const})\frac{1+\log(1+|p|)   }{|p|^{2}   }, &&\text{for all } |p|\leq \lambda^{-1},\,\lambda<1 \\ g_{\lambda}(p) \big(U^{(\lambda)}g_{\lambda}\big)(p)& \leq  (\textup{const})\frac{1+\log(1+|\lambda^{-1}|)   }{|\lambda|^{-2}   },
 &&\text{for all } |p|> \lambda^{-1},\,\lambda<1.
\end{align*}
We have applied Thm.~\ref{LifeOperatorTwo} along with our conditions on $g_{\lambda}$ to get
\begin{align}\label{Gus} \big( U^{(\lambda)} g_{\lambda}\big)(x,p)\leq c+c\int_{0}^{|p|}dp^{\prime}p^{\prime}|g_{\lambda}(p^{\prime})|\leq c+c'\int_{0}^{|p|}dp^{\prime}\frac{p^{\prime}}{1+|p^{\prime}|^{2} }\leq c+c''\log(1+|p|), 
\end{align}
for some $c,c',c''>0$ and all $\lambda<1$ and $|p|\leq \lambda^{-1}$.  The  inequality above follows similarly for the domain $|p|> \lambda^{-1}$.

Now we sketch the proof for the general case $m>2$.  For $\epsilon_{j}\in\{<,=\}$ and $j<m$, let the set $\ell^{(\tilde{n}_{1})}(\epsilon_{1},...,\epsilon_{m-1})  $ be the collection of all $(r_{1},\dots ,r_{m})\in [0,\tilde{n}_{1}]^{m}$ satisfying the relations $$ r_{1}\,\epsilon_{1}\, r_{2} \,\dots \epsilon_{m-1} \,r_{m}.  $$  Also define,
$$f_{(\epsilon_{1},\dots, \epsilon_{m-1})}=  A_{\epsilon_{1}}\cdots A_{\epsilon_{m-1}} g_{\lambda},$$
where $ A_{=},A_{<}$ are maps on $L^{\infty}(\Sigma,\R^{+})$ in which   $A_{=}$ is multiplication by $g_{\lambda}$ and $A_{<}=A_{=}(U^{(\lambda)}+B)   $.  We can write
$$\tilde{\mathbb{E}}_{\tilde{\nu}}^{(\lambda)}\Big[\Big(\sum_{n=0}^{\tilde{n}_{1}}  g_{\lambda}(\sigma_{r_{n}}) \Big)^{m}\Big]= \begin{array}{c}\text{Lin. comb. over}\\ (\epsilon_{1},\dots ,\epsilon_{m-1})\in \{<,=\}^{m-1}    \end{array}\quad \tilde{\mathbb{E}}_{\tilde{\nu}}^{(\lambda)}\Big[\sum_{ \ell^{(\tilde{n}_{1})}(\epsilon_{1},\dots ,\epsilon_{m-1}) }  g_{\lambda}(\sigma_{r_{1}})\cdots g_{\lambda}(\sigma_{r_{m}}) \Big].$$
 However, the following inequality holds:
\begin{align}\label{Kuperberg}
 \tilde{\mathbb{E}}_{\tilde{\nu}}^{(\lambda)}\Big[\sum_{ \ell^{(\tilde{n}_{1})}(\epsilon_{1},\dots ,\epsilon_{m-1}) }  g_{\lambda}(\sigma_{r_{1}})\cdots g_{\lambda}(\sigma_{r_{m}}) \Big]\leq \tilde{\mathbb{E}}_{\tilde{\nu}}^{(\lambda)}\Big[ \sum_{n=0}^{\tilde{n}_{1}} f_{(\epsilon_{1},\dots, \epsilon_{m-1})}(\sigma_{n} ) \Big],  
 \end{align}
because we can write the difference between the right and left side of~(\ref{Kuperberg}) as a sum of positive terms $\mathbf{c}_{v-1}-\mathbf{c}_{v}$ indexed by $v\in [1,m-1]$,  where
$$
\mathbf{c}_{v}=\tilde{\mathbb{E}}_{\tilde{\nu}}^{(\lambda)}\Big[\sum_{ \ell^{(\tilde{n}_{1})}(\epsilon_{1},\dots ,\epsilon_{v}) }  g_{\lambda}(\sigma_{r_{1}})\cdots g_{\lambda}(\sigma_{r_{v}})f_{(\epsilon_{v+1},\dots, \epsilon_{m-1})}(\sigma_{r_{v+1}} ) \Big].
$$
When $\epsilon_{v-1}$ is $=$, then $\mathbf{c}_{v-1}$ and $\mathbf{c}_{v}$ are identically equal. When $\epsilon_{v-1}$ is $<$, then the difference $\mathbf{c}_{v}-\mathbf{c}_{v-1}$ is equal to 
\begin{align*}
&\tilde{\mathbb{E}}_{\tilde{\nu}}^{(\lambda)}\Big[\sum_{ \ell^{(\tilde{n}_{1})}(\epsilon_{1},\dots ,\epsilon_{v-1}) }  g_{\lambda}(\sigma_{r_{1}})\cdots g_{\lambda}(\sigma_{r_{v-1}})\Big(g_{\lambda}(\sigma_{r_{v}}) \sum_{n=r_{v}+1}^{\tilde{n}_{1}} f_{(\epsilon_{v+1},\dots, \epsilon_{m-1})}(\sigma_{n} ) -f_{(\epsilon_{v},\dots, \epsilon_{m-1})}(\sigma_{r_{v}} ) \Big) \Big]\\ &=
\tilde{\mathbb{E}}_{\tilde{\nu}}^{(\lambda)}\Big[\sum_{ \ell^{(\tilde{n}_{1})}(\epsilon_{1},\dots ,\epsilon_{v-1}) }  g_{\lambda}(\sigma_{r_{1}})\cdots g_{\lambda}(\sigma_{r_{v-1}}) \\ &\quad\quad\times \Big(g_{\lambda}(\sigma_{r_{v}}) \tilde{\mathbb{E}}^{(\lambda)}\Big[\sum_{n=r_{v}+1}^{\tilde{n}_{1}} f_{(\epsilon_{v+1},\dots, \epsilon_{m-1})}(\sigma_{n} )\Big|\tilde{\mathcal{F}}_{\tau_{r_{v}}^{-} }\Big] -f_{(\epsilon_{v},\dots, \epsilon_{m-1})}(\sigma_{r_{v}} ) \Big) \Big]
\\ &\leq \tilde{\mathbb{E}}_{\tilde{\nu}}^{(\lambda)}\Big[\sum_{ \ell^{(\tilde{n}_{1})}(\epsilon_{1},\dots ,\epsilon_{v-1}) }  g_{\lambda}(\sigma_{r_{1}})\cdots g_{\lambda}(\sigma_{r_{v-1}})\\ &\quad \quad \times \Big(g_{\lambda}(\sigma_{r_{v}})\big( U^{(\lambda)}f_{(\epsilon_{v},\dots, \epsilon_{m-1})}\big)(\sigma_{r_{v}} )+f_{(\epsilon_{v},\dots, \epsilon_{m-1})}(\sigma_{r_{v}})B\big(f_{(\epsilon_{v},\dots, \epsilon_{m-1})}\big) -f_{(\epsilon_{v},\dots, \epsilon_{m-1})}(\sigma_{r_{v}} ) \Big) \Big]\\ &=0,
\end{align*}
where the inequality follows from the strong Markov property and  Lem.~\ref{ResolventToResolvent} by the same argument as in~(\ref{Hogs}).

We are left to bound $\tilde{\mathbb{E}}_{\tilde{\nu}}^{(\lambda)}\big[ \sum_{n=0}^{\tilde{n}_{1}} f_{(\epsilon_{1},\dots, \epsilon_{m-1})}(\sigma_{n} ) \big]$.  The worst case scenario is when all the $\epsilon_{j}$ are equal to $<$ because mere multiplication by $g_{\lambda}(p)$ introduces more decay for large $|p|$. 
By our conditions on $g_{\lambda}$ and $m-1$ applications of Thm.~\ref{LifeOperatorTwo},  
$$
\hspace{3cm}\big((U^{(\lambda)})^{m-1}g_{\lambda}\big)(x,p)\leq c^{m-1}\frac{\big(1+\log(1+|p|)\big)^{m-1}}{1+|p|^{2}},\hspace{1cm} |p|\leq \lambda^{-1},
$$
and we get another bound for $|p|>\lambda^{-1}$ that is smaller than a fixed multiple of $\lambda^{-2m-1}$ for all $\lambda<1$.   Applying the inequality~(\ref{Smugger}), we obtain the bound.

\end{proof}

\subsection{Inequalities for the momentum drift} \label{SecMomentumDrift}

The first two parts in the lemma below follow from the conservation of energy and the quadratic formula and do not depend on the potential being periodic.  The third part of Lem.~\ref{DetDrift} is a statement about mixing on the torus.  If the particle begins with a high momentum $|P_{0}|\gg 1$ and is stopped at a random exponential time $\tau$, then the distribution on the torus $\mathbb{T}=[0,1)$ at the stopping time will be roughly uniform--even in the presence of the bounded periodic potential $V(x)$. 

\begin{lemma} \label{DetDrift}
Let $(X_{t},\,P_{t})$ evolve according to the Hamiltonian $H(x,p)=\frac{1}{2}p^{2}+V(x)$, for a positive potential $V(x)$ with   $\sup_{x}\big|\frac{dV}{dx}(x)\big|<\infty$.  If the initial momentum has $|P_{0}|^{2}>4\sup_{x}V(x)$, then the difference $P_{t}-P_{0}=-\int_{0}^{t}dr\frac{dV}{dx}(X_{r})$ satisfies the inequalities
\begin{enumerate}

\item $\sup_{t\in \R^{+}} \big|\int_{0}^{t}dr\frac{dV}{dx}(X_{r})  \big|\leq 2\sup_{x}V(x) |P_{0}|^{-1}  $, and

\item  $\Big| -\int_{0}^{t}dr\frac{dV}{dx}(X_{s})-\frac{ V(X_{t})-V(X_{0})    }{ P_{0}    }\Big|\leq  2t\sup_{x}\big|\frac{dV}{dx}(x)\big| \sup_{x}V(x) |P_{0}|^{-2}  $ .

\item  Suppose further that $V(x)$ has period one.  If $\tau$ is exponentially distributed with mean $\mathbf{r}^{-1}$ and $F:\mathbb{T}\rightarrow \R$ is a function on the torus and bounded, then
$$ \Big|\mathbb{E}_{(X_{0},P_{0})}\big[F(X_{\tau})\big]-\int_{\mathbb{T}}dx F(x)\Big|\leq \mathbf{r}\|F\|_{\infty}|P_{0}|^{-1}+\mathit{O}(|P_{0}|^{-2}).    $$

\end{enumerate}

\end{lemma}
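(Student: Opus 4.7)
The plan is to handle Parts (1) and (2) with elementary calculus based on energy conservation, and then address Part (3) via a torus-ergodicity argument using integration by parts.

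For Part (1), the crucial observation is that conservation of $H = \frac{1}{2}P^2 + V(X)$ along the Hamiltonian flow gives the identity $P_t^2 = P_0^2 + 2V(X_0) - 2V(X_t)$. Under the hypothesis $|P_0|^2 > 4\bar{V}$ (where $\bar{V} = \sup_x V$), this forces $P_t^2 \geq P_0^2 - 2\bar{V} > \tfrac{1}{2}P_0^2 > 0$, so $P_t$ never vanishes and retains the sign of $P_0$. Consequently $|P_t - P_0| = ||P_t| - |P_0|| = |P_t^2 - P_0^2|/(|P_t| + |P_0|) \leq 2\bar{V}/|P_0|$, and since $\int_0^t \frac{dV}{dx}(X_r)\,dr = P_t - P_0$ by Newton's law, this is the desired inequality. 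For Part (2), observe that $\tfrac{d}{dr}V(X_r) = V'(X_r) P_r$, so $V(X_t) - V(X_0) = \int_0^t V'(X_r) P_r\,dr$; dividing by $P_0$ and subtracting from $\int_0^t V'(X_r)\,dr$ yields
\[
\int_0^t V'(X_r)\,dr - \frac{V(X_t) - V(X_0)}{P_0} = \int_0^t V'(X_r) \cdot \frac{P_0 - P_r}{P_0}\,dr,
\]
and estimating $|P_0 - P_r| \leq 2\bar{V}/|P_0|$ via Part (1) together with $|V'| \leq \sup_x|V'|$ gives the claim.

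For Part (3), the intuition is that when $|P_0|$ is large, the particle sweeps rapidly and ergodically around the torus between its nearly periodic orbits, so $X_\tau$ becomes asymptotically uniform. Without loss of generality assume $\int_\mathbb{T} F = 0$. Because $P_t$ has constant sign by Part (1), the energy surface $\{H = E\}$ with $E = H(X_0,P_0)$ is parametrized by $x \in \mathbb{T}$ with momentum $P(x) = \mathrm{sign}(P_0)\sqrt{P_0^2 + 2V(X_0) - 2V(x)}$, and the orbit is periodic with invariant measure $\mu_E(dx) = (dx/|P(x)|)/Z$, where $Z = \int_\mathbb{T} dy/|P(y)|$. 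Set $c_E = \int_\mathbb{T} F\,d\mu_E$; expanding $1/|P(y)| = 1/|P_0| + \mathit{O}(|P_0|^{-3})$ and using $\int F = 0$ shows $|c_E| = \mathit{O}(\|F\|_\infty |P_0|^{-2})$.

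The key step is to define $\Psi : \mathbb{T} \to \R$ by $\Psi(x) = \int_{X_0}^x (F(y) - c_E)/P(y)\,dy$ along the orbit; this is well-defined on $\mathbb{T}$ precisely because the choice of $c_E$ makes the integrand mean-zero on one period. Then $\tfrac{d}{dt}\Psi(X_t) = (F(X_t) - c_E)\dot{X}_t / P(X_t) = F(X_t) - c_E$, so
\[
\mathbb{E}[F(X_\tau)] = c_E + \mathbf{r}\int_0^\infty e^{-\mathbf{r} t} \frac{d\Psi(X_t)}{dt}\,dt = c_E - \mathbf{r}\Psi(X_0) + \mathbf{r}^2 \int_0^\infty e^{-\mathbf{r} t} \Psi(X_t)\,dt,
\]
after integration by parts (the boundary at $\infty$ vanishes since $\Psi$ is bounded). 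The bound $|P(y)| \geq |P_0|/\sqrt{2}$ yields $\|\Psi\|_\infty \leq C\|F\|_\infty/|P_0|$, so the last two terms are $\mathit{O}(\mathbf{r}\|F\|_\infty/|P_0|)$, while $|c_E|$ contributes $\mathit{O}(|P_0|^{-2})$, giving the stated estimate. The only subtle point is the well-definedness of $\Psi$ on the torus, which is forced by the construction; the rest of the argument is just bookkeeping.
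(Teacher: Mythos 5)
Your Parts (1) and (2) are correct and essentially identical to the paper's argument: energy conservation plus sign-preservation of $P_t$ under $|P_0|^2>4\bar V$, and the identity $V(X_t)-V(X_0)=\int_0^t \frac{dV}{dx}(X_r)P_r\,dr$; your factorization $|P_t-P_0|=|P_t^2-P_0^2|/(|P_t|+|P_0|)$ is a harmless variant of the paper's quadratic-formula bound.

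Part (3) is where you genuinely diverge, and your route works. The paper computes the exit density on the torus explicitly as a sum over crossing times, $d_s(a)=\sum_n |P_{t_n(a)}|^{-1}\mathbf{r}\,e^{-\mathbf{r}t_n(a)}$, compares these crossing times with those of free motion at momentum $P_0$ (the period estimate~(\ref{TimePeriod})), and sums geometric series to get a \emph{pointwise} bound $\|d_s-1\|_\infty\leq 2\mathbf{r}|P_0|^{-1}+\mathit{O}(|P_0|^{-2})$, from which the weak bound against $F$ follows. You instead build a corrector: after centering $F$, you solve $\frac{d}{dt}\Psi(X_t)=F(X_t)-c_E$ along the closed orbit (well-defined on $\mathbb{T}$ exactly because $c_E$ is the orbit average of $F$ with respect to the invariant measure $dx/|P(x)|$), then integrate by parts against the exponential density. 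This is the standard averaging/Poisson-equation trick; it is shorter, avoids the crossing-time bookkeeping, and the estimates $|c_E|=\mathit{O}(\|F\|_\infty|P_0|^{-2})$ and $\|\Psi\|_\infty\leq \sqrt{2}(\|F\|_\infty+|c_E|)|P_0|^{-1}$ are both correct (note $\Psi(X_0)=0$, so that boundary term even drops out). What it buys less of: the paper's argument establishes uniform equidistribution of $X_\tau$ in the supremum norm of the density, which is stronger than the tested-against-$F$ statement, though only the latter is used. One cosmetic remark: after replacing $F$ by $F-\int F$ your leading constant becomes roughly $2\sqrt{2}\,\mathbf{r}\|F\|_\infty|P_0|^{-1}$ rather than the stated $\mathbf{r}\|F\|_\infty|P_0|^{-1}$; the paper's own proof likewise produces $2\mathbf{r}\|F\|_\infty|P_0|^{-1}$, and since the lemma is only ever invoked with an unspecified constant this is immaterial.
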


\begin{proof} \text{  }\\
\noindent Part (1):\hspace{.25cm} Since $|P_{0}|^{2}>4\sup_{x}V(x)$, the momentum $P_{t}$ will not change sign at any time. By the conservation of energy
$$\frac{1}{2}\big|P_{0}+(P_{t}-P_{0})\big|^{2}-\frac{1}{2}P_{0}^{2}=-V(X_{t})+V(X_{0}).$$
Using the quadratic formula and that $P_{t},\,P_{0}$ have the same
sign,
$$|P_{t}-P_{0}|=\Big||P_{0}|-\big(P_{0}^{2}+2V(X_{0})-2V(X_{t})\big)^{\frac{1}{2}}\Big|\leq \Big| \frac{1}{2}\int_{0}^{2V(X_{0})-2V(X_{t})}dw \big(P_{0}^{2}+w\big)^{-\frac{1}{2}}\Big|<\frac{2\sup_{x}V(x)}{|P_{0}|},  $$
since $\big(P^{2}_{0}+w\big)^{-\frac{1}{2}}\leq \sqrt{2}|P_{0}|^{-1}<2|P_{0}|^{-1}$   for $|w|\leq \frac{1}{2}\,P_{0}^{2}$. 

\vspace{.5cm}

\noindent Part (2):\hspace{.25cm} With the identity $V(X_{t})-V(X_{0})=\int_{0}^{t}dr \frac{dV}{dx}(X_{r})P_{r}$, then 
\begin{align*}
\Big| \int_{0}^{t}dr \frac{dV}{dx}(X_{r})-\frac{ V(X_{t})-V(X_{0})    }{ P_{0}    }\Big| \leq  & \int_{0}^{t}dr \Big|\frac{dV}{dx}(X_{r})\big(1-\frac{P_{r}}{P_{0}}    \big)\Big| \\ \leq  & t |P_{0}|^{-1}\sup_{x}\big|\frac{dV}{dx}(x)\big|\sup_{r}|P_{r}-P_{0}\big|\\ \leq & 2t\sup_{x}\big|\frac{dV}{dx}(x)\big| \sup_{x}V(x)\, |P_{0}|^{-2} ,  
\end{align*}
where we applied Part (1) for the last inequality.  

\vspace{.5cm}

\noindent Part (3):\hspace{.25cm} Let $d_{s}:\mathbb{T}\rightarrow \R^{+}$ be the density of the particle on the torus at time $\tau$  starting from the point $s=(X_{0},P_{0})\in \Sigma$.  We have that
\begin{align*}
\mathbb{E}_{s}\big[ F(X_{\tau})  \big]=\int_{\mathbb{T}}dx d_{s}(x) F(x).
\end{align*}
This leads to the simple bound
\begin{align}\label{Homefry}
\Big| \mathbb{E}_{s}\big[ F(X_{\tau})  \big]- \int_{\mathbb{T}}dx F(x) \Big| \leq \|F\|_{\infty}\| d_{s}-1\|_{1}.    
\end{align}
Thus it is sufficient for us bound the $1$-norm of $d_{s}-1$, and, in fact, our bounds can be made in the supremum norm.   Our method for bounding~(\ref{Homefry}) will be to analyze a closed form for $d_{s}(x)$ that is possible due to the periodic form of the particle's trajectory $X_{t}$, $t\geq 0$.

 Notice that $d_{s}$ can be written as
\begin{align}\label{Liberia}
d_{s}(a)=\sum_{n=1}^{\infty} \frac{ \mathbf{r}\,e^{-\mathbf{r}\,t_{n}(a)} }{ |P_{t_{n}(a)}|    }= \frac{ \mathbf{r} e^{-\mathbf{r}\,t_{1}(a)}}{  |P_{t_{1}(a)}| }\sum_{n=0}^{\infty}\,e^{-\mathbf{r}\,n\Delta}=\frac{ \mathbf{r} e^{-\mathbf{r}\,t_{1}(a)}}{  |P_{t_{1}(a)}| \big(1-   e^{-\mathbf{r}\Delta} \big)     },
\end{align}
where $t=t_{1}(a),\, t_{2}(a),\cdots$ are the periodic sequence of times at which $X_{t}\,\textup{mod}(1)=a$ and $\Delta$ is the increment between successive times $t_{n}(a)$.  These times will exist for every $a\in \mathbb{T}$ as long as $H(s)>\sup_{x}V(x) $.
 If $4\sup_{x}V(x)\leq P_{0}^{2}$, then  $|P_{t}-P_{0}|\leq 2\big(\sup_{x}V(x)\big)|P_{0}|^{-1}$ by Part (1). Thus for large initial  momentum, $|P_{0}|\gg (\sup_{x}V(x))^{\frac{1}{2}}$, the momentum process $P_{r}$ is nearly constant $\approx P_{0}$ and  the period $\Delta$ is close to $\frac{1}{|P_{0}|}$.   To get a precise bound for the difference between $\Delta $ and $\frac{1}{|P_{0}|}$, notice that  when $|P_{0}|$ is large enough so that $|P_{t}-P_{0}|\leq 2\sup_{x}V(x)|P_{0}|^{-1}<\frac{1}{2}|P_{0}|$, then clearly $\frac{1}{2|P_{0}|}\leq \Delta\leq \frac{2}{|P_{0}|}$ since the particle always travels with speeds $|P_{t}|\in [\frac{1}{2}|P_{0}|,\frac{3}{2}|P_{0}|]$.   Hence, the difference between $\Delta$ and $\frac{1}{|P_{0}|}$ is smaller than
\begin{align}
\Big|\Delta-\frac{1}{|P_{0}|}\Big|&\leq \frac{1}{|P_{0}|}\Big|\int_{0}^{\Delta}dr\, P_{0}-S(P_{0})\Big| \leq \frac{1}{|P_{0}|} \int_{0}^{\Delta}dr |P_{r}- P_{0}| \leq  \frac{4\sup_{x}V(x)}{|P_{0}|^{3}},\label{TimePeriod}
\end{align}
where the second inequality uses that $\int_{0}^{\Delta}drP_{r}=S(P_{0})$.  

Using the triangle inequality
\begin{align}
\big| d_{s}(a)-1  \big|&\leq \Big|d_{s}(a)-\frac{ \mathbf{r} e^{-\mathbf{r}\,t_{1}(a)}}{  |P_{0}| \big(1-   e^{-\mathbf{r}\Delta} \big)     } \Big|+\Big|\frac{ \mathbf{r} e^{-\mathbf{r}\,t_{1}(a)}}{  |P_{0}| \big(1-   e^{-\mathbf{r}\Delta} \big)     }-\frac{ \mathbf{r} e^{-\mathbf{r}\,t_{1}(a)}}{  |P_{0}| \big(1-   e^{-\frac{\mathbf{r}}{ |P_{0}|} } \big)     }\Big|\nonumber \\ &\quad+  \Big|\frac{ \mathbf{r} e^{-\mathbf{r}\,t_{1}(a)}}{  |P_{0}| \big(1-   e^{-\frac{\mathbf{r}}{|P_{0}|}  } \big)     }-1\Big|\nonumber\\ &\leq \frac{2\mathbf{r}}{|P_{0}|}+\mathit{O}(\frac{1}{|P_{0}|^{2}})   ,\label{Aqui}
\end{align}
where the last inequality follows by further computations using the inequalities above.  For instance, we can  bound the first term on the first line of~(\ref{Aqui})  by
\begin{align*}
\Big|d_{s}(a)-\frac{ \mathbf{r} e^{-\mathbf{r}\,t_{1}(a)}}{  |P_{0}| \big(1-   e^{- \mathbf{r}\Delta   } \big)     }  \Big|\leq  \frac{\big| P_{t_{1}(a)}- P_{0}\big|}{|P_{t_{1}(a)}|\, |P_{0}| }\,  \frac{ \mathbf{r} }{  \big(1-   e^{- \mathbf{r}\Delta   } \big)     }  \leq  \frac{ 4  \sup_{x}V(x) }{ \Delta  |P_{0}|^{3} }   \leq   \frac{ 8  \sup_{x}V(x) }{   |P_{0}|^{2} }  ,
\end{align*}
where the inequalities hold for sufficiently large $|P_{0}|$.  The first inequality above follows from Part (1) and the second inequality uses that $\Delta \geq \frac{1}{2|P_{0}|}$ by the remark above~(\ref{TimePeriod}).

\end{proof}

Define the functions $\mathbf{C}^{(\lambda)}_{n}:\Sigma \rightarrow \R$,
\begin{eqnarray*}
 \mathbf{C}^{(\lambda)}_{0}(s) &=&\tilde{\mathbb{E}}_{\tilde{\delta}_{s} }^{(\lambda)}\Big[\chi(Z_{\tau_{1}}=0) \int_{\tau_{1}}^{\tau_{2} }dr\frac{dV}{dx}(X_{r})    \Big],  \\ \mathbf{C}^{(\lambda)}_{n}(s) &=&\tilde{\mathbb{E}}_{\tilde{\delta}_{s}}^{(\lambda)}\Big[ \Big(\chi(Z_{\tau_{1}}=0) \int_{\tau_{1}}^{\tau_{2}}dr\frac{dV}{dx}(X_{r})   - \mathbf{C}^{(\lambda)}_{0}(s)   \Big)^{2n}\Big],  \quad n\geq 1,    
 \end{eqnarray*}
where $\tau_{1},\tau_{2}$ are the first two partition times and $\tilde{\delta}_{s}= \big(1-h(s)\big)\delta_{(s,0)}+h(s)\delta_{(s,1)}    $, i.e., the splitting of the $\delta $-distribution at $s\in \Sigma $.  The presence of the factor $\chi(Z_{\tau_{1}}=0)$ in the above definitions is a small technical precaution, and if  $\chi(Z_{\tau_{1}}=0)$ is removed in the formula for $\mathbf{C}^{(\lambda)}_{0}(s) $, then we have
$$
\tilde{\mathbb{E}}_{\tilde{\delta}_{s} }^{(\lambda)}\Big[ \int_{\tau_{1}}^{\tau_{2} }dr \frac{dV}{dx}(X_{r})    \Big]=\mathbb{E}_{s }^{(\lambda)}\Big[ \int_{\tau_{1}}^{\tau_{2} }dr\frac{dV}{dx}(X_{r})    \Big]=\mathbb{E}_{s}^{(\lambda)}\Big[\int_{0}^{\infty}dt\,t\,e^{-t}\frac{dV}{dx}(S_{t})    \Big].    $$
   Part (1) of Lem.~\ref{DetDrift} is the main tool in  the proof of Part (1) of Lem.~\ref{TheCs}, and the proof for Part (2) of Lem.~\ref{TheCs} makes use of Parts (2) and (3) of Lem.~\ref{DetDrift} with $F(x):=V(x)$. 

\begin{lemma}\label{TheCs}
For any $n>1$, there exists a $C>0$ such that for all $\lambda<1$ and $(x,p)\in \Sigma$,
\begin{enumerate}
 \item $\big| \mathbf{C}^{(\lambda)}_{n}(x,p) \big| \leq C\textup{max}\big(\frac{1}{1+|p|^{2n} }, \lambda^{2n} \big)  $,  

\item $\big| \mathbf{C}^{(\lambda)}_{0}(x,p)\big|\leq C\textup{max}\big(\frac{1}{1+|p|^{2}}, \lambda  \big)$.
\end{enumerate}

\end{lemma}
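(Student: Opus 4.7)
Both bounds reduce, via the strong Markov property at the partition time $\tau_1$ together with Part (3) of Proposition~\ref{BasicsOfNum} (which identifies the conditional law of $Z_{\tau_1}$ given $S_{\tau_1}$), to estimates on the functional
$$\mathbf{v}^{(\lambda)}(s') \;=\; \mathbb{E}_{s'}^{(\lambda)}\Big[\int_0^{\tau} V'(X_r)\,dr\Big], \qquad \tau \text{ mean-1 exponential}.$$
Indeed, one gets $\mathbf{C}_0^{(\lambda)}(s)=\mathbb{E}_s^{(\lambda)}[(1-h(S_{\tau_1}))\,\mathbf{v}^{(\lambda)}(S_{\tau_1})]$, and $\mathbf{C}_n^{(\lambda)}(s)$ is similarly the outer expectation of the centered $2n$-th moment of $\int_0^\tau V'(X_r)dr$ started at $S_{\tau_1}$. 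So the program is: first obtain a pointwise expansion or bound for $\mathbf{v}^{(\lambda)}$ and its moments, and then harvest a second round of averaging from the outer expectation.

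For Part (2), the central point is the pointwise expansion
$$\mathbf{v}^{(\lambda)}(x',p') \;=\; \frac{\bar V - V(x')}{p'} \;+\; O\!\Big(\tfrac{1}{|p'|^2}\vee \lambda\Big)$$
for $|p'|$ large. The leading term is obtained by writing $V'(X_r)=\dot V(X_r)/P_r$ between collisions and integrating by parts against $e^{-r}$, turning the boundary contribution into $-V(x')/p'$ and the remainder into an expectation of $V(X_r)/P_r$ which, by Part (3) of Lemma~\ref{DetDrift} applied to $F=V$, equals $\bar V/p' + O(1/|p'|^2)$. Collision effects over the $O(1)$ window $[0,\tau]$ are controlled using Parts (1)--(2) of Lemma~\ref{DetDrift} between successive collision times and conservation of energy, which keeps $|P_r|\approx |p'|$ away from the origin; these introduce only $O(1/|p'|^2\vee\lambda)$ corrections via the jump-rate bounds of Proposition~\ref{Stuff}. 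Plugging the expansion into $\mathbf{C}_0^{(\lambda)}(s)=\mathbb{E}_s^{(\lambda)}[(1-h(S_{\tau_1}))\,\mathbf{v}^{(\lambda)}(S_{\tau_1})]$ produces a \emph{second} cancellation: applying Part (3) of Lemma~\ref{DetDrift} once more (with $F=V$ under the outer expectation) gives $\mathbb{E}_s^{(\lambda)}[V(X_{\tau_1})]=\bar V+O(1/|p|)$, so the $(\bar V-V(X_{\tau_1}))/P_{\tau_1}$ contribution averages to $O(1/|p|^2)$. The $\lambda$-floor takes over for $|p|\gtrsim \lambda^{-1/2}$ and is proved from the crude bound $|\mathbf{v}^{(\lambda)}|\leq C/|p'|$ together with the fact (from Lemma~\ref{FirstEnergyLem} and Proposition~\ref{Stuff}) that $|P_{\tau_1}|$ is comparable to $|p|$ with overwhelming probability.

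For Part (1), a direct moment estimate suffices. Part (1) of Lemma~\ref{DetDrift} gives the deterministic bound $|\int_a^b V'(X_r)dr|\leq 2\bar V/|P_a|$ on every collision-free sub-interval of $[\tau_1,\tau_2]$ whose starting momentum satisfies $P^2>4\bar V$. Summing over the $\mathcal N+1$ pieces created by collisions, and using the event $\{\inf_{r\in[\tau_1,\tau_2]}|P_r|\geq |p|/2\}$, one obtains a pointwise bound of the order $\mathcal N/|p|$; its $2n$-th moment is controlled by the tails of $\mathcal N$, which are governed by $\mathcal E_\lambda(p)\leq C(1+\lambda|p|)$ (Proposition~\ref{Stuff}). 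This yields $O(1/|p|^{2n})$ whenever $\lambda|p|\lesssim 1$. The complementary event that $|P_r|$ drops below $|p|/2$ during $[\tau_1,\tau_2]$ has probability decaying in $|p|$ fast enough (via the semimartingale decomposition of $\mathbf Q_t=2^{1/2}H_t^{1/2}$ used in Lemma~\ref{FirstEnergyLem}) to be absorbed into the $\lambda^{2n}$ floor, which itself becomes the effective bound for $|p|\gtrsim \lambda^{-1}$ where collision corrections dominate.

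\textbf{Expected main obstacle.} The delicate step is engineering the double cancellation in Part (2): one uses Part (3) of Lemma~\ref{DetDrift} \emph{twice}, once inside $\mathbf{v}^{(\lambda)}$ to extract the leading $(\bar V-V(x'))/p'$, and once in the outer expectation over $S_{\tau_1}$ to average $V(X_{\tau_1})$ to $\bar V$. Propagating these cancellations through the collision corrections without losing track of them in the transition regime $|p|\sim \lambda^{-1/2}$ — where the $1/|p|^2$ and $\lambda$ bounds cross — is the technical heart of the argument and will require careful interpolation using the jump-rate estimates of Proposition~\ref{Stuff}.
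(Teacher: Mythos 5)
Your overall strategy coincides with the paper's: Part (1) is proved by the per-segment bound $|\int_{t_n'}^{t_{n+1}'}\frac{dV}{dx}(X_r)\,dr|\le 2\sup_xV(x)/|P_{t_n'}|$ from Part (1) of Lemma~\ref{DetDrift}, summed over a collision count whose moments are controlled through $\mathcal{E}_\lambda$, on the event that $|P_r|$ stays comparable to $|p|$; and Part (2) rests on exactly the double cancellation you describe, namely Part (2) of Lemma~\ref{DetDrift} to reduce to $\frac{1}{p}\,\mathbb{E}[V(X_{\tau_2})-V(X_{\tau_1})]$ and then Part (3) (equidistribution on the torus at an exponential time) applied to \emph{both} $X_{\tau_1}$ and $X_{\tau_2}$ to gain the extra factor $|p|^{-1}$. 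The factor $\chi(Z_{\tau_1}=0)$ is disposed of as you say, via Part (3) of Proposition~\ref{BasicsOfNum} and the compact support of $h$.

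The gap is in the regime $|p|>\lambda^{-1}$, which is precisely where the $\lambda^{2n}$ and $\lambda$ floors in the statement originate. There your localization event $\{\inf_{r\in[\tau_1,\tau_2]}|P_r|\ge |p|/2\}$ is no longer overwhelming (the downward drift $\mathcal{D}_\lambda(p)\sim -\lambda p\,\mathcal{E}_\lambda(p)$ can move the momentum by an amount comparable to $|p|$ over a unit time once $\lambda^2p^2\gtrsim p$), and the complementary event can no longer be discarded against the crude $O(1)$ bound. Likewise the "crude bound $|\mathbf{v}^{(\lambda)}(x',p')|\le C/|p'|$" you invoke for Part (2) is not what the per-segment argument yields when $|p'|>\lambda^{-1}$: the collision count over a unit time is of order $\lambda|p'|\gg 1$, so the sum of segment contributions is of order $\lambda$, not $1/|p'|$. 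What is needed here (and what the paper supplies) is a separate argument: stop at the first time $\vartheta'$ the momentum drops below $\lambda^{-1}$, bound each Hamiltonian segment before $\vartheta'$ by $1/|P_{t_n^-}|\le C\lambda/\mathcal{E}_\lambda(P_{t_n^-})$ using Part (1) of Proposition~\ref{Stuff}, exploit that $\sum_{n\le\mathcal{N}_r}\mathcal{E}_\lambda(P_{t_n^-})^{-1}-r$ is a martingale to get the $\lambda^{v}$ bound via Rosenthal, and then control the re-entry distribution of $P_{\vartheta'}$ on $[-\lambda^{-1},\lambda^{-1}]$ with Part (7) of Proposition~\ref{Stuff} before applying the moderate-momentum estimates. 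Two smaller points: the contribution of the escape event in the regime $1\ll|p|\le\lambda^{-1}$ must be absorbed into $|p|^{-2n}$ (not the $\lambda^{2n}$ floor), and the right tool for its probability is a super-polynomial tail bound for the jump process over a unit time window (Gaussian jump tails at bounded Poisson rate, as in the Chebyshev argument with arbitrarily high moments), rather than the long-time semimartingale estimates of Lemma~\ref{FirstEnergyLem}, which only give fixed polynomial decay.
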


\begin{proof}\text{ }\\
\noindent Part (1):\hspace{.25cm} For $v=2n$, notice that $\mathbf{C}^{(\lambda)}_{n}(s)$  is smaller than 
\begin{align}\label{Waffle}
\mathbf{C}^{(\lambda)}_{n}(s)\leq \tilde{\mathbb{E}}_{\tilde{\delta}_{s} }^{(\lambda)}\Big[ \Big|\int_{\tau_{1}}^{\tau_{2}}dr \frac{dV}{dx}(X_{r})  \Big|^{v }  \Big]=\mathbb{E}_{s }^{(\lambda)}\Big[ \Big|\int_{\tau_{1}}^{\tau_{2}}dr \frac{dV}{dx}(X_{r})  \Big|^{v }  \Big],
\end{align}
where the equality holds since the initial distribution $\tilde{\delta}_{s}$ is the splitting of $\delta_{s}$, and  the argument of the expectations only depends on the original (pre-split) statistics.  The quantity on the right side of~(\ref{Waffle}) is closely related to Part (1) of Lem.~\ref{DetDrift} except that the momentum now makes random jumps and the limits of integration $\tau_{1}$, $\tau_{2}$ are also random.   The randomness of the limits of integration is not very important here except that the integration interval should not be too long, so, for simplicity, we will bound  $\mathbb{E}_{s }^{(\lambda)}\big[ \big|\int_{0}^{\tau_{1}}dr\frac{dV}{dx}(X_{r})  \big|^{v }  \big]$ rather than the expression on the right side of~(\ref{Waffle}).

 The analysis must be split into cases  based on the size of the initial momentum $p$ since a particle  with high momentum $|p|\gg  \lambda^{-1}$, $\lambda\ll 1$ will tend to receive many collisions in a small amount of time, which contrasts with the situation  $|p|\leq \lambda^{-1}$ where only several collisions are likely to occur in the time interval $[0,\tau_{1}]$.  In the high momentum situation $|p|\gg \lambda^{-1}$, the absolute value of the momentum is likely to drift downwards due to the higher frequency of collisions with oncoming particles.  We will bound $ \mathbb{E}_{s }^{(\lambda)}\big[ \big|\int_{0}^{\tau_{1} }dr\,\frac{dV}{dx}(X_{r})  \big|^{v }  \big]$ for $s=(x,p)$ in the following three regimes:
\begin{enumerate}[(i).]
\item arbitrary $p$,

\item  $1\ll |p|\leq \lambda^{-1}$,

\item  $ \lambda^{-1}<|p|$.

\end{enumerate}
   We will use that the escape rate function $\mathcal{E}_{\lambda}(p):=\int_{\R}dp'\mathcal{J}_{\lambda}(p,p')$ has bounds of the form
\begin{align}\label{EscapeComment}
 \frac{1}{8(1+\lambda)} \leq   \mathcal{E}_{\lambda}(p)\leq \frac{1}{8(1+\lambda)}\big(1+C\lambda |p|\big)
\end{align}
for some $C>0$ and all $\lambda<1$ and $p\in \R$, which can be deduced easily from the form of the jump rates $\mathcal{J}_{\lambda}(p,p')$.   We will also use that for small $\lambda<1$
\begin{align}\label{JumpComment}
 \sup_{|p^{\prime}|>\frac{1}{\lambda} }\frac{ \int_{[-\frac{1}{\lambda}, \frac{1}{\lambda}  ]}dp^{\prime \prime } \frac{1}{1+|p^{\prime \prime} |^{v}}\mathcal{J}_{\lambda}(p^{\prime} , p^{\prime \prime}   ) }{  \int_{[-\frac{1}{\lambda},\frac{1}{\lambda}   ]}dp^{\prime \prime } \mathcal{J}_{\lambda}(p^{\prime} , p^{\prime \prime}   )  } = \mathcal{O}(\lambda^{v}).  
\end{align}
The order equality~(\ref{JumpComment}) holds because the conditional distribution for a momentum jump starting from a momentum $|p'|>\frac{1}{\lambda}$ and conditioned to jump to a value $|p''|\leq \frac{1}{\lambda}$ will be concentrated in the vicinity of the border $|p''|\approx \frac{1}{\lambda}$ where  $\frac{1}{1+|p^{\prime \prime} |^{v}}= \mathcal{O}(\lambda^{v})$.  This is a consequence of the exponential decay found in the form of the jump rates  $\mathcal{J}_{\lambda}(p,p')$.

\vspace{.25cm}
\noindent (i).\hspace{.25cm}  For arbitrary $s\in \Sigma$, we have 
\begin{align}\label{Zapatos}
\mathbb{E}^{(\lambda)}_{s}\Big[  \Big|\int_{0}^{\tau_{1} }dr \frac{dV}{dx}(X_{r})\Big|^{v} \Big] \leq  \sup_{x\in \mathbb{T} }\big|\frac{dV}{dx}(x)\big|^{v} \mathbb{E}\big[\tau_{1}^{v}\big]  \leq   v!\,\sup_{x }\big|\frac{dV}{dx}(x)\big|^{v},
\end{align}
since $\tau_{2}-\tau_{1}$ is a mean one exponential.

\vspace{.25cm}
\noindent (ii).\hspace{.25cm} Next we consider $s=(x,p)$ for the regime $1\ll |p|<\lambda^{-1}$.   As long as the momentum stays below $2\lambda^{-1}$ over the time interval $[0,\tau_{1}]$, the collisions will occur with Poisson rate smaller than $\mathcal{E}_{\lambda}(2\lambda^{-1})$, which is uniformly finite by~(\ref{EscapeComment}).  Thus, in that case, the expected number of collisions up to time $\tau_{1}$ is uniformly finite for $\lambda<1$, and as a consequence the momentum of the particle will not fluctuate significantly from its initial value $p$.  To show that $|p|$ typically stays well below $2\lambda^{-1}$, let us bound the probability of the event that $|P_{r}|\notin \big[\frac{1}{2}|p|,\frac{3}{2}|p|]$ for some $r\leq \tau_{1}$:  
\begin{align}\label{Trousers}\mathbb{P}_{s}^{(\lambda)}\Big[|P_{r}|\notin \big[\frac{1}{2}|p|,\frac{3}{2}|p|\big]  &\text{ for some } r\leq  \tau_{1}   \Big]\nonumber  \\   \leq  &\big(\frac{2}{|p|}\big)^{w}\mathbb{E}^{(\lambda)}_{s}\Big[ \sup_{0\leq r\leq \varsigma\wedge \tau_{1} }\big| P_{r}-p|^{w}      \Big] \nonumber  \\  \leq  &\big(\frac{4}{|p|}\big)^{w}\sup_{\substack{(x,p)\\  |p|\leq \lambda^{-1} }  }\mathbb{E}^{(\lambda)}_{(x,p)}\Big[ \sup_{0\leq r\leq \varsigma\wedge \tau_{1} }\big| J_{r}|^{w}      \Big] + w!\,\big(\frac{4}{|p|}\big)^{w}\sup_{x }\big|\frac{dV}{dx}(x)\big|^{w},
\end{align} 
where $w\geq 1$, $\varsigma$ is the first jump time such that $|P_{r}|$ leaves  $ \big[\frac{1}{2}|p|,\frac{3}{2}|p|\big]$, and $J_{r}=P_{r}-p+\int_{0}^{t}dr \frac{dV}{dx}(X_{r})$ is the sum of the momentum jumps up to time $r$.  The first inequality in~(\ref{Trousers}) is Jensen's, and  for the second inequality, we have used $(x+y)^{w}\leq 2^{w}(x^{w}+y^{w})$ and~(\ref{Zapatos}) to bound the contribution of the potential drift.  The probability densities of individual momentum jumps conditioned to jump from momentum $\hat{p}$, $d_{\hat{p}}(p')=\frac{ \mathcal{J}_{\lambda}(\hat{p},p')   }{ \int_{\Sigma}dp'' \mathcal{J}_{\lambda}(\hat{p},p'')  }    $,     have uniformly controlled Gaussian tails for $|\hat{p}|\leq 2\lambda^{-1}   $  and occur with Poisson rate $\mathcal{E}_{\lambda}(P_{r})\leq \mathcal{E}_{\lambda}(2\lambda^{-1})$ for $r\leq \varsigma $.   Thus the expectation of $\sup_{0\leq r\leq \varsigma \wedge \tau_{1} }\big| J_{r}|^{w}$ above is uniformly finite.  Since $w\geq 1$ is arbitrary, it follows that the probability on the first line of~(\ref{Trousers}) decays super-polynomially quickly for $|p|\gg 1$.

Now we bound $\mathbb{E}_{s }^{(\lambda)}\big[ \big|\int_{0}^{\tau_{1}}dr\,\frac{dV}{dx}(X_{r})  \big|^{v }  \big]$.
Define the times $t_{n}^{\prime}=t_{n}\wedge \tau_{1}\wedge \varsigma $, where $t_{n}$ is the time of the $n$th momentum jump.   By writing
$$\int_{0}^{\tau_{1} }dr\,\frac{dV}{dx}(X_{r}) = \sum_{n=0}^{\infty} \int_{t_{n}^{\prime} }^{t_{n+1}^{\prime} }dr\frac{dV}{dx}(X_{r})+\chi( \varsigma\leq \tau_{1})  \int_{\varsigma}^{\tau_{1}}dr\,\frac{dV}{dx}(X_{r}),    $$  
we can apply the triangle inequality  to get
\begin{align}\label{Mozey}
\mathbb{E}_{s^{\prime}}^{(\lambda)}\Big[ &  \Big|\int_{0}^{\tau_{1} }dr \frac{dV}{dx}(X_{r})  \Big|^{v} \Big]^{\frac{1}{v}} \nonumber \\ &\leq \mathbb{E}_{s^{\prime}}^{(\lambda)}\Big[  \Big(\sum_{n=0}^{\infty} \Big|\int_{t_{n}^{\prime} }^{t_{n+1}^{\prime} }dr \frac{dV}{dx}(X_{r}) \Big|\Big)^{v}\Big]^{\frac{1}{v}}+\mathbb{E}_{s^{\prime}}^{(\lambda)}\Big[   \chi( \varsigma\leq \tau_{1}) \Big| \int_{\varsigma}^{\tau_{1}}dr \frac{dV}{dx}(X_{r})  \Big|^{v} \Big]^{\frac{1}{v}} \nonumber  \\ &\leq \mathbb{E}_{s^{\prime}}^{(\lambda)}\Big[  \Big(\sum_{n=0}^{\infty} \Big|\int_{t_{n}^{\prime} }^{t_{n+1}^{\prime} }dr\frac{dV}{dx}(X_{r}) \Big|\Big)^{v}\Big]^{\frac{1}{v}}+\Big(   (2\,v)!\,\sup_{x\in \mathbb{T} }\big|\frac{dV}{dx}(x)\big|^{2v} \Big)^{\frac{1}{2v}}\mathbb{P}_{s^{\prime} }^{(\lambda)}\big[ \varsigma \leq \tau_{1}    \big]^{\frac{1}{2v}},
\end{align}
where the second inequality follows by Cauchy-Schwarz and because $\tau_{1}$ is a mean one exponential.  The probability $\mathbb{P}_{s^{\prime} }^{(\lambda)}\big[ \varsigma \leq \tau_{1}    \big]$ decays faster than any polynomial by~(\ref{Trousers}).  The first term on the right side of~(\ref{Mozey}) has the bound
\begin{align}\label{Biscut}
\mathbb{E}_{s^{\prime}}^{(\lambda)}\Big[  \Big(\sum_{n=0}^{\infty} \Big|\int_{t_{n}^{\prime} }^{t_{n+1}^{\prime} }dr \frac{dV}{dx}(X_{r}) \Big|\Big)^{v}\Big]\leq \Big(\frac{4\sup_{x\in \mathbb{T}  }V(x)}{|p|}\Big)^{v}\mathbb{E}_{s^{\prime}}^{(\lambda)}\big[\mathcal{N}_{\varsigma}^{v}  \big],
\end{align}
where $\mathcal{N}_{t}$ is the number of collisions up to time $t$.  The above inequality uses the definition of the $t_{n}^{\prime}$'s to conclude that for each $n$, either $t_{n}^{\prime}=t_{n+1}^{\prime}$ so that $\int_{t_{n}^{\prime} }^{t_{n+1}^{\prime} }dr\frac{dV}{dx}(X_{r})=0 $, or $ |P_{t_{n}^{\prime}}| \geq \frac{1}{2}|p|$ so that we can apply Part (1) of Lem.~\ref{DetDrift} to bound $\big|\int_{t_{n}^{\prime} }^{t_{n+1}^{\prime} }dr\frac{dV}{dx}(X_{r})\big|$.  The counting process $\mathcal{N}_{t}$  has Poisson rate $\mathcal{E}_{\lambda}(P_{t})$ at time $t$.  For times $t<\varsigma$, we have that  $\mathcal{E}_{\lambda}(P_{t})\leq   \sup_{\lambda<1} \mathcal{E}_{\lambda}(2\lambda^{-1}):=\mathbf{r}$ and 
$$\mathbb{E}_{s^{\prime}}^{(\lambda)}\big[\mathcal{N}_{\varsigma}^{v}  \big]\leq \mathbb{E}\big[(N^{\prime}_{\tau})^{v} \big]=\frac{1}{1+\mathbf{r}}\sum_{n=0}^{\infty}n^{v}\big( \frac{\mathbf{r}}{1+\mathbf{r} }  \big)^{n}<\infty ,$$
where $N_{t}^{\prime}$ is a Poisson process with rate $\mathbf{r}$ and the random variable $\tau$ is  mean one, exponentially distributed, and independent of $N_{t}^{\prime}$.  The first inequality can be seen by a construction $N^{\prime}_{\tau}\approx \mathcal{N}_{\varsigma}+\mathcal{N}_{\tau}^{\prime}$ for a jump process $\mathcal{N}^{\prime}_{r}$ with Poisson jump rate $ \mathbf{r}-\mathcal{E}_{\lambda}(P_{t})$ for $t\leq \varsigma$ and rate $\mathbf{r}$ for $t>\varsigma$ whose jumps are decided independently of the jumps of $\mathcal{N}_{r}$.  

\vspace{.25cm}

\noindent (iii).\hspace{.25cm} For the regime $|p|>\lambda^{-1}$, our analysis must treat the possiblity that many collisions occur over the time interval $[\tau_{1},\tau_{2}]$ (specificially, when $|p|\gg \lambda^{-1}$).   Let $\vartheta =\tau_{1}\wedge \vartheta^{\prime}    $ where $\vartheta^{\prime}$ is the hitting time that the absolute value of the momentum $|P_{t}|$ jumps below $\lambda^{-1}$.  The hitting time  $\vartheta^{\prime}  $ is finite, and, in fact, has an expectation that is bounded by a multiple of  $\lambda^{-1}$ independently of the initial momentum $|p|>\lambda^{-1}$.   However, the details for these points do not matter for this proof.  Let $\varphi_{s}$ be the distribution on $\mathbb{T}\times [-\lambda^{-1},\lambda^{-1}]$   for $(X_{\vartheta^{\prime}},P_{\vartheta^{\prime}}  )$ starting from $s\in \Sigma$.  By the triangle inequality and the strong Markov property
\begin{align}\label{Engine}
\mathbb{E}_{s}^{(\lambda)}\Big[\Big|\int_{0}^{\tau_{1}}dr \frac{dV}{dx}(X_{r})\Big|^{v}\Big]^{\frac{1}{v}}\leq  \mathbb{E}_{s}^{(\lambda)}\Big[\Big|\int_{0}^{\vartheta}dr\frac{dV}{dx}(X_{r})\Big|^{v}\Big]^{\frac{1}{v}} + \mathbb{E}_{\varphi_{s} }^{(\lambda)}\Big[\Big|\int_{0}^{\tau_{1}}dr \frac{dV}{dx}(X_{r})\Big|^{v}\Big]^{\frac{1}{v}} .
\end{align}

For the first term on the right side on~(\ref{Engine}), we can write 
\begin{align*}
\mathbb{E}_{s}^{(\lambda)}\Big[\Big|\int_{0}^{\vartheta}dr\frac{dV}{dx}(X_{r})\Big|^{v}\Big]= & \mathbb{E}_{s}^{(\lambda)}\Big[\Big|\sum_{n=1}^{\mathcal{N}_{\vartheta}}\int_{t_{n-1}}^{t_{n}}dr\,\frac{dV}{dx}(X_{r})\Big|^{v}\Big]\\  \leq & 2^{v}\big(\sup_{x}V(x)\big)^{v} \mathbb{E}_{s}^{(\lambda)}\Big[\Big|\sum_{n=1}^{\mathcal{N}_{\vartheta}}\frac{1}{|P_{t_{n}^{-}}|}  \Big|^{v}\Big]\leq  \frac{1}{c}\lambda^{v} \mathbb{E}_{s}^{(\lambda)}\Big[\Big|\sum_{n=1}^{\mathcal{N}_{\vartheta}}\frac{1}{\mathcal{E}_{\lambda}(P_{t_{n}^{-}}) }  \Big|^{v}\Big].
\end{align*}
The first inequality is Part (1) of Lem.~\ref{DetDrift}, which is applied for the Hamiltonian evolution on each interval $[t_{n-1},t_{n})$.   The second inequality  holds since there is a $c>0$ such that $ \mathcal{E}_{\lambda}(p)\leq c\lambda |p|$ for all $|p|\geq \lambda^{-1}$ as a consequence of  (\ref{EscapeComment}).   Notice that the difference $\sum_{n=1}^{\mathcal{N}_{r}}\frac{1}{\mathcal{E}_{\lambda}(P_{t_{n}^{-}}) } -r$ is a martingale with predictable quadratic variation $\int_{0}^{r}ds \frac{1}{\mathcal{E}_{\lambda}(P_{s}) }$ since the counting process $\mathcal{N}_{r}$ has jump rate $\mathcal{E}_{\lambda}(P_{r})$.   By the triangle inequality and the relation $\vartheta\leq \tau_{1}$,
\begin{align*}
\mathbb{E}_{s}^{(\lambda)}\Big[\Big|\sum_{n=1}^{\mathcal{N}_{\vartheta}}\frac{1}{\mathcal{E}_{\lambda}(P_{t_{n}^{-} }) }  \Big|^{v}\Big]^{\frac{1}{v}}&\leq \mathbb{E}_{s}^{(\lambda)}\Big[\Big|\sum_{n=1}^{\mathcal{N}_{\tau_{1} }}\frac{1}{\mathcal{E}_{\lambda}(P_{t_{n}^{-}}) } -\tau_{1} \Big|^{v}\Big]^{\frac{1}{v}}+ \mathbb{E}_{s}^{(\lambda)}\big[\tau_{1}^{v}\big]^{\frac{1}{v}} \\ &\leq C'\mathbb{E}_{s}^{(\lambda)}\Big[\Big|  \int_{0}^{\tau_{1} }ds\frac{1}{ \mathcal{E}_{\lambda}(P_{s}) } \Big|^{v} \Big]^{\frac{1}{v}}+\mathbb{E}_{s}^{(\lambda)}\Big[\sup_{1\leq n\leq \mathcal{N}_{\tau_{1}} } \frac{1}{\big|\mathcal{E}_{\lambda}(P_{t_{n}^{-}}) \big|^{v}  } \Big]^{\frac{1}{v}} +\mathbb{E}_{s}^{(\lambda)}\big[\tau_{1}^{v}\big]^{\frac{1}{v}}\\ &\leq  8(C'+1)(1+\lambda)(v!)^{\frac{1}{v}}+(v!)^{\frac{1}{v}},
\end{align*}
where the constant $C'$ arises from an application of Rosenthal's inequality, and the third inequality holds since  $\mathcal{E}_{\lambda}(p) \geq \frac{1}{8(1+\lambda)}$ by (\ref{EscapeComment}) and because the random variable $\tau_{1}$ is exponential with mean one.

For the second term on the right side of~(\ref{Engine}), we can apply our results (i) and (ii) above to guarantee the existence of a $C>0$ such that 
\begin{align}\label{Hausdorff}
 \mathbb{E}_{\varphi_{s} }^{(\lambda)}\Big[\Big|\int_{0}^{\tau_{1}}dr\frac{dV}{dx}(X_{r})\Big|^{v}\Big]\leq C\int_{\Sigma}d\varphi_{s}(x^{\prime},p^{\prime})\frac{1}{1+|p^{\prime} |^{v}}  \leq C\sup_{|p^{\prime}|>\frac{1}{\lambda} }\frac{ \int_{[-\frac{1}{\lambda}, \frac{1}{\lambda}  ]}dp^{\prime \prime } \frac{1}{1+|p^{\prime \prime} |^{v}}\mathcal{J}_{\lambda}(p^{\prime} , p^{\prime \prime}   ) }{  \int_{[-\frac{1}{\lambda},\frac{1}{\lambda}   ]}dp^{\prime \prime } \mathcal{J}_{\lambda}(p^{\prime} , p^{\prime \prime}   )  }  ,
 \end{align}
where the third expression should be understood as the supremum over all $|p^{\prime}|>\lambda^{-1}$ for the expectation of $\frac{1}{1+|P_{\tau}|^{v}  }$ conditioned on $p^{\prime}=P_{\tau^{-} } $.  The final term in~(\ref{Hausdorff}) is $\mathcal{O}(\lambda^{v})$ by  (\ref{JumpComment}).

\vspace{.5cm}

\noindent Part (2):\hspace{.25cm}   We now seek to take full advantage of the averaging that results from integrating $ \frac{dV}{dx}(X_{r}) $ between two random times $r\in [\tau_{1},\tau_{2}]$.     If only the upper limit of  integration  were random, such as for the expression $ \big| \mathbb{E}_{s }^{(\lambda)}\big[ \int_{0}^{\tau_{1}}dr\frac{dV}{dx}(X_{r})   \big] \big|$, then we would only have an upper bound  proportional to $\textup{max}\big(\frac{1}{1+|p| }, \lambda \big)$.   The bound for $ \mathbf{C}^{(\lambda)}_{0}(s) $ in the  region $|p|\geq \lambda^{-1}$ follows from  Part (1), so we will focus our analysis on the regime $1 \ll  | p| < \lambda^{-1}$.    We will proceed by approximating the quantity $ \mathbf{C}^{(\lambda)}_{0}(s) $    by expressions that are progressively easier to analyze.

By~(\ref{Zapatos}), we have an uniform  upper bound for $\sup_{s}|\mathbf{C}_{0}^{(\lambda)}(s)|$.  The difference between  $\mathbf{C}^{(\lambda)}_{0}(s)$ and $\mathbb{E}_{s}^{(\lambda)}\big[\int_{\tau_{1}}^{\tau_{2}}dr\frac{dV}{dx}(X_{r})\big]$ is small when $|p|\gg 1$ since
\begin{align}\label{Shutdown}
\Big| \mathbb{E}_{s}^{(\lambda)}\Big[&\int_{\tau_{1}}^{\tau_{2}}dr\frac{dV}{dx}(X_{r})\Big]-\mathbf{C}^{(\lambda)}_{0}(s)    \Big| =\Big| \tilde{\mathbb{E}}_{\tilde{\delta}_{s} }^{(\lambda)}\Big[\int_{\tau_{1}}^{\tau_{2}}dr\frac{dV}{dx}(X_{r})\Big]-\mathbf{C}^{(\lambda)}_{0}(s)    \Big|\nonumber  \\
&=\Big|\tilde{\mathbb{E}}_{\tilde{\delta}_{s} }^{(\lambda)}\Big[\chi( z_{\tau_{1}}=1   )\int_{\tau_{1} }^{\tau_{2}}dr\frac{dV}{dx}(X_{r})\Big] \Big|    \leq \tilde{\mathbb{E}}_{\tilde{\delta}_{s} }\Big[\Big|\int_{\tau_{1} }^{\tau_{2}}dr\frac{dV}{dx}(X_{r})\Big|^{2}\Big]^{\frac{1}{2}}\tilde{\mathbb{P}}_{\tilde{\delta}_{s} }^{(\lambda)}\big[  z_{\tau_{1}}=1  \big]^{\frac{1}{2}} \nonumber \\ &= \mathbb{E}_{s }\Big[\Big|\int_{\tau_{1} }^{\tau_{2}}dr\frac{dV}{dx}(X_{r})\Big|^{2}\Big]^{\frac{1}{2}}\mathbb{E}_{s }^{(\lambda)}\big[ h(S_{\tau_{1}}) \big]^{\frac{1}{2}} \leq 2^{\frac{1}{2}}\sup_{x}\big|\frac{dV}{dx}(x)\big|^{\frac{1}{2}}\mathbb{E}_{s }^{(\lambda)}\big[ h(S_{\tau_{1}}) \big]^{\frac{1}{2}} .
\end{align}
   The first and third equalities  use that $ \mathbb{E}_{s}^{(\lambda)}=\tilde{\mathbb{E}}_{\tilde{\delta}_{s} }^{(\lambda)}$, and the identity  $\tilde{\mathbb{P}}_{\tilde{\delta}_{s} }^{(\lambda)}\big[  z_{\tau_{1}}=1  \big]= \mathbb{E}_{s }^{(\lambda)}\big[ h(S_{\tau_{1}}) \big]  $ used in the third equality can be shown using Part (3) of Prop.~\ref{BasicsOfNum}.   The first inequality is Cauchy-Schwarz, and the second inequality uses that $\tau_{2}-\tau_{1}$ is a mean one exponential. 
  The function $h(s)\leq 1$ has compact support, and  there is a $c>0$ such that $\mathbb{E}_{(x,p) }^{(\lambda)}\big[ h(S_{\tau_{1}}) \big]\leq ce^{-\lambda^{-1}}\vee e^{-|p|}  $.  In fact, the bound can be given a Gaussian form as a consequence of the Gaussian tails found in the jump rates~(\ref{JumpRates}).   It follows that the difference  of  $\mathbf{C}^{(\lambda)}_{0}(s)$ and $\mathbb{E}_{s}^{(\lambda)}\big[\int_{\tau_{1}}^{\tau_{2}}dr\frac{dV}{dx}(X_{r})\big]$ is negligible for our purpose.

By the above remarks, we may work with $\mathbb{E}_{s}^{(\lambda)}\big[\int_{\tau_{1}}^{\tau_{2}}dr\frac{dV}{dx}(X_{r})\big]$.   Now we will  show that the difference of this term with the expression $\frac{1}{p}\mathbb{E}_{s}^{(\lambda)}\big[ V(X_{\tau_{2}})-V(X_{\tau_{1}})  \big]$ is  $\mathit{O}(p^{-2})$.     For $p$ in the regime $1\ll |p|\leq \lambda^{-1}$,  define $\varsigma$ as in Part (1) and   define $t_{n}$ as the sequence of collision times starting after $\tau_{1}$ with $t_{0}=\tau_{1}$,  and  $t_{n}^{\prime}=t_{n}\wedge \varsigma\wedge \tau_{2}  $.   Similarly to Part (1), we can write 
$$\int_{\tau_{1}}^{\tau_{2} }dr\frac{dV}{dx}(X_{r}) = \sum_{n=0}^{\infty} \int_{t_{n}^{\prime} }^{t_{n+1}^{\prime} }dr\frac{dV}{dx}(X_{r})+\chi( \varsigma\leq \tau_{2})  \int_{\varsigma \vee \tau_{1} }^{\tau_{2}}dr\frac{dV}{dx}(X_{r}). $$
The difference between the expressions is bounded by
\begin{align}
\Big| \mathbb{E}_{s}^{(\lambda)}\Big[\int_{\tau_{1}}^{\tau_{2} }dr\frac{dV}{dx}(X_{r})\Big]&-\frac{1}{p}\mathbb{E}_{s}^{(\lambda)}\big[ V(X_{\tau_{2}})-V(X_{\tau_{1}})  \big]   \Big|\leq  \Big(\frac{\sup_{x}V(x)}{|p|}     +\sup_{x}\big|\frac{dV}{dx}(x)\big| \Big)\mathbb{P}_{s}\big[ \varsigma\leq \tau_{2}   \big] \nonumber \\ &+ \mathbb{E}_{s}^{(\lambda)}\Big[ \sum_{n=0}^{\mathcal{N}_{\varsigma}-1 }\Big| \int_{t_{n}^{\prime} }^{t_{n+1}^{\prime} }dr\frac{dV}{dx}(X_{r})-\frac{V( X_{t_{n+1}^{\prime}} )-V(X_{t_{n}^{\prime} }  )    }{P_{t_{n}^{\prime}}  }     \Big|   \Big] \nonumber \\ &+      \mathbb{E}_{s}^{(\lambda)}\Big[ \sum_{n=0}^{\mathcal{N}_{\varsigma} -1}\big|V( X_{t_{n+1}^{\prime}} )-V(X_{t_{n}^{\prime} }  )\big|\,\Big|  \frac{  1  }{P_{t_{n}^{\prime}}  } -\frac{1}{p}\Big|   \Big], \label{Hammer}
\end{align}
where $\mathcal{N}_{r}$, $r\geq \tau_{1}$ is the number of collision times $t_{n}$ in the interval $(\tau_{1},r]$, and the first term on the right side  bounds the expectations of $\frac{V(X_{\tau_{2}})-V(X_{\varsigma})}{p}$ and $ \chi( \varsigma\leq \tau_{2})  \int_{\varsigma \vee \tau_{1} }^{\tau_{2}}dr\frac{dV}{dx}(X_{r})$.   The inequality~(\ref{Hammer}) follows by adding and subtracting terms $\frac{V( X_{t_{n+1}^{\prime}} )-V(X_{t_{n}^{\prime} }  )    }{P_{t_{n}^{\prime}}  }$ for $n\in [0,\mathcal{N}_{\varsigma})$ and applying the triangle inequality.   By the same analysis as in (\ref{Trousers}), $ \mathbb{P}_{s}\big[ \varsigma\leq \tau_{2}   \big]$ decays super-polynomially with  $|p|\gg 1$.   We will bound the second and third lines of~(\ref{Hammer}) below.  

The second line of~(\ref{Hammer}) has the bound
\begin{align*}
 \mathbb{E}_{s}^{(\lambda)}\Big[ \sum_{n=0}^{\mathcal{N}_{\varsigma}-1 }\Big| \int_{t_{n}^{\prime} }^{t_{n+1}^{\prime} }dr\frac{dV}{dx}(X_{r})-\frac{V( X_{t_{n+1}^{\prime}} )-V(X_{t_{n}^{\prime} }  )    }{P_{t_{n}^{\prime}}  }     \Big|   \Big]  \leq & 2\sup_{x}\big|\frac{dV}{dx}(x)\big| \sup_{x}V(x)\, \mathbb{E}_{s}^{(\lambda)}\Big[ \sum_{n=0}^{\mathcal{N}_{\varsigma}-1 }\frac{ t_{n+1}^{\prime}- t_{n}^{\prime} }{ |P_{\tau_{n}}|^{2} }  \Big]\\ \leq & \frac{8}{|p|^{2}}\sup_{x}\big|\frac{dV}{dx}(x)\big| \sup_{x}V(x),
\end{align*}
where the second inequality uses that $|P_{\tau_{n}^{\prime} }|\geq \frac{1}{2}|p|$, by definition, for $n\leq \mathcal{N}_{\varsigma}$, and also uses that 
 $$\sum_{n=0}^{\mathcal{N}_{\varsigma} }t_{n+1}^{\prime}-t_{n}^{\prime}  =\varsigma-\tau_{1}\leq \tau_{2}-\tau_{1}\quad \text{and hence }\quad \mathbb{E}_{s}^{(\lambda)}\Big[ \sum_{n=0}^{\mathcal{N}_{\varsigma} }t_{n+1}^{\prime}-t_{n}^{\prime}\Big]\leq \mathbb{E}_{s}^{(\lambda)}\big[ \tau_{2}-\tau_{1} \big] =1.  $$

For the third line on the right side of~(\ref{Hammer}),
 \begin{align*}
   \mathbb{E}_{s}^{(\lambda)}\Big[ &\sum_{n=0}^{\mathcal{N}_{\varsigma}-1 }\big|V( X_{t_{n+1}^{\prime}} )-V(X_{t_{n}^{\prime} }  )\big|\,\Big|  \frac{  1  }{P_{t_{n}^{\prime}}  } -\frac{1}{p}\Big|   \Big] \\ &\leq \sup_{x}V(x) \,\mathbb{E}_{s}^{(\lambda)}\Big[ \sum_{n=0}^{\mathcal{N}_{\varsigma}-1}\frac{\big| p-P_{t_{n}^{\prime}}   |}{|pP_{t_{n}^{\prime}}|     }   \Big]   \leq  \frac{2}{|p|^{2}}\big( \sup_{x}V(x)\big) \mathbb{E}_{s}^{(\lambda)}\Big[ \sum_{n=0}^{\mathcal{N}_{\varsigma}-1}\big| p-P_{t_{n}^{\prime}} |  \Big] \\ &\leq  \frac{2}{|p|^{2}}\big( \sup_{x}V(x)\big)\, \mathbb{E}_{s}^{(\lambda)}\Big[ \sup_{0\leq r\leq \varsigma}\big| P_{r}-p\big|^{2}\Big]^{\frac{1}{2}} \mathbb{E}_{s}^{(\lambda)}\big[\mathcal{N}_{\varsigma}^{2}\big]^{\frac{1}{2}} =\mathit{O}(|p|^{-2}).
   \end{align*}
 The second inequality uses the definition of $\varsigma$ to conclude that $\frac{1}{2}|p|\leq |P_{t_{n}^{\prime}}|$ for $n\leq \mathcal{N}_{\varsigma}$, and the third inequality is Cauchy-Schwarz.     Arbitrary moments of $\mathcal{N}_{\varsigma}$ are finite by~(\ref{Biscut}).      

Our final task is to bounding the expression $\frac{1}{p}\mathbb{E}_{s}^{(\lambda)}\big[V(X_{\tau_{2}})-V(X_{\tau_{1}})    \big]$, and we only need to show that $\big|\mathbb{E}_{s}^{(\lambda)}\big[V(X_{\tau_{2}})-V(X_{\tau_{1}})    \big]\big|=\mathit{O}(|p|^{-1})$ for $|p|\gg 1$.  By the triangle inequality, 
$$\big|\mathbb{E}^{(\lambda)}_{s}\big[V(X_{\tau_{2}})-V(X_{\tau_{1}})    \big]\big|\leq  \Big|\mathbb{E}^{(\lambda)}_{s}\big[V(X_{\tau_{2}})\big]-\int_{\mathbb{T}}dxV(x)\big]\Big|+\Big|\mathbb{E}^{(\lambda)}_{s}\big[V(X_{\tau_{1}})\big]-\int_{\mathbb{T}}dxV(x)\big]\Big|. $$
The terms on the right side are similar, so we will study the second.  Bounding the difference between $\mathbb{E}^{(\lambda)}_{s}\big[V(X_{\tau_{1}})\big]$  and  $\int_{\mathbb{T}}dx V(x)$ is very close in spirit to Part (3) of Lem.~\ref{DetDrift} except that we now must treat an Hamiltonian evolution perturbed by some random  momentum kicks.  As in Part (1), the assumption that $|p|\leq \lambda^{-1}$ ensures that not many momentum kicks are likely to occur.

 We can reconstruct the counting process $\mathcal{N}_{t}$ for the number of collisions up to time $\varsigma$ as follows.  Let $N^{\prime}$ be a Poisson clock with rate $\mathbf{r}=\mathcal{E}_{\lambda}(2\lambda^{-1})$ as in Part (1).  The Poisson rate of jumps $\mathcal{E}_{\lambda}(P_{t})$ for the process $\mathcal{N}_{t}$  satisfies $\mathcal{E}_{\lambda}(P_{t})\leq \mathbf{r}$ for times $t\leq \varsigma$.  At each jump time $r_{n}\leq \varsigma $ for the Poisson process $N^{\prime}$, we then flip  an independent coin with weight $\mathbf{r}^{-1} \mathcal{E}_{\lambda}(P_{r_{n}})$ to determine if a jump for $\mathcal{N}_{t}$ (i.e. a collision) occurred at time $r_{n}$.  This construction recovers the statistics for $\mathcal{N}_{t}$.  We then define $r_{n}^{\prime}=r_{n}\wedge \tau_{1}$ for $n\leq N^{\prime}_{\varsigma\wedge \tau_{1}}$.  Conditioned on the past $\mathcal{F}_{r^{\prime}_{n}}$ and  the event $\tau_{1}>r_{n}^{\prime}$,  the increment $r_{n+1}^{\prime}-r_{n}^{\prime}$ is exponentially distributed with mean $(1+\mathbf{r})^{-1}$.  When conditioned on the event $\tau_{1}=r_{n+1}^{\prime}$, the increment $r_{n+1}^{\prime}-r_{n}^{\prime}$  is exponential with mean $1$.  

We can rewrite the expectation $\mathbb{E}^{(\lambda)}_{s}\big[V(X_{\tau_{1}})\big]$ in terms of the complimentary  events $\tau_{1}>\max_{n}\,r_{n}^{\prime}$  and $\tau_{1}=r_{n}^{\prime}$ for some $n$ as follows:
\begin{align}\mathbb{E}^{(\lambda)}_{s}\big[V(X_{\tau_{1}})\big]= &\mathbb{E}^{(\lambda)}_{s}\big[ V(X_{\tau_{1}})\chi(\tau_{1}>\max_{n}\,r_{n}^{\prime}) \big]\nonumber  \\   &+ \mathbb{E}^{(\lambda)}_{s}\Big[ \sum_{n=0}^{\infty}\chi(\tau_{1}=r_{n+1}^{\prime} )\, \mathbb{E}^{(\lambda)}_{s}\big[V(X_{\tau_{1}})\,\big|\,\mathcal{F}_{r^{\prime}_{n}},\,\tau_{1}=r^{\prime}_{n+1}  \big] \Big].
\end{align}
The first term on the right is smaller than $\sup_{x}V(x)$ times the probability of the event $\max_{n}r_{n}^{\prime}\neq \tau_{1}$, which can also be phrased as the event that $\varsigma<\tau$.  By the analysis in Part (1), $\mathbb{P}_{s}^{(\lambda)}[\varsigma<\tau]$ is super-polynomially small in $|p|\gg 1$.  Thus $\sum_{n=0}^{\infty}\mathbb{P}^{(\lambda)}_{s}[\tau_{1}=r_{n+1}^{\prime} ]$ is super-polynomially close to $1$. Since $r_{n+1}^{\prime}-r_{n}^{\prime}$ is exponentially distributed,  by Part (3) of Lem.~\ref{DetDrift} we have that
\begin{align*}
\Big|\mathbb{E}^{(\lambda)}_{s}\big[V(X_{\tau_{1}})\,\big|\,\mathcal{F}_{r^{\prime}_{n}},\,\tau_{1}=r^{\prime}_{n+1}  \big]-\int_{\mathbb{T}}dx V(x)\Big|\leq  (\text{const})|P_{r_{n}^{\prime}}|^{-1}\leq 2(\text{const})|p|^{-1}.
\end{align*}
The second inequality above follows since $|P_{r_{n}^{\prime}}|\geq \frac{1}{2}|p|$ by the definition of the times $r_{n}^{\prime}$, which are less than $\varsigma$.

\end{proof}

\subsection{Proof of Prop.~\ref{FurtherNum}}

\begin{proof}[Proof of Prop.~\ref{FurtherNum}]\text{   }\\
\noindent Part (1): \hspace{.1cm}
Recall that $\sigma_{n}=S_{\tau_{n}}$ and that $ \mathbf{N}_{t}$ is defined as the number partition times $\tau_{n}$, $n\geq 1$ to have occurred up to time $t$.    For $0\leq t< R_{1}$ we can write $\int_{0}^{t}dr\frac{dV}{dx}(X_{r})$ as
\begin{align}\label{George}
\int_{0}^{t}dr\frac{dV}{dx}(X_{r})= &\int_{0}^{\tau_{1}} dr\frac{dV}{dx}(X_{r}) - \chi(\zeta_{\mathbf{N}_{t}}=0)\int_{t}^{\tau_{\mathbf{N}_{t}+1} }    dr\frac{dV}{dx}(X_{r})\nonumber \\ &- \chi(\zeta_{\mathbf{N}_{t}+1}=0)\int_{\tau_{\mathbf{N}_{t}+1}}^{\tau_{\mathbf{N}_{t}+2} }    dr\frac{dV}{dx}(X_{r})   +\sum_{n=0}^{ \mathbf{N}_{t}} \mathbf{C}^{(\lambda)}_{0}(\sigma_{n})+\mathbf{m}_{t}+\mathbf{m}_{t}',
\end{align}
where $\mathbf{m}_{t}$ and $\mathbf{m}_{t}'$ correspond to odd and even contributions of the form
\begin{eqnarray*}
\mathbf{m}_{t}& := & \sum_{n=1}^{\lfloor \frac{1}{2} \mathbf{N}_{t}-\frac{1}{2}\rfloor+1}\Big( \chi(\zeta_{2n}=0) \int_{\tau_{2n}}^{\tau_{2n+1}} dr\frac{dV}{dx}(X_{r})- \mathbf{C}^{(\lambda)}_{0}(\sigma_{2n-1})      \Big), \\ \mathbf{m}_{t}'&:=& \sum_{n=0}^{\lfloor \frac{1}{2}\mathbf{N}_{t}\rfloor}\Big( \chi(\zeta_{2n+1}=0) \int_{\tau_{2n+1}}^{\tau_{2n+2}} dr\frac{dV}{dx}(X_{r})- \mathbf{C}^{(\lambda)}_{0}(\sigma_{2n})      \Big).
\end{eqnarray*}
The processes $\mathbf{m}_{t},\mathbf{m}_{t}'$  are not adapted to $ \tilde{\mathcal{F}}_{t}$ since, for instance, $\mathbf{m}_{t}'$ makes the jump $$\chi(\zeta_{2n+1}=0) \int_{\tau_{2n+1}}^{\tau_{2n+2}} dr\frac{dV}{dx}(X_{r})- \mathbf{C}^{(\lambda)}_{0}(\sigma_{2n}) $$  at time $\tau_{2n}$, and the size of the jump depends on $X_{t}$ up to time $\tau_{2n+2}$.   Let $\tilde{\mathcal{F}}_{t}''$ be the $\sigma$-algebra of all information before time $\tau_{n+2}$, where $\tau_{n}\leq t<\tau_{n+1}$, plus knowledge of the time $\tau_{n+2}$.  The process $\mathbf{m}_{t}'$ is a  martingale with respect to   $\tilde{\mathcal{F}}_{t}''$.  To see this  let us consider a time $t<\tau_{2n-1}$,  then the following equalities hold:
\begin{align}
  \tilde{\mathbb{E}}\Big[  \chi(\zeta_{2n+1}=0) \int_{\tau_{2n+1}}^{\tau_{2n+2}} dr\frac{dV}{dx}(X_{r})\,\Big|\,\tilde{\mathcal{F}}_{t}''  \Big] &=   \tilde{\mathbb{E}}\Big[  \tilde{\mathbb{E}}\Big[   \chi(\zeta_{2n+1}=0) \int_{\tau_{2n+1}}^{\tau_{2n+2}} dr\frac{dV}{dx}(X_{r})\,\Big|\,\tilde{\mathcal{F}}_{\tau_{2n}^{-}}\Big]\,\Big|\,\tilde{\mathcal{F}}_{t}''  \Big]  \nonumber \\ &= \tilde{\mathbb{E}}\Big[ \tilde{\mathbb{E}}_{\tilde{\delta}_{\sigma_{2n}} }\Big[   \chi(\zeta_{2n+1}=0) \int_{\tau_{2n+1}}^{\tau_{2n+2}} dr\frac{dV}{dx}(X_{r})\Big]\,\Big|\,\tilde{\mathcal{F}}_{t}''  \Big] \nonumber  \\ &=   \tilde{\mathbb{E}}\Big[ \mathbf{C}^{(\lambda)}_{0}(\sigma_{2n})\,\Big|\,\tilde{\mathcal{F}}_{t}''  \Big] \label{Hacks}.
\end{align}
The nested conditional expectation on the first line uses that $\tilde{\mathcal{F}}_{t}''\subseteq \tilde{\mathcal{F}}_{\tau_{2n}^{-}}$, and the third equality is by the definition of $\mathbf{C}^{(\lambda)}_{0}$.  The second equality applies  Part 3 of Prop.~\ref{BasicsOfNum} and  the strong Markov property starting from the time $\tau_{2n}$; recall that $\tilde{\delta}_{s}$ is the splitting of the $\delta$-distribution at $s\in \Sigma $.  The predictable quadratic variation $\langle \mathbf{m}_{t}' \rangle $  for the martingale $\mathbf{m}_{t}'$ with respect to $\tilde{\mathcal{F}}_{t}''$ has the form
\begin{align}\label{FlapJack}
\langle \mathbf{m}' \rangle_{t} =\sum_{n=1}^{\lfloor \frac{1}{2}\mathbf{N}_{t}\rfloor}\mathbf{C}^{(\lambda)}_{1}(\sigma_{2n}).
\end{align}
The analogous statements hold for  $\mathbf{m}_{t}$.

By the triangle inequality for~(\ref{George}),
\begin{multline}\label{ThisCanNotStandMan}
\tilde{\mathbb{E}}_{\tilde{\nu}}^{(\lambda)}\Big[\sup_{0\leq t\leq R_{1}}\Big|\int_{0}^{t}dr\frac{dV}{dx}(X_{r})  \Big|^{2m}   \Big]^{\frac{1}{2m}} \leq 6\sup_{x}\big|\frac{dV}{dx}(x)\big|\tilde{\mathbb{E}}_{\tilde{\nu}}^{(\lambda)}\big[\sup_{0\leq t< R_{1}}(\tau_{\mathbf{N}_{t}+1}-\tau_{\mathbf{N}_{t}} )^{2m}   \big]^{\frac{1}{2m}}\\+\tilde{\mathbb{E}}_{\tilde{\nu}}^{(\lambda)}\Big[\sup_{0\leq t< R_{1}}\Big(\sum_{n=0}^{ \mathbf{N}_{t}}| \mathbf{C}^{(\lambda)}_{0}(\sigma_{n})|  \Big)^{2m}   \Big]^{\frac{1}{2m}} + \tilde{\mathbb{E}}_{\tilde{\nu}}^{(\lambda)}\big[\sup_{0\leq t< R_{1}}\big|\mathbf{m}_{t}\big|^{2m}   \big]^{\frac{1}{2m}}+\tilde{\mathbb{E}}_{\tilde{\nu}}^{(\lambda)}\big[\sup_{0\leq t< R_{1}}\big|\mathbf{m}_{t}'\big|^{2m}   \big]^{\frac{1}{2m}}  ,
\end{multline}
where we have bounded each of the first three terms on the right side of (\ref{George}) by the supremum of $|\frac{dV}{dx}(x)|$ multiplied by the longest interval $\tau_{n+1}-\tau_{n}$ for $n\leq \tilde{n}_{1}$.  We used a factor of $6$ rather than $3$ since there is one term for which the interval $[\tau_{2n+1},\tau_{2n+2}]$ will have $\tau_{2n+1}\in [R_{1},R_{2}]$ rather than $< R_{1}$.  We thus double the bound since we can apply the strong Markov property   starting  from time $R_{1}$.  We now look at the terms on the right side one-by-one.

For the first term on the right side of~(\ref{ThisCanNotStandMan}),
\begin{align}\label{TheExponentials}
\tilde{\mathbb{E}}_{\tilde{\nu}}^{(\lambda)}\big[\sup_{0\leq t< R_{1}}(\tau_{\mathbf{N}_{t}+1}-\tau_{\mathbf{N}_{t}} )^{2m}   \big] &=\tilde{\mathbb{E}}_{\tilde{\nu}}^{(\lambda)}\big[\sup_{0\leq n\leq \tilde{n}_{1}}(\tau_{n+1}-\tau_{n})^{2m}   \big]\nonumber   \\ &\leq c^{2m}\tilde{\mathbb{E}}_{\tilde{\nu}}^{(\lambda)}\Big[ \mathbb{E}\big[\sup_{0\leq n\leq \tilde{n}_{1}}\mathbf{e}_{n}^{2m}\,\big|\,\tilde{n}_{1} \big]  \Big]\nonumber \\ &\leq  c'+c'\tilde{\mathbb{E}}_{\tilde{\nu}}^{(\lambda)}\big[\big(\log(1+\tilde{n}_{1}) \big)^{2m} \big],
\end{align}
where  $\mathbf{e}_{n}$ are i.i.d. mean one exponential random variables independent of everything else.  The  $c>0$ in the first inequality is from Lem.~\ref{LemPartTime} and replacing the $(\tau_{n+1}-\tau_{n})$'s with the $\mathbf{e}_{n}$'s.  The $c'>0$ for the second inequality exists by an elementary analysis of $\tilde{\mathbb{E}}^{(\lambda)}\big[\sup_{0\leq n\leq N}\mathbf{e}_{n}^{2m}\big]  $ for $m>0$ and independent exponential random variables $\mathbf{e}_{n}$ with mean one.  The value $\tilde{\mathbb{E}}_{\tilde{\nu}}^{(\lambda)}\big[\big(\log(1+\tilde{n}_{1}) \big)^{2m} \big]$ is finite by Prop.~\ref{FracMoment} since the fractional moments $\tilde{\mathbb{E}}_{\tilde{\nu}}^{(\lambda)}\big[\tilde{n}_{1}^{\alpha}  \big]$ are finite for $0<\alpha<\frac{1}{2}$.

For the second term on the right side of~(\ref{ThisCanNotStandMan}), obviously
$$\tilde{\mathbb{E}}_{\tilde{\nu}}^{(\lambda)}\Big[\sup_{0\leq t< R_{1}}\Big(\sum_{n=0}^{ \mathbf{N}_{t}}| \mathbf{C}^{(\lambda)}_{0}(\sigma_{n})|  \Big)^{2m}   \Big] =  \tilde{\mathbb{E}}_{\tilde{\nu}}^{(\lambda)}\Big[\Big(\sum_{n=0}^{ \tilde{n}_{1} }| \mathbf{C}^{(\lambda)}_{0}(\sigma_{n})|  \Big)^{2m}   \Big]
   $$
since $\mathbf{N}_{t}=\tilde{n}_{1}$ for $t \in[R_{n-1},R_{n})$.    By Lem.~\ref{TheCs} $g_{\lambda}= \mathbf{C}^{(\lambda)}_{0}$ has the inequality $|g_{\lambda}(x,p)|\leq C\textup{max}(\frac{1}{1+p^{2}}, \lambda)$ for $\lambda<1$. Hence, by Lem.~\ref{Marsupial}, the above sum is bounded independently of $\lambda<1$.

The last two terms on the right side of~(\ref{ThisCanNotStandMan}) are similar, so we will only treat the last.   By Rosenthal's inequality there is a $C''$ such that    
\begin{align*}
\tilde{\mathbb{E}}_{\tilde{\nu}}^{(\lambda)}&\big[\sup_{0\leq t< R_{1}}\big|\mathbf{m}_{t}' \big|^{2m } \big] \\ &\leq C''\tilde{\mathbb{E}}_{\tilde{\nu}}^{(\lambda)}\big[\langle \mathbf{m}'\rangle_{R_{1}}^{m}  \big] +  C''\tilde{\mathbb{E}}_{\tilde{\nu}}^{(\lambda)}\Big[\sup_{0\leq n\leq \lfloor \frac{ \tilde{n}_{1}}{2}\rfloor }\Big| \chi(\zeta_{2n+1}=0) \int_{\tau_{2n+1}}^{\tau_{2n+2}} dr\frac{dV}{dx}(X_{r})- \mathbf{C}^{(\lambda)}_{0}(\sigma_{2n})   \Big|^{2m}\Big] \\ &\leq C''\tilde{\mathbb{E}}_{\tilde{\nu}}^{(\lambda)}\big[\langle \mathbf{m}'\rangle_{R_{1}}^{m}  \big] +  C''\tilde{\mathbb{E}}_{\tilde{\nu}}^{(\lambda)}\Big[\sum_{n=0}^{\tilde{n}_{1} }\Big| \chi(\zeta_{n+1}=0) \int_{\tau_{n+1}}^{\tau_{n+2}} dr\frac{dV}{dx}(X_{r})- \mathbf{C}^{(\lambda)}_{0}(\sigma_{n})   \Big|^{4m}\Big]^{\frac{1}{2}} \\
&\leq C''\tilde{\mathbb{E}}_{\tilde{\nu}}^{(\lambda)}\Big[\Big(\sum_{n=0}^{\tilde{n}_{1}+1 }\mathbf{C}^{(\lambda)}_{1}(\sigma_{n}) \Big)^{m}\Big]+C''\tilde{\mathbb{E}}_{\tilde{\nu}}^{(\lambda)}\Big[\sum_{n=0}^{\tilde{n}_{1} }\mathbf{C}^{(\lambda)}_{2m}(\sigma_{n})   \Big]^{\frac{1}{2}} .
\end{align*}
For  the second inequality, we have used the standard technique to bound the supremum in the second term by using $\big(\sup_{n}a_{n}\big)^{2}\leq \sum_{n}a_{n}^{2}$ and Jensen's inequality, and we  have also included the odd-numbered terms.   The first term in the third inequality is bounded with the equality~(\ref{FlapJack}) and by including the odd-numbered terms.  The second term in the third equality is bounded by inserting a nested conditional expectation with respect to $\tilde{\mathcal{F}}_{\tau_{n}^{-}}$ for the $n$th term in the sum and applying the argument in~(\ref{Hacks}).    To bound both terms on the right side, we can apply Lem.~\ref{TheCs} and Lem.~\ref{Marsupial} as above.

\vspace{.5cm}
\noindent Part (2): \hspace{.1cm} Similarly to  Part (1), we begin by writing $\int_{0}^{R_{1}}dr\frac{dV}{dx}(X_{r})$ as in~(\ref{George}) and using the triangle inequality to get
\begin{multline}\label{Orielly}
\tilde{\mathbb{E}}_{\tilde{s}}^{(\lambda)}\Big[ \Big|\int_{0}^{R_{1}}dr\frac{dV}{dx}(X_{r})\Big|\Big]\leq   \tilde{\mathbb{E}}_{\tilde{s}}^{(\lambda)}\Big[\Big|   \int_{0}^{\tau_{1}} dr\frac{dV}{dx}(X_{r}) \Big|\Big]+\tilde{\mathbb{E}}_{\tilde{s}}^{(\lambda)}\Big[\Big|   \int_{t}^{\tau_{\tilde{n}_{1}+1} } dr\frac{dV}{dx}(X_{r}) \Big|\Big] \\+\tilde{\mathbb{E}}_{\tilde{s}}^{(\lambda)}\Big[\Big|   \int_{\tau_{\tilde{n}_{1 }+1}}^{\tau_{\tilde{n}_{1}+2} } dr\frac{dV}{dx}(X_{r}) \Big|\Big]+\tilde{\mathbb{E}}_{\tilde{s}}^{(\lambda)}\Big[    \sum_{n=0}^{ \tilde{n}_{1} }\big|\mathbf{C}^{(\lambda)}_{0}(\sigma_{n}) \big| \Big] + \tilde{\mathbb{E}}_{\tilde{s}}^{(\lambda)}\big[ \big|   \mathbf{m}_{R_{1}}\big| ^{2} \big]^{\frac{1}{2}} + \tilde{\mathbb{E}}_{\tilde{s}}^{(\lambda)}\big[ \big|  \mathbf{m}_{R_{1}}'\big|^{2}   \big]^{\frac{1}{2}},
\end{multline}
where we have also applied Jensen's inequality to the last two terms.  The first three terms on the right side and $\tilde{\mathbb{E}}_{\tilde{s}}^{(\lambda)}\big[    |\mathbf{C}^{(\lambda)}_{0}(\sigma_{0})| \big]=|\mathbf{C}^{(\lambda)}_{0}(s)|$ (the $n=0$ summand from the fourth term on the right) are bounded by $c\sup_{x}|\frac{dV}{dx}(x)|$, where $c>0$ is from Lem.~\ref{LemPartTime}.  This follows for the first term, for instance, since
$$\tilde{\mathbb{E}}_{\tilde{s}}^{(\lambda)}\Big[\Big|   \int_{0}^{\tau_{1}} dr\frac{dV}{dx}(X_{r}) \Big|\Big]\leq \sup_{x}\big|\frac{dV}{dx}(x)\big| \tilde{\mathbb{E}}_{\tilde{s}}^{(\lambda)}\big[ \tau_{1} \big] \leq c \sup_{x}\big|\frac{dV}{dx}(x)\big|,  $$
where the second inequality is by Lem.~\ref{LemPartTime}.

 Since $\mathbf{m}_{t}'$ is a martingale, we have the first equality below
$$
\tilde{\mathbb{E}}_{\tilde{s}}^{(\lambda)}\big[ \big|   \mathbf{m}_{R_{1}}' \big|  ^{2} \big]= \tilde{\mathbb{E}}_{\tilde{s}}^{(\lambda)}\big[    \langle \mathbf{m}'\rangle_{R_{1}}  \big]  = \tilde{\mathbb{E}}_{\tilde{s}}^{(\lambda)}\Big[    \sum_{n=1}^{\lfloor\frac{1}{2} \tilde{n}_{1} \rfloor }\mathbf{C}^{(\lambda)}_{1}(\sigma_{2n})  \Big]\leq \tilde{\mathbb{E}}_{\tilde{s}}^{(\lambda)}\Big[    \sum_{n=1}^{ \tilde{n}_{1}}\mathbf{C}^{(\lambda)}_{1}(\sigma_{n})  \Big] .   
$$
A similar calculation holds for the term $\tilde{\mathbb{E}}_{\tilde{s}}^{(\lambda)}\big[ \big|   \mathbf{m}_{R_{1}}\big| ^{2} \big]$.  With the above remarks,
\begin{align*}
\tilde{\mathbb{E}}_{\tilde{s}}^{(\lambda)}\Big[ \Big|\int_{0}^{R_{1}}dr\frac{dV}{dx}(X_{r})\Big|\Big] \leq & 4c\sup_{x}\big|\frac{dV}{dx}(x)\big| +\tilde{\mathbb{E}}_{\tilde{s}}^{(\lambda)}\Big[ \sum_{n=1}^{\tilde{n}_{1}  }\big| \mathbf{C}^{(\lambda)}_{0}(\sigma_{n})\big|   \Big] +  2\tilde{\mathbb{E}}_{\tilde{s}}^{(\lambda)}\Big[    \sum_{n=1}^{\tilde{n}_{1} }\mathbf{C}^{(\lambda)}_{1}(\sigma_{n})  \Big]^{\frac{1}{2}}\\ \leq & 4c \sup_{x}\big|\frac{dV}{dx}(x)\big| +c\tilde{\mathbb{E}}_{\tilde{\delta}_{s} }^{(\lambda)}\Big[ \sum_{n=1}^{\tilde{n}_{1}  }\big| \mathbf{C}^{(\lambda)}_{0}(\sigma_{n})\big|   \Big] +  2c^{\frac{1}{2}} \tilde{\mathbb{E}}_{\tilde{\delta}_{s}}^{(\lambda)}\Big[    \sum_{n=1}^{\tilde{n}_{1} }\mathbf{C}^{(\lambda)}_{1}(\sigma_{n})  \Big]^{\frac{1}{2}} \\ \leq & 4c \sup_{x}\big|\frac{dV}{dx}(x)\big| +c\big(U^{(\lambda)}\mathbf{C}^{(\lambda)}_{0}\big)(s)+c\sup_{s\in \supp(h) } \big(U^{(\lambda)}\mathbf{C}^{(\lambda)}_{0}\big)(s)\\ &+ 2c^{\frac{1}{2}}\Big( \big(U^{(\lambda)}\mathbf{C}^{(\lambda)}_{1}\big)(s)+\sup_{s\in \supp(h) } \big(U^{(\lambda)}\mathbf{C}^{(\lambda)}_{1}\big)(s)\Big)^{\frac{1}{2}},
\end{align*} 
where the map $U^{(\lambda)} :L^{\infty}(\Sigma)\rightarrow L^{\infty}(\Sigma)$ was defined in~(\ref{Rockies}).  The second inequality uses that $\tilde{\mathbb{E}}_{\tilde{\delta}_{s} }$ is defined to be $(1-h(s))\tilde{\mathbb{E}}_{(s,0)} +h(s)\tilde{\mathbb{E}}_{(s,1)}    $.  The third inequality follows by Lem.~\ref{ResolventToResolvent}.  Applying  Part (1) from Lem.~\ref{TheCs} for $g_{\lambda}(s)= \mathbf{C}^{(\lambda)}_{0}(s)$ and  $g_{\lambda}(s)= \mathbf{C}^{(\lambda)}_{1}(s)$ in combination with  Lem.~\ref{LifeOperatorTwo}, we obtain a logarithmic bound in $|p|$ for the right side.

\end{proof}

\section{Bounds for the cumulative potential forcing}\label{SecDrift}

In this section we prove Thm.~\ref{LemNullDrift}.

\subsection{The martingale approximating the potential drift process }\label{SecMartDrift}

As discussed in Sect.~\ref{SecProofStrat}, we will  approximate the potential drift $D_{t}$ by the process 
 $$ \tilde{M}_{t}:= \sum_{n=1}^{ \tilde{N}_{t} }\Big(\int_{R_{n}}^{R_{n+1}}dr\frac{dV}{dx}(X_{r})-\big(\frak{R}^{(\lambda)}\frac{dV}{dx}\big)( S_{R_{n}} ) +\big(\frak{R}^{(\lambda)}\frac{dV}{dx}\big)( S_{R_{n+1}} ) \Big).    $$
Part (1) of Prop.~\ref{FurtherNum} and the strong Markov property imply that the increments of $\tilde{M}_{t}$ have finite moments.  The lemma below states that $\tilde{M}_{t}$ is a martingale and gives a closed form for its predictable quadratic variation.  

\begin{lemma}\label{LemKeyMart}
 The process  $\tilde{M}_{t}$ is a martingale with respect to the filtration $\tilde{\mathcal{F}}_{t}'$.    Moreover, the predictable quadratic variation $\langle \tilde{M}\rangle_{t}$ has the form 
$$
   \langle\tilde{M}\rangle_{t}= \sum_{n=1}^{\tilde{N}_{t}}\check{\upsilon}_{\lambda}\big(S_{R_{n}}   \big),
$$
where $\check{\upsilon}_{\lambda}:\Sigma\rightarrow \R^{+}$ is defined as
 \begin{align*} \check{\upsilon}_{\lambda}\big(s  \big)=& 2\tilde{\mathbb{E}}^{(\lambda)}_{\tilde{\delta}_{s} }\Big[\int_{0}^{R_{1}}dr\frac{dV}{dx}(X_{r})\big(\frak{R}^{(\lambda)}\frac{dV}{dx}\big)(S_{r} ) \Big]\\ &+\int_{\Sigma}d\nu (s')\Big(\big(\frak{R}^{(\lambda)}\frac{dV}{dx}\big)( s' )\Big)^{2}
  -\Big( \big(\frak{R}^{(\lambda)}\frac{dV}{dx}\big)( s )    \Big)^{2}.  
 \end{align*}
In the above, $\tilde{\delta}_{s}$ is the splitting of the $\delta$-distribution at $s\in \Sigma$.

\end{lemma}

\begin{proof}   Recall that for a partition time $\mathbf{t}$,   $\tilde{\mathcal{F}}_{\mathbf{t}^{-}}$ refers to the $\sigma$-algebra containing all information before time $\mathbf{t}$  along with  the additional information that  $\mathbf{t}$ is a partition time.  
 The jump times $R_{n}'$ for the process $\tilde{M}_{t}$ are predictable with respect to the filtration  $\tilde{\mathcal{F}}_{t}' $, although we will show that the values of the jumps still have mean zero with respect to the information known before the time of the jump.   We can rewrite the martingale as 
$$ \tilde{M}_{t}= \sum_{n=1}^{ \tilde{N}_{t} }\int_{R_{n}}^{R_{n+1}}dr\frac{dV}{dx}(X_{r})-\tilde{\mathbb{E}}^{(\lambda)}\Big[ \int_{R_{n}}^{R_{n+1}}dr\frac{dV}{dx}(X_{r})   \,\Big|\,\tilde{\mathcal{F}}_{R_{n}^{-}}\Big]  +\tilde{\mathbb{E}}^{(\lambda)}\Big[ \int_{R_{n+1}}^{R_{n+2}}dr\frac{dV}{dx}(X_{r})   \,\Big|\,\tilde{\mathcal{F}}_{R_{n+1}^{-}}\Big]$$
since, by applying Part (3) of Prop.~\ref{BasicsOfNum} and the strong Markov property at time $R_{n}$, we have the first equality below
\begin{align}\label{HouseFly}
\tilde{\mathbb{E}}^{(\lambda)}\Big[ \int_{R_{n}}^{R_{n+1}}dr\frac{dV}{dx}(X_{r})\,\Big|\,\tilde{\mathcal{F}}_{R_{n}^{-}}   \Big]=\tilde{\mathbb{E}}_{ \tilde{\delta}_{S_{R_{n}} } }^{(\lambda)}\Big[ \int_{0}^{R_{1}}dr\frac{dV}{dx}(X_{r})   \Big]=\big(\frak{R}^{(\lambda)}\frac{dV}{dx}\big)(S_{R_{n}})+c,
\end{align}
and the second equality in~(\ref{HouseFly}) is for some $c\in \R$   by Part (3) of Prop.~\ref{BasicsOfNumII}.

For fixed $t\in \R^{+}$ knowledge of whether or not the event $t<R_{n}'$ occurred will be contained in the $\sigma$-algebra $\tilde{\mathcal{F}}_{t}' $ for each $n\in \mathbb{N}$.   The jumps of $\tilde{M}_{t}$ have mean zero since for any $n\in \mathbb{N}$ such that $t<R_{n}'$ the conditional expectation of the  $n$th jump given $\tilde{\mathcal{F}}_{t}'$ is
\begin{align*}
\tilde{\mathbb{E}}^{(\lambda)}\Bigg[\int_{R_{n}}^{R_{n+1}}dr\frac{dV}{dx}(X_{r})-&\tilde{\mathbb{E}}^{(\lambda)}\Big[ \int_{R_{n}}^{R_{n+1}}dr\frac{dV}{dx}(X_{r})   \,\Big|\,\tilde{\mathcal{F}}_{R_{n}^{-}}\Big]  +\tilde{\mathbb{E}}^{(\lambda)}\Big[ \int_{R_{n+1}}^{R_{n+2}}dr\frac{dV}{dx}(X_{r}) \,\Big|\,\tilde{\mathcal{F}}_{R_{n+1}^{-}}  \Big]\,\Bigg|\,\tilde{\mathcal{F}}_{t}' \Bigg] \\ &  =\tilde{\mathbb{E}}_{ \tilde{\nu} }^{(\lambda)}\Big[ \int_{0}^{R_{1}}dr\frac{dV}{dx}(X_{r})   \Big]=0 .
\end{align*}
To get  the first equality above, we can insert a nested conditional expectation with respect to $\tilde{\mathcal{F}}_{R_{n}^{-}}$  since  $\tilde{\mathcal{F}}_{t}'\subset \tilde{\mathcal{F}}_{R_{n}^{-}}$, and the first two terms in the expectation then cancel.  For the third term in the first equality, we use the strong Markov property at the time $R_{n+1}$ and that $\tilde{S}_{R_{n+1}}$ has distribution $\tilde{\nu}$ when conditioned on $\tilde{\mathcal{F}}_{R_{n}}$ (and thus when conditioned on $\tilde{\mathcal{F}}_{t}'$)  by Part (1) of Prop.~\ref{IndependenceProp}.  The second equality is by Part (2) of  Prop.~\ref{BasicsOfNumII} with $g(x,p)=\frac{dV}{dx}(x)$ since $\Psi_{\infty,\lambda}(\frac{dV}{dx})=0$.  Thus $\tilde{M}_{t}$ is a martingale.  

For $ t\in [R_{n-1}',R_{n}') $ the  $\sigma$-algebra $\tilde{\mathcal{F}}_{t}'$
 contains all information before time $R_{n}$, i.e., $\tilde{\mathcal{F}}_{t}'=\tilde{\mathcal{F}}_{R_{n}^{-} }$, along with knowledge of the value $R_{n}$.  The predictable quadratic variation $\langle \tilde{M}\rangle_{t}$ must have the form of a discrete sum over $\sum_{n=1}^{\tilde{N}_{t}}$ because the jump times $R_{n}'$ are predictable.  As a consequence of  Part (3) of Prop.~\ref{BasicsOfNum}, the conditional distribution for $\tilde{S}_{R_{n}}$ given $\tilde{\mathcal{F}}_{t}'=\tilde{\mathcal{F}}_{R_{n}^{-} }$ is $\tilde{\delta}_{S_{R_{n}}} $. The variance of the $n$th jump for $\tilde{M}_{t}$  conditioned on $\tilde{\mathcal{F}}_{t}'$  is $\check{\upsilon}_{\lambda}(S_{R_{n}}) $, where
\begin{align*}
\check{\upsilon}_{\lambda}(s)  =&\tilde{\mathbb{E}}_{\tilde{\delta}_{s} }^{(\lambda)}\Big[\Big(\int_{0}^{R_{1}}dr\frac{dV}{dx}(X_{r})-\tilde{\mathbb{E}}_{ \tilde{\delta}_{s} }^{(\lambda)}\Big[ \int_{0}^{R_{1}}dr\frac{dV}{dx}(X_{r})   \Big] +\tilde{\mathbb{E}}_{ \tilde{\delta}_{S_{R_{1}} } }^{(\lambda)}\Big[ \int_{0}^{R_{1}}dr\frac{dV}{dx}(X_{r})   \Big]\Big)^{2}\Big]\\ = & 2\tilde{\mathbb{E}}^{(\lambda)}_{\tilde{\delta}_{s}}\Big[\int_{0}^{R_{1}}dr\frac{dV}{dx}(X_{r})\int_{r}^{R_{2}}dr'\frac{dV}{dx}(X_{r'} )\Big]+\int_{\Sigma}d\nu(s')\Big(\tilde{\mathbb{E}}_{\tilde{\delta}_{s'}}^{(\lambda)}\Big[  \int_{0}^{R_{1}}dr\frac{dV}{dx}(X_{r})        \Big]\Big)^{2}
 \\ &   -\Big(\tilde{\mathbb{E}}_{\tilde{\delta}_{s} }^{(\lambda)}\Big[  \int_{0}^{R_{1}}dr\frac{dV}{dx}(X_{r})        \Big]\Big)^{2}. 
\end{align*}
This expression for $\check{\upsilon}_{\lambda}(s)$ can be written in terms of $ \big( \frak{R}^{(\lambda)}\,\frac{dV}{dx}\big)(s)$ as done in the statement of this lemma by using that  $ \big( \frak{R}^{(\lambda)}\,\frac{dV}{dx}\big)(s)=\tilde{\mathbb{E}}_{ \tilde{\delta}_{s} }^{(\lambda)}\big[ \int_{0}^{R_{1}}dr\frac{dV}{dx}(X_{r})   \big]-c$ and the same reasoning in the proof of Part (4) of Prop.~\ref{BasicsOfNumII}.    
\end{proof}

 Define  $\upsilon_{\lambda}:=\int_{\Sigma}d\nu(s) \check{\upsilon}_{\lambda}(s)$.
\begin{lemma}\label{LemUpsilon}
The value $\upsilon_{\lambda}\in \R^{+}$ is uniformly bounded for $\lambda<1$, and $\upsilon_{\lambda}$ depends continuously on the parameter $\lambda$.
\end{lemma}

\begin{proof}
By the proof of Prop.~\ref{BasicsOfNumII}, the value $\upsilon_{\lambda}\in \R^{+}$ can be written as
\begin{align}\label{Lionel}
\upsilon_{\lambda}=&\tilde{\mathbb{E}}_{\tilde{\nu}}^{ (\lambda)}   \Big[  \int_{0}^{R_{1}}dr \frac{dV}{dx}(X_{r})  \int_{r}^{R_{2}}dr'\frac{dV}{dx}(X_{r'})
    \Big] \nonumber   \\  =&\tilde{\mathbb{E}}_{\tilde{\nu}}^{ (\lambda)}   \Big[  \int_{0}^{R_{1}}dr \frac{dV}{dx}(X_{r})  \int_{r}^{R_{1}}dr'\frac{dV}{dx}(X_{r'})
    \Big]+ \tilde{\mathbb{E}}_{\tilde{\nu}}^{ (\lambda)}   \Big[  \int_{0}^{R_{1}}dr \frac{dV}{dx}(X_{r})  \int_{R_{1}}^{R_{2}}dr'\frac{dV}{dx}(X_{r'})
    \Big] \nonumber \\
 \leq  &  \frac{3}{2}\tilde{\mathbb{E}}_{\tilde{\nu}}^{ (\lambda)}   \Big[  \sup_{0\leq t \leq R_{1}}\Big(\int_{0}^{t}dr \frac{dV}{dx}(X_{r}) \Big)^{2}
    \Big]  +\frac{1}{2} \tilde{\mathbb{E}}_{\tilde{\nu}}^{ (\lambda)}   \Big[  \Big( \int_{R_{1}}^{R_{2}}dr'\frac{dV}{dx}(X_{r'})
\Big)^{2}    \Big] \nonumber \\  =  &  2\tilde{\mathbb{E}}_{\tilde{\nu}}^{ (\lambda)}   \Big[  \sup_{0\leq t \leq R_{1}}\Big(\int_{0}^{t}dr \frac{dV}{dx}(X_{r}) \Big)^{2}
    \Big]  ,
\end{align}
where the inequality applies the relation $2ab\leq  a^{2}+b^{2}$ for $a= \int_{0}^{R_{1}}dr \frac{dV}{dx}(X_{r}) $ and $b=\int_{R_{1}}^{R_{2}}dr'\frac{dV}{dx}(X_{r'})$,       and the third equality uses the strong Markov property and the fact that $\tilde{S}_{R_{1}}$ has distribution $\tilde{\nu}$ by Part (1) of Prop.~\ref{IndependenceProp}.  The right side of~(\ref{Lionel}) is uniformly bounded for $\lambda<1$ by Part (1) of Prop.~\ref{FurtherNum}, and  hence the values $\upsilon_{\lambda}$ are uniformly bounded for small $\lambda$.   By the same reasoning as above, the second moment of the random variable $Y^{(\lambda)}:=\int_{0}^{R_{1}}dr \frac{dV}{dx}(X_{r})  \int_{r}^{R_{2}}dr'\frac{dV}{dx}(X_{r'})|_{\tilde{\mathbb{P}}_{\tilde{\nu}}^{ (\lambda)}}$  is uniformly bounded for the  parameter values $\lambda<1$.  To  clarify, ``$Y^{(\lambda)}$" refers to the stated random variable with respect to the parameter value $\lambda\in \R^{+}$ for the underlying statistics.     With the uniform bound on the second moment,  it is  sufficient that the distribution for the random variable $Y^{(\lambda)}$ is weakly continuous as a function of $\lambda\in \R^{+}$ to guarantee that   $\upsilon_{\lambda}$ is continuous.  If the random times $R_{1},R_{2}$ had deterministic upper bounds, then the weak convergence would be clear. By introducing a time cut-off $t>0$, the random variable $Y_{t}^{(\lambda)}:=\int_{0}^{R_{1}\wedge t}dr \frac{dV}{dx}(X_{r})  \int_{r}^{R_{2}\wedge t}dr'\frac{dV}{dx}(X_{r'})|_{\tilde{\mathbb{P}}_{\tilde{\nu}}^{ (\lambda)}}$ is weakly continuous as a function of $\lambda$.    The random variables $Y_{t}^{(\lambda)}$ converge in probability uniformly to $Y^{(\lambda)}$ as $t\rightarrow \infty$ for $\lambda<1$ since $R_{2}$ has uniform fractional moments $\tilde{\mathbb{E}}^{(\lambda)}_{\tilde{\nu}}[R_{2}^{\frac{1}{4}}]$ for $\lambda<1$ by  Lem.~\ref{FracMoment}.  For $\lambda_{n}\rightarrow \lambda<1$, the following diagram characterizes the convergences involved: 
     $$  \renewcommand{\arraystretch}{1.5} \begin{array}[c]{ccc}  Y_{t}^{(\lambda_{n})} & \stackrel{w}{\Longrightarrow   }  &   Y_{t}^{(\lambda)}\\   \Big \downarrow\scriptstyle{ \frak{P}}&  &  \Big \downarrow\scriptstyle{\frak{P}}\\      Y^{(\lambda_{n})} &\Longrightarrow  &   Y^{(\lambda)}
\end{array},
$$
where the downward arrows signify a uniform sense for the convergence in probability as $t\rightarrow \infty$.  The above convergences imply that $Y^{(\lambda_{n})} $ converges weakly to $ Y^{(\lambda)}$ as $n\rightarrow \infty$.

\end{proof}

The following lemma relates the predictable quadratic variation of $ \tilde{M}_{t}$ to the counting process $\tilde{N}_{t}$ and is somewhat stronger than we require.

\begin{lemma}\label{LocalTimeBndII} As $\lambda\searrow 0$ the following order equality holds:
$$ \tilde{\mathbb{E}}^{(\lambda)}\Big[\sup_{0\leq t\leq T}\Big|\lambda^{\frac{1}{2}} \langle \tilde{M}\rangle_{\frac{t}{\lambda}}    -\lambda^{\frac{1}{2} }\upsilon_{\lambda} \tilde{N}_{\frac{t}{\lambda}}\Big|      \Big]=\mathit{O}(\lambda^{\frac{1}{4}}).     $$
Also, for any $t\in \R^{+}$ the following expectations are equal: 
 $\tilde{\mathbb{E}}^{(\lambda)}\big[\langle \tilde{M}\rangle_{t}  \big] =\upsilon_{\lambda}\tilde{\mathbb{E}}^{(\lambda)}\big[ \tilde{N}_{t}\big]$.

\end{lemma}

\begin{proof}
We only prove the equality $\tilde{\mathbb{E}}^{(\lambda)}\big[\langle \tilde{M}\rangle_{t}  \big] =\upsilon_{\lambda}\tilde{\mathbb{E}}^{(\lambda)}\big[ \tilde{N}_{t}\big]$ here, which is what we use in the proof of Thm.~\ref{LemNullDrift}.  The remainder of the proof is placed in Sect.~\ref{SecProofMartDrift}.    Lemma~\ref{LemKeyMart} yields the first equality below:
\begin{align*}
\tilde{\mathbb{E}}^{(\lambda)}\big[\langle \tilde{M}\rangle_{t}\big]=\tilde{\mathbb{E}}^{(\lambda)}\Big[ \sum_{n=1}^{\tilde{N}_{t}} \check{\upsilon}_{\lambda}(S_{R_{n}}) \Big]=\tilde{\mathbb{E}}^{(\lambda)}\Big[ \sum_{n=1}^{\tilde{N}_{t}} \mathbb{E}\big[\check{\upsilon}_{\lambda}(S_{R_{n}})\,\big|\,\tilde{\mathcal{F}}_{R_{n}'} \big]\Big]=\upsilon_{\lambda}\tilde{\mathbb{E}}^{(\lambda)}\big[ \tilde{N}_{t}\big].
\end{align*}
The third equality holds since $S_{R_{n}}$ has distribution $\nu$ when conditioned on $\tilde{\mathcal{F}}_{R_{n}'}$ by  Part (1) of Prop.~\ref{IndependenceProp}.

\end{proof}

\subsection{Proof of Thm.~\ref{LemNullDrift}}

\begin{proof}[Proof of Thm~\ref{LemNullDrift}]
  We will work with the split dynamics to show that
$$\mathbb{E}^{(\lambda)}\Big[\sup_{t\in [0,\frac{T}{\lambda}]}|D_{t}|\Big]=\tilde{\mathbb{E}}^{(\lambda)}\Big[\sup_{t\in [0,\frac{T}{\lambda}]}|D_{t}|\Big]=\mathit{O}( \lambda^{-\frac{1}{4}}).  $$
Let  $\tilde{M}_{t}$ be the martingale from Lem.~\ref{LemKeyMart}.  We can write $D_{t}$ as
\begin{align}
D_{t}= & \int_{0}^{R_{1}}dr\frac{dV}{dx}(X_{r})-\int_{t}^{R_{\tilde{N}_{t}+1} }dr\frac{dV}{dx}(X_{r}) \nonumber \\ &+\tilde{\mathbb{E}}_{ \tilde{S}_{R_{1}} }\Big[ \int_{0}^{R_{1}}dr\frac{dV}{dx}(X_{r})   \Big] -\tilde{\mathbb{E}}_{ \tilde{S}_{R_{\tilde{N}_{t}+1}} }\Big[ \int_{0}^{R_{1}}dr\frac{dV}{dx}(X_{r})   \Big]+ \tilde{M}_{t}   ,   
\end{align}
where we have used Part (3) of Prop.~\ref{BasicsOfNumII}.
The triangle inequality  gives
\begin{align}\label{FirstLast}
\mathbb{E}^{(\lambda)}\Big[\sup_{t\in [0,\frac{T}{\lambda}]}|D_{t}|\Big]\leq & 2\tilde{\mathbb{E}}^{(\lambda)}\Big[\Big|\int_{0}^{R_{1}}dr\frac{dV}{dx}(X_{r})\big|\Big]\nonumber  \\ &+3\tilde{\mathbb{E}}^{(\lambda)}\Big[\sup_{R_{1}\leq t\leq \frac{T}{\lambda} }\Big|\int_{t}^{R_{\tilde{N}_{t}+1 } }dr\frac{dV}{dx}(X_{r})\Big| \Big] +\mathbb{E}^{(\lambda)}\Big[\sup_{t\in [0,\frac{T}{\lambda}]}|\tilde{M}_{t}|\Big].
\end{align}

 For the first term on the right side of~(\ref{FirstLast}), 
we can apply Part (2) of Prop.~\ref{FurtherNum} to get
\begin{align}
\tilde{\mathbb{E}}^{(\lambda)}\Big[\Big|\int_{0}^{R_{1}}dr\frac{dV}{dx}(X_{r})\big|\Big]&\leq \int d\tilde{\mu}(x,p,z) \tilde{\mathbb{E}}_{(x,p,z) }^{(\lambda)}\Big[\Big|\int_{0}^{R_{1}}dr\frac{dV}{dx}(X_{r})\Big|\Big]\nonumber\\ &\leq C\int_{\R}d\mu(x,p)\big(1+\log(1+|p|)\big)<   C\int_{\R}d\mu(x,p)\big(1+|p|\big),   \label{Qaddafi}
\end{align}
where $\mu$ is the initial measure on $\Sigma$ and $\tilde{\mu}$ is its splitting.  By our assumption on $\mu$, the first moment is finite, and hence~(\ref{Qaddafi}) is finite.

The second term on the right side of~(\ref{FirstLast}) is less simple since it depends on $t$.  Writing  $$\int_{t}^{R_{\tilde{N}_{t+1}} }dr\frac{dV}{dx}(X_{r})=\int_{R_{\tilde{N}_{t} } }^{R_{\tilde{N}_{t} +1} }dr\frac{dV}{dx}(X_{r})   -\int_{R_{\tilde{N}_{t} } }^{t}dr\frac{dV}{dx}(X_{r}),$$ then with the triangle inequality     
\begin{align}
\tilde{\mathbb{E}}^{(\lambda)}\Big[\sup_{0\leq t\leq \frac{T}{\lambda} }\Big|\int_{t}^{R_{\tilde{N}_{t}+1 } }dr\frac{dV}{dx}(X_{r})\Big| \Big]&\leq  \tilde{\mathbb{E}}^{(\lambda)}\Big[\sup_{0\leq t\leq \frac{T}{\lambda} }\Big(\Big|\int_{R_{\tilde{N}_{t}} }^{t}dr\frac{dV}{dx}(X_{r})\Big|+\Big|\int_{R_{\tilde{N}_{t}} }^{R_{\tilde{N}_{t}+1}}dr\frac{dV}{dx}(X_{r})\Big| \Big)\Big] \nonumber\\  &\leq 2 \tilde{\mathbb{E}}^{(\lambda)}\Big[\sup_{1\leq m\leq \tilde{N}_{t}} \sup_{t\in [R_{m},R_{m+1}]}   \Big|\int_{R_{m} }^{t   }dr\frac{dV}{dx}(X_{r})\Big| \Big].\label{Chavez}
\end{align}
We can get rid of the first supremum with the inequality
\begin{align*}
\tilde{\mathbb{E}}^{(\lambda)}\Big[\sup_{1\leq m\leq \tilde{N}_{\frac{T}{\lambda} }} \sup_{t\in [R_{m},R_{m+1}]}   \Big|\int_{R_{m} }^{t  }dr\frac{dV}{dx}(X_{r})\Big| \Big]&\leq \tilde{\mathbb{E}}^{(\lambda)}\Big[\sum_{m=1}^{\tilde{N}_{\frac{T}{\lambda} }} \sup_{t\in [R_{m},R_{m+1}]}   \Big|\int_{R_{m} }^{t  }dr\frac{dV}{dx}(X_{r})\Big|^{2} \Big]^{\frac{1}{2}}\nonumber\\  &= \tilde{\mathbb{E}}^{(\lambda)}\Big[\sum_{m=1}^{\tilde{N}_{\frac{T}{\lambda} }}\tilde{\mathbb{E}}^{(\lambda)}\Big[ \sup_{t\in [R_{m},R_{m+1}]}   \Big|\int_{R_{m} }^{t  }dr\frac{dV}{dx}(X_{r})\Big|^{2}\,\Big|\,\tilde{\mathcal{F}}_{R_{m}'}  \Big] \Big]^{\frac{1}{2}}\nonumber  \\ &=  \tilde{\mathbb{E}}^{(\lambda)}[\tilde{N}_{\frac{T}{\lambda} }]^{\frac{1}{2}}  \tilde{\mathbb{E}}_{\tilde{\nu}}^{(\lambda)}\Big[ \sup_{t\in [0,R_{1}]}   \Big|\int_{0 }^{t   }dr\frac{dV}{dx}(X_{r})\Big|^{2} \Big]^{\frac{1}{2}}\nonumber \\ &=\mathit{O}(\lambda^{-\frac{1}{4}}).
\end{align*}
In the second equality, we have applied the strong Markov property at the times $R_{m}'$ and have used that $\tilde{S}_{R_{m}}$ is distributed as $\tilde{\nu}$ given $\tilde{\mathcal{F}}_{R_{m}'}$ by Part (1) of Prop.~\ref{IndependenceProp}.  By Part (1) of Prop.~\ref{FurtherNum}, the expectation $\tilde{\mathbb{E}}_{\tilde{\nu}}^{(\lambda)}$ above is bounded uniformly for $\lambda<1$.  The expectation $\tilde{\mathbb{E}}^{(\lambda)}[\tilde{N}_{\frac{T}{\lambda} }]$ is $\mathit{O}(\lambda^{-\frac{1}{2}})$ by Lem.~\ref{LocalTimeBnd}.

Now for the last term in~(\ref{FirstLast}). Applying Jensen's and then Doob's maximal inequality, we get the first two relations below:
\begin{align}\label{Radish}
\tilde{\mathbb{E}}^{(\lambda)}\Big[ \sup_{0\leq t\leq \frac{T}{\lambda} }\big|\tilde{M}_{t}\big| \Big] & \leq \tilde{\mathbb{E}}^{(\lambda)}\Big[ \sup_{0\leq t\leq \frac{T}{\lambda} }\big|\tilde{M}_{t}\big|^{2} \Big]^{\frac{1}{2}}\leq 2\tilde{\mathbb{E}}^{(\lambda)}\big[ \big|\tilde{M}_{\frac{T}{\lambda} }\big|^{2} \big]^{\frac{1}{2}}\nonumber \\ &=2\tilde{\mathbb{E}}^{(\lambda)}\big[ \langle\tilde{M}\rangle_{\frac{T}{\lambda} }\big]^{\frac{1}{2}}   =2\upsilon_{\lambda}^{\frac{1}{2}}\tilde{\mathbb{E}}^{(\lambda)}\big[  \tilde{N}_{\frac{T}{\lambda}}  \big]^{\frac{1}{2}}=\mathit{O}(\lambda^{-\frac{1}{4}}).
\end{align}
The second equality is by Lem.~\ref{LocalTimeBndII}, and we again apply Lem.~\ref{LocalTimeBnd} to get $\tilde{\mathbb{E}}^{(\lambda)}\big[  \tilde{N}_{\frac{T}{\lambda}}  \big]=\mathit{O}(\lambda^{-\frac{1}{2}})$. Also,  $\upsilon_\lambda$ is uniformly bounded for $\lambda<1$ by Lem.~\ref{LemUpsilon}.

\end{proof}

\section{Convergence to the Ornstein-Uhlenbeck process}\label{BrownProof}

In this section we will now prove Thm.~\ref{ThmMain}. As a preliminary,   Sect.~\ref{SubSecJumpProc} characterizes the martingale and drift components in the semi-martingale decomposition for the jump process $J_{t}$ defined in~(\ref{ContToMom}).  The main ingredient for the study of $J_{t}$ is the bound in Lem.~\ref{FirstEnergyLem} on the typical energy of the particle obtained over the time interval $[0,\frac{T}{\lambda}]$  for small $\lambda$.  

\subsection{Limiting behavior for the jump process} \label{SubSecJumpProc}

The jump processes $J_{t}$  can be written as a sum of a martingale $M_{t}$ and a predictable part as
$$J_{t}=M_{t}+\int_{0}^{t}dr\mathcal{D}_{\lambda}(P_{r}) \quad  \text{for}\quad         \mathcal{D}_{\lambda}(p)=\int_{\R}dp^{\prime}(p^{\prime}-p) \mcJ_{\lambda}(p,p^{\prime}).$$
In order to write  an expression for the predictable quadratic variation $\langle M\rangle_{t}$ in terms of the jump rates $\mathcal{J}_{\lambda}$, let us define for $m\in\N$
\begin{align}\label{BigPi}
\Pi_{\lambda,m}(p)=\int_{\R}dp^{\prime}(p^{\prime}-p)^{m} \mcJ_{\lambda}(p,p^{\prime}). 
\end{align}
Note that  $\mathcal{E}_{\lambda}(p):=\Pi_{\lambda,0}(p)$ is the escape rate for the jump process when $P_{t}=p$, and $\mathcal{D}_{\lambda}$ is equal to $\Pi_{\lambda ,m}$ for  $m=1$.  We also define $\mathcal{Q}_{\lambda}(p)=\Pi_{\lambda,2}(p)$.  The predictable quadratic variation for $M_{t}$ can then be written  in the closed form
$$ \langle M\rangle_{t}= \int_{0}^{t}dr\mathcal{Q}_\lambda(P_{r}) .   $$

The proposition below collects some facts regarding the functions $\mathcal{D}_{\lambda}(p)$.  

\begin{proposition}\label{PreStuff} \text{  }
There are constants $C, C_m >0$ such that for all $\lambda\leq 1$ and all $p\in\R$,
\begin{enumerate}
\item $\frac{1}{8(\lambda +1)} \leq \mathcal{E}_\lambda(p)\leq \frac{1}{8(\lambda +1)}\left(1+C\lambda |p|\right)$ and $\lambda |p| \leq C\mathcal{E}_\lambda(p),$

\item $ \big|\mathcal{D}_{\lambda}(p)  +\frac{\lambda p}{2}\big|\leq C\lambda^2 (|p|+p^{2})$,

\item $  \left|  \mathcal{Q}_{\lambda}(p)-1 \right| \leq C\lambda+C\lambda|p|+C\lambda^{3}|p|^3$,

\item $ \Pi_{\lambda,2m}(p)\leq C_m(1+\lambda|p|)^{2m+1}  $.  

\end{enumerate}

\end{proposition}

Next we show that the drift rate $\mathcal{D}_{\lambda}(p)$ can be effectively replaced by the linear form $-\frac{\lambda p}{2}$.  Let $P_{t}^{\prime}$ be the solution to the integral equation
\begin{align}\label{LinDrift}
P_{t}^{\prime}=P_{0}-\int_{0}^{t}dr\frac{dV}{dx}( X_{r})-\frac{1}{2}\lambda \int_{0}^{t}drP_{r}^{\prime} +M_{t}.   
\end{align}
Also  let us denote  $P_{t}^{(\lambda)}:=\lambda^{\frac{1}{2}}P_{\frac{t}{\lambda} }^{(\lambda)}$ and $P_{t}^{(\lambda),\prime}:=\lambda^{\frac{1}{2}}P_{\frac{t}{\lambda} }^{\prime}$.

\begin{lemma}\label{ChangeDrift}
 Fix $T>0$.  The difference between $P_{t}^{(\lambda),\prime}$ and $P_{t}^{(\lambda),\prime}$ for small $\lambda>0 $ satisfies
$$ \mathbb{E}^{(\lambda)}\big[\sup_{0\leq t\leq T}\big|P_{t}^{(\lambda),\prime}- P_{t}^{(\lambda)}\big|\big]=\mathit{O}(\lambda^{\frac{1}{2}}). $$

\end{lemma}

\begin{proof}
Since $P_{t}$ satisfies the integral equation
$$ P_{t}=P_{0}-\int_{0}^{t}dr\frac{dV}{dx}(X_{r})+ \int_{0}^{t}dr \mathcal{D}_{\lambda}(P_{r}) +M_{t},  $$
and $P_{t}'$ satisfies~(\ref{LinDrift}),  we have the identity
$$P_{t}^{(\lambda),\prime}- P_{t}^{(\lambda)}= \int_{0}^{t}dr R_{\lambda}(P_{r}^{(\lambda) })-\frac{1}{2}\int_{0}^{t}dr e^{-\frac{1}{2}(t-r)}\int_{0}^{r}ds R_{\lambda}(P_{s}^{(\lambda) })=\int_{0}^{t} dre^{-\frac{1}{2}(t-r)} R_{\lambda}(P_{r}^{(\lambda) }) , $$
where $R_{\lambda}(p)= -\lambda^{-\frac{1}{2}}D_{\lambda}(\frac{p}{\lambda^{\frac{1}{2}}})-\frac{1}{2} p$. Thus, we have the first inequality below:
\begin{align*}
\mathbb{E}^{(\lambda)}\Big[\sup_{0\leq t\leq T}\big|P_{t}^{(\lambda),\prime}- P_{t}^{(\lambda)}\big|\Big] & \leq 2(1-e^{-\frac{1}{2}T }) \mathbb{E}^{(\lambda)}\Big[\sup_{0\leq t\leq T} \big|R_{\lambda}(P_{t}^{(\lambda)} )   \big| \Big]\\ & \leq  2C\lambda^{\frac{3}{2}}+2C\lambda^{\frac{1}{2}} \mathbb{E}^{(\lambda)}\Big[\sup_{0\leq t\leq T} \big| \lambda^{\frac{1}{2}}P_{\frac{t}{\lambda}}     \big|^{2} \Big]\\ & =\mathit{O}(\lambda^{\frac{1}{2}}).
\end{align*}
The second inequality follows by Part (2) of Prop.~\ref{PreStuff}.  By bounding $|P_{t}|\leq (2H_{t})^{\frac{1}{2}}$ and applying Lem.~\ref{FirstEnergyLem}, the expectation on the second line is uniformly bounded for $\lambda<1$.  Thus, the above is $\mathit{O}(\lambda^{\frac{1}{2}})$.

\end{proof}

The following lemma gives a central limit theorem for the martingale $M_{t}^{(\lambda)}=\lambda^{\frac{1}{2}}M_{\frac{t}{\lambda}}$.

\begin{lemma}\label{MartPart}
As $\lambda\searrow 0$ the martingale $M^{(\lambda)}_{t}=\lambda^{\frac{1}{2}} M_{\frac{t}{\lambda}}$ converges in law with respect to the uniform metric to a standard Brownian motion $\mathbf{B}$ over the interval $t\in [0,T]$.

\end{lemma}

\begin{proof}
To prove the central limit theorem, we prove the following:
\begin{enumerate}[(i).]
\item  For each $t\in \R^{+}$, the predictable quadratic variation process $\langle M^{(\lambda)}\rangle _{t}$ converges in probability to $t$ as $\lambda\searrow 0$.   

\item  For any $\epsilon>0$,  then as $\lambda\rightarrow 0$
$$\mathbb{P}^{(\lambda)}\Big[\sup_{0\leq r\leq  \frac{T}{\lambda} }\big(M_{r}-M_{r-}\big)^{2}> \frac{\epsilon}{\lambda}    \Big]\longrightarrow 0.    $$

\end{enumerate}
By~\cite[Thm. VIII.2.13]{Pollard} (i) and (ii) are sufficient to prove that $M^{(\lambda)}_{t}$ converges in law to a Brownian motion.     

 \vspace{.25cm} 

\noindent (i).\hspace{.1cm} We prove a somewhat stronger statement.  Note that
$$\langle M^{(\lambda)}\rangle_{t}-t= \lambda \int_{0}^{\frac{t}{\lambda}}dr\big(\mathcal{Q}_{\lambda}(P_{r})-1\big).     $$
For the expectation of the supremum of the difference between $\langle M^{(\lambda)}\rangle_{t}$ and $t$ over the interval $[0,T]$, we have
\begin{align*}
 \mathbb{E}^{(\lambda)}\Big[\sup_{t\in[0,T]} \Big| \langle M^{(\lambda)}\rangle_{t}-t   \Big|\Big]& \leq \lambda \mathbb{E}^{(\lambda)}\Big[ \int_{0}^{\frac{T}{\lambda}}dr\big|\mathcal{Q}_{\lambda}(P_{r})-1\big|\Big]\\   
 & \leq CT\lambda^{\frac{1}{2}}\Big(\lambda^{\frac{1}{2}}+  \mathbb{E}^{(\lambda)}\Big[ \sup_{0\leq r\leq \frac{T}{\lambda} } \lambda^{\frac{1}{2}} |P_{r}|\Big]+ \lambda\mathbb{E}^{(\lambda)}\Big[ \sup_{0\leq r\leq \frac{T}{\lambda} } \lambda^{\frac{3}{2}} |P_{r}|^{3}\Big]\Big)\\ &=\mathit{O}(\lambda^{\frac{1}{2}}),
\end{align*}
where the second inequality is for some $C>0$ by Part (3) of Prop.~\ref{PreStuff}.   The expectations in the second line above are bounded uniformly for $\lambda<1$ by  Lem.~\ref{FirstEnergyLem} since $|P_{r}|\leq 2^{\frac{1}{2}}H_{r}^{\frac{1}{2}}$.  The above implies that $\langle M^{(\lambda)}\rangle_{t}$ converges in probability to $t$ as $\lambda\searrow 0$.

\vspace{.5cm}

\noindent (ii).\hspace{.1cm} Recall that $\calN_{t}$ is the number of collisions over the time interval $[0,t]$ and that $t_{1},\dots, t_{\calN_{t}}$ are the corresponding jump times.  The probability has the following bounds:
\begin{align}
\mathbb{P}^{(\lambda)}\Big[\sup_{0\leq r\leq  \frac{T}{\lambda} }\big(M_{r}-M_{r^-}\big)^{2}> \frac{\epsilon}{\lambda}    \Big] &\leq \frac{\lambda}{\epsilon}\mathbb{E}^{(\lambda)}\Big[\Big( \sum_{n=1}^{\calN_{\frac{T}{\lambda} } } \big(M_{t_{n}}-M_{t_{n}^-}\big)^{4}\Big)^{\frac{1}{2}}\Big]\nonumber  \\ & \leq \frac{\lambda}{\epsilon}\mathbb{E}^{(\lambda)}\Big[ \sum_{n=1}^{\calN_{\frac{T}{\lambda} } } \big(M_{t_{n}}-M_{t_{n}^-}\big)^{4}\Big]^{\frac{1}{2}} \nonumber\\  &= \frac{\lambda}{\epsilon}\mathbb{E}^{(\lambda)}\Big[ \sum_{n=1}^{\calN_{\frac{T}{\lambda} } } \mathbb{E}^{(\lambda)}\Big[\big(M_{r}-M_{r^-}\big)^{4}\,\big|\,P_{r^-},\,\calN_{r}=\calN_{r^{-}}+1   \Big]\Big|_{r=t_{n}}\Big]^{\frac{1}{2}}\nonumber\\ &=  \frac{\lambda}{\epsilon}\mathbb{E}^{(\lambda)}\Big[ \sum_{n=1}^{\calN_{\frac{T}{\lambda}} } \frac{\Pi_{\lambda,4}(P_{t_{n}^-}) }{\mathcal{E}_{\lambda}(P_{t_{n}^-})}   \Big]^{\frac{1}{2}}= \frac{\lambda}{\epsilon}\mathbb{E}^{(\lambda)}\Big[ \int_{0}^{\frac{T}{\lambda} }dr \Pi_{\lambda, 4}(P_{r})  \Big]^{\frac{1}{2}}. \label{Gomer}
\end{align}
The second inequality is Jensen's, and the first inequality is Chebyshev's followed by the elementary relation
$$\hspace{2cm} \sup_{1\leq m\leq n} a_{m}\leq  \Big(\sum_{m=1}^{n} a_{m}^{2}\Big)^{\frac{1}{2}},\hspace{1cm} a_{n}\geq 0.  $$
The first equality in~(\ref{Gomer})  holds since the process
$$ \sum_{n=1}^{\calN_{t } }\Big( \big(M_{t_{n}}-M_{t_{n}^-}\big)^{4}- \mathbb{E}^{(\lambda)}\Big[\big(M_{r}-M_{r^-}\big)^{4}\,\big|\,P_{r^-},\,\calN_{r}=\calN_{r^{-}}+1   \Big]\Big|_{r=t_{n}}  \Big) $$
is a mean zero martingale.  The second equality uses that a jump for $M_{r}$ is a jump for $P_{r}$ (since they differ by a continuous process) and  that the conditional expectation for $\big(P_{r}-P_{r^{-}}\big)^{4}$ given the value $P_{r^{-}}$ and the information that $r\in \R^{+}$ is a jump time is given by the ratio of $\Pi_{\lambda,4}(P_{r^-})$ by $\mathcal{E}_{\lambda}(P_{r^{-}})$:
$$ \mathbb{E}^{(\lambda)}\Big[\big(M_{r}-M_{r^{-} }\big)^{4}\,\big|\,P_{r^{-}},\,\calN_{r}=\calN_{r^{-}}+1   \Big]= \mathbb{E}^{(\lambda)}\Big[\big(P_{r}-P_{r^-}\big)^{4}\,\big|\,P_{r^-},\,\calN_{r}=\calN_{r^{-}}+1   \Big]=\frac{\Pi_{\lambda,4}(P_{r^-}) }{\mathcal{E}_{\lambda}(P_{r^-})} . $$
The last equality follows because the jump times $t_{n}$ occur with Poisson rate  $\mathcal{E}_{\lambda}(P_{r})$.

Squaring the right side of~(\ref{Gomer}),
$$
\frac{\lambda^{2}}{\epsilon^{2}}\mathbb{E}^{(\lambda)}\Big[ \int_{0}^{\frac{T}{\lambda} }dr \Pi_{\lambda,4}(P_{r})  \Big]\leq C\frac{\lambda^2}{\epsilon^2}\mathbb{E}^{(\lambda)}\Big[ \int_{0}^{\frac{T}{\lambda} }dr\big(1+\lambda|P_{r}|\big)^{5}  \Big]\leq C\frac{\lambda}{\epsilon^{2}}\mathbb{E}^{(\lambda)}\Big[ \lambda \int_{0}^{\frac{T}{\lambda} }dr\big(1+\lambda 2^{\frac{1}{2}} H_{r}^{\frac{1}{2}} \big)^{5}  \Big],
$$
where we have applied Part (4) of Prop.~\ref{PreStuff} in the first inequality, and the bound $|P_{r}|^{2}\leq 2H_{r}$ for the second. By Lem.~\ref{FirstEnergyLem} the expectation on the right side is uniformly bounded for $\lambda<1$.   Thus, the above  is $\mathit{O}(\lambda)$ and the left side of~(\ref{Gomer}) is $\mathit{O}(\lambda^{\frac{1}{2}})$, which proves the Lindberg condition.

\end{proof}

\subsection{Proof of Thm.~\ref{ThmMain}}\label{SecPreProof}

\begin{proof}[Proof of Thm.~\ref{ThmMain}]

Let $P_{t}^{(\lambda),\prime}$ be defined as in Lem.~\ref{ChangeDrift}.   By Lem.~\ref{ChangeDrift} the difference $P_{t}^{(\lambda)}-P_{t}^{(\lambda),\prime}$ converges to zero with respect to the uniform metric over the interval $t\in [0,T]$ as $\lambda\searrow 0$.  Thus we can work with $P_{t}^{(\lambda),\prime}$ rather than $P_{t}^{(\lambda)}$.  Define the map $\mathcal{G}:L^{\infty} ([0,T])\rightarrow L^{\infty}([0,T])$  given by
$$ \hspace{2cm} \mathcal{G}(h)_{t}=h_{t}-\frac{1}{2}\int_{0}^{t}dr e^{-\frac{1}{2}(t-r)}h_{r},  \hspace{1cm} h\in L^{\infty}([0,T]).    $$
Notice that the solution $\frak{p}_{t}$ to the Langevin equation~(\ref{TheLimit}) has the explicit solution
\begin{align}\label{SolRemark}
 \frak{p}_{t}=\mathcal{G}(\mathbf{B})_{t},   
 \end{align}
where we have assumed $\frak{p}_{0}=0$.  Moreover, the integral equation~(\ref{LinDrift}) for $P_{t}^{(\lambda),\prime}$ admits the closed form 
\begin{align}\label{Florida}
P_{t}^{(\lambda),\prime}= e^{-\frac{1}{2}t}\lambda^{\frac{1}{2} } P_{0}+\mathcal{G}(\lambda^{\frac{1}{2}}D_{\frac{\cdot}{\lambda}})_{t}+   \mathcal{G}(M^{(\lambda)})_{t}.
\end{align}

 By our assumption (2) of List~\ref{Assumptions}, the moment $\mathbb{E}^{(\lambda)}[|P_{0}|]$ is finite, and thus the first term on the right side of~(\ref{Florida}) converges in probability to zero as $\lambda\searrow 0$.  The random variable  $\sup_{0\leq t\leq T}\big|\mathcal{G}(\lambda^{\frac{1}{2}}D_{\frac{\cdot}{\lambda}})_{t}\big|$ converges in probability to zero also because 
$$ \mathbb{E}^{(\lambda)}\Big[\sup_{0\leq t\leq T}\big|\mathcal{G}(\lambda^{\frac{1}{2}}D_{\frac{\cdot}{\lambda}})_{t}\big|\Big]\leq 2\mathbb{E}^{(\lambda)}\Big[\sup_{0\leq t\leq T}\big|\lambda^{\frac{1}{2}}D_{\frac{t}{\lambda}}\big|\Big]=\mathit{O}(\lambda^{\frac{1}{4}}),   $$
where the order equality follows by Thm.~\ref{LemNullDrift}.  By Lem.~\ref{MartPart}  $M_{t}^{(\lambda)}$ converges in law to a standard Brownian motion $ \mathbf{B}$ with respect to the uniform metric.   Since the map $\mathcal{G}$ is continuous with respect to the supremum norm,  $\mathcal{G}(M^{(\lambda)})_{t}$ converges in law to the process $\mathcal{G}( \mathbf{B}    )_{t}$ with respect to the uniform metric.  The process $\mathcal{G}( \mathbf{B}    )_{t}$ is Ornstein-Uhlenbeck, and thus   $P_{t}^{(\lambda),\prime}$ converges in law to the Ornstein-Uhlenbeck process and  $P_{t}^{(\lambda)}$ does also.

\end{proof}

\section{Miscellaneous proofs}\label{SecMiscProof}

\subsection{Proofs for Sect.~\ref{SecNum}}\label{SecNumProofs}

\begin{proof}[Proof of Prop.~\ref{BasicsOfNum}.] \text{  } \\
Part (1):   For $t\in \R^{+}$ let $\big(\mathbf{x}_{t}(x,p),\mathbf{p}_{t}(x,p)\big)$ be the trajectory starting from the phase-space point $(x,p)$ and evolving according to the Hamiltonian $H(x,p)=\frac{1}{2}p^{2}+V(x)$.   The kernel $\mathcal{T}_{\lambda}$ satisfies the following closed integral equation:
\begin{align*}
 \mathcal{T}_{\lambda}\big(x,p;dx',dp'\big)=&\int_{0}^{\infty}dt\delta\big(\mathbf{x}_{t}(x,p)-x',\mathbf{p}_{t}(x,p)-p'\big)e^{-\int_{0}^{t}ds(1+\mathcal{E}_{\lambda}(\mathbf{p}_{s}(x,p) )) }\\ &+\int_{0}^{\infty}dt \int_{\R}dp''\mathcal{J}_{\lambda}\big( \mathbf{p}_{t}(x,p), p'' \big)\mathcal{T}_{\lambda}\big(\mathbf{x}_{t}(x,p),p'';dx',dp'\big)e^{-\int_{0}^{t}ds(1+\mathcal{E}_{\lambda}(\mathbf{p}_{s}(x,p))) } . 
\end{align*}
 We obtain a series expansion for $\mathcal{T}_{\lambda}$ by iterating the above integral equation such that the $n$th term corresponds to the event that $n-1$ collisions occur over the time interval $[0,t]$.  By considering the contribution to the transition  kernel $\mathcal{T}_{\lambda}$ resulting from  two collisions  over the mean one exponential time interval, we have the following lower bound: 
\begin{align}\label{Kremlin}
 \mathcal{T}_{\lambda}\big(& x,p;dx',dp'\big)\nonumber  \\ \geq  & dx'dp'\int_{\R}dp''\int_{0}^{\infty}dt \int_{0}^{t}dt_{2}\int_{0}^{t_{2}}dt_{1} \mathcal{J}_{\lambda}\big( \mathbf{p}_{t_{1} }(x,p), p''\big)\mathcal{J}_{\lambda}\big(  \mathbf{p}_{t_{2}-t_{1} }(\mathbf{x}_{t_{1}}(x,p),p''), \mathbf{p}_{-(t-t_{2})}(x',p') \big)\nonumber \\ &\times e^{-\int_{0}^{t-t_{2}}ds(1+\mathcal{E}_{\lambda}(\mathbf{p}_{-s}(x',p')) ) }  e^{-\int_{0}^{t_{2}-t_{1} }ds(1+\mathcal{E}_{\lambda}(\mathbf{p}_{s}(\mathbf{x}_{t_{1}}(x,p),p'') )) }e^{-\int_{0}^{t_{1}}ds(1+\mathcal{E}_{\lambda}(\mathbf{p}_{s}(x,p) ) ) }.
  \end{align}

Let $A\subset \R$ be the set of $p$ with  $ 2 l \leq \frac{1}{2}p^{2} \leq 5l$.  Let $c>0$ be the minimum of the values 
$$       \inf_{\substack{\lambda < 1\\ p^{2} \leq l,\,p'\in A } }     \mathcal{J}_{\lambda}( p,p' ),\quad  \inf_{\substack{\lambda < 1\\ p^{2} \leq l,\,p'\in A } }     \mathcal{J}_{\lambda}( p',p ),\quad \text{and}\quad  \inf_{\lambda < 1 } \int_{0}^{\infty}dt \int_{0}^{t}dt_{2}\int_{0}^{t_{2}}dt_{1}e^{-t(1+\mathcal{E}_{\lambda}(\sqrt{10 l}) ) }  .  $$
 We have defined the set $A$ to exclude the arguments $p,p'$ of $\mathcal{J}_{\lambda}$ from being close since the rates $\mathcal{J}_{\lambda}( p,p' )$ have a zero along the line $p'=p$.  If $H(x,p),H(x',p')\leq l$, the conservation of energy guarantees that $\frac{1}{2}\mathbf{p}_{t_{1} }^{2}(x,p)\leq l$ and $\frac{1}{2}\mathbf{p}_{-(t-t_{2})}^{2}(x',p') \leq l$.  Also, if $ 3l\leq \frac{1}{2}(p'')^{2}\leq 4l $, then $\mathbf{p}_{t_{2}-t_{1} }(\mathbf{x}_{t_{1}}(x,p),p'')\in A$ since the kinetic energy can not fluctuate by more than $l$ through the Hamiltonian evolution.  For all $(x,p),(x',p')$ with  $H(x,p),H(x',p')\leq l$ and all $\lambda<1$,
$$
 \mathcal{T}_{\lambda}\big(x,p;dx',dp'\big)\geq c^{3} dx'dp' \Big( \int_{\R}dp''\chi(3l\leq \frac{1}{2}(p'')^{2}\leq 4l)  \Big).
$$
The above follows by restricting the integration over $p''\in \R$ to $3l\leq \frac{1}{2}(p'')^{2}\leq 4l$. 
Thus we can take $ \mathbf{u}:= c^{3}U^{2}  \int_{\R}dp''\chi(3l\leq \frac{1}{2}(p'')^{2}\leq 4l)  $.

Now we prove the upper bound for $\mathcal{T}_{\lambda}(s,ds')$.  Notice that for $\Psi\in L^{1}(\Sigma)\cap L^{\infty}(\Sigma)$, then  $    \| \mathcal{T}_{\lambda}(\Psi)\|_{\infty}  \leq \|\Psi\|_{\infty} $.  In other words, $\mathcal{T}_{\lambda}$ is a contraction in the supremum norm.  This is evident from the resolvent form $\mathcal{T}_{\lambda}=\int_{0}^{\infty}dt e^{-t}\Phi_{t,\lambda}$, where $\Phi_{t,\lambda}$ are the dynamical maps for the master  equation~(\ref{ReTheModel}), and by the inequalities
$$\big\| \mathcal{T}_{\lambda}\Psi \big\|_{\infty}\leq \int_{0}^{\infty}dt e^{-t}\|\Phi_{t,\lambda}(\Psi) \big\|_{\infty}\leq \|\Psi\|_{\infty}.$$
  The maps $\Phi_{t,\lambda}$ are contractive in the supremum norm since the dynamics is driven by an Hamiltonian flow, which preserves the supremum norm, and a noise satisfying the detailed balance condition 
  $$ e^{-\frac{1}{2}p_{1}^{2}}\mathcal{J}_{\lambda}(p_{1},p_{2})= e^{-\frac{1}{2}p_{2}^{2}   }\mathcal{J}_{\lambda}(p_{2},p_{1}).$$ 

When $H(s')\neq H(s)$ then a collision must occur over the time interval $[0,\tau_{1}]$ in order for $S_{0}=s$ and $S_{\tau_{1}}=s'$. Considering the event that the first collision occurs before time $\tau_{1}$, then the strong Markov property gives the first equality below:
\begin{align}\label{Hassel}
\mathcal{T}_{\lambda}(s,ds')=\mathbb{E}^{(\lambda)}_{s}\Big[\chi(t_{1}\leq \tau_{1})\mathcal{T}_{\lambda}(S_{t_{1}},ds')   \Big]\leq \| D_{s}^{(\lambda)}\|_{\infty},  
\end{align}  
where $D_{s}^{(\lambda)}$ is the probability density of the first collision when starting from $s\in \Sigma$. The density has the closed form
$$
 D_{s}^{(\lambda)}(x',p')=\int_{0}^{\infty}dt \delta\big(\mathbf{x}_{t}(s)-x'\big) \mathcal{J}_{\lambda}\big( \mathbf{p}_{t }(s),p' \big)  e^{-\int_{0}^{t}dr\mathcal{E}_{\lambda}(\mathbf{p}_{r}(s) )  }    .
    $$
When $H(s)\geq l=1+2\sup_{x}V(x)$, the particle will revolve around the torus freely with speed $|p|\geq \big(2+2\sup_{x}V(x) \big)^{\frac{1}{2}}$.  Using the above form for $D_{s}^{(\lambda)}$:   
\begin{align*}
 D_{s}^{(\lambda)}(x',p')  &\leq \Big(\sup_{p,p'\in \R} \frac{ \mathcal{J}_{\lambda}\big(p , p'\big)  }{ \mathcal{E}_{\lambda}(p )  } \Big)\int_{0}^{\infty}dt\delta\big(\mathbf{x}_{t}(s)-x'\big)   e^{-\int_{0}^{t}dr\mathcal{E}_{\lambda}(\mathbf{p}_{r}(s) )  }   \mathcal{E}_{\lambda}(\mathbf{p}_{t}(s) )\\ &\leq  \Big(\sup_{p,p'\in \R} \frac{ \mathcal{J}_{\lambda}\big(p , p'\big)  }{ \mathcal{E}_{\lambda}(p )  } \Big) \Big(\sum_{n=1}^{\infty} \frac{\mathcal{E}_{\lambda}(q(s,x')  )     }{q(s,x')  }     e^{-n\int_{\mathbb{T}}da\frac{\mathcal{E}_{\lambda}(q(s,a)  )     }{q(s,a)  }      }    \Big)\\ &\leq \Big(\sup_{p,p'\in \R} \frac{ \mathcal{J}_{\lambda}\big(p, p' \big)  }{ \mathcal{E}_{\lambda}(p )  } \Big) \Big(\frac{ \frac{\mathcal{E}_{\lambda}(q(s,x')  )     }{q(s,x')  }  }{1-   e^{-\int_{\mathbb{T}}da\frac{\mathcal{E}_{\lambda}(q(s,a)  )     }{q(s,a)  }      }  }  \Big),
\end{align*}
where $q(s,a):=2^{\frac{1}{2}}( H(s)-V(a) )^{\frac{1}{2}}$.   The two terms in the product on the right are bounded uniformly for all $\lambda<1$ and $s\in \Sigma $ with $H(s)> l$. \vspace{.5cm}

\noindent Part (2):  This follows easily from the construction of the split process. \vspace{.5cm}

\noindent  Part (3):   It is almost surely true that a collision will not occur at the partition time $\mathbf{t}$.  Since the trajectory $S_{t}$ is continuous between collision times $\lim_{t\nearrow \mathbf{t}}S_{t}=S_{\mathbf{t}}$.  Thus, information about the state $S_{\mathbf{t}}$ will be contained in the $\sigma$-algebra $\tilde{\mathcal{F}}_{\mathbf{t}^{-}}$.  We claim that the  probability  of the binary component $ Z_{\mathbf{t}}$ being $1$ given  $\tilde{\mathcal{F}}_{\mathbf{t}^{-}}$  has probability $h(S_{\mathbf{t}})$, which would imply that $\tilde{S}_{\mathbf{t}}$ has distribution $\tilde{\delta}_{S_{\mathbf{t}}}$  given  $\mathcal{F}_{\mathbf{t}^{-}}$.

To verify that $\tilde{\mathbb{P}}^{(\lambda)}[Z_{\mathbf{t}}=1|\tilde{\mathcal{F}}_{\mathbf{t}^{-}}]=h(S_{\mathbf{t}})$ as claimed above,  let $ \mathbf{t}' $ be the partition time preceding $\mathbf{t}$.  By the strong Markov property at the time $ \mathbf{t}' $ and since $Z_{t}$ is constant over the interval $t\in  [\mathbf{t}',\mathbf{t} )$, we have the first equality below: 
\begin{align}\label{TryMe}
\tilde{\mathbb{P}}^{(\lambda)}[Z_{\mathbf{t}}=1\,|\,\tilde{\mathcal{F}}_{\mathbf{t}^{-}}]=\tilde{\mathbb{P}}^{(\lambda)}\big[Z_{\mathbf{t}}=1\,|\, (S_{r}; r\in [\mathbf{t}',\mathbf{t} ]),\,Z_{\mathbf{t}'} \big]=\tilde{\mathbb{P}}^{(\lambda)}\big[Z_{\mathbf{t}}=1\,|\,  S_{\mathbf{t}'},\, S_{\mathbf{t}},\,Z_{\mathbf{t}'} \big]=  h(S_{\mathbf{t}}).
\end{align}
The second equality holds since the distribution of the bridge $(S_{r}; r\in (\mathbf{t}',\mathbf{t} ) )$ is independent of $Z_{\mathbf{t}'}$ and $Z_{\mathbf{t}}$ given the endpoints $ S_{\mathbf{t}'}$,  $S_{\mathbf{t}}$.  The probability $\tilde{\mathbb{P}}^{(\lambda)}\big[Z_{\mathbf{t}}=1\,|\,  S_{\mathbf{t}'},\, S_{\mathbf{t}},\,Z_{\mathbf{t}'} \big]$ is well-defined within the context of the split resolvent chain, and   the last equality in~(\ref{TryMe}) follows from the form of the transition kernel $\tilde{\mathcal{T}}_{\lambda}(s_{1}, z_{1}; ds_{2},z_{2})$, defined above~(\ref{SplitMeasure}), for  $s_{1}=S_{\mathbf{t}'}$, $ z_{1}=Z_{\mathbf{t}'}$, $s_{2}=S_{\mathbf{t}}$, and $ z_{2}=Z_{\mathbf{t}}$.

\end{proof}

\subsection{Proofs for Propositions~\ref{AMinus} and~\ref{PreStuff} }\label{SecEnergyLemProof}

We will begin with the proof of Prop.~\ref{PreStuff} since the proof of Part (1) of Prop.~\ref{AMinus} depends on it.  

\begin{proof}[Proof of Prop.~\ref{PreStuff}] To ease the notations, we denote $g(q)=(2\pi)^{-\frac{1}{2}}e^{-\frac{q^2}{2}}$.  Since $\mathcal{E}_\lambda(p) $, $\mathcal{D}_\lambda(p) $,  and $\Pi_\lambda^{(2m)}(p)$
are even functions, we will assume without loss of generality that $p>0$.\vspace{.5cm}

\noindent Part (1):
After the change of variables $q=\frac{\lambda+1}{2}p'+\frac{\lambda-1}{2}p=\frac{\lambda+1}{2}(p'-\p)+\lambda p$, we have
\begin{align}
\mathcal{E}_\lambda(p) = \frac{2\eta}{\lambda +1}\int_\R dq|q-\lambda p|g(q)
=\frac{2\eta}{\lambda+1}\left(2\int_{\lambda p}^\infty qg(q)dq+\lambda p\int_{-\lambda p}^{+\lambda p}dqg(q)\right).\label{eq-elambda-taylor}
\end{align}
We have $\int_{\lambda p }^\infty qg(q) dq = g(\lambda p)$, and $\int_{-\lambda p}^{+ \lambda p}dqg(q) \leq \min(1,\lambda|p|)$. By calculus we also see that $\alpha\mapsto g(\alpha) + \alpha \int_0^\alpha dq g(q)$ has a minimum over $\R$ at $0$. With the above remarks and $\eta=\frac{(2\pi)^{\frac{1}{2}}}{32}$,
\begin{equation} \label{eq-elambda-taylor2}
 \frac{1}{8(\lambda+1)}=\mathcal{E}_\lambda(0) \leq \mathcal{E}_\lambda(p)
\leq \frac{4\eta g(0)+2\eta\min(\lambda |p|,\lambda^2p^2)}{(\lambda+1)}
=\frac{1+C\min(\lambda|p|,\lambda^2|p|^2)}{8(\lambda+1)}.
\end{equation}\vspace{.5cm}

\noindent Part (2):  We use the same technique to estimate $\mathcal{D}_\lambda(p)$.  Here we find
\begin{align}
\mathcal{D}_\lambda(p) &=\frac{4\eta}{(\lambda+1)^2}\Big(-\int_{-\lambda p}^{+\lambda p} q^2 g(q) dq -4 \lambda p\int_{\lambda p}^\infty qg(q)dq - \lambda^2 p^2\int_{-\lambda p}^{+\lambda p}g(q)dq\Big)\nonumber \\
&=-\frac{8\eta\lambda p}{(\lambda+1)^{2}}g(\lambda p) -\frac{4\eta\lambda p}{(\lambda+1)^{2}}\int_{-\lambda p}^{+\lambda p}g(q)dq-\frac{4\eta \lambda^2 p^2}{(\lambda+1)^2}  \int_{-\lambda p}^{+\lambda p}g(q)dq.\label{eq-dlambda-taylor}
\end{align}
It follows that for $\lambda p\ll 1$
$$ \mathcal{D}_\lambda(p) = -\frac{ \lambda p}{2(\lambda+1)^2} + O(\lambda^2 p^2), $$
so $|\mathcal{D}_\lambda(p)+\frac{\lambda p}{2}|$ is bounded by a constant multiple of $\lambda^{2}(|p|+|p|^{2})$ in that regime.   When  $\lambda p$ is not small,  we can  also find a bound  for $|\mathcal{D}_\lambda(p)+\frac{\lambda p}{2}|$ of the same form through~(\ref{eq-dlambda-taylor}) using that $\int_{\R}g(q)dq=1$ and $g(q)\leq 1$.

\vspace{.5cm}

\noindent Part (3):  Now we repeat the computation for $\mathcal{Q}_\lambda (p)$ and we get
\begin{align}
\mathcal{Q}_\lambda (p) 
&=\frac{8\eta}{(\lambda+1)^3}\bigg( (4+2\lambda^{2}p^{2})    g(\lambda p) + \lambda p (3+\lambda^2 p^2)\int_{-\lambda p}^{+\lambda p} g(q) dq\bigg)\label{pi-lambda-lambda2-p2}\\
&=\frac{8}{(\lambda+1)^2} \mathcal{E}_\lambda(p) +\frac{16\eta\lambda^{2}p^{2}}{(\lambda+1)^3} g(\lambda p)+ \frac{8\eta\lambda p(1+ \lambda^{2} p^{2} )}{(\lambda+1)^3}  \int_{-\lambda p}^{+\lambda p} g(q) dq.\nonumber
\end{align}
With the above
\begin{align*}
	\Big|\mathcal{Q}_\lambda (p) -\frac{1}{(\lambda+1)^3 }  \Big|\leq  \frac{1}{(\lambda+1)^3 } \Big|1- 8 \mathcal{E}_\lambda(p)   \Big|+\frac{16\eta\lambda^{2}p^{2}}{(\lambda+1)^3} g(\lambda p)+ \frac{8\eta\lambda p(1+ \lambda^{2} p^{2} )}{(\lambda+1)^3} \int_{-\lambda p}^{+\lambda p} g(q) dq . 
\end{align*}
With the bounds for  $\mathcal{E}_\lambda(p) $  from Part (1), the right side above is bounded by a constant multiple of  $\lambda p+(\lambda p)^{3}$.  

\vspace{.5cm}

\noindent Part (4): Finally, reasoning as above, it is easy to produce an upper bound for $\Pi_\lambda^{(2m)}(p)$ which is a polynomial of degree $2m+1$ in $\lambda|p|$.

\end{proof}

\begin{proof}[Proof of Proposition~\ref{AMinus}]
 Our first observation is that $\mathcal{A}_\lambda(x,p)$,  $\mathcal{K}_{\lambda,n}(x,p)$,  and $\mathcal{V}_\lambda(x,p)$ are symmetric in $p\in \R$.  Hence we can assume without loss of generality that $p>0$. \vspace{.5cm}

\noindent Part (1): 
 We note that
\begin{align}\label{IntroGamma}
2^{\frac{1}{2}} H^{\frac{1}{2}}(x,p') -2^{\frac{1}{2}} H^{\frac{1}{2}}(x,p) &= \frac{ p'^2- p^2}{2^{\frac{1}{2}}\left(\frac{p'^2}{2}+V(x)\right)^{\frac{1}{2}}+2^{\frac{1}{2}}\left(\frac{p^2}{2}+V(x)\right)^{\frac{1}{2}}}
=(p'-p)\Gamma(p',p),
\end{align}
where $\Gamma(p',p):=2^{-\frac{1}{2}}(p'+p)\big( H^{\frac{1}{2}}(x,p')+ H^{\frac{1}{2}}(x,p)\big)^{-1}$.
Thus one can write
\begin{align}\label{alhambra}
\mathcal{A}_\lambda(x,p) = \mathcal{D}_\lambda(p) -  \int_\R dp'(p'-p)(1-\Gamma(p',p))\mathcal{J}_\lambda(p,p').
\end{align}
For the proof of $\mathcal{A}_\lambda^-(x,p)\leq |\mathcal{D}_\lambda(p)|$,  we have to show that the integral in (\ref{alhambra}) is non-positive. For this we use the monotonicity of $\Gamma(p,p')$ in $p'$ for fixed $p$ and the following straightforward inequality is valid for all $p,r\geq 0$:
$$ \mathcal{J}_\lambda(p,p+r)\leq \mathcal{J}_\lambda(p,p-r). $$
Then we have
\begin{align*}
\int_\R dp'&(p'-p)(1-\Gamma(p,p'))\mathcal{J}_\lambda(p,p') \\
&= \int_0^\infty rdr(1-\Gamma(p,p+r))\mathcal{J}_\lambda(p,p+r)
+ \int_0^\infty rdr(\Gamma(p,p-r)-1)\mathcal{J}_\lambda(p,p-r) \\
&\leq \int_0^\infty rdr(\Gamma(p,p-r)-\Gamma(p,p+r))\mathcal{J}_\lambda(p,p-r) \leq 0.
\end{align*}
Finally, we know as a consequence of  Part (2) of  Proposition \ref{PreStuff} that there is a $C'>0$ such that 
$$  \left|\mathcal{D}_\lambda(p)\right| \leq  C'\lambda| p|+ C'\lambda^2 p^2 
.
$$

\vspace{.5cm}
\noindent Part (2):  Now we prove the bound for $\mathcal{A}_\lambda^+(x,p)$,  which will  follow by finding an upper bound for  $\mathcal{A}_\lambda (x,p)$.    For $\tilde{p}=\frac{1-\lambda}{1+\lambda}p$ we can write $p-\tilde{p}=\frac{2\lambda p}{\lambda+1}$ and
\begin{align}\label{MathA}
\mathcal{A}_\lambda(x,p) &= \int_\R dp'(p'-\tilde{p})\Gamma(p,p')\mathcal{J}_\lambda(p,p') 
- \frac{2\lambda p}{\lambda +1} \int_\R dp'\Gamma(p,p')\mathcal{J}_\lambda(p,p').
\end{align}
For the first term on the right side of~(\ref{MathA}), we have
\begin{align*}
- \frac{2\lambda p}{\lambda +1} \int_\R dp'\Gamma(p,p')\mathcal{J}_\lambda(p,p')
&\leq - \frac{2\lambda p}{\lambda +1} \int_{-\infty}^{-p} dp'\Gamma(p,p')\mathcal{J}_\lambda(p,p')
\leq \frac{2\lambda p}{\lambda +1} \int_{-\infty}^{-p} dp' \mathcal{J}_\lambda(p,p'),
\end{align*}
where we have simply thrown away the negative part of the integration to get the first inequality.   The above is exponentially decreasing in $p$ (uniformly in $\lambda$).

To bound the second term on the right side of~(\ref{MathA}),  we begin with a bound for $ \frac{\partial\Gamma(p,p')}{\partial p'}$.   There is $C>0$ such that for all $p>0$ and $\frac{p}{2}\leq p'\leq \frac{3p}{2}$
\begin{equation} \label{gamma-derivative}
 0 \leq \frac{\partial\Gamma(p,p')}{\partial p'}=\frac{(p^2+2V(x))^\frac{1}{2}(p'^2+2V(x))^\frac{1}{2} - pp' +2V(x)}{(p'^2+2V(x))^{\frac{1}{2}}\left((p^2+2V(x))^{\frac{1}{2}}+(p'^2+2V(x))^{\frac{1}{2}}\right)^2} \leq \frac{C}{1+p^3}.
\end{equation}
By writing $r=p'-\tilde{p}$ and $g(q)=(2\pi)^{-\frac{1}{2}}e^{-\frac{q^2}{2}}$, the first term on the right side of~(\ref{MathA}) is bounded by
\begin{align*}
 \int_\R dp'(p'-\tilde{p})&\Gamma(p,p')\mathcal{J}_\lambda(p,p')\\
=& \frac{\eta(\lambda+1)}{2}\int_0^\infty rdr \left(\Gamma(p,\tilde{p}+r)\left|r -\frac{2\lambda p}{\lambda +1}\right| - \Gamma(p,\tilde{p}-r)\left|r+\frac{2\lambda p}{\lambda+1}\right|\right)g\left(\frac{\lambda +1}{2} r\right) \\
\leq & \frac{\eta(\lambda+1)}{2} \int_0^{\frac{2\lambda p}{\lambda +1}} rdr \frac{2\lambda p}{\lambda +1}\left(\Gamma(p,\tilde{p}+r)-\Gamma(p,\tilde{p}-r)\right) g\left(\frac{\lambda +1}{2}r\right) \\
&+ \frac{\eta(\lambda+1)}{2} \int_{\frac{2\lambda p}{\lambda +1}}^\infty r^2dr\left(\Gamma(p,\tilde{p}+r)-\Gamma(p,\tilde{p}-r)\right) g\left(\frac{\lambda +1}{2}r\right).
\end{align*}
For the inequality above, we used the fact that $\Gamma(p,\tilde{p}+r)+\Gamma(p,\tilde{p}-r)\geq 0$, which is true because $\Gamma(p,\tilde{p}-r)\geq \Gamma(p,-\tilde{p}-r)$ and $|\Gamma(p,-\tilde{p}-r)|\leq \Gamma(p,\tilde{p}+r)$ for all $r\geq 0$. Now, for the first integral, we can use (\ref{gamma-derivative}) and we find that it is bounded by $\frac{C \lambda}{1+p^2}$. For the second integral, we first observe that the integral over $r\geq \frac{\tilde{p}}{2}$ is decaying exponentially, and for the integral over $0\leq r\leq \frac{\tilde{p}}{2}$, we again use that (\ref{gamma-derivative}) and get a $\frac{C}{1+p^3}$ bound. \vspace{.5cm}

\noindent Part (3):   For $\lambda=0$ we have that $\mathcal{J}_0(p,p') =\frac{1}{64}|p'-p|e^{-\frac{(p'-p)^{2}}{8}}$.  
Since $H^{\frac{1}{2}}(x,p)$ is a convex function in $p\in \R$ for each $x \in \mathbb{T}$, we have 
$$\mathcal{A}_0(x,p)=\frac{1}{64} \int_{\R}dp'\Big(  2^{\frac{1}{2}} H^{\frac{1}{2}}(x,p')-2^{\frac{1}{2}} H^{\frac{1}{2}}(x,p) \Big) |p'-p|e^{-\frac{(p'-p)^{2}}{8}}\geq 0   . $$
In other terms, $\mathcal{A}_0^+(x,p)=\mathcal{A}_0 (x,p)$.   Our first task will be to establish that $\int_{\R}dp \mathcal{A}_0^+(x,p)=1$ for each $x$.    By Part (2) we also know that   $\mathcal{A}_0^+(x,p)$ has a bound of the form $\frac{C}{1+p^{2}}$, so $\int_{|p|\geq L }dp\mathcal{A}_0^+(x,p)=\mathit{O}(L^{-1})  $ for large $L$.  Hence, $\int_{\R }dp\mathcal{A}_0^+(x,p)$ can be approximated by
\begin{align}\label{Moses}
\int_{|p|\leq  L }dp\mathcal{A}_0^+(x,p)=& \frac{1}{64} \int_{|p|\leq L }dp \int_{\R}dp'\Big(  2^{\frac{1}{2}} H^{\frac{1}{2}}(x,p')-2^{\frac{1}{2}} H^{\frac{1}{2}}(x,p) \Big) |p'-p|e^{-\frac{(p'-p)^{2}}{8}} \nonumber  \\  = & \frac{1}{64} \int_{|p|\leq L }dp \int_{|p'|\geq L}dp \Gamma(p,p') (p'-p) |p'-p|e^{-\frac{(p'-p)^{2}}{8}}
\nonumber   \\  =& \frac{1}{64} \int_{|p|\leq L }dp \int_{|p'|\geq L}dp' (p'-p)^2e^{-\frac{(p'-p)^{2}}{8}} \nonumber \\ & + \frac{1}{64} \int_{|p|\leq L }dp \int_{|p'|\geq L}dp'\big(1-\Gamma(p,p')\big) (p'-p)^2e^{-\frac{(p'-p)^{2}}{8}}.
\end{align}
The first term term on the right side of~(\ref{Moses}) satisfies
$$  \frac{1}{64} \int_{|p|\leq L }dp \int_{|p'|\geq L}dp' (p'-p)^2e^{-\frac{(p'-p)^{2}}{8}}\approx \frac{1}{32} \int_{\R^{+}}dp p^{3} e^{-\frac{p^{2}}{8}}=1,$$
where the error of the  approximation is exponentially small  for $L\gg 1$.   For the second term on the right side of~(\ref{Moses}),  
\begin{align*}
 \int_{|p|\leq L }dp \int_{|p'|\geq L}dp'\big(1-\Gamma(p,p')\big) (p'-p)^2e^{-\frac{(p'-p)^{2}}{8}}\approx  2\int_{\R^{+}}dr r^2e^{-\frac{r^{2}}{8}} \int_{ 0}^{r}dv\big(1-\Gamma(L+v-r,L+v)\big),
\end{align*}
where this approximation also has an exponentially small error.  We can see that the above expression is $\mathit{O}(L^{-2})$ by observing that  for $\frac{ p}{2}\leq p'$ 
\begin{align}\label{first-gamma-bound}
0\leq 1- \Gamma(p',p) \leq 1-\Gamma\left(\frac{p}{4},\frac{p}{4}\right)
= 1- \frac{p}{\left(p^2+32V(x)\right)^{\frac{1}{2}}}=\mathit{O}(p^{-2}).
\end{align}

Next we focus on  bounding the difference between $\int_{\R}dp\mathcal{A}_\lambda^{+}(x,p) $ and $\int_{\R}dp\mathcal{A}_0^{+}(x,p) =1$  in the small  $\lambda$ limit.    By the analysis at the beginning of the proof of Part (2), we have the first  equality below:
\begin{align}\label{Ezekial}
\int_{\R}dp\mathcal{A}_\lambda^{+}(x,p)=& \int_{\R} dp \Big[   \int_\R dp'(p'-\tilde{p})\Gamma(p,p')\mathcal{J}_\lambda(p,p')      \Big]_{+}+\mathit{O}(\lambda),\nonumber \\ =& \int_{-\lambda^{-\frac{1}{2}} }^{\lambda^{-\frac{1}{2}}} dp \Big[   \int_\R dp'(p'-\tilde{p})\Gamma(p,p')\mathcal{J}_\lambda(p,p')      \Big]_{+}+\mathit{O}(\lambda^{\frac{1}{2}})
\nonumber \\ =& \int_{-\lambda^{-\frac{1}{2}}}^{\lambda^{-\frac{1}{2}}} dp  \mathcal{A}_0^{+}(x,p)      +\mathit{O}(\lambda^{\frac{1}{2}})\nonumber  \\ =& \int_{\R} dp\mathcal{A}_0^{+}(x,p)+\mathit{O}(\lambda^{\frac{1}{2}}).
\end{align}
In the above  $[\cdot ]_{+}$  refers to the positive part of a function.  The second equality in~(\ref{Ezekial}) uses that the integrand for the integration in $p$ is bounded by a multiple of $\frac{1}{1+p^2}$ by the analysis in Part (2).   For the third equality above, we have used that   $\mathcal{A}_0^{+}(x,p)=  \int_\R dp'(p'-\tilde{p})\Gamma(p,p')\mathcal{J}_0(p,p') $ is positive, $|\Gamma(p,p')|\leq 1$, and   
\begin{align*}
\Big|\int_{-\lambda^{-\frac{1}{2}} }^{\lambda^{-\frac{1}{2}}} dp\Big( & \Big[   \int_\R dp'(p'-\tilde{p})\Gamma(p,p')\mathcal{J}_\lambda(p,p')      \Big]_{+} -\mathcal{A}_0^{+}(x,p)\Big)\Big|\\  &\leq \int_{-\lambda^{-\frac{1}{2}} }^{\lambda^{-\frac{1}{2}}} dp   \int_\R dp'|p'-\tilde{p}| \Big|\mathcal{J}_\lambda(p,p')      -\mathcal{J}_0(p,p')     \Big|\\ &= \mathit{O}(\lambda^{\frac{1}{2}}).
\end{align*}

\vspace{.5cm}

\noindent Part (4):  The bounds for $\mathcal{K}_{\lambda,n}$ are straightforward since by~(\ref{IntroGamma}) and $|\Gamma(p',p)|\leq 1$  we have
$$ \mathcal{K}_{\lambda,n}(x,p) \leq 2^{-\frac{n}{2}} \int_\R dp' |p-p'|^n \mathcal{J}_\lambda(p,p'). $$
 Writing $r=p'-p$ there is a constant $C_{n}'$ such that for all $\lambda<1$ and $(x,p)\in \Sigma$
\begin{align*}
\mathcal{K}_{\lambda,n}(x,p) &\leq C_n' \left(\int_0^\infty r^{n+1} dr g\Big(\frac{\lambda+1}{2} r +\lambda p\Big) +\int^{\infty}_{2p}r^{n+1} dr g\Big(\frac{\lambda+1}{2} r -\lambda p\Big)\right)\\
&\leq C_n' \left(\int_0^\infty r^{n+1} dr g\Big(\frac{\lambda+1}{2} r\Big) + \int_{2p}^\infty r^{n+1}dr g\Big(\frac{r}{2}\Big)\right).
\end{align*}
Therefore, $\mathcal{K}_{\lambda,n}(x,p)$ is bounded by a constant.

\end{proof}

\subsection{Proofs for Sect.~\ref{SecDrift}}\label{SecProofMartDrift}

\begin{proof}[Proof of Lem.~\ref{LocalTimeBndII}]
 By Lem.~\ref{LemKeyMart} $\langle \tilde{M}\rangle_{t}$  is a sum of terms $ \check{\upsilon}_{\lambda}(S_{R_{n}})$, and so the difference between $\langle \tilde{M}\rangle_{t}$ and $\upsilon_{\lambda}\tilde{N}_{t}$ can be written
\begin{align*}
\langle \tilde{M}\rangle_{t}-\upsilon_{\lambda}\tilde{N}_{t}=& \sum_{n=1}^{\tilde{N}_{t}} \big(\check{\upsilon}_{\lambda}(S_{R_{n}}) -\upsilon_{\lambda}\big)  = \check{\upsilon}_{\lambda}(S_{R_{1}})-\check{\upsilon}_{\lambda}(S_{R_{\tilde{N}_{t}+1}})-\tilde{\mathbb{E}}^{(\lambda)}_{\tilde{\delta}_{S_{R_{\tilde{N}_{t}+1}} }} \big[  \check{\upsilon}_{\lambda}(S_{R_{1}})   \big]  +\tilde{\mathbb{E}}^{(\lambda)}_{\tilde{\delta}_{S_{R_{1}}} } \big[  \check{\upsilon}_{\lambda}(S_{R_{1}})   \big]   \\ &+\sum_{n=1}^{\tilde{N}_{t}} \big(\check{\upsilon}_{\lambda}(S_{R_{n+1}}) -\tilde{\mathbb{E}}^{(\lambda)}_{\tilde{\delta}_{S_{R_{n}} }} \big[  \check{\upsilon}_{\lambda}(S_{R_{1}})   \big]+\tilde{\mathbb{E}}^{(\lambda)}_{ \tilde{\delta}_{S_{R_{n+1}} } } \big[  \check{\upsilon}_{\lambda}(S_{R_{1}})   \big] -\upsilon_{\lambda} \big),
\end{align*}
where $\tilde{\delta}_{s}$ is the splitting of the $\delta$-distribution at $s\in \Sigma$.  Notice that $\upsilon_{\lambda}=\int_{\Sigma}d\nu(s)\check{\upsilon}_{\lambda}(s)  $.  The  sum on the right is a martingale with respect to $\tilde{\mathcal{F}}_{t}'$ by the same reasoning that $\tilde{M}_{t}$ is a martingale.  Since $S_{R_{n}}\in\textup{Supp}(\nu)$ for $n\geq 1$,  we  have the standard inequalities:
\begin{align*}
\lambda^{\frac{1}{2}}\tilde{\mathbb{E}}^{(\lambda)}\Big[&\sup_{0\leq t\leq \frac{T}{\lambda} }\Big|  \langle \tilde{M}\rangle_{t}-\upsilon_{\lambda}\tilde{N}_{t}\Big|\Big] \leq 4 \lambda^{\frac{1}{2}}\sup_{s\in \textup{Supp}(\nu) }\check{\upsilon}_{\lambda}(s)\\&+2\lambda^{\frac{1}{2}}\tilde{\mathbb{E}}^{(\lambda)}\Big[\sum_{n=1}^{\tilde{N}_{\frac{T}{\lambda} }} \Big(\check{\upsilon}_{\lambda}(S_{R_{n+1}}) -\tilde{\mathbb{E}}^{(\lambda)}_{\tilde{\delta}_{S_{R_{n}}} } \big[  \check{\upsilon}_{\lambda}(S_{R_{1}})   \big]+\tilde{\mathbb{E}}^{(\lambda)}_{\tilde{\delta}_{S_{R_{n+1}} } } \big[  \check{\upsilon}_{\lambda}(S_{R_{1}})   \big] -\upsilon_{\lambda} \Big)^{2}\Big]^{\frac{1}{2}}\\ \leq & 4 \lambda^{\frac{1}{2}}\sup_{s\in \textup{Supp}(\nu) }\check{\upsilon}_{\lambda}(s)+2^{\frac{3}{2}}\lambda^{\frac{1}{2}}\tilde{\mathbb{E}}^{(\lambda)}\big[\tilde{N}_{\frac{T}{\lambda} }\big]^{\frac{1}{2}}\Big(\int_{\Sigma}ds\check{\upsilon}^{2}_{\lambda}(s  )-\Big(\int_{\Sigma}ds\check{\upsilon}_{\lambda}(s  )\Big)^{2} \Big)^{\frac{1}{2}}=\mathit{O}(\lambda^{\frac{1}{4}}).
\end{align*}

\end{proof}

\section*{Acknowledgments}
We are grateful to Professor H\"opfner for sending a copy of Touati's unpublished paper and giving helpful comments.   We also thank an anonymous referee for offering many useful suggestions towards improving the presentation of this article.    This work is supported by the European Research Council grant No. 227772 and NSF grant DMS-08446325.

\begin{appendix}

\section{Exponential ergodicity}\label{AppendErgod}

In this section we  prove the relaxation of our dynamics to an equilibrium state.  Our dynamics is driven by the forward Kolmogorov equation
\begin{align}\label{ReReTheModel}
\frac{d}{dt}\tuP_{t,\lambda}(x,\,p)= &  (\mathcal{L}_{\lambda}^{*}\tuP_{t,\lambda})(x,\,p)= -p \frac{\partial}{\partial x}\tuP_{t,\lambda}(x,p)+\frac{dV}{dx}\big(x\big)\frac{\partial}{\partial p}\tuP_{t,\lambda}(x,p) \nonumber    \\ &+\int_{\R}dp^{\prime}\big(\mcJ_{\lambda}(p^{\prime},p)\tuP_{t,\lambda}(x,p^{\prime})-\mcJ_{\lambda}(p,p^{\prime})\tuP_{t,\lambda}(x,p)     \big),     
\end{align}
which has equilibrium state $
\Psi_{\infty,\lambda}$. The Kolmogorov equation~(\ref{ReReTheModel}) determines a transition semigroup $\Phi_{t,\lambda}$, which we take to operate on measures from the right and bounded measurable functions from the left:
$$(\mu \,\Phi_{t,\lambda})(ds)=\int_{\Sigma}\mu(ds')\Phi_{t,\lambda}(s',ds)\quad \text{and}\quad  (\Phi_{t,\lambda}\,g)(s)=\int_{\Sigma}\Phi_{t,\lambda}(s,ds')g(s'),        $$
for $\mu\in M(\Sigma)$ and $g\in B(\Sigma)$.  We identify probability densities with their corresponding measures and denote the total variation norm for finite measures on $\Sigma$ by $\|\cdot\|_{1}$.  The exponential ergodicity that we prove in Theorem~\ref{ThmErgodic} is not used critically anywhere in our proofs although we need some degree of ergodicity in order to make sense of certain expressions such as the reduced resolvent $\frak{R}_{\lambda}$ of the function $\frac{dV}{dx}$, for instance.  A more thorough study of the ergodicity would give some control of the the exponential rate $\alpha(\lambda)$ appearing in Theorem~\ref{ThmErgodic} for $\lambda\ll 1$.   We believe that  $\alpha(\lambda)$  can be taken to scale as $\propto\frac{\lambda}{\log(\lambda^{-1})}$ for small $\lambda$ (i.e. a bit slower than linear in $\lambda$), and if $C$ is replaced on the right side of~(\ref{ShoeHorn}) by $C\|\Psi\|_{w}$ for  the weighted norm  $\|\Psi\|_{w}=\int_{\Sigma}|\Psi|(dx\,dp)(1+\lambda^{\frac{1}{2}}|p|)\, $, then the exponential rate scales as $\propto \lambda$ for $\lambda\ll 1$.

We denote the space of probability measures on $\Sigma$ by $M_{+}^{1}(\Sigma)$.
\begin{theorem}\label{ThmErgodic}
Let $ \Phi_{t,\lambda}$ be the transition semigroup corresponding to the Kolmogorov equation~(\ref{ReReTheModel}).  There exist  $\alpha(\lambda), C>0$ such that for all $\Psi\in M_{+}^{1}(\Sigma) $ and $t\in \R^+$, 
\begin{align}\label{ShoeHorn}
\big\|\Psi\Phi_{t,\lambda}-\Psi_{\infty,\lambda}\big\|_{1}\leq Ce^{-t\alpha(\lambda) }.
\end{align}

\end{theorem}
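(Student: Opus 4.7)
The plan is to apply the standard Foster--Lyapunov theory of geometric ergodicity (Meyn--Tweedie; Down--Meyn--Tweedie), whose two hypotheses are (i) a geometric drift inequality for the generator $\mathcal{L}_\lambda^*$, and (ii) a minorization/petite-set condition. Ingredient (ii) is essentially already supplied by Proposition~\ref{BasicsOfNum}(1): the identity $\mathcal{T}_\lambda = \int_0^\infty e^{-t}\Phi_{t,\lambda}\,dt$ combined with $\mathcal{T}_\lambda(s,ds') \geq h(s)\nu(ds')$ and the fact that $h$ is bounded below on $\{H \leq l\}$ shows that any sublevel set $\{H\leq R_0\}$ with $R_0 \geq l$ is petite for the continuous-time semigroup with sampling measure $e^{-t}dt$.

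For the drift condition I would try a Lyapunov function built from the energy, for instance $W(x,p) = 1 + H(x,p)^\varepsilon$ for a small exponent $\varepsilon > 0$. Since $H$ is preserved by the Hamiltonian flow, $(\mathcal{L}_\lambda W)(x,p)$ reduces to the jump part
$$(\mathcal{L}_\lambda W)(x,p) = \int_{\mathbb{R}}dp'\bigl(W(x,p') - W(x,p)\bigr)\mathcal{J}_\lambda(p,p').$$
Taylor-expanding $W$ around $p$ and using the friction estimate $\mathcal{D}_\lambda(p) \sim -\lambda p/2$ from Proposition~\ref{Stuff}(2), together with the jump-moment bounds $\Pi_{\lambda,2m}(p) \leq C_m(1+\lambda|p|)^{2m+1}$ from Proposition~\ref{Stuff}(6), I expect to verify
$$(\mathcal{L}_\lambda W)(x,p) \leq -c(\lambda) W(x,p) + K(\lambda)\mathbf{1}_{\{H \leq R_0\}}(x,p)$$
for some $c(\lambda), K(\lambda), R_0(\lambda) > 0$. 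The exponent $\varepsilon$ should be chosen small enough that the quadratic-variation contribution ($\sim \varepsilon(\varepsilon-1)H^{\varepsilon-2}\Pi_{\lambda,2}$) is dominated by the linear friction term ($\sim -\varepsilon \lambda p^2 H^{\varepsilon-1}$), making the calculation routine across the various momentum regimes.

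Granted (i) and (ii), the continuous-time Foster--Lyapunov theorem yields
$$\|\delta_s \Phi_{t,\lambda} - \Psi_{\infty,\lambda}\|_1 \leq C' W(s)\, e^{-\alpha(\lambda) t},$$
and the invariant measure must coincide with the explicit Maxwell--Boltzmann state $\Psi_{\infty,\lambda}$ by uniqueness. For $\Psi$ with finite $W$-moment the claim follows by integration; for a general probability measure $\Psi$, one uses the trivial bound $\|\cdot\|_1 \leq 2$ on a short cushion time $[0,t_0]$ (during which $\Psi\Phi_{t_0,\lambda}$ acquires a uniformly controlled $W$-moment, because $\Phi_{t,\lambda}W \leq e^{-c(\lambda) t}W + K(\lambda)/c(\lambda)$) and then applies the weighted bound from $t_0$ onward, absorbing the $t_0$-cushion into $C$. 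The main technical obstacle is the drift verification in Step (i): one must check carefully across the regimes $|p|\leq \lambda^{-1/2}$, $\lambda^{-1/2} \lesssim |p| \leq \lambda^{-1}$, and $|p| \geq \lambda^{-1}$, where $\mathcal{D}_\lambda$ and $\Pi_{\lambda,2}$ behave qualitatively differently; since no sharp $\lambda$-dependence of $\alpha(\lambda)$ is claimed, a generous choice of constants suffices.
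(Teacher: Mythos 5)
Your overall strategy (drift plus minorization for the resolvent chain) is in the right family, and the petite-set part is fine, but there is a genuine gap at the last step: the theorem asserts a constant $C$ that is \emph{uniform over all} $\Psi\in M^{1}_{+}(\Sigma)$, i.e.\ uniform ergodicity, whereas a Foster--Lyapunov drift condition with the \emph{unbounded} Lyapunov function $W=1+H^{\varepsilon}$ only yields $W$-uniform geometric ergodicity, $\|\delta_{s}\Phi_{t,\lambda}-\Psi_{\infty,\lambda}\|_{1}\leq C'W(s)e^{-\alpha t}$, with a prefactor that blows up as $H(s)\to\infty$. Your proposed repair --- a cushion time $t_{0}$ on which one uses the trivial bound $2$, followed by the claim that $\Psi\Phi_{t_{0},\lambda}$ has a uniformly controlled $W$-moment --- does not work: the inequality $\Phi_{t_{0},\lambda}W\leq e^{-ct_{0}}W+K/c$ gives $\int W\,d(\Psi\Phi_{t_{0},\lambda})\leq e^{-ct_{0}}\int W\,d\Psi+K/c$, and the first term is not uniformly bounded over $\Psi$ (it is infinite for measures with infinite $W$-moment, and of order $W(s)$ for $\Psi=\delta_{s}$). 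The bound $\min\big(2,\,C'W(s)e^{-\alpha t}\big)$ is genuinely not dominated by any $s$-independent $Ce^{-\alpha' t}$ when $W$ is unbounded.

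To close the gap one must show that the hitting time of the small set has exponential moments bounded \emph{uniformly in the starting point}, which is a strictly stronger statement than the drift of $H^{\varepsilon}$. This is exactly what the paper's proof supplies: it uses the \emph{bounded} Lyapunov function $w(s)=1-\tfrac{1}{2}(1+H^{1/2}(s))^{-1}\in[\tfrac12,1]$, for which the jump generator satisfies $\int dp'\,\mathcal{J}_{\lambda}(p,p')\big(w(x,p')-w(x,p)\big)\leq-\delta$ uniformly on $\{H>L\}$; since $w$ is bounded, optional stopping gives $\sup_{s}\mathbb{P}^{(\lambda)}_{s}[\varsigma>T]\leq 2^{-\lfloor\delta T\rfloor}$, hence a uniform Doeblin-type condition for the resolvent chain, and the renewal/splitting argument then produces a $\Psi$-independent constant. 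The mechanism making this possible is that from very high momentum the collision rate grows like $\lambda|p|$ while each collision contracts $|p|$ by a fixed factor, so the expected time to return to a compact set is bounded uniformly in the initial energy --- information that the power-law function $H^{\varepsilon}$ cannot encode. Your drift computation for $W=1+H^{\varepsilon}$ is otherwise plausible and would give a correct (but weaker) weighted ergodicity statement; to prove the theorem as stated you should replace $W$ by a bounded Lyapunov function of this type, or directly verify a uniform minorization for a fixed power of the resolvent kernel.
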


\begin{proof}
Since the parameter $\lambda>0$ is fixed in this proof, we will not attach it as a subscript or superscript to mathematical objects  that depend upon it as we did in earlier sections.  It is sufficient to work with the resolvent chain rather than the original process and show that  there are $C,\alpha>0$
\begin{align}\label{Airlift}
 Ce^{-n\alpha } &\geq \big\|\Psi \mathcal{T}^{n}-\Psi_{\infty} \big\|_{1} \nonumber \\ 
&= \sup_{\|g\|_{\infty}\leq 1  } \Big|\int_{\Sigma}(\Psi \mathcal{T}^{n})(ds)g(s)  -\Psi_{\infty}(g)  \Big|,
\end{align}
where $\mathcal{T}:B(\Sigma)\rightarrow B(\Sigma)$ is the  transition operator for the resolvent chain and has the form
$$ \mathcal{T}=\int_{0}^{\infty}dre^{-r}\Phi_{r}.$$

Let $\mu_{L}\geq 0$  be defined as the Lebesgue measure of  the region  $\{ H(s)\leq L\}\subset \Sigma$ for $L>0$.  There exist $L>0$ and $0<\epsilon<1$ such that the following statements hold:
\begin{enumerate}[(i).]  

\item For all $s,s'\in \Sigma$ with $H(s),H(s')\leq L$, 
$$ \mathcal{T}(s,ds')>\epsilon\,\frac{ ds'}{  \mu_{L} }. $$

\item  For all $s\in \Sigma$ with $H(s)> L$,
$$  \int_{H(s')\leq L}\mathcal{T}(s,ds') \geq \epsilon .  $$

\end{enumerate}
  Given any $L>0$, we can find an $\epsilon$ such that statement (i) is true by the argument in the proof of Part (1) of Proposition~\ref{BasicsOfNum} (in which the cutoff was $l=1+2\sup_{x}V(x)$ rather than an arbitrary fixed $L$, but this does not change the argument).  Statement (ii) follows from the extreme momentum-contractive nature of the jump rates:
$$  \mathcal{J}(p',p)=\frac{1+\lambda}{64}|p'-p|e^{-\frac{1}{2}(\frac{1-\lambda}{2}p'-\frac{1+\lambda}{2}p)^{2}    }   .     $$ 
A test particle with momentum $|p'|\gg 1$ will suffer a collision after a waiting time on the order of $\frac{1}{|p'|}$, and the resulting momentum of the test particle will be concentrated around the contracted value $p'\frac{1-\lambda}{1+\lambda}$.  Consequently, the test particle descends from arbitrarily  high energy to the  low energy region $ \{ H(s)\leq L\}\subset \Sigma$  in finite time.   The statements (i) and (ii) imply that the probability of jumping into the region $\{ s \in \Sigma\,\big|\,H(s)\leq  L  \}$ is at least $\epsilon>0$ from any starting point.  The statements (i) and (ii) and the fact that $\Psi_{\infty,\lambda}$ is a stationary state for the dynamics are sufficient to prove the exponential ergodicity without further reference to the dynamics at hand.  The proof of (ii) is  below, and we proceed next  with the remainder of the proof.

  By (i) we have the minorization condition   $\mathcal{T}(s,ds')\geq h(s)\nu(ds')$ for 
 $$  h(s)= \epsilon\, \chi\big( H(s)\leq L   \big)  \quad \text{and} \quad \nu(ds)= \frac{ds}{ \mu_{L}}\chi\big( H(s)\leq L   \big).      $$    
 We can thus apply standard Nummelin splitting~\cite{Nummelin} to define a  chain $\tilde{\sigma}_{n}\in \tilde{\Sigma}=\Sigma\times \{0,1\}$ with an atom set  $\Sigma\times 1$ using the pair $h, \nu$ above.  Let $\tilde{n}_{m}$, $m\geq 1$ be the sequence of return times to the atom. The times $\tilde{n}_{m}$ form a delayed renewal chain in which the delay distribution is $\tilde{\mathbb{P}}_{\tilde{\Psi}}[\tilde{n}_{1}= n]  $  and the jumps have distribution $ \tilde{\mathbb{P}}_{\tilde{\nu}}[\tilde{n}_{1}= n] $; recall from~(\ref{SplitMeasure}) that $\tilde{\mu}$ refers to the splitting of a measure $\mu$ on $\Sigma$.

We denote the first component of the chain $\tilde{\sigma}_{n}\in \tilde{\Sigma}$ by $\sigma_{n}$.  
We will treat the difference between $\Psi \mathcal{T}^{n}$ and   $\Psi_{\infty}$ in the norm $\|\cdot\|_{1}$ through the formula ~(\ref{Airlift}).  For $g\in B(\Sigma)$,       
\begin{align}\label{Chief}
\int_{\Sigma}(\Psi\mathcal{T}^{n})(ds)g(s) & =\mathbb{E}_{\Psi}\big[ g(\sigma_{n})   \big]=\tilde{\mathbb{E}}_{\tilde{\Psi}}\big[ g(\sigma_{n})     \big]\nonumber  \\ &= \tilde{\mathbb{E}}_{\tilde{\Psi}}\big[ g(\sigma_{n})\chi( \tilde{n}_{1}\geq n)   \big]+
\tilde{\mathbb{E}}_{\tilde{\Psi}}\Big[\sum_{m=1}^{n}\sum_{r=1}^{\infty} \chi(m-1=\tilde{n}_{r}) g(\sigma_{n})\chi(\tilde{n}_{r+1}\geq n  )   \Big] \nonumber \\ &= 
 \tilde{\mathbb{E}}_{\tilde{\Psi}}\big[ g(\sigma_{n})\chi( \tilde{n}_{1}\geq n)   \big]+\sum_{m=1}^{n} F_{\Psi}(m-1)\, \tilde{\mathbb{E}}_{\tilde{\nu}}\big[ g(\sigma_{n-m})\chi(\tilde{n}_{1}\geq  n-m)   \big],
\end{align}
where  $F_{\Psi}:\mathbb{N}\rightarrow \R^{+} $ is the renewal function
$$F_{\Psi}(m)= \sum_{r=1}^{\infty}\tilde{\mathbb{P}}_{\tilde{\Psi}}\big[\tilde{n}_{r}=m\big],   $$
and the last equality in~(\ref{Chief}) uses the strong Markov property.    We will delay the demonstration of the following two statements until the end of the proof.  
\begin{enumerate}[(I).]

\item There is a $c>0$ such that for all  $m\in \mathbb{N}$ and probability measures $\Psi\in M_{+}^{1}(\Sigma)$, 
$$ \tilde{\mathbb{P}}_{\tilde{\Psi}}\big[\tilde{n}_{1}\geq  m \big] \leq   c e^{-\epsilon\,m } . $$

\item There is a $c>0$ such that for all $m\in \mathbb{N}$ and probability measures $\Psi\in M_{+}^{1}(\Sigma)$,
$$ \| F_{\Psi}(m)- \Psi_{\infty}(h)   \|_{\infty}\leq ce^{-\epsilon\,m}.$$

\end{enumerate}

With~(\ref{Chief}) we can write the difference between $\int_{\Sigma}(\Psi  \mathcal{T}^{n})(ds )g(s)$ and $\Psi_{\infty}(g)$ as
\begin{align}\label{Vicks}
\int_{\Sigma}(\Psi  \mathcal{T}^{n})(ds )g(s)-\Psi_{\infty}(g)= &  \tilde{\mathbb{E}}_{\tilde{\Psi}}\big[ g(\sigma_{n})\chi( \tilde{n}_{1}\geq n)   \big]\nonumber \\ &+\sum_{m=1}^{\lfloor \frac{n}{2} \rfloor}F_{\Psi}(m-1)\, \tilde{\mathbb{E}}_{\tilde{\nu}}\big[ g(\sigma_{n-m})\chi(\tilde{n}_{1}\geq  n-m)   \big]\nonumber \\ &+
\sum_{m=\lfloor \frac{n}{2} \rfloor+1  }^{n}\big(F_{\Psi}(m-1)-\Psi_{\infty,\lambda}(h) \big)\, \tilde{\mathbb{E}}_{\tilde{\nu}}\big[ g(\sigma_{n-m})\chi(\tilde{n}_{1}\geq  n-m)   \big]\nonumber \\  &+\Psi_{\infty}(h)   \Big(\sum_{m=\lfloor \frac{n}{2} \rfloor+1}^{n }\tilde{\mathbb{E}}_{\tilde{\nu}}\big[ g(\sigma_{n-m})\chi(\tilde{n}_{1}\geq  n-m)   \big] -\frac{ \Psi_{\infty}(g)      }{ \Psi_{\infty}(h)  } \Big) .
\end{align}
Notice that the bottom line of~(\ref{Vicks}) can be rewritten as $-\Psi_{\infty}(h) \sum_{m=\lfloor \frac{n}{2} \rfloor }^{\infty }\tilde{\mathbb{E}}_{\tilde{\nu}}\big[ g(\sigma_{m})\chi(\tilde{n}_{1}\geq  m)   \big]   $ 
since by   Part (2) of Proposition~\ref{BasicsOfNumII} we have the second equality below:
\begin{align}\label{Gristle}
\sum_{m=0}^{\infty}  \tilde{\mathbb{E}}_{\tilde{\nu}}\big[ g(\sigma_{m})\chi(\tilde{n}_{1}\geq  m)   \big] =  \tilde{\mathbb{E}}_{\tilde{\nu}}\Big[\sum_{m=0}^{\tilde{n}_{1}}g(\sigma_{m})  \Big]=\frac{ \Psi_{\infty}(g)      }{ \Psi_{\infty}(h)  }.
\end{align}
By applying the triangle inequality to~(\ref{Vicks}),  statements (I) and (II), and that $F_{\Psi}^{(\lambda)}(m)\leq 1$, we have the inequality
\begin{align*}
\Big|\int_{\Sigma}(\Psi  \mathcal{T}^{n})(ds )g(s)-\Psi_{\infty}(g)\Big|\leq   c\|g\|_{\infty} \Big((1-\epsilon)^{n}+\frac{e^{-\epsilon \lfloor \frac{n}{2} \rfloor     }}{\epsilon}+ \frac{e^{-\epsilon \lfloor \frac{n}{2} \rfloor     }}{\epsilon}+\frac{e^{-\epsilon \lfloor \frac{n}{2} \rfloor     }}{ \epsilon}\Psi_{\infty}(h)   \Big) .
\end{align*}
 The above is $\|g\|_{\infty}$ multiplied by $ \mathit{O}(e^{-\frac{n\epsilon}{2}})$,  so we have  proven~(\ref{Airlift}) for $\alpha=\frac{\epsilon}{2}$ when assuming statements (ii), (I), and (II).

\vspace{.5cm}

\noindent (ii).  We will return to a consideration of the process $S_{t}$ underlying the resolvent chain $\sigma_{n}$.  Let $\tau$ be a mean one exponential and $S_{0}=s\in \Sigma$.  For  $H(s)> L$ let $\varsigma$ be the hitting time that $H(S_{\varsigma})\leq L$.  Since the Hamiltonian evolution preserves energy, we can give a lower bound for our quantity of interest through
\begin{align}\label{Lambast}
\int_{H(s')\leq L}\mathcal{T}(s,ds')&\geq \mathbb{P}_{s}\big[\varsigma\leq \tau \text{ and no collisions occur over the time interval $(\varsigma,\tau]$ }   \big] \nonumber  \\  & \geq  \Big(\inf_{H(x',p')\leq L  }\frac{1}{1+\mathcal{E}(p')}     \Big) \mathbb{P}_{s}\big[\varsigma\leq \tau  \big].
\end{align}
The second inequality uses that the collisions occur with Poisson rate $\mathcal{E}(S_{t})$ and   that $\tau-\varsigma$ is a mean one exponential in the event that $\varsigma\leq \tau $ and conditioned on the information up to time $\varsigma$.   The escape rates $\mathcal{E}(S_{t})$ are  uniformly bounded over any compact region, and it is thus sufficient for us to give a lower bound for  $\mathbb{P}_{s}\big[\varsigma\leq \tau  \big] $. For any $T>0$ that we choose,  
\begin{align*}
\inf_{H(s)>L} \mathbb{P}_{s}\big[\varsigma\leq \tau  \big]&=\inf_{H(s)>L} \int_{0}^{\infty}dt e^{-t}\mathbb{P}_{s}\big[\varsigma\leq t  \big]  \nonumber \\
&\geq  e^{-T}- e^{-T}\sup_{H(s)>L}   \mathbb{P}_{s}\big[\varsigma > T  \big].
\end{align*}
If we show $ \sup_{H(s)>L}   \mathbb{P}_{s}\big[\varsigma > T  \big]$ is small for large $T$ (or even merely bounded away from one), then combining this result with~(\ref{Lambast}) completes the proof of statement (ii).

Define the function $w:\Sigma\rightarrow [\frac{1}{2},1]$ such that
$$  w(s)=1-\frac{1}{2}\frac{ 1 }{ 1+H^{\frac{1}{2}}(s)    } .  $$
There exists $L$ large enough so that for some $\delta>0$
\begin{align}\label{Sinbad}
\sup_{ H(x,p)>L    } \int_{\R}dp'\mathcal{J}(p,p' )\Big( w(x,p') -w(x,p)     \Big)\leq -\delta.   
\end{align}
The inequality~(\ref{Sinbad}) follows from the momentum-contractive nature of the jump rates $\mathcal{J}(p,p' )$ described below the statement of (ii), and we will not go through its proof.   The significance of the inequality ~(\ref{Sinbad}) is that $t\,\delta +w(S_{t})$ behaves as a supermartingale
over the time interval $[0,\varsigma]$ since for $s=(X_{t},P_{t})$
$$ \frac{d}{dr}\Big|_{r=0}\mathbb{E}_{s}\big[r\delta+ w(X_{r}, P_{r})\big]=\delta+ \int_{\R}dp'\mathcal{J}(P_{t},p')\Big( w(X_{t},p') -w(X_{t},P_{t})     \Big)\leq 0 .   $$
The equality above follows since the function $w$ is invariant of the Hamiltonian evolution.  We have the following sequence of inequalities:
\begin{align*} 
  \sup_{H(s)>L}\mathbb{P}_{s}\big[ \varsigma>T   \big] & \leq     \Big(\sup_{H(s)>L}\mathbb{P}_{s}\big[\varsigma> \delta^{-1} \big] \Big)^{\lfloor \delta T \rfloor}\leq  \Big(\sup_{H(s)>L} \delta\, \mathbb{E}_{s}\big[\varsigma\wedge \delta^{-1} \big] \Big)^{\lfloor \delta T \rfloor} \\ &\leq  \Big(\sup_{H(s)>L}\mathbb{E}_{s}\big[w(s)-w(S_{ \varsigma\wedge \frac{1}{\delta}   }) \big] \Big)^{\lfloor \delta T \rfloor}\leq  2^{-\lfloor \delta T \rfloor}    .   
  \end{align*}
For the first inequality, the event $\varsigma>T $ requires that the time $ \varsigma$ failed to occur over disjoint time intervals $( \frac{n}{\delta} ,\frac{n+1}{\delta }]$ for $0 \leq n\leq \lfloor \delta T \rfloor -1$.  The second inequality is Chebyshev's, and the third is by the optional stopping theorem since $w(s)-w(S_{t})- \delta \,t$ is submartingale over the time interval $ t\in [0,  \varsigma\wedge \frac{1}{ \delta}  ]$.  The last inequality follows from the constraint  $\frac{1}{2}\leq w(s)\leq 1$.   We can choose $T$ to make the right side arbitrarily small.

\vspace{.5cm}

\noindent (I).  The second equality below follows from an inductive argument using the fact that the probability of the event $\tilde{\sigma}_{n}\in \Sigma\times 1$ is $h(\sigma_{n})$ when conditioned on the information up to time $n-1$ and the value $\sigma_{n}$.  This property is visible from the form of the transition operator $\tilde{\mathcal{T}}$ given in Sect.~\ref{SecNum}.  
\begin{align*}
  \tilde{\mathbb{P}}_{\tilde{\Psi}}\big[ \tilde{n}_{1}\geq  m   \big]   & = \tilde{\mathbb{E}}_{\tilde{\Psi}}\big[ \chi(\tilde{n}_{1}\neq  0)\cdots  \chi(\tilde{n}_{1}\neq  m-1) \big]\\ & = \tilde{\mathbb{E}}_{\tilde{\Psi}}\big[ \big(1-h(\sigma_{0})\big)\cdots \big(1-h(\sigma_{m-1})\big)\big]
\\ & = \mathbb{E}_{\Psi}\big[\big(1- h(\sigma_{0}) \big)\cdots  \big(1-h(\sigma_{m-2})\big)\mathbb{E}\big[\big(1- h(\sigma_{m-1})\big)\,\big|\,\sigma_{0},\dots,\sigma_{m-2}   \big]\big]  \\ &\leq (1-\epsilon) \mathbb{E}_{\Psi}\big[\big(1- h(\sigma_{0})\big)\cdots  \big(1-h(\sigma_{m-2})\big)\big].
 \end{align*}
The third equality identifies the expectation as from the original statistics, and the inequality uses that the probability  $\sigma_{m-1}$ jumps from $\sigma_{m-2}$ to the support of $h$ is $\geq (1-\epsilon)$ by (i) and (ii).  Applying this inductively, we get the desired bound.

\vspace{.5cm}

\noindent (II).  By the renewal theorem, $F_{\Psi}(n)$ converges as $n\rightarrow \infty$ to the inverse of the expectation of the renewal increments  $ \tilde{n}_{m+1}-\tilde{n}_{m} $.   The expectation of the increments $ \tilde{n}_{m+1}-\tilde{n}_{m} $, $m\geq 1$ is given by  
$$\tilde{\mathbb{E}}_{\tilde{\Psi}}\big[ \tilde{n}_{m+1}-\tilde{n}_{m}  \big]= 1+\tilde{\mathbb{E}}_{\tilde{\nu}}\big[ \tilde{n}_{1} \big]= \tilde{\mathbb{E}}_{\tilde{\nu}}\Big[ \sum_{n=0}^{\tilde{n}_{1}}1 \Big]= \frac{   1 }{ \Psi_{\infty}(h) } $$   
since the distribution for the split chain following a return to the atom is $\tilde{\nu}$ by Part (1) of Proposition~\ref{IndependenceProp}, and where the second equality is by Part (1) of Proposition~\ref{BasicsOfNumII}.   The tails of the delay distribution $\tilde{\mathbb{P}}_{\tilde{\Psi}}\big[ \tilde{n}_{1}>n  \big] $ and the renewal jump distribution $\tilde{\mathbb{P}}_{\tilde{\nu}}\big[ \tilde{n}_{1}>n  \big] $ decay with order $\mathit{O}(e^{-\epsilon n})$  for large $n$ by  (I).  It follows that the renewal function $F_{\Psi}(n)$ converges exponentially with rate $\epsilon$ to the value $\Psi_{\infty}(h)$.

\end{proof}

\end{appendix}

\end{document}